\newcommand\cutout[1]{}
\DeclareMathAlphabet{\mathpzc}{OT1}{pzc}{m}{it}
\definecolor{dark-blue}{rgb}{0.2,0,1}
\newcommand\inbrax[1]{\langle#1\rangle}
\newcommand\boldemph[1]{\emph{\textbf{#1}}}
\newcommand \SystemReF {{\rm System ReF}}
\newcommand\ee\varepsilon
\newcommand \Loc {\mathrm{Loc}}
\newcommand \Var {\mathrm{Var}}
\newcommand \Val {\mathrm{Val}}
\newcommand \Types {\mathrm{Types}}
\newcommand \TVar {\mathrm{TVar}}
\newcommand \AValues {\mathrm{AValues}}
\newcommand \Univ {\mathcal{U}}
\newcommand \red {\rightarrow}
\newcommand \Int {\mathrm{Int}}
\newcommand \Unit {\mathrm{Unit}}
\newcommand \unit {()}
\newcommand \nb [1] {#1}
\newcommand \proj [1] {\pi_{#1}}
\newcommand \pack [1] {\mathsf{pack} #1}
\newcommand \unpack [3] {\mathsf{unpack}\, #1\, \mathsf{as}\, #2\, \mathsf{in}\, #3}
\newcommand \hole {\bullet}
\newcommand\assert{\mathsf{assert}}
\newcommand \emptyseq \varepsilon
\newcommand \comp [1] {\mathbf{comp}(#1)}
\newcommand \ifte [3] {\mathsf{if}\ #1\ #2\ #3}
\newcommand\nuref[1]{\mathsf{ref}\,#1}
\newcommand \ctxeqf[6] {#2; #1 ; #3 \vdash #4 \ctxeq #5 : #6 }
\newcommand \ciueqf[6] {#2; #1 ; #3 \vdash #4 \ciueq #5 : #6 }
\newcommand \ctxeq {\simeq}
\newcommand \ciueq {\simeq_{ciu}}
\newcommand \typingCtx[5] {#2 ; #1 \vdash #3 : #4 \rightsquigarrow #5 }
\newcommand \typingCtxTD[5] {#2 ; #1 \vdash_{e} #3 : #4 \rightsquigarrow #5 }
\newcommand \typingTerm[5] {#2; #1 ; #3 \vdash #4 : #5 }
\newcommand \typingTermTD[5] {#2 ; #1; #3 \vdash_{e} #4 : #5 }
\newcommand \typingSubst[3] {\typingTerm{#1}{\cdot}{\cdot}{#2}{#3}}
\newcommand \typingHeap[2] {#1 : #2 }
\newcommand \typingHeapTD[2] { #1 :_{e} #2 }
\newcommand \empHeap {\varepsilon}
\newcommand \empStore {\varepsilon}
\newcommand \ClosedHeap [1] {\mathsf{Cl}(#1)}
\newcommand \pair [2]{\langle #1,#2\rangle}
\newcommand \substfun [2] {#1:#2}
\newcommand \argT [1] {\mathsf{arg}(#1)}
\newcommand \retT [2] {\mathsf{ret}_{#2}(#1)}
\newcommand\Aloc{\Loc}
\newcommand\Atvar{\TVar}
\newcommand\Afun{\mathrm{Fun}}
\newcommand\Apol{\mathrm{Pol}}
\newcommand\AfunT [1] {\Afun_{#1}}
\newcommand\ApolT [1] {\Apol_{#1}}
\newcommand \FuncNamePointers \Afun
\newcommand \PolyNamePointers \Apol
\newcommand \AVal [2] {\mathsf{AVal}(#1,#2)}
\newcommand \AValD [3] {\mathsf{AVal}(#1,#2)_{#3}}
\newcommand \AStore [2] {\mathsf{AStore}(#1,#2)}
\newcommand \APEnv [2] {\mathsf{AEnv}(#1)_{#2}}
\newcommand \SemStore [1] {\mathsf{S}\!\sem{#1}}
\newcommand \SemPEnv [2] {\mathsf{E}\!\sem{#1}_{#2}}
\newcommand {\TDclos} [1] {\xbar{#1}}
\newcommand \sem [1] {\left\llbracket #1 \right\rrbracket}
\newcommand \configuration[5] {
    \langle #1,#2,#3,#4,#5
  \rangle}
\newcommand \configurationF[5] {
    \langle #1,#2,#3,#4,#5
  \rangle}
\newcommand \action [2] {#1,#2}
\newcommand \faction [3] {#1,#2,#3}
\newcommand \ansP [1] {\langle \bar{#1} \rangle}
\newcommand \ansO [1] {\langle #1 \rangle}
\newcommand \questP [2] {\bar{#1} \! \langle #2 \rangle}
\newcommand \questO [2] {#1 \! \langle #2 \rangle}
\newcommand \Tr {\mathrm{Tr}}
\newcommand \trace[1] {\mathsf{Tr}#1}
\newcommand \pref {\rightsquigarrow}
\newcommand \prefix {\sqsubseteq}
\newcommand \extend {\sqsupseteq}
\newcommand \restrictTr [2] {{#1}_{|#2}}
\newcommand \xredint [1] {\xrightarrow{#1}}
\newcommand \xRedint [1] {\xRightarrow{#1}}
\newcommand \emptyStack {\lozenge}
\newcommand \nomeq [1] {\sim_{#1}}
\newcommand \actionPerm {\ast}
\newcommand \Anl {\mathbb{A}}
\newcommand \support [2] {\nu_{#1}(#2)}
\newcommand\rmL{{\rm L}}
\newcommand\rmP{{\rm P}}
\newcommand\rmT{{\rm T}}
\newcommand\rmF{{\rm F}}
      \definecolor{light-gray}{gray}{0.9}
\newcommand \ie {\emph{i.e.}\xspace}
\newcommand \st {such that }
\newcommand \Power[1] {\mathcal{P}(#1)}
\newcommand \dom[1] {\mathrm{dom}(#1)}
\newcommand \codom[1] {\mathrm{cod}(#1)}
\newcommand \N {\mathbb{N}}
\newcommand \Z {\mathbb{Z}}
\newcommand \sep {~|~}
\newcommand \defeq {\overset{de\hspace{-0.2ex}f}{=}}
\newcommand \subst [2]{\{#2 / #1 \}}
\newcommand \refer {\mathrm{ref} \, }
\newcommand \parrow  {\rightharpoonup}
\newcommand*\xbar[1]{%
  \hbox{%
    \vbox{%
      \hrule height 0.5pt 
      \kern0.5ex
      \hbox{%
        \kern-0.1em
        \ensuremath{#1}%
        \kern-0.1em
      }%
    }%
  }%
}
\newcommand{\redsat}  {\rightsquigarrow}
\newcommand{\cast} [3]{\mathbf{cast}^{#1}_{#2\rightarrow#3}}
\newcommand{\RCast} [1] {\mathsf{Cast}(#1)}
\newcommand{\RCastt} [1] {\mathsf{Cast}^\circ\!(#1)}
\newcommand{\emptyPMap} {\epsilon}
\newcommand{\mergeConf} [2] {#1 \doublewedge #2}
\newcommand{\mergeStack} [2] {#1 || #2}
\newcommand{\TRel} [1] {\leq_{#1}}
\newcommand{\MGI} [1] {\mathsf{mgi}(#1)}
\newcommand{\equivT}{\sim}
\newcommand{\true}{\mathbf{true}}
\newcommand{\TConstr} [1] {\mathsf{TConstr}(#1)}
\newcommand{\Freshen} [3] {\mathcal{F}_{}(#3)} 
\newcommand{\Refresh} [2] {\mathrm{Refresh}_{#1}(#2)} 
\newcommand\mypar[1]{\paragraph{\normalfont\bf #1}}
\newtheorem{theorem}{Theorem}
\newtheorem{lemma}[theorem]{Lemma}
\theoremstyle{definition}
\newtheorem{remark}[theorem]{Remark}
\newtheorem{example}[theorem]{Example}
\newtheorem{definition}[theorem]{Definition}
\renewcommand{\endproof}{\qed\oldendproof}}
\newcommand\EE{\mathcal{E}}
\newcommand\gj[1]{}
\newcommand\nt[1]{}
\newcommand\ntt[1]{{#1}}
\newcommand\gjt[1]{{#1}}
\begin{document}

\setlength{\pdfpageheight}{\paperheight}
\setlength{\pdfpagewidth}{\paperwidth}

\title{Trace semantics for polymorphic references\thanks{%
Research supported by the Engineering and Physical Sciences Research Council (EP/L022478/1) and the Royal Academy of Engineering. 
We thank T.~Cuvillier, J.~Rathke and the reviewers for comments and suggestions.
}}
\authorinfo{Guilhem Jaber}{Université Paris Diderot}{}
\authorinfo{Nikos Tzevelekos}{Queen Mary University of London}{}

 \maketitle

\begin{abstract}
We introduce a trace semantics for a call-by-value language with full polymorphism
and higher-order references. 
This is an operational game semantics model 
based on a nominal interpretation of
parametricity whereby polymorphic values are abstracted with special kinds of names.
The use of polymorphic references leads to violations of parametricity which we
counter by closely recoding the disclosure of typing information in the semantics.
We prove the model sound for the full language and
strengthen our result to full abstraction for a large fragment where polymorphic references obey specific inhabitation conditions.
\end{abstract}

\section{Introduction}
\newcommand \lang \SystemReF
\newcommand \letin [2] {\mathsf{let}\,#1 \,  \mathsf{in}\, #2}
\newcommand \Ifte [3] {\mathsf{if} \ #1 \ \mathsf{then} \ #2 \ \mathsf{else} \ #3}

Polymorphism is a prevalent feature of modern programming languages, 
allowing one to use generic data structures and 
powerful code abstractions. 
Reasoning with polymorphism is  both challenging and rewarding: polymorphic code is 
bound to have uniform behaviour under different instantiations, \ntt{a property known as \emph{Strachey parametricity}~\cite{Strachey67}} and
formalized by Reynolds as relational parametricity\cite{Reynolds83},
which in turn provides 
``theorems for free''~\cite{wadler1989theorems}.

Understanding the formal semantics of polymorphism amounts to capturing the parametric behaviour 
of code under different instantiations. 
This has traditionally been hard, effectively due to the 
requirement for a model where instantiations 
from within the same model are possible. 
As far as the full abstraction problem is concerned, 
the construction of fully abstract models has so far had successes in the game semantics  framework.
%
The problem has been addressed by use of \emph{hypergames} by Hughes~\cite{Hug_phd}, 
whereby game arenas can be seen as moves which can be opened inside enclosing arenas during a play. 
The model of Abramsky and Jagadeesan~\cite{AJ05} followed a different approach, 
namely that of fixing a universe of moves with holes, 
the latter representing type variables awaiting instantiation, 
and constructing arenas from that given pool of moves, which is effectively closed under instantiation. 
While these models addressed purely functional languages,
in recent years a remarkable research programme by Laird~\cite{Lai13,laird2010game} 
has extended the reach of polymorphic games to languages with higher-order state.

An important aspect of previous models~\cite{Hug_phd,AJ05,Lai13,laird2010game}, 
and of the modelled languages, is the uniformity of polymorphic behaviour. 
However, when we move to languages with mutable references that can extrude their scope, 
this property can be easily broken as we see below. 
Thus, the modelling of languages with ML- or Java-like references presents additional complications and, 
as far as we are aware, is still open. 
Our paper addresses precisely this problem. 

The language we analyse, \lang, includes a typed lambda calculus with products, references and polymorphism. 
For instance, we can examine the following type.
\[
\forall\alpha.\,(\refer\alpha\times\refer\Int)\to\alpha
\]
One may be tempted to think that any term inhabiting this type is bound to return, 
given input $(x,y)$, the value stored in $x$. 
Of course, this is not necessarily the case if, for example, 
$\alpha$ is instantiated with $\Int$ and $x$ and $y$ happen to represent the same location. The following term would take advantage of such a coincidence,
\[
\Lambda\alpha.\lambda\pair{x}{y}^{\refer\alpha\times\refer\Int}\!.\,{y:=42; !x}
\]
and in that case return 42 regardless of what the initial value stored in $x$ was. 
Thus, in this example, the given coincidence leads to an accidental interference with the returned result. 
More interestingly, we can instrument our example in a way that it can \emph{discover} such coincidences and effectively deduce that $\alpha=\Int$.
Let us write $y{+}{+}$ below for $y:={!y}+1$.
\begin{align*}
\Lambda\alpha.\lambda\pair{x}{y}^{\refer\alpha\times\refer\Int}\!.\;
&\letin{\;x'=\nuref{!x},\,y'=\nuref{!y}\;}{}\\
&y{+}{+};\,x:=x';\\
&\Ifte{{!y'\!={!y}}}{({y:=42; !x})}{!x}
\end{align*}
The term above increases the value of $y$ and then 
restores $x$ to its initial value $x'$. It then
compares the value of $y$ with its initial one $y'$. If these are not the same, then $x$ and $y$ are different locations, so the value of $x$ is returned. 
If, however, the value of $y$ has not changed then the term has successfully discovered that $x$ and $y$ refer to the same location, whence 42 is returned.

The above example demonstrates that uniform polymorphic behaviour can be violated through references, as differently typed variables can be instantiated with a 
common reference. 
More than that, references can disclose type instantiation information which can then be taken advantage of by a polymorphic function. 
In our example above the result of this disclosure was a non-parametric return value of 42, 
but we can imagine scenarios where a term records the references $x$ and $y$ that allowed it to escape uniform behaviour, and uses them as a general-use ``bridge'' 
between values of type $\alpha$ and $\Int$. In fact, such devices, called \emph{casting functions}, shall play a central role in our semantics. 
More generally, our modelling approach is crafted around carefully keeping track of the type information 
that has been leaked from the program to its environment, and viceversa, and allowing moves to be played in accordance with that information assuming 
that the context (the Opponent) has the epistemic power to exploit all such leaked information.

\paragraph{Related work}

Operational techniques have been designed to study languages
with both polymorphism and references.
Realizability models~\cite{ahmed04,vmm,birkedal2010realisability}, later refined into Kripke logical relations~\cite{amal:statedep,dreyer:JFP2012},
use a notion of ``world as heap-invariant'' to model references. 
Environmental bisimulations have also been designed to deal with equivalence of programs
in such languages~\cite{sumii2009complete}.
{While complete, these approaches 
partially rely on context quantifications and in particular
do not directly account for the interaction between polymorphism and references, and the kind of type disclosure that the latter brings in.}

Our approach follows the line of research on trace semantics for higher-order languages~\cite{JR99,JR05,laird2007fully,GT12},
which in turn can be seen as an operational reformulation of game semantics~\cite{tst,Jab15}, on one hand; and of open bisimulation techniques~\cite{Lassen05,JagadeesanPR07,LassenL08}, on the other.
%
In this area, Jeffrey and Rathke 
proposed a fully abstract trace semantics for a polymorphic variant of the pi-calculus~\cite{JR08}, which refined a previous sound model of Pierce and Sangiorgi~\cite{pierce2000behavioral}.
That work is related to ours in spirit, and it already raises the intricacies involved in combining polymorphism with name equality testing.
However, 
the apparatus of {\em loc.\,cit.} does not lend itself to ML-like languages like \lang,
as in 
the latter we need stronger
semantic abstractions to cater for
the less expressive syntactic contexts. 
Overall, there seems to be a greater picture behind this work and~\cite{JR08,LassenL08} which remains to be exposed.

\paragraph{Future directions}
In this work we addressed Church-style polymorphism.
It would be interesting to examine whether our ideas could be adapted 
to deal with {Curry style}.
In doing so, we would give a semantic reading
of the \emph{value restriction}, which ensures type safety by enforcing terms of polymorphic types
to be values.
This, along with the study of ML-specific restrictions like {rank-1 polymorphism},
would bring us closer to modelling a large fragment of ML, which can be seen as a broader goal behind this work.

Moreover, our current model sets the foundation for a sound, and complete for a large collection of types, proof methods for program equivalence. Similarly to our previous work on monomorphic languages\,\cite{proof1,proof2}, we aim to explore such methods and accompany\!\! them with automated, or semi-automated, equivalence checkers.


\cutout{
In another line of work, we would like to give a \emph{compositional} presentation of our model.
First, we could simply relate it to Laird's model~\cite{laird2010game}, inferring copycat links
using freshness conditions on polymorphic names. Then, the main difficulty
would be to incorporate \emph{nominal} references to this model, and the disclosure process of types
that follows on from.
}


\section{System ReF}
\label{sec:language}

\begin{figure}[t]
\begin{center}
$\begin{array}{@{}r@{\ }c@{\ }l}
    v,u &::=& \; \unit \sep \nb{n} \sep x
    \sep l \sep \lambda x^\theta.M \sep \Lambda \alpha.M \sep \pack{\pair{\theta}{v}}
\sep \pair{v}{u}\\[2mm]
    M,N & ::= & 
\; v 
\sep M N \sep M \theta \sep M\oplus N \sep \ifte{M_1}{M_2}{M_3}  \sep
 \pair{M}{N}\\
& &\sep \proj{1}(M) \sep \proj{2}(M) \sep \Omega_\theta 
 \sep\nuref M \sep !M \sep M := N \\
&& \sep    M = N
\sep \pack{\pair{\theta}{M}} \sep \unpack{M}{\pair{\alpha}{x}}{N} \\[2mm]
%
    E &::=& \;
\hole \sep E M \sep E \theta \sep v E \sep E \,{\sf op}\, M \sep v \,{\sf op}\, E \sep \ifte{E}{M}{M'}\\
&&   \sep \nuref E 
 \sep !E  \sep \pair{E}{M}
 \sep \pair{v}{E} \sep \proj{1}(E)\sep \proj{2}(E) \\
&&\sep \pack{\pair{\theta}{E}} \sep \unpack{E}{\pair{\alpha}{x}}{M}
\end{array}$\vspace{-.75em}
\end{center}
\caption{\SystemReF \
($n \in \Z$, $l \in \Loc$ and ${\sf op}\in\{\oplus,=,:=\}$).} 
\label{fig-def}\vspace{-1em}
\end{figure}

We introduce \SystemReF, a polymorphic call-by-value $\lambda$-calculus with {higher-order references}. The types of \SystemReF\ are:
\[
    \theta, \theta'  ::=  \;\alpha\sep \Unit \sep \Int 
    \sep \refer{\theta} \sep 
    \theta \times \theta' \sep \theta \rightarrow \theta' \sep \forall \alpha.\theta \sep \exists \alpha.\theta
\]
where $\alpha \in \TVar$, and  $\TVar$ a countably infinite set of type variables.
As usual,
a type is {closed} if all its type variables $\alpha$ are bound.
We shall call arrow and universal types \boldemph{function types}.
The syntax of values $v$, terms $M$ and evaluation contexts $E$ is
given in Figure~\ref{fig-def}. 
We assume a countably infinite set $\Loc$ of \emph{locations} and some standard collection of binary integer operators, which we generally denote by $\oplus$. 
We use the following macros: ${\sf let}\ x = N\ {\sf in}\ M$ stand for $(\lambda x.M)N$;
and $N; M$ means $(\lambda x.M) N$ with $x$ fresh in $M$.

The typing rules for \SystemReF\ include standard rules for functions and projections, rules for integers, and rules for polymorphism and references  
given in Figure~\ref{fig-typingrules}.
Typing judgments are of the form
$\typingTerm{\Sigma}{\Delta}{\Gamma}{M}{\theta}$, where
$\Sigma$ is a location context, i.e.\ a finite partial function from locations to \emph{closed} types;
$\Gamma$ a variable context;
and $\Delta$ a set of type variables {containing all free type variables of $\Gamma$}.
Given a closed evaluation context $E$, we write $\typingCtx{\Sigma}{\Delta}{E}{\theta}{\theta'}$
when $\typingTerm{\Sigma}{\Delta}{x:\theta}{E[x]}{\theta'}$.
Compared to the ML type-system, we work with \emph{Church-style} polymorphism,
where type abstractions and applications are explicit. This explains why we do \ntt{not} need
the so-called \emph{value restriction}~\cite{wright1995simple} to accommodate references.

\begin{figure*}[t]
\begin{gather*}
  \inferrule*{(l:\theta) \in \Sigma}{\typingTerm{\Sigma}{\Delta}{\Gamma}{l}{\refer\theta}}
\qquad             
  \inferrule*{\typingTerm{\Sigma}{\Delta}{\Gamma}{M}{\refer \theta}}{\typingTerm{\Sigma}{\Delta}{\Gamma}{!M}{\theta}}
\qquad
  \inferrule*{\typingTerm{\Sigma}{\Delta}{\Gamma}{M}{\refer \theta} \quad
              \typingTerm{\Sigma}{\Delta}{\Gamma}{N}{\theta}}
             {\typingTerm{\Sigma}{\Delta}{\Gamma}{M := N}{\Unit}}
\qquad
  \inferrule*{\typingTerm{\Sigma}{\Delta}{\Gamma}{M}{\refer \theta} \quad
              \typingTerm{\Sigma}{\Delta}{\Gamma}{N}{\refer \theta'}}
             {\typingTerm{\Sigma}{\Delta}{\Gamma}{M = N}{\Int}}
\\
  \inferrule*
  {\typingTerm{\Sigma}{\Delta,\alpha}{\Gamma}{M}{\theta}}
              {\typingTerm{\Sigma}{\Delta}{\Gamma}{\Lambda\alpha.M}{\forall\alpha.\theta}}
\quad
 \inferrule*{\typingTerm{\Sigma}{\Delta}{\Gamma}{M }{\forall\alpha.\theta}}
             {\typingTerm{\Sigma}{\Delta}{\Gamma}{M\theta'}{\theta\subst{\alpha}{\theta'}}}
\quad
\inferrule*{\typingTerm{\Sigma}{\Delta}{\Gamma}{M}{\theta\subst{\alpha}{\theta'}}}
             {\typingTerm{\Sigma}{\Delta}{\Gamma}{\pack{\pair{\theta'}{M}}}{\exists \alpha.\theta}}
\quad
\inferrule*
          {\typingTerm{\Sigma}{\Delta}{\Gamma}{M}{\exists \alpha.\theta}\quad  
           \typingTerm{\Sigma}{\Delta,\alpha}{\Gamma,x:\theta}{N}{\theta'}}
          {\typingTerm{\Sigma}{\Delta}{\Gamma}{\unpack{M}{\pair{\alpha}{x}}{N}}{\theta'}}
\\
\begin{array}{@{}rll@{\quad}rll@{\quad}rll@{}}
\hline\\[-2mm]
     (E[(\lambda x.M) v],S) &  \red  &  (E[M\subst{x}{v}],S) 
& (E[\Omega],S) & \red & (E[\Omega],S) 
&       (E[l = l],S)   &  \red  &  (E[1],S) 
\\
      (E[(\Lambda\alpha.M) \theta],S) &  \red  &  (E[M\subst{\alpha}{\theta}],S) 
&     (E[\pi_i \pair{v_1}{v_2}],S) & \red & (E[v_i],S) 
&        (E[l = l'],S)  &  \red  &  (E[0],S) 
\\
     (E[\ifte{\nb{n}}{M_1}{M_2}],S) &  \red  & (E[M_i],S)  
&
      (E[\nuref{v}],S)  &  \red  &   (E[l],S \cdot [l \mapsto v])
&      (E[!l],S)  &  \red  &  (E[S(l)],S) 
\\
      (E[l := v],S)  &  \red  &  (E[\unit],S[l \mapsto v]) 
&
\multicolumn{6}{c}{E[\unpack{(\pack{\pair{\theta}{ v}})}{\pair{\alpha}{x}}{M}] \red  E[M\subst{\alpha}{\theta}\subst{x}{v}]}
\\[-3mm]
\end{array}
\end{gather*}
\caption{{\sc Up}: Typing rules of \SystemReF\ (excerpt).
{\sc Down}:
Operational semantics (for $\mathtt{if}$: $i=2$ if $n=0$, otherwise $i=1$). 
}\label{fig-typingrules}\label{fig-red}\vspace{-1mm}
\end{figure*}

We next proceed with the operational semantics. Closed terms are reduced using stores containing their locations. More precisely, a
\boldemph{store} is a finite partial map $S:\Loc \parrow \Val$ from locations to values.
We define the following notation for stores, which we shall also be using for general partial maps:
\smallskip

\begin{compactitem}[$\bullet$]
\item The empty store is written $\empStore$. 
Adding a new element $(l,v)$ to a store $S$
is written $S \cdot [l \mapsto v]$, and is defined only if $l \notin
\dom{S}$.  
\item We also define $S[l \mapsto v]$, for $l \in
\dom{S}$, as the partial function $S'$ which satisfies $S'(l') = S(l')$
when $l'\neq l$, and $S'(l) = v$.
\item
The restriction of a store $S$ to a set of locations $L$ is written $S_{|L}$.
\end{compactitem}\smallskip

We write $S:\Sigma$ just if $\typingTerm{\Sigma}{\cdot}{\cdot}{S(l)}{\theta}$ for 
all  $l \in \dom{S}$. 
%
Given a set $L$ of locations and a store $S$, we define the image of $L$ by $S$, written $S^{*}(L)$,
as $S^*(L) = \bigcup_{j\in\omega} S^j(L)$ with 
$S^{j+1}(L) = S^j(L)\cup\{l\in\Loc\mid l\text{ contained in }S(S^j(L))\}$ 
and $S^0(L) = L$.
$S$ is called \boldemph{closed} just if $\dom{S}=S^*(\dom{S})$.
\cutout{We define 
\[
\ClosedHeap{\Sigma} = \{\ S \sep S:\Sigma \wedge \dom{S} = S^{*}(\dom{\Sigma})\ \}
\]
to be the set of minimal
closed stores whose domain contains all locations in $\Sigma$.
\nt{what is the purpose of typing stores and what are closed stores used for? Is it the case that in the operational semantics every pair $(M,S)$ must have $S$ closed and well-typed in the context of $M$?}
}

\begin{definition}
The operational semantics of \SystemReF\ involves pairs $(M,S)$ consisting of a closed term
$\typingTerm{\Sigma}{\Delta}{\cdot}{M}{\theta}$ and a closed store $S:\Sigma$.
Its small-step rules are given in Figure~\ref{fig-red}.
We write $(M,S) \Downarrow$ when $(M,S)\to^*(v,S')$ for some value $v$.
\end{definition}

%
%

\cutout{
\begin{figure*}[t]
\vspace{-1em}
\[
\begin{array}{@{}rll@{\quad}rll@{\quad}rll@{}}
     (E[(\lambda x.M) v],S) &  \red  &  (E[M\subst{x}{v}],S) 
& (E[\Omega],S) & \red & (E[\Omega],S) 
&       (E[l = l],S)   &  \red  &  (E[1],S) 
\\
      (E[(\Lambda\alpha.M) \theta],S) &  \red  &  (E[M\subst{\alpha}{\theta}],S) 
&     (E[\pi_i \pair{v_1}{v_2}],S) & \red & (E[v_i],S) 
&        (E[l = l'],S)  &  \red  &  (E[0],S) 
\\
     (E[\ifte{\nb{n}}{M_1}{M_2}],S) &  \red  & (E[M_i],S)  
&
      (E[\nuref{v}],S)  &  \red  &   (E[l],S \cdot [l \mapsto v])
&      (E[!l],S)  &  \red  &  (E[S(l)],S) 
\\
      (E[l := v],S)  &  \red  &  (E[\unit],S[l \mapsto v]) 
&
\multicolumn{6}{c}{E[\unpack{(\pack{\pair{\theta}{ v}})}{\pair{\alpha}{x}}{M}] \red  E[M\subst{\alpha}{\theta}\subst{x}{v}]}
\end{array}\vspace{-1.25em}
\]
\caption{Operational Semantics of \SystemReF. For $\mathtt{if}$: $i=2$ if $n=0$, otherwise $i=1$. 
}
\label{fig-red}
\vspace{-1em}
\end{figure*}
}


%
%
%

%

\begin{remark}
We have equipped our language with a construct performing reference equality tests. 
This is in accordance with, and has the same operational semantics as, reference equality tests in ML, 
albeit extended to arbitrary reference types. Depending on type and type inhabitation, 
such tests can be encoded \ntt{in ML}
via appropriately crafted sequences 
of writes and reads in examined references.
\end{remark}


We finally introduce the notion of term equivalence we examine. 

\begin{definition}
Let $\Sigma$ be closed.
Two terms  $\typingTerm{\Sigma}{\Delta}{\Gamma}{M_1,M_2}{\theta}$ are \boldemph{contextually equivalent}, written $\ctxeqf{\Sigma}{\Delta}{\Gamma}{M_1}{M_2}{\theta}$, 
if for all contexts $C$, all  $\Sigma'\supseteq\Sigma$
and all closed $S:\Sigma'$
\st 
 $\typingTerm{\Sigma'}{\cdot}{\cdot}{C[M_i]}{\Unit}$,
we have
$(C[M_1],S) \Downarrow$ iff $(C[{M_2}],S)\Downarrow$. 
\end{definition}



\section{The Semantic Model}

Our trace model is constructed within nominal sets, that is, a universe embedded with atomic objects for representing locations, type variables, functions and polymorphic values. We introduce the semantic universe next and then proceed to the operational rules defining the semantics.

\subsection{Semantic Universe}

We define the set of \boldemph{names} to be:\vspace{-1mm}
\[
\Anl = \Loc \uplus \TVar
\uplus
\Afun
\uplus
\Apol\vspace{-1.25mm}
\]
where
each of the components is a countable set.
We range over elements of 
$\Loc$ by $l$ and variants; over $\TVar$ by $\alpha$, {\em etc}; over
$\Afun$ by $f,g$, {\em etc}; and over $\Apol$ by $p$, {\em etc}.
For each type $\theta$ and 
 type variable $\alpha$ we define:
$
\AfunT{\theta}=\Afun\times\{\theta\}$ and 
$\ApolT{\alpha}=\Apol\times\{\alpha\}
$.
By abuse of notation, we may write elements $(f,\theta)$ of $\AfunT{\theta}$ simply as $f$, and similarly for $\ApolT{\alpha}$.

Semantic objects feature elements of $\Anl$ as atomic entities which, moreover, can be acted upon by finite permutations of $\Anl$.
A \boldemph{nominal set}~\cite{gabbay2002new} is a pair $(X,\actionPerm)$ of a set $X$ along with an action ($\actionPerm$) from the set of finite 
component-preserving computations of $\Anl$ 
on the set $X$.\footnote{A finite permutation $\pi:\Anl\to\Anl$ is component-preserving 
if it preserves the partitioning of $\Anl$, e.g.\
if $d\in\Aloc$ then $\pi( d)\in\Aloc$.
} 
Given some $x\in X$, the set of names featuring in $x$ form its \emph{support}, written $\nu(x)$, which we stipulate to be finite. Formally, $\nu(x)$ is the smallest subset of $\Anl$ such that all permutations which elementwise fix $\nu(x)$ also fix $x$.
We shall sometimes write $\nu_{\rm C}(x)$, for ${\rm C}\in\{\rm L,T,F,P\}$ in order to select a specific kind of names from the support of $x$. For instance, $\support{\rm L}{x}=\nu(x)\cap\Loc$. 
\ntt{Using the same notation, we also write $\support{\rm T}{\theta}$ for the free type variables of $\theta$.}
We usually write $(X,\actionPerm)$ simply as $X$, for economy. 
\cutout{
Let $\Delta$ be a finite subset of $\Atvar$. A \boldemph{$\Delta$-nominal set} is a pair  $(X,\actionPerm_\Delta)$ where 
$\actionPerm_\Delta$ is restricted to finite component-preserving permutations $\pi$ of $\Anl$ such that $\pi \actionPerm d=d$ for all $d\in\Delta$.
Due to infiniteness of $\Atvar$, this can be seen just as an ordinary nominal set where the names in $\Delta$ are treated as constants (\ie they are removed from $\Anl$).
Two elements $x,y$ of a $\Delta$-nominal set $X$ are said to be $\Delta$-\emph{nominally-equivalent}, written $x \nomeq{\Delta}y$
if there exists a finite permutation $\pi$ over $\Anl$ (fixing $\Delta$) such that $x = \pi \actionPerm_\Delta y$. 
Given a subset $Y$ of a $\Delta$-nominal set $X$ we let
\[
Y_{\sim_\Delta} = \{ x\in X\ |\ \exists y\in Y.\ x\sim_\Delta y\}
\]
be its $\sim_\Delta$-\,closure.
}

\cutout{
Observe that any set $X$ which does not hereditarily involve names is a trivial nominal set by considering the trivial permutation action.
On the other hand, $\Anl$ itself is a nominal set by taking $\pi* d=\pi(d)$ for any $\pi$ and any $d\in\Anl$. Moreover, nominal sets are closed under taking union, intersections, products, coproducts and finite sequencing ($\_^*$). 
}
\cutout{
Finally, two elements $x,y$ of a nominal set $X$ are \emph{nominally-equivalent}, written $x \nomeq{}y$,
if there exists a finite permutation $\pi$ over $\Anl$ such that $x = \pi \actionPerm y$. 
Given a subset $Y$ of a nominal set $X$ we let \
$
Y_{\sim} = \{ x\in X\ |\ \exists y\in Y.\ x\sim y\}
$ \
be its $\sim$-\,closure.}

\cutout{To represent functional and polymorphic values, we introduce names $f \in \Afun$ and $p \in \Apol$, defined as
$\displaystyle \Afun \defeq \biguplus_{\theta} \AfunT{\theta}$ (with $\theta$ functional)
and $\displaystyle \Afun \defeq \biguplus_{\alpha \in \TVar} \ApolT{\alpha}$,
where each component is a countably infinite set. 
Thus, each of these names carry their type.}

We next introduce our basic semantic objects, which constitute
the semantic representations of syntactic values. 

\begin{definition}
We define \boldemph{abstract values} as:
\[
\AValues \ni\ v,u\ ::=\ ()\ |\ i\ |\ l\ |\ (f,\theta)\ |\ (p,\alpha)\ |\ \alpha\ |\ \pair{u}{v}
\]
where $i\in\Z$, $l\in\Aloc$, $f\in\Afun$, $p\in\Apol$ and $\alpha\in\TVar$. 
Note we still range over abstract values by $u,v$ (and hope no confusion arises).
We similarly set \boldemph{abstract stores} to be finite partial maps $\Loc \parrow \AValues$.
\end{definition}

Thus,
ground values (integers, $\unit$ and locations) are represented by their concrete values,
and for all other types but products we employ name abstractions. 
This abstraction is in order either because of polymorphism in the values, or simply because function code can only be examined by querying the given function.
Functions are represented by functional names, and polymorphic values by polymorphic names.

The semantics of a type $\theta$,
written $\sem{\theta}$, consists of pairs  $(v,\phi)$ of an abstract value $v$ along with a function $\phi : \support{\rmL}{v} \to \Power{\Types}$, and is given as:
\[
\begin{array}{l@{\;\ }l@{\;\;\;}l}
\sem{\Unit} & =& \{((),\emptyseq)\} 
\\
\sem{\Int} & =& \{(n,\emptyseq)\ |\ n\in\Z\} 
\\
\sem{\refer\theta} & =& \{(l,\{(l,\theta)\})\ |\ l\in\Aloc\} 
\\
\sem{\alpha} & =& \{(p,\emptyseq) \sep p \in \ApolT{\alpha}\}
\\ 
\sem{\theta\to\theta'} & =& \{(f,\emptyseq) \sep f \in \AfunT{\theta\to\theta'}\} 
\\
\sem{\forall\alpha.\theta} & =& \{(f,\emptyseq) \sep f \in \AfunT{\forall\alpha.\theta}\}
\\
\sem{\exists\alpha.\theta} & =& \{(\pair{\alpha'}{v},\phi) \sep (v,\phi) \in \sem{\theta\subst{\alpha}{\alpha'}}\}
\\ 
\sem{\theta_1\times\theta_2} & =& \{(\pair{v_1}{v_2},\phi_1\cup\phi_2)\ |\ (v_i,\phi_i)\in\sem{\theta_i}\}
\end{array}\]
%
%
The role of $\phi$ is to assign types to all the locations of an abstract value. 
As discussed in the Introduction, though, the same location can appear with several types in the execution of a given term phrase. 
Hence, $\phi$ assigns sets of types to each location instead of a unique type.
More generally, a \boldemph{typing function} is a finite map $\phi:\Aloc\rightharpoonup\Power{\Types}$.
%
The type translation is extended to typing environments by mapping each 
$\Delta=\{\alpha_1,\cdots,\alpha_k\}$, $\Sigma=\{l_1:\theta_1,\cdots,l_n:\theta_n\}$ and $\Gamma=\{x_1:\theta'_1,\cdots,x_k:\theta'_k\}$ to:
\[
\sem{\Delta,\Sigma,\Gamma} = \{\, ((\vec\alpha,\vec l,\vec v),\bigcup_{i=1}^n[l_i\mapsto\theta_i]\cup\bigcup_{j=1}^k\phi_j)\
|\ (v_j,\phi_j)\in\sem{\theta_j'}\, \}.
\]

\paragraph{Extending the syntax for $\Afun\cup\Apol$}
While functional and polymorphic names are not part of the syntax of \SystemReF, their involvement in its semantics makes it useful to introduce them as syntax as well. 
We  hence extend the set of values of \SystemReF\ to include 
elements
$(f,\theta)$ and $(p,\alpha)$ as typed constants.

\subsection{Interaction Reduction}

Traces will consist of sequences of \boldemph{moves} enriched with abstract stores and value disclosures.
Moves represent the interaction between the modelled program and its enclosing context and consist of function calls and returns.
Each move comes with a polarity: P for \emph{Player} (i.e.\ the program produces the move), and O for \emph{Opponent} (the context/environment).
There are four kinds of moves:
\begin{enumerate}
\item[\sc PQ.] \emph{Player Questions} are moves of the form $\questP{f}{u}$, representing a call to a functional name $f\in\Afun$ with argument $u\in\AValues$.
\item[\sc OQ.] \emph{Opponent Questions} are of the form $\questO{f}{u}$, with 
 $f\in\Afun$ and $u\in\AValues$; moreover, there are \emph{initial} opponent questions of the form $\questO{?}{u}$ ($u\in\AValues$). 
\item[\sc PA.] \emph{Player Answers} are moves of the form $\ansP{u}$, with  $u\in\AValues$.
\item[\sc OA.] \emph{Opponent Answers}, which are of the form $\ansO{u}$  ($u\in\AValues$). 
\end{enumerate}
On the other hand, value disclosures are partial functions $\rho$
representing the values of  polymorphic names revealed in a move. Their role will be explained in the next section.

\begin{definition}
A \boldemph{full move} is a triple $(\action{m}{S,\rho})$ of a move $m$, a {closed} abstract store $S$ and a finite map $\rho:\Apol\rightharpoonup\rm AValues$.
A sequence of full moves is called a \boldemph{trace}.
\end{definition}

\label{subsec:warmup-ir}

\cutout{
For a typing context $\Gamma$ and  a function $\gamma : \Var \parrow
\Val$, we say that $\gamma$ is an \emph{environment} on
$\Gamma$, written $\substfun{\gamma}{\Gamma}$, if $\gamma$ is defined
exactly on all the variables occurring in $\Gamma$, and $\gamma(x)$ is
a value of type $\theta$ whenever $(x,\theta) \in \Gamma$.
Then, the action of the environment $\gamma$ on a term $M$, defined as
$M\overrightarrow{\subst{x_i}{\gamma(x_i)}}$ with $x_i$ ranging over $\Gamma$,
is written $M\{\gamma\}$.
}

The trace semantics is produced via 
a reduction relation for open terms which only reveals the steps in the computation where there is {interaction}: a call or return between the term and its context.
More precisely, this relation is a bipartite labelled transition system between Player  
  and Opponent configurations, where labels are full moves, and whose main components are \boldemph{evaluation stacks} $\EE$,
defined as either:
\begin{compactitem}[$\bullet$]
 \item \emph{passive}, 
which are related to {Opponent configurations} and are of the shape $(E^n,\theta_n \rightsquigarrow \theta'_n)::\cdots::(E^1,\theta_1 \rightsquigarrow\theta'_1)$,
where each $E^i$ is an evaluation context of type $\theta_i \rightsquigarrow \theta'_i$;
 \item or \emph{active}, which are related to {Player configurations} and
are of the form $(M,\theta)::\EE'$, i.e.\ they consist of a term $M$ of type $\theta$
and a passive stack $\EE'$.
\end{compactitem}
The empty stack is written $\emptyStack$.

\begin{definition}
A \boldemph{configuration} is a tuple $\configuration{\EE}{\gamma}{\phi}{S}{\lambda}$ with:
\begin{compactitem}[$\bullet$]
 \item an evaluation stack $\EE$,
a typing function $\phi$ for locations,
and a {closed} store $S$,
 \item an \emph{environment} $\gamma$ mapping names to values,
 \item an \emph{ownership function} $\lambda\in (\Anl \times \{O,P\})^*$ ordering played names and mapping them to the party who has introduced them;
\end{compactitem}
and which satisfies the following conditions:
\begin{compactitem}[$\bullet$]
 \item the relation $\{(a,X)\mid \lambda =\lambda_1\cdot(a,X)\cdot\lambda_2\}$ is a partial function and $\lambda$ has no repetition of names
 \item $\dom{\gamma} = \{a \in \Apol \cup \Afun\cup\Atvar \sep \lambda(a)=P\}$
 \item $\dom{\phi} = \{l \in \Aloc \cap \dom{\lambda}\}\subseteq \dom{S}$
 \item for all $a \in \support{}{\EE,\codom{S},\codom{\gamma}} \backslash \Aloc$, $\lambda(a)=O$
\end{compactitem}
where, because of the first condition above, we write $\lambda(a)=X$ if 
$\lambda=\lambda_1\cdot(a,X)\cdot\lambda_2$ for some $\lambda_1,\lambda_2$.

In addition, we include special configurations of the form  $\inbrax{\typingTerm{\Sigma}{\Delta}{\Gamma}{M}{\theta}}$, one
for each typed term $\typingTerm{\Sigma}{\Delta}{\Gamma}{M}{\theta}$.
\end{definition}

Thus, a configuration registers syntactic and semantic information on the execution of a term necessary to produce its traces. $\cal E$ and $S$ are syntactic objects directly connected to the operational semantics.
The other components either are of semantic nature ($\phi,\lambda$) or bridge the semantics and the syntax ($\gamma$). In $\gamma$ we record the actual values that correspond to the functional and polymorphic names and type variables that the term (i.e.\ P) has produced. 
On the other hand, $\lambda$ is a name-polarity function which also keeps track of the order in which names were introduced. 
The last condition on $\lambda$ in the above definition is especially important: it stipulates that, except for location names, all the free names that appear in the term, either directly or indirectly via $\gamma$ or $S$, must belong to O. In other words, P cannot see the abstract values that he has provided to O during the interaction.

When the evaluation of a term $E[M]$ reaches, for example, some $E[fv]$ where $f$ is a function name provided by the context, a move asking the context to evaluate $f(v)$ will be produced. However, since $v$ is a syntactic value and in moves we only allow semantic entities, we need a way to pass from syntactic values to abstract ones. This is achieved as follows.
%
To each value $u$ of type $\theta$,
we associate the set $\AVal{u}{\theta}$ of triples $(v,\gamma,\phi)$, where each of them represents:
\begin{inparaitem}[$\bullet$]
\item a corresponding abstract value $v$;
\item an environment $\gamma$ instructing  the related mapping of names to values;
\item and a typing function $\phi$ recording the types used for each location in the translation. 
\end{inparaitem}
It is defined as:
\renewcommand\defeq{=}
\begin{align*}
&  \AVal{u}{\iota}  \defeq   \{(u,\emptyseq,\varnothing)\} 
  \quad \text{for } \iota=\Unit \text{ or } \Int \text{ and } u\in \sem{\iota}\\
&  \AVal{l}{\refer\theta}    \defeq   \{(l,\emptyseq,\{(l,\refer \theta)\} \sep l \in \Loc\} \\
  %
&  \AVal{u}{\alpha}  \defeq  \{(p,[p \mapsto u],\varnothing) \sep p \in \ApolT{\alpha}\} \!\cup\! 
  \{(u,\emptyseq,\varnothing) \sep u \in \ApolT{\alpha}\!\} \\
&  \AVal{u}{\theta}  \defeq  \{(f,[f \mapsto u],\varnothing) \sep f \in \AfunT{\theta}\} \quad \text{for } \theta \text{ functional}\\
%
  &\AVal{\pair{u_1}{u_2}}{\theta_1 \times \theta_2}   \\
&\quad \defeq    \{(\pair{v_1}{v_2},\gamma_1 \cdot \gamma_2,\phi_1\cup \phi_2) 
    \sep 
 (v_i,\gamma_i,\phi_i) \in \AVal{u_i}{\theta_i}\}  \\
  &\AVal{\pair{\theta'}{u}}{\exists\alpha.\theta} \\
&\quad \defeq  
    \{(\pair{\alpha'}{v},\gamma \cdot [\alpha' \mapsto \theta'],\phi) 
    \sep 
 (v,\gamma,\phi) \in \AVal{u}{\theta\subst{\alpha}{\alpha'}}\} 
\end{align*}
For uniformity, it makes sense to 
view types as values of special ``universe'' type $\Univ$ and set
 $\AVal{\theta}{\Univ} = \{(\alpha,[\alpha \mapsto \theta],\varnothing) \sep \alpha \in \TVar\}$.
By abuse of notation, we shall use $u$ and variants to range over values, abstract values and types when utilising the notation presented next.
Given a functional type $\theta$ and some $u$, we let the \emph{argument} and \emph{return type} of $\theta$ be:
\[
\begin{array}{rlrl}
\argT{\theta' \rightarrow \theta} &= \theta' &
\argT{\forall\alpha.\theta} &= \Univ 
\\
\retT{\theta' \rightarrow \theta}{ u} &= \theta&
 \retT{\forall\alpha.\theta}{ u} &= \theta\subst{\alpha}{u} 
\end{array}
\]
with the last expression above being well-defined only if $u$ is a type.

Finally, in a similar fashion that $\sf AVal$ allows us to move from concrete values to abstract ones, the operator $\sf AStore$ takes us from stores to abstractions thereof. That is, for each store $S$ and typing function $\phi$, the set $\AStore{S,\phi}$ consists of triples of the form $(S',\gamma',\phi')$ where: 
\begin{inparaitem}[$\bullet$]
\item $S'$ is an abstraction of $S$ according to the type information in $\phi$; 
\item $\gamma'$ is the mapping of the fresh abstract names of $S'$ to their concrete values;
\item  and $\phi'$ is the type information for any locations in the codomain of $S'$.
\end{inparaitem}
The formal definition in the case where $\phi$ is single-valued is given as follows. We postpone the definition for general $\phi$ to Section~\ref{sec:trace-sem}.
\[
  \AStore{S}{\phi}  \defeq \!\!\!\!\! \bigodot_{l \in \dom{S}}\!\!\!\! \{([l \mapsto v],\gamma',\phi') \sep (v,\gamma',\phi') \in \AVal{S(l)}{\phi(l)}\} \vspace{-1mm}
\]
\cutout{
  \AStore{S}{\phi}  \defeq &\!\!\! \bigodot_{l \in \dom{S}}\!\!\! \{
([l \mapsto v],\gamma_v,\phi_v) \sep 
   \phi_v = \bigcup\nolimits_{\theta \in X} \phi_{\theta} \\[-2mm]
&\qquad\qquad \land
   \forall \theta \in X.\,(v,\gamma_v,\phi_{\theta}) \in \AVal{S(l)}{\theta}
   \}
with $X_l = \min{\RCast{\phi}(\theta)}$ for any $\theta \in \phi(l)$.}%
Here $\odot$ is the pointwise concatenation of sets of triples $(S,\gamma,\phi)$, defined as
$X_1 \odot X_2 \defeq \{(S_1 \cdot S_2,\gamma_1 \cdot \gamma_2,\phi_2 \cup \phi_2) \sep (S_i,\gamma_i,\phi_i) \in X_i, i \in \{1,2\}\}$,
and $\bigodot_{i \in \emptyset} X_i \defeq \{(\empHeap,\emptyseq,\varnothing)\}$.
A similar notion is used for producing abstract stores where only typing information (and no concrete store) is defined as follows. 
\begin{align*}
  \SemStore{\phi}  \defeq &\!\!\! \bigodot_{l \in \dom{\phi}}\! \{([l \mapsto v],\phi') \sep (v,\phi') \in\sem{\phi(l)}\} 
\end{align*}
This is used for determining what stores can O play.

\cutout{
  \SemStore{\phi}  \defeq &\!\!\! \bigodot_{l \in \dom{\phi}}\!\!\! \{
([l \mapsto v],\phi_v) \sep 
   \phi_v = \!\!\!\bigcup\limits_{\theta \in X_l}\!\!\! \phi_{\theta} 
 \land
\forall \theta \in X_l.\,   (v,\phi_{\theta}) \in \sem{\theta}
   \}

with $X_l = \min{\RCast{\phi}(\theta)}$ for any $\theta \in \phi(l)$,}

We now give the definition of our trace semantics. 
Note that, for syntactic objects $Z$ and (e.g.\ type) environments $\delta$,
we write $Z\{\delta\}$ for the result of recursively applying $\delta$ in $Z$ as a substitution.

\begin{definition}[Trace Semantics]
We call \boldemph{Interaction Reduction} the system generated by the rules in Figure~\ref{fig:ir}. Given a configuration $C$, we let $\Tr(C)$ be the set of all traces produced from $C$. 
Terms are translated by setting
\[
\sem{\typingTerm{\Sigma}{\Delta}{\Gamma}{M}{\theta}} =  \comp{\Tr\inbrax{\typingTerm{\Sigma}{\Delta}{\Gamma}{M}{\theta}}}
\]
for each typed term $\typingTerm{\Sigma}{\Delta}{\Gamma}{M}{\theta}$, where ${\bf comp}$ selects the \boldemph{complete traces},
that is those traces where the number of answers is {greater or equal} to the number of questions.
\end{definition}

\cutout{
A trace $t$ is said to be \emph{generated} by a configuration $C$
if it can be written as a sequence $(\action{m_1}{S_1}{\rho_1}) \cdots (\action{m_n}{S_n}{\rho_n})$ of full moves
\st $C \xRedint{(\faction{m_1}{S_1}{\rho_1})} C_1 \xRedint{(\faction{m_2}{S_2}{\rho_2})} \ldots \xRedint{(\faction{m_n}{S_n}{\rho_n})}
C_n$, and we write $C \xRedint{t} C_n$. 
The set of traces generated by $C$ is written $\trace{(C)}$.
}

\begin{figure*}
\begin{tabular}{@{\ }l@{\;\;\ }l}
\textsc{(Int)} & 
$\configuration{(M,\theta)::\EE}{\gamma}{\phi}{S}{\lambda}
    \xredint{\phantom{\action{\ansP{v}}{S'}}} 
    \configuration{(M',\theta)::\EE}{\gamma}{\phi}{S'}{\lambda}$, \
given $(M,S) \red (M',S')$.
\\

\textsc{(PA)} & 
 $\configuration{(u,\theta)::\EE}{\gamma}{\phi}{S}{\lambda}
    \xredint{\faction{\ansP{v}}{S'\!\!}{\,\rho}}
  \configuration{\EE}{\gamma \cdot \gamma'}{\phi \cup \phi'}{S}{\lambda\cdot\lambda'}$, \ given $(v,\gamma_v,\phi_v) \in \AValD{u}{\theta}{\kappa}$.  \\

\textsc{(PQ)} & 
$\configuration{(E[f\, u],\theta)::\EE}{\gamma}{\phi}{S}{\lambda}
    \xredint{\faction{\questP{f}{v}}{S'\!\!}{\,\rho}}
    \configuration{(E,\theta'\rightsquigarrow\theta)::\EE}{\gamma \cdot \gamma'}{\phi \cup \phi'}{S}{\lambda\cdot\lambda'}$, \\
& given 
$f \in \AfunT{\theta_f}$ with $\lambda(f) = O$ and
$(v,\gamma_v,\phi_v) \in \AValD{u}{\argT{\theta_f}}{\kappa}$, $\theta'=\retT{\theta_f}{v}$.\\

\textsc{(OA)} &
$\configuration{(E,\theta'\! \rightsquigarrow\theta):: \EE}{\gamma}{\phi}{S}{\lambda}
    \xredint{\faction{\ansO{v}}{S'\!\!}{\,\rho}}
   \configuration{(\widetilde{E}[\widetilde{v}],\theta)::\widetilde{\EE}}{\widetilde{\gamma}}{\phi \cup \phi'}
    {\widetilde{S}[\widetilde{S'}]}{\lambda\cdot\lambda'}$, \
given  $(v,\phi_v) \in \sem{\theta'}_{\kappa}$. \\ 

\textsc{(OQ)} & $\configuration{\EE}{\gamma}{\phi}{S}{\lambda}
     \xredint{\faction{\questO{f}{v}}{S'\!\!}{\,\rho}}
     \configuration{(\widetilde{u}\, \widetilde{v},\theta)::\widetilde{\EE}}{\widetilde{\gamma}}{\phi \cup \phi'}
     {\widetilde{S}[\widetilde{S'}]}{\lambda\cdot\lambda'}$ \\
& given $f \in \AfunT{\theta'}$ with $\lambda(f) = P$ and 
$(v,\phi_v) \in \sem{\argT{\theta'}}_{\kappa}$, $\theta = \retT{\theta'}{v}$ and
  $\gamma(f) = u$. 
\\

\textsc{(Ini)} & $\inbrax{\typingTerm{\Sigma}{\Delta}{\Gamma}{M}{\theta}}
     \xredint{\faction{\questO{?\,}{v}}{S'\!\!}{\,\rho}}
     \configuration{(M\overrightarrow{\subst{x}{\widetilde{u}}},\theta)}{\varepsilon}{\phi'}
     {\widetilde{S'}}{\lambda'}$, \ given $\dom{\Gamma}=\{x_1,\cdots,x_n\}$, 
$(v,\phi_v)\in\sem{\Delta,\Sigma,\Gamma}$ and $v = (\vec\alpha,\vec l,\vec u)$.
\\[2mm]
$\widetilde{Z}$
&
Above, $\widetilde{Z} = Z\{\rho\}\{\gamma\}$, if $Z$ a term, context or stack, and 
$\widetilde{Z} = \{(z,\widetilde{Z(z)})\mid z\in\dom{Z}\}$ if $Z$ a map into terms.
\\
%
\hline\\[-3mm]
P1 & $\lambda' = \{(a,P) \sep a \in \support{}{v,S',\rho} \land a \notin \support{}{\lambda}\}$,
$\phi'=\phi_v \cup \phi_S \cup \phi_\rho$ and
$\gamma'=\gamma_v \cdot \gamma_S \cdot \gamma_\rho$
\\ P2 
& for all $f \in \support{\rmF}{S',v,\rho}, f \notin \support{}{\lambda}$
 and, for all  $\alpha \in \support{\rmT}{S',v,\rho}, \alpha \notin \support{}{\lambda}$
\\
P3 & $\support{\rmL}{\lambda'} = S^*(\support{\rmL}{v,\rho,\lambda})$ 
   and $(S',\gamma_S,\phi_S) \in \AStore{S_{|\support{\rmL}{\lambda'}}}{\phi}$
\\
P4 &  for all $p \in \support{\rmP}{S',v,\rho}$ with $p \in \ApolT{\alpha}$,
 $\lambda(\alpha) = P$ iff $p \notin \support{}{\lambda}$
\\[1mm]
P* & $(\rho,\gamma_\rho,\phi_\rho) \in \APEnv{\restrictTr{(\gamma\cdot\gamma_v\cdot\gamma_s)}{\Apol}}{\kappa,\kappa'}$ where
  $\kappa = \RCast{\phi}$ and $\kappa' = \RCast{\phi\cup\phi'}$, with $\phi\cup\phi'$ valid.
\\[1mm]
\hline\\[-3mm]
O1 &  $\lambda' = \{(a,O) \sep a \in \support{}{v,S',\rho} \land a \notin \support{}{\lambda}\}$
$\phi'=\phi_v \cup \phi_S \cup \phi_\rho$ and each $a\in\support{}{\lambda'}\backslash\Aloc$ is single-played in $(\faction{v}{S'}{\rho})$
\\
O2 & for all $f \in \support{\rmF}{S',v,\rho}, f \notin \support{}{\lambda}$
  and for all  $\alpha \in \support{\rmT}{S',v,\rho}, \alpha \notin \support{}{\lambda}$
\\
O3 & $S'$ closed, $\support{\rmL}{v,\rho} \subseteq \dom{S'}= S'^*(\support{\rmL}{v,\rho,\lambda})$,
  $\dom{S'} \cap \dom{S} = \support{\rmL}{\lambda}$ and $(S',\phi_S) \in \SemStore{\phi'}$
\\
O4 &  for all $p \in \support{\rmP}{S',v,\rho}$ with $p \in \ApolT{\alpha}$,
  $\lambda(\alpha) = O$ iff $p \notin \support{\rmP}{\lambda}$
\\[1mm]
O* & $(\rho,\phi_\rho) \in \SemPEnv{\xi}{\kappa,\kappa'}$ where
  $\xi = \{p \in \ApolT{\alpha} \sep \lambda{\cdot}\lambda'(p) = O\}$,
  $\kappa = \RCast{\phi}$ and $\kappa' = \RCast{\phi\cup\phi'}$ with $\phi\cup\phi'$ valid.
\end{tabular}
\caption{Interaction Reduction. Rules {\sc(PQ),(PA)} satisfy conditions P1-P4 and P*, while 
{\sc(OQ),(OA)} satisfy O1-O4 and O*. Rule {(\sc Ini)} satisfies O1, O3 and O* (taking $S=\varepsilon$, $\phi=\varnothing$ and
$\lambda=\varepsilon$).}\vspace{-2.95mm}
\label{fig:ir}
\end{figure*} 

In the rest of this section we explain the reduction rules and their conditions, apart from conditions P* and O* which concern {type disclosure} and are relegated to the next section. For the same reason, we also assume that typing functions $\phi$ are always single-valued and disregard any indexing with $\kappa$ used in the rules ($\kappa$'s are cast functions).

\mypar{Internal \sc(Int)}
This rule dictates that the interaction reduction includes the operational semantics of \SystemReF\ as long as internal computation steps are concerned, i.e.\ ones that do not involve external functions.

\mypar{P-Question \sc(PQ)} 
This rule describes the move occurring when an external function call is reached.
Thus, in order for P to provide the value (say) $u$ and store $S$, he first needs to abstract it to $v$ by hiding away all private code under fresh names. 
These will be the names put in $\lambda'$, along with any new location names revealed in the store $S'$ to be played. 
Since this is a P-move then, all names in $\lambda'$ are owned by P (P1).
In turn, $S'$ is the restriction of $S$ to public locations, again elevated to its abstraction. These abstractions result in new $\gamma'=\gamma\cdot\gamma_v\cdot\gamma_S$ and $\phi'=\phi\cup\phi_v\cup\phi_S$ (P1).
Note that the $\lambda$ component of a configuration enlists the \emph{public} names of a trace, i.e.\ those explicitly played in moves. Hence,
P3 stipulates that the locations included in the store $S'$ are precisely the ones reachable in $S$ from the names in $\lambda$ and any names in $v$ (put otherwise, name privacy is imposed). 
Finally, P2 dictates that any functional or type variable names played in the move must be fresh (as they represent abstractions of concrete values). 
Similarly, every polymorphic name played of type $\alpha$, with $\alpha$ of own polarity, must be fresh. If, on the other hand, $\alpha$ belongs to O, then P can only play old polymorphic names of that type (P4).

\mypar{P-Answer \sc(PA)}
In this case, a final value is reached and returns, with similar conditions applied.

\mypar{O-Question \sc(OQ)}
When it is the context's turn to play, one option is for O to call one of the functions provided by P. 
The rule looks very similar to the P-Question, yet it differs in one important point:
while O plays $v$ and $S'$, what is fed instead to the configuration is $v$ where all its P polymorphic and functional names have been replaced by their actual values 
(i.e.\ $v\{\gamma\}$)\footnote{we also substitute via $\rho$, but this we discuss in the next section.}  and the same goes for the abstract store $S'$. 
This is enforced by the use of $\widetilde{v}$ instead of $v$ and is due to the fact that P knows the actual values of these names, and therefore they should not remain abstract to him.
Another difference is the freedom to build $S'$, which nonetheless stipulates that O cannot guess any locations from $S$ unless the latter were already public.
Finally, observe in O1 the single-played restriction on fresh polymorphic, type or function names: 
as each such introduced name has the purpose of abstracting some concrete value or type played, every such name should be distinct (and fresh).\footnote{Formally, a move $(\faction{m}{S}{\rho})$ is said to \emph{single-play} a name $a\in\Anl\backslash\Aloc$ if $m$ is equal to 
$\questP{f}{v}$, $\questO{f}{v}$, $\ansP{v}$ or $\ansO{v}$ (for some $f$)
with $a \in \support{}{v,\ntt{S},\codom{\rho}}$ and there is only one occurrence of $a$ in $(\faction{v}{S}{\rho})$.
}
This condition is implicitly imposed in P1 as well, via the domain disjointness requirements in the definition of $\gamma'$.

\mypar{O-Answer \sc(OA)} 
On the other hand, a context can also return with a value, with similar conditions applied.

\mypar{Initial move \sc(Ini)} 
Initial moves are special O-Questions. In order for the 
interaction to commence, O needs to provide the context, that is, the values corresponding to the typing environment $\Delta,\Sigma,\Gamma$.

\smallskip
Let us look at a couple of examples.

\begin{example}
Consider the term $v\equiv \Lambda \alpha.\lambda x:\alpha \times \alpha.~\proj{1}(x)$ of type $\theta=\forall\alpha.\,\alpha{\times}\alpha\to\alpha$.
A characteristic trace of ${v}$ is 
$\questO{?}{}\cdot\ansP{g} \cdot \questO{g\,}{\alpha'} \cdot \ansP{f}\cdot
\questO{f}{p_1,p_2} \cdot\ansP{p_1}$,
produced as follows (we omit empty stores and $\rho$'s).%
{\small
\[\begin{array}{l@{\qquad}r}\\[-4.5mm]
\!\!\!\inbrax{\typingTerm{\cdot}{\cdot}{\cdot}{v}{\theta}} \xrightarrow{\questO{?\,}{}}
\configuration{(v,\theta)}{\ee}{\emptyset}{\ee}{\ee}
&(\theta=\forall\alpha.\,\alpha{\times}\alpha\to\alpha)
\\ 
\xrightarrow{\ansP{g}} \configuration{\emptyStack}{\gamma_1}{\emptyset}{\ee}{\lambda_1}
&\hspace{-2mm}(\gamma_1=[g\mapsto v],\lambda_1=(g,P))
\\
\xrightarrow{\questO{g\,}{\alpha'}}
\configuration{(v\,\alpha',\theta')}{\gamma_1}{\emptyset}{\ee}{\lambda_2} &\hspace{-55mm}
(\theta'=\alpha'\!{\times}\alpha'\!\!\to\!\alpha',\lambda_2\!=\lambda_1\!\cdot\!(\alpha'\!,O))
\\[2mm]
\to
\configuration{(v',\theta')}{\gamma_1}{\emptyset}{\ee}{\lambda_2} &\hspace{-55mm}
(v'\!\equiv \lambda x\!:\!\alpha' {\times} \alpha'\!.\,\proj{1}(x))
\\
\xrightarrow{\ansP{f}}
\configuration{\emptyStack}{\gamma_2}{\emptyset}{\ee}{\lambda_3} &
\hspace{-54mm}(\gamma_2=\gamma_1\!\cdot\![f\mapsto v'],\lambda_3=\lambda_2\!\cdot\!(f,P))
\\
\xrightarrow{\questO{f\,}{p_1,p_2}}
\configuration{(v'\!\pair{p_1}{p_2},\alpha')}{\gamma_2}{\emptyset}{\ee}{\lambda_4} &
(\lambda_4\!=\lambda_3\!\cdot\!(p_1,O)(p_2,O))
\\
\multicolumn{2}{l}{%
\to^*\configuration{(p_1,\alpha')}{\gamma_2}{\emptyset}{\ee}{\lambda_4} 
\xrightarrow{\ansP{p_1}}
\configuration{\emptyStack}{\gamma_2}{\emptyset}{\ee}{\lambda_4} }
\end{array}\]}%
Informally, after the initial move is played, the term is already evaluated to a function of type $\forall \alpha.\,\alpha{\times}\alpha\to\alpha$ and so P plays the move $\ansP{g}$ with $g\in\Afun_{\forall \alpha.\,\alpha{\times}\alpha\to\alpha}$.
At that point, the environment (O) may wish to interrogate $g$, supplying a type variable $\alpha'$
which is an abstraction of any type instantiation the environment may have chosen. Such a question would be of the form $\questO{g\,}{\alpha'}$. 
To the latter, P replies with a functional name $f$, via the move $\ansP{f}$,
of type $(\alpha'\times\alpha')\to\alpha'$.
Next, O decides to also interrogate $f$, say on input $\pair{4}{2}$. 
This translates to the move $\questO{f}{p_1,p_2}$, where now $p_1\mapsto 4$ and $p_2\mapsto 2$ for O. 
The trace concludes with P replying $\ansP{p_1}$, which is the return value of the first projection on $\pair{p_1}{p_2}$.
\end{example}

\begin{example}
Let us take
$v\equiv \lambda x\!:\!(\forall \alpha.\, \alpha {\rightarrow} \alpha).\,x \, \Int \, 3 + x \, \Int \,5$ 
of
type 
$\theta=(\forall \alpha.\alpha {\rightarrow} \alpha) \rightarrow \Int$.
A characteristic trace of $v$ is 
$\questO{?}{}\ansP{f} \cdot \questO{f}{g} \cdot \questP{g\,}{\alpha_1} \cdot \ansO{g_1} \cdot \questP{g_1}{p_1} \cdot \ansO{p_1} \cdot
\questP{g\,}{\alpha_2} \cdot \ansO{g_2} \cdot \questP{g_2}{p_2} \cdot \ansO{p_2} \cdot 
\ansP{8}$
and can be
produced by the following interaction.
{\small
\[\begin{array}{l@{}r}\\[-4mm]
\!\!\!\inbrax{\typingTerm{\cdot}{\cdot}{\cdot}{v}{\theta}} \xrightarrow{\questO{?\,}{}}
\configuration{(v,\theta)}{\ee}{\emptyset}{\ee}{\ee}
\\ 
\xrightarrow{\ansP{f}} \configuration{\emptyStack}{\gamma_1}{\emptyset}{\ee}{\lambda_1}
&\hspace{-40mm}(\gamma_1\!=[f\mapsto v],\lambda_1\!=(f,P))
\\
\xrightarrow{\questO{f\,}{g}}
\configuration{(vg,\Int)}{\gamma_1}{\emptyset}{\ee}{\lambda_2} &
(\lambda_2\!=\lambda_1\!\cdot\!(g,O))
\\[2mm]
\to
\configuration{(g\,\Int\,3+g\,\Int\,5,\Int)}{\gamma_1}{\emptyset}{\ee}{\lambda_2} &
\!\!\!\!(\gamma_2=\gamma_1\!\cdot\![\alpha_1\mapsto \Int])
\\[1mm]
\xrightarrow{\questP{g\,}{\alpha_1}}
\configuration{(\bullet\,3+g\,\Int\,5,\alpha_1{\to}\alpha_1\redsat\Int)}{\gamma_2}{\emptyset}{\ee}{\lambda_3} & 
(\lambda_3\!=\lambda_2\!\cdot\!(\alpha_1,P))
\\
\xrightarrow{\ansO{g_1}}
\configuration{(g_1\,3+g\,\Int\,5,\Int)}{\gamma_2}{\emptyset}{\ee}{\lambda_4} &
\hspace{-54mm}(\lambda_4\!=\lambda_3\!\cdot\!(g_1,O))
\\[1mm]
\xrightarrow{\questP{g_1}{p_1}}
\configuration{(\bullet+g\,\Int\,5,\alpha_1\redsat\Int)}{\gamma_3}{\emptyset}{\ee}{\lambda_5} & (\lambda_5\!=\lambda_4\!\cdot\!(p_1,P))
\\
\xrightarrow{\ansO{p_1}}
\configuration{(3+g\,\Int\,5,\Int)}{\gamma_3}{\emptyset}{\ee}{\lambda_5} &
\hspace{-54mm}(\gamma_3\!=\gamma_2\!\cdot\![p_1\mapsto 3])
\\[1mm]
\xrightarrow{\questP{g\,}{\alpha_2}}
\configuration{(3+\bullet\,5,\alpha_2{\to}\alpha_2\redsat\Int)}{\gamma_4}{\emptyset}{\ee}{\lambda_6} & (\lambda_6\!=\lambda_5\!\cdot\!(\alpha_2,P))
\\
\xrightarrow{\ansO{g_2}}
\configuration{(3+g_2\,5,\Int)}{\gamma_4}{\emptyset}{\ee}{\lambda_7} &
\hspace{-54mm}(\gamma_4=\gamma_3\!\cdot\![\alpha_2\mapsto \Int]) 
\\[1mm]
\xrightarrow{\questP{g_2}{p_2}}
\configuration{(3+\bullet,\alpha_2\redsat\Int)}{\gamma_5}{\emptyset}{\ee}{\lambda_8} & 
(\lambda_7\!=\lambda_6\!\cdot\!(g_2,O))
\\
\xrightarrow{\ansO{p_2}}
\configuration{(3+5,\Int)}{\gamma_5}{\emptyset}{\ee}{\lambda_8} &
\hspace{-54mm}(\gamma_5\!=\gamma_4\!\cdot\![p_2\mapsto 5], \lambda_8\!=\lambda_7\!\cdot\!(p_2,P))
\\
\to\configuration{(8,\Int)}{\gamma_5}{\emptyset}{\ee}{\lambda_8} 
\xrightarrow{\ansP{8}}
\configuration{\emptyStack}{\gamma_5}{\emptyset}{\ee}{\lambda_8} 
&\hspace{-40mm} 
\end{array}\]}%
Notice that $p_1,p_2$ are of different type, respectively $\alpha_1$ and $\alpha_2$.
As an exercise, we invite the reader to verify that the term
$v'\equiv \lambda x\!:\!(\forall \alpha.\, \alpha {\rightarrow} \alpha).\,
\letin{\,h=x \, \Int\,}{\,h \, 3 + h \,5}$ 
of the same
type 
$\theta$ produces the trace
$\questO{?}{}\ansP{f} \cdot \questO{f}{g} \cdot \questP{g\,}{\alpha'} \cdot \ansO{g'} \cdot \questP{g'\,}{p_1} \cdot \ansO{p_1} \cdot
\questP{g'\,}{p_2} \cdot \ansO{p_1} \cdot 
\ansP{6}$.
The latter behaviour can be triggered by a context which uses local state to record  polymorphic values of older calls:
\begin{align*}
&C\equiv\, \bullet\, \big(\Lambda \alpha.\,\letin{y = \nuref{(\lambda\_.\Omega_\alpha)}}{\letin {z =  \nuref{0}}{}}\\[-1mm]
&\qquad\quad\;\;\;\lambda x:\alpha.\, \ifte{(!z)}{(!y())}{(z:={!z}+1; y:=(\lambda\_.x); x)}\big)
\end{align*}
\end{example}

\cutout{
\mypar{Types}\nt{is this relevant?}
While we do not look into soundness of the above rules in this section, it is instructive to give a type system for our extended syntax as a sanity check. Typing judgements are now indexed by an environment $\delta$, and apart from:
\begin{gather*}
  \inferrule*{f \in \AfunT{\theta}}{\Delta;\Sigma;\Gamma\vdash_\delta f:\theta\{\delta\}} \qquad
  \inferrule*{p \in \ApolT{\alpha}}{\Delta;\Sigma;\Gamma\vdash_\delta p:\alpha}
\end{gather*}  
the rest of the typing rules remain the same.
}


\section{Type Disclosure, Casts and *-Conditions}
\label{sec:trace-sem}


As already discussed in the Introduction, the existence of references can be used to the advantage of a program in order to break parametricity. 
This is done by discovering variables of different reference types which, upon execution, end up with the same concrete location. 
Once such an \gjt{\emph{aliased pair}} has been identified, of type say $\refer\theta_1,\refer\theta_2$, then a casting function between $\theta_1$ and $\theta_2$ is readily available. 
For instance, if the two variables are $x_i:\refer\theta_i$, here is a casting function from $\theta_1$ to $\theta_2$:
\[
{\sf cast}_1\ \equiv\ \lambda z_1\,{:}\,\theta_1.\ x_1:=z_1;\ {!x_2}: \theta_1\to\theta_2
\]
Clearly, if the same location $l$ flows in $x_1$ and $x_2$ then we obtain ${\sf cast}_1\{l/x_1,l/x_2\}$  which casts indeed as designed.
The reader may wonder under what circumstances can the same location be passed to variables of different types. This can be achieved, for instance, by a context:
\[
C \equiv\, \letin{\,x=\nuref{0}\,}{(\Lambda\alpha.\,\lambda y_1\,{:}\,\refer\alpha.\,\lambda y_2\,{:}\,\refer\Int.\,\bullet)\,\Int\,x\,x}
\]
whereby $\theta_1=\alpha$ and $\theta_2=\Int$.

These considerations bring about \emph{type disclosure}, which we examine next in detail. We conclude the prelude to this section with some interesting equivalence examples/non-examples, left as a quiz for the reader.

\begin{example}\label{ex:quiz}
Suppose $f\!:(\refer\Int\times\refer\Int) \rightarrow \Unit$, $g:\forall\alpha.$ $\refer\alpha\rightarrow\! \refer\alpha$ and
$h\!:\!\forall\alpha,\alpha'\!.(\refer(\alpha' {\rightarrow}\, \alpha) \times \refer(\alpha' {\rightarrow}\, \Int) \times \alpha) \rightarrow \alpha$.
\begin{enumerate}
\item
$\letin{x,y = \refer 0}{f (x,y);\letin{u = g\, \Int \, x \ }{\,\ifte{(u = y)}{1}{2}}}
$
\item[$\;{\cong}\text{?}$]
$\letin{x,y = \refer 0} {f (x,y);\letin{u = g \, \Int \, x \ }{\,\ifte{(u = y)}{3}{2}}}
$\\[-1.5mm]
\item
$\letin{ x = \refer (\lambda y.1)}{\letin{ u = h \, \Int \, \Int \, (x,x,0)}{\, \ifte{u}{1}{2}}}$
\item[$\;{\cong}\text{?}$]
$\letin{ x = \refer (\lambda y.1) }{\letin{ u = h \, \Int \, \Int \, (x,x,0)}{\,\ifte{u}{3}{2}}}$
\end{enumerate}
\end{example}

\subsection{Type disclosure and casts}

Type disclosure is the result of the same location appearing in several positions in the code, each expecting some different type.
In such cases, we need to associate in our semantics a set of types to each location, employing the non-unicity of typing functions $\phi$.
In order to restrict the behaviour of O in the interaction to plausible computations, we shall impose some validity conditions to $\phi$: 
after all, not all types can be instantiations of the same type variable (for instance, $\phi(l)=\{\refer\Int,\refer\Unit\}$ is not allowed).

Validity is also dependent on precedence of type variables in the trace: a recent type variable cannot be instantiating one which has appeared before it in the trace.
We define a partial relation $\TRel{\Phi}$ on types, indexed by an \emph{ordered} set $\Phi$ of type variables, as:
 \begin{gather*}
  \frac{}{\theta\TRel{\Phi} \theta}\quad
  \frac{\theta_1 \TRel{\Phi} \theta_2\TRel{\Phi} \theta_3}{\theta_1\TRel{\Phi} \theta_3}\quad
  \frac{\support{\rmT}{\theta}<_{\Phi}\alpha}{\theta \TRel{\Phi} \alpha} 
  \frac{\theta \TRel{\Phi} \theta'}{\refer\theta \TRel{\Phi} \refer\theta'}
\\
  \frac{\theta_1 \TRel{\Phi} \theta'_1 \;\; \theta_2 \TRel{\Phi} \theta'_2}{\theta_1 \times \theta_2 \TRel{\Phi} \theta'_1 \times \theta'_2}\quad
  \frac{\theta \TRel{\Phi} \theta' \;\; \alpha \notin \Phi}{Q\alpha.\theta \TRel{\Phi} Q\alpha.\theta'} \quad    
  \frac{\theta_1 \TRel{\Phi} \theta'_1 \;\; \theta_2 \TRel{\Phi} \theta'_2}{\theta_1 \rightarrow \theta_2 \TRel{\Phi} \theta'_1 \rightarrow \theta'_2}
 \end{gather*}
for $Q=\exists,\forall$ and with $\support{\rmT}{\theta}<_{\Phi}\alpha$ meaning that {all} $\alpha' \in \support{\rmT}{\theta}$ {are before} $\alpha$ in $\Phi$.
Let us fix some $\Phi$ for the next definition.

\begin{definition}
%
A typing function $\phi$ is said to be \boldemph{valid} if for all $l \in \dom{\phi}$ there exists a type $\theta_0$ such that $\theta_0\TRel{\Phi}\theta$ for all $\theta\in\phi(l)$.
\end{definition}

In the sequel we will be using a very specific set $\Phi$, which we shall be leaving implicit.
For any configuration $C$ with components $\lambda$ and $\phi$, we say that $\phi$ is valid if it is so with respect to the ordered set $\Phi_{\lambda}$ of type variables obtained from $\lambda$: 
$\Phi_{\lambda}= \pi_1(\lambda)\upharpoonright\Atvar$.

As type instantiations are noticed during an interaction, the two parties can start forming cast functions to move between types. 
We introduce the notion of \boldemph{cast relations} $\kappa$, which are simply relations over types.
The fact that $(\theta,\theta') \in \kappa$ means that we can cast values of type $\theta$ to $\theta'$.

Casts yield other casts. For example, a cast from $\theta_1\times\theta_2$ to $\theta_1'\times\theta_2'$ yields subcasts 
from $\theta_1$ to $\theta_1'$, and from $\theta_2$ to $\theta_2'$.\footnote{Assuming $\theta_1$ and $\theta_2$ are inhabited types.} We formalise this as follows.
Given a cast relation $\kappa$, we define its closure $\TDclos{\kappa}$ by:
{
\begin{gather*}
  \inferrule*{(\theta,\theta') \in \kappa}{(\theta,\theta') \in \TDclos{\kappa}}
 \quad
  \inferrule*{ }{(\theta,\theta) \in \TDclos{\kappa}}
 \quad
  \inferrule*{(\theta,\theta'') \in \TDclos{\kappa} \quad (\theta'',\theta') \in \TDclos{\kappa}}
             {(\theta,\theta') \in \TDclos{\kappa}}
 \quad
  \inferrule*{(\refer\theta,\refer\theta') \in \TDclos{\kappa}}{(\theta,\theta') \in \TDclos{\kappa}} 
 \\           
  \inferrule*{(\theta_1,\theta'_1) \in \TDclos{\kappa} \quad (\theta_2,\theta'_2) \in \TDclos{\kappa}}
             {(\theta_1 \times \theta_2,\theta'_1 \times \theta'_2) \in \TDclos{\kappa}}
 \;\;           
  \inferrule*{(\theta_1 \times \theta_2,\theta'_1 \times \theta'_2) \in \TDclos{\kappa}}
             {(\theta_1,\theta'_1) \in \TDclos{\kappa}}
 \;\;           
  \inferrule*{(\theta_1 \times \theta_2,\theta'_1 \times \theta'_2) \in \TDclos{\kappa}}
             {(\theta_2,\theta'_2) \in \TDclos{\kappa}}
 \\           
  \inferrule*{(\theta'_1,\theta_1) \in \TDclos{\kappa} \quad (\theta_2,\theta'_2) \in \TDclos{\kappa}}
             {(\theta_1\! \rightarrow \theta_2,\theta'_1\! \rightarrow \theta'_2) \in \TDclos{\kappa}}
 \;\;            
  \inferrule*{(\theta'_1\! \rightarrow \theta_2,\theta_1\! \rightarrow \theta'_2) \in \TDclos{\kappa}}
             {(\theta_1,\theta'_1) \in \TDclos{\kappa}}
 \;\;
  \inferrule*{(\theta_1\! \rightarrow \theta_2,\theta'_1\! \rightarrow \theta'_2) \in \TDclos{\kappa}}
             {(\theta_2,\theta'_2) \in \TDclos{\kappa}}
 \\
  \inferrule*{(\theta,\theta') \in \TDclos{\kappa} \quad \alpha \notin \support{}{\kappa}}
             {(Q\alpha.\theta,Q\alpha.\theta') \in \TDclos{\kappa}}
 \quad
  \inferrule*[Right=($*$)]{(Q\alpha.\theta,Q\alpha.\theta') \in \TDclos{\kappa} \quad \chi(\alpha,\theta,\theta')}
             {(\theta\subst{\alpha}{\theta_0},\theta'\subst{\alpha}{\theta_0}) \in \TDclos{\kappa}}
\end{gather*}}%
for $Q=\exists,\forall$, where $\chi(\alpha,\theta,\theta')$ means that $\alpha$ does not appear in the scope of a $\rm ref$ constructor in $\theta,\theta'$.
{Notice that all the rules are going in both directions, but the one on {\rm ref} types. Indeed, being able to cast from $\theta$ to $\theta'$
does not imply we can cast from $\refer\theta$ to $\refer\theta'$. This observation allows us to see that the terms of Example~\ref{ex:quiz}\,(1) are equivalent despite the type disclosure (cf.\ Section~\ref{sec:examples}).}

We can now define the cast relation $\RCast{\phi}$ related to a typing function $\phi$.
We can show that, for any valid typing function $\phi$, $\RCast{\phi}$ is a valid cast relation. 

\begin{definition}\label{def:Cast}
Given a typing function $\phi$, its associated cast relation $\RCast{\phi}$  
is the closure of $\{(\theta,\theta') \sep \exists l.\, \theta,\theta' \in \phi(l)\}$.
\end{definition}

Given a cast relation $\kappa$ and a type $\theta$, we let
\[
\min(\kappa(\theta)) = \{\theta' \in \kappa(\theta) \sep \forall \theta'' \in \kappa(\theta).\,
\theta'' \TRel{} \theta'\implies \theta'' =\theta' \}
\]
be the {set}
of minimal types of $\kappa(\theta)$. Because the closure rules
above are not reversible on $\rm ref$ types, this set is not in general a singleton (e.g.\ $\min(X)\!=\!X$ for $X\!=\{\refer(\alpha\times\Int),\refer(\Int\times\alpha')\}$).
This means that a type $\theta$ can have several minimal types in its cast class, and each of them needs to be taken in to account when computing abstract values to be played in a move. 
Hence, minimal types are central to the (full) definitions of $\sf AVal$, $\sf AStore$, etc.

\subsection{The starred conditions}

We next look at the use of environments $\rho$ and the conditions O* and P* which govern type disclosure in the interaction reduction.

Each move $(m,S,\rho)$ played in an interaction has the potential to reveal type information. 
Looking at the reduction rules, in particular, we see that such a move can enlarge 
the current typing function $\phi$ to a (valid) superset $\phi\cup\phi'$: this is due to the fact that locations $l$ 
which up until now had types $\phi(l)$ are put in positions which expect types $\theta\notin\phi(l)$ (e.g.\ in return position of some $f\in\Afun_{\theta'\to\theta}$). 
This leads to a corresponding increase in the cast capabilities to $\kappa'=\RCast{\phi\cup\phi'}$. 
Cast capabilities, though, may reveal the values behind polymorphic names: for instance, if we are able to form a cast from $\alpha$ to $\Int$, 
we can go back to an old $p\in\Anl_\alpha$, cast it as an integer and read its value. 
This decoding capability is the reason behind the presence of $\rho$ in the move: $\rho$ contains all those polymorphic names $p$ whose value is being revealed (indirectly, via casts) through the current move, along with the revealed values. 

The way polymorphic values are revealed is governed by conditions P* and O*.
The former stipulates that, given the old cast relation $\kappa$, the new casting $\kappa'$ is the one we obtain via the updated typing function $\phi\cup\phi'$. 
Moreover, as explained above, each concrete value $\gamma(p)$ of a polymorphic name $p$ needs to be partially revealed.
The degree to which the codomain of $\restrictTr\gamma\Apol$ will be revealed is determined by the function $\sf AEnv$. That is, $\APEnv{\restrictTr\gamma\Apol}{\kappa,\kappa'}$
comprises a new abstract environment $(\rho,\gamma_\rho,\phi_\rho)$ for these newly revealed values, that is moreover unique up to permutation of fresh names. 
The first component ($\rho$) is the map from polymorphic names to their revealed values. The other two components record the locations types ($\phi_\rho$) and value abstraction ($\gamma_\rho$) occurring via this disclosure. 
In the case of O*, a similar disclosure occurs, only that this time there is no $\gamma$ to guide the revealed values; rather, O supplies the disclosure in a non-deterministic fashion.

The definition of $\sf AEnv$ and its O-counterpart are given below,
\begin{align*}
  \APEnv{\gamma}{\kappa,\kappa'} \defeq &\hspace{-1.3em} 
\bigodot_{\begin{array}{l}\scriptstyle p \in \dom{\gamma}\\[-.8mm] \scriptstyle \text{s.t. }X_p \neq \varnothing\end{array}} \hspace{-1em}
  \{
   ([p \mapsto v],\gamma_v,\phi_v) \sep  \phi_v = \bigcup\nolimits_{\theta \in X_p} \phi_{\theta} \\[-6mm]
&\qquad\qquad\;\,\land
  \forall\theta\in X_p.\,(v,\gamma_v,\phi_\theta) \in \AVal{\gamma(p)}{\theta} 
\} \\[2mm]
  \SemPEnv{\xi}{\kappa,\kappa'} \defeq &\hspace{-1.6em} \bigodot_{\quad p \in\xi\text{ s.t. }X_p \neq \varnothing} \hspace{-1.5em}
  \{
   ([p \mapsto v],\phi_v) \sep  \phi_v = \!\bigcup\nolimits_{\theta \in X_p}  \! \phi_{\theta}\\[-2mm] 
&\qquad\qquad\qquad\qquad\qquad\;\,\land \forall\theta\in X_p.\,(v,\phi_\theta) \in \sem{\theta} \}
\end{align*}
with $\dom{\gamma},\xi\subseteq\Apol$ and $X_p = \min{\kappa'(\alpha)} \backslash \min{\kappa(\alpha)}$ for $p \in \ApolT{\alpha}$.
Thus, for each $p\in\ApolT{\alpha}$ in the domain of $\gamma$ such that, going from $\kappa$ to $\kappa'$, there is a new type disclosure on the type of $p$ (i.e.\ such that $X_p\not=\emptyset$), 
to compute the disclosure happening on $\gamma(p)$ we look at all the newly disclosed types $\theta\in X_p$ and for each of them select an abstract environment
from $\AVal{\gamma(p)}{\theta}$. If we can pick these environments so that they all agree in their value component $v$, we can reveal that $p$ maps to $v$.
\ntt{Note that $X_p$ determines how much of $\gamma(p)$ is revealed: for instance, $X_p = \{\alpha'\}$
with $\alpha'$ another type variable, then $v$ will simply be another polymorphic name $p'$.}
On the other hand, $\SemPEnv{\xi}{\kappa,\kappa'}$ is more liberal in choosing the common revealed value $p$, as it scans through each $\sem{\theta}$ instead of  
$\AVal{\gamma(p)}{\theta}$.
In a similar vein, we get:
\[
\begin{array}{r@{\;}l}
  \AValD{u}{\theta}{\kappa} \defeq & \{ (v,\gamma,\phi) \sep  \phi = \!\bigcup\nolimits_{\theta' \in X}  \! \phi_{\theta'} \\
& \qquad\quad\qquad\land 
    \forall\theta'\in X.\, 
    (v,\gamma,\phi_{\theta'}) \in \AVal{u}{\theta'} \} 
\\[2mm]
    \AStore{S}{\phi}  \defeq & \!\!\bigodot\limits_{l \in \dom{S}}\!\!\! \{
([l \mapsto v],\gamma_v,\phi_v) \sep 
   \phi_v = \bigcup\nolimits_{\theta \in X_l} \phi_{\theta} \\[-2mm]
&\qquad\qquad \land
   \forall \theta \in X_l.\,(v,\gamma_v,\phi_{\theta}) \in \AVal{S(l)}{\theta}\}
\end{array}
\]\vspace{-1mm}
\[
\begin{array}{r@{\;}l}
  \sem{v}_{\kappa} &\defeq  \{(v,\phi) \sep  \phi = \!\bigcup\nolimits_{\theta' \in X}  \! \phi_{\theta'} \land 
  \forall\theta'\in X.\,(v,\phi_{\theta'}) \in \sem{\theta'} \} 
\\[2mm]
  \SemStore{\phi}  &\defeq\!\!\!\! \bigodot\limits_{l \in \dom{\phi}}\!\!\! \{
([l \mapsto v],\phi_v) \sep 
   \phi_v\! =\!\!\! \bigcup\limits_{\theta \in X_l}\!\!\phi_{\theta} 

 \land \forall \theta \in X_l.\,   (v,\phi_{\theta}) \in \sem{\theta}\}
\end{array}
\]
with $X = \min(\kappa(\theta))$ and 
$X_l = \bigcup_{\theta \in \phi(l)}\min(\RCast{\phi}(\theta))$.

While there is some circularity between the different new components in condition P*, we can always pick them in a nominally deterministic way. 
We conclude this section with a couple of examples demonstrating type disclosure.

\cutout{
\begin{figure}[t]
\[\begin{array}{l}
  \APEnv{\gamma}{\kappa,\kappa'} \defeq \hspace{-1.2em} \displaystyle\bigodot_{\begin{array}{l}p \in \dom{\gamma}\\ X_p \neq \varnothing\end{array}} \hspace{-0.8em}
  \left\{\begin{array}{l}
   ([p \mapsto v],\gamma_v,\phi_v) \sep \forall \theta \in X_p. \exists \phi_\theta. \\ 
   \quad (v,\gamma_v,\phi_\theta) \in \AVal{\gamma(p)}{\theta} \\  
   \quad \land \phi_v = \bigcup\limits_{\theta \in X_p} \phi_{\theta} \end{array}\right\} \\
  \SemPEnv{\xi}{\kappa,\kappa'} \defeq \hspace{-1.2em} \displaystyle\bigodot_{\begin{array}{l}p \in \dom{\xi}\\ X_p \neq \varnothing\end{array}} \hspace{-0.8em}
  \left\{\begin{array}{l}
   ([p \mapsto v],\phi_v) \sep \forall \theta \in X_p. \exists \phi_\theta. \\ \quad (v,\phi_\theta) \in \sem{\theta} \land \phi_v = \bigcup\limits_{\theta \in X_p} \phi_{\theta} \end{array}\right\} \\    
   \text{ with } X_p = \min{\kappa'(\alpha)} \backslash \min{\kappa(\alpha)} \text{ for } p \in \ApolT{\alpha}
\end{array}\]
\caption{Definition of disclosure environments}
\label{f:AEnv}
\end{figure} 
}


\subsection{Examples}\label{sec:examples}

We first look at a term that uses type disclosure to cast between two of its inputs, similarly to the initial examples of the paper.
Let us set $\theta =\refer\alpha\times\refer\Int\times\alpha$ and $v\equiv\Lambda\alpha.\lambda \langle x,y,z\rangle^{\theta}\!.M$ 
with  $M \equiv \Ifte{{x\!=y}}{({y:=42; !x})}{z}$.
A characteristic trace of $v$ is the following (e.g.\ for $S\!=\![l \mapsto 9],\rho = [p \mapsto 7]$),
{\small
\[\begin{array}{lr}\\[-4mm]
\!\!\!\inbrax{\typingTerm{\cdot}{\cdot}{\cdot}{v}{\theta}} \xrightarrow{\questO{?\,}{}}
\configuration{(v,\theta)}{\ee}{\emptyset}{\ee}{\ee}
\\ 
\xrightarrow{\ansP{f}} \configuration{\emptyStack}{\gamma_1}{\emptyset}{\ee}{\lambda_1}
&(\gamma_1\!=[f\mapsto v],\lambda_1\!=(f,P))
\\
\xrightarrow{\questO{f\,}{\alpha}}
\configuration{(v\alpha,\theta \rightarrow \alpha)}{\gamma_1}{\emptyset}{\ee}{\lambda_2} 
& (\lambda_2\!=\lambda_1\!\cdot\!(\alpha,O))
\\
\xrightarrow{\ansP{g}}
\configuration{\diamond}{\gamma_2}{\emptyset}{\ee}{\lambda_2} 
& (\gamma_2\!=\gamma_1\!\cdot\![g \mapsto \lambda \langle x,y,z\rangle^{\theta}.M])
\\
\xrightarrow{\questO{g\,}{ l,l,p},S,\rho}
\configuration{(M',\alpha)}{\gamma_2}{\phi_1}{S}{\lambda_3} 
& (\phi_1\!=(l,\Int),\!(l,\alpha))
\\
\xrightarrow{\ansP{42},S}
\configuration{\emptyStack}{\gamma_2}{\phi_1}{S}{\lambda_3} 
& (\lambda_3\!=\lambda_2\!\cdot\!(l,O)\!\cdot\!(p,O))
\end{array}\]}%
where $M'\equiv M\subst{x,y}{l}\subst{z}{p}\{\rho\}\equiv \Ifte{{l\!=l}}{({l:=42; !l})}{7}$.

Now, going back to Example~\ref{ex:quiz},
let $f\!:(\refer\Int\times\refer\Int)\, {\rightarrow}\, \Unit$, $g:\forall\alpha.$ $\refer\alpha\rightarrow\! \refer\alpha$ and
$M\equiv\letin{x,y = \refer 0}{f \pair{x}{y};\letin{u = g\, \Int \, x \, }{\ifte{(u = y)}{1}{2}}}$ and
$N\equiv \ifte{(u = l')}{1}{2}$. Then, taking $\gamma\!=[\alpha \mapsto \Int]$,  $M$ can produce characteristic traces of two kinds:
{\small
\[\begin{array}{@{}l@{}r}\\[-4mm]
\!\!\!\inbrax{\typingTerm{\cdot}{\cdot}{\Gamma}{M}{\Int}} \xrightarrow{\questO{?\,}{f,g}}
\configuration{(M,\Int)}{\ee}{\emptyset}{\ee}{\lambda_1}
&(\lambda_1\!=(f,O)\cdot(g,O))
\\[1mm]
\to^* \configuration{(f (l,l');\letin{u = g\, \Int \, l \ }{N},\Int)}{\ee}{\emptyset}{S_1}{\lambda_1}
&(S_1\!=[l\mapsto 0,l'\mapsto 0])
\\ 
\xrightarrow{\questP{f}{l,l'},S_1} \configuration{\bullet;\letin{u = g\, \Int \, l \ }{N}}{\ee}{\phi_1}{S_1}{\lambda_2}
&(\lambda_2\!=\lambda_1\!\cdot\!(l,P)\!\cdot\!(l',P))
\end{array}
\]}
{\small
\[\begin{array}{@{}l@{}r}\\[-7mm]
\xrightarrow{\ansO{()},S_2}
\configuration{(();\letin{u = g\, \Int \, l \, }{N},\Int)}{\ee}{\phi_1}{S_2}{\lambda_2} 
& (\phi_1\!=(l,\Int),\!(l',\Int))
\\
\to\xrightarrow{\questP{g\,}{\alpha},S_2}
\configuration{(\letin{u = \bullet \, l \, }{N},\Int)}{\gamma}{\phi_1}{S_2}{\lambda_3} 
& (\lambda_3\!=\lambda_2\!\cdot\!(\alpha,P))
\\
\xrightarrow{\ansO{h},S_3}
\configuration{(\letin{u = h \, l \, }{N},\Int)}{\gamma}{\phi_1}{S_3}{\lambda_4} 
& (\lambda_4\!=\lambda_3\!\cdot\!(h,O))
\\
\xrightarrow{\questP{h\,}{l},S_3}
\configuration{(\letin{u = \bullet\, }{N},\Int)}{\gamma}{\phi_2}{S_3}{\lambda_4} 
& (\phi_2\!=\phi_1,\!(l,\alpha))
\\
\multicolumn{2}{l}{%
\!\!\!\![1]\xrightarrow{\ansO{l},S_4}
\configuration{(\letin{u = l }{N},\Int)}{\gamma}{\phi_2}{S_4}{\lambda_4} 
\to^*\xrightarrow{\ansP{2},S_4}
\configuration{\emptyStack}{\gamma}{\phi_2}{S_4}{\lambda_4} }
\\
\multicolumn{2}{l}{
\!\!\!\![2]\xrightarrow{\ansO{l''},S_4}
\configuration{(\letin{u = l'' }{N},\Int)}{\gamma}{\phi_3}{S_4}{\lambda_5} 
\to^*\xrightarrow{\ansP{2},S_4}
\configuration{\emptyStack}{\gamma}{\phi_3}{S_4}{\lambda_5} }
\end{array}\]}
\smallskip

\noindent
according to choices [1] and [2] for O's last move. In particular, O can either return the $l:\refer\alpha$ he received, or create a new 
$l'':\refer\alpha$ and return it. Due to $\phi_2$, O can cast from $\Int$ to $\alpha$ and put arbitrary values in $l,l''$. However, 
as $\RCast{\phi_2}(\refer\alpha)=\{\refer\alpha\}$,
he has no cast from $\refer\Int$ to $\refer\alpha$ and hence cannot return $l'$.



%

%
%
%

\section{Soundness}

We show that our model is sound, i.e.\ equality of term denotations implies contextual equivalence. 
In fact, we prove a stronger result (Theorem~\ref{thm:sound}), whereby equality is replaced by a larger equivalence relation which rules out some over-distinguishing O behaviours.

\subsection{Valid configurations}
\label{subsec:validconf}

To reason on the interaction reduction, we prove it preserves some invariants
which we collect in the notion of \emph{valid configuration}.

An obvious invariant we want to preserve is that elements of the evaluation stack are well-typed.
However, due to the fact that locations do not always have a unique type, and the ensuing casting capabilities that arise,
we cannot use the standard typing system defined in Section~\ref{sec:language}.
We thus need to generalise it by allowing location contexts to be multi-valued, i.e.\ use valid typing functions $\phi$ (instead of $\Sigma$), together with the new typing rule:
\[\inferrule*{\typingTermTD{\phi}{\Delta}{\Gamma}{M}{\theta} \quad (\theta,\theta') \in \RCast{\phi}}
             {\typingTermTD{\phi}{\Delta}{\Gamma}{M}{\theta'}}
\]
\ntt{We write $\typingHeapTD{S}{\!\phi}$ if 
$\forall l\in\dom{S}.\exists\theta \in \phi(l).~\typingTermTD{\phi}{\support{\rm T}{\phi}}{\cdot}{S(l)}{\theta}$.}

The extended type system still satisfies a safety property, which is crucial in the soundness proof of our model (cf.\ Appendix~\ref{app-rts}).

\begin{lemma}
\label{thm:red-typedterm}
Given $\typingTermTD{\phi}{\Delta}{\cdot}{M}{\theta}$ and $\typingHeapTD{S}{\phi}$
such that for all $p \in \support{}{M,S} \cap \ApolT{\alpha}, \min{(\RCast{\phi}(\alpha))} = \{\alpha\}$
 either 
$(M,S)$ diverges
or
there exists $(M',S')$ irreducible such that:
 \begin{compactitem}[$\bullet$]
  \item $(M,S) \red^*  (M',S')$,
  \item $M'$ is either equal to a value $v$ or to a callback $E[f \ v]$,
  \item there exists $\phi'$ disjoint from $\phi$ such that 
 $\typingTermTD{\phi \cup \phi'}{\Delta}{\cdot}{M'}{\theta}$ and $\typingHeapTD{S'}{\ntt{\phi}\cup \phi'}$.
 \end{compactitem}
\end{lemma}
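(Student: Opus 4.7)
The plan is to prove this by the standard progress and preservation technique, adapted to the multi-valued typing function $\phi$ and the cast rule introduced in Section~\ref{subsec:validconf}. First I would establish a \emph{canonical forms} lemma: if $\typingTermTD{\phi}{\Delta}{\cdot}{v}{\theta}$ with $v$ a value, then the syntactic shape of $v$ is compatible with $\theta$ up to $\RCast{\phi}$. The hypothesis $\min{(\RCast{\phi}(\alpha))} = \{\alpha\}$ for every polymorphic name $p \in \ApolT{\alpha}$ occurring in $M,S$ is crucial here: it forbids $p$ from being retyped at a function, integer, reference or product type, so polymorphic names remain opaque and never appear as active heads of redexes.

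Next I would prove \emph{progress}: any closed well-typed $M$ under $\typingHeapTD{S}{\phi}$ satisfying the polymorphic-name hypothesis is either a value $v$, a callback $E[f\,v]$ with $f\in\Afun$, or reduces via $\red$. The argument is by inspection of the unique redex at the head of the evaluation context of $M$; canonical forms guarantees that each redex pattern $(\lambda x.N)v$, $(\Lambda\alpha.N)\theta'$, $\mathsf{if}\,n\,\ldots$, $!l$, $l:=v$, $\nuref v$, $\pi_i\pair{v_1}{v_2}$, $l=l'$, $\unpack{\pack{\pair{\theta_0}{v}}}{\ldots}{\ldots}$ is well-formed, and rules out any stuck configuration involving a polymorphic name in active position. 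The only non-reducing irreducible open form, given closedness and the stack being active, is a callback $E[f\,v]$ with $f\in\Afun$ (i.e.\ an O-introduced function name).

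Then I would prove \emph{preservation}: if $(M,S) \red (M',S')$ from a configuration typed as in the lemma, there exists $\phi'$ with $\dom{\phi'}\cap\dom{\phi}=\emptyset$ such that $\typingTermTD{\phi \cup \phi'}{\Delta}{\cdot}{M'}{\theta}$ and $\typingHeapTD{S'}{\phi\cup\phi'}$. The proof is a routine case analysis on the reduction rule, using substitution lemmas for both term and type substitution together with the cast rule. The only rule producing a non-empty $\phi'$ is $\nuref v$, where $\phi'=[l\mapsto\{\theta_0\}]$ for the fresh $l$ and the inferred type $\theta_0$ of $v$. For $l:=v$ we exploit the fact that the typing derivation of $M$ must have used some $\theta_0\in\phi(l)$ to type the assignment, and the same $\theta_0$ now re-types $S'(l)=v$. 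The polymorphic-name hypothesis is preserved because $\phi$ only grows on fresh locations, which do not enlarge $\RCast{\phi}(\alpha)$ for any pre-existing $\alpha$.

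Finally, iterating preservation along the reduction trajectory yields the statement: either the sequence is infinite (divergence), or it reaches an irreducible $(M',S')$, for which progress forces $M'$ to be a value or a callback; the accumulated $\phi'=\bigcup_i \phi'_i$ is disjoint from $\phi$ by construction. The main obstacle I anticipate is the canonical-forms step: verifying that the cast closure rules of Section~\ref{sec:trace-sem}, in combination with the minimality condition on polymorphic names, really do rule out every way a $p \in \ApolT{\alpha}$ could be treated as a compound-typed value. This requires a careful induction on the derivation of $(\alpha,\theta) \in \RCast{\phi}$ showing that minimality of $\RCast{\phi}(\alpha)$ at $\alpha$ prevents any such fact from being derivable when $\theta\neq\alpha$.
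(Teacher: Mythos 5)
Your proposal is correct and follows essentially the same route as the paper, which proves this lemma by the standard progress-and-preservation technique: term- and type-substitution lemmas sensitive to the cast rule, a subject-reduction lemma in which only $\nuref v$ contributes a fresh singleton entry to $\phi'$, and a progress argument showing that a polymorphic name $p\in\ApolT{\alpha}$ in active position would force $\Int$, an arrow type, or a $\mathrm{ref}$ type into $\RCast{\phi}(\alpha)$, contradicting $\min(\RCast{\phi}(\alpha))=\{\alpha\}$. The paper packages your ``canonical forms'' step directly into the progress lemma (Lemma~\ref{lm:irred-val}) rather than stating it separately, but the underlying argument is the same.
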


Using this extended system, we can type evaluation stacks of configurations.
 A passive evaluation stack $(E_n,\theta_n \rightsquigarrow \theta'_n)::\ldots::(E_1,\theta_1 \rightsquigarrow \theta'_1)$
 is said to be \emph{well-typed} w.r.t.\ a typing function $\phi$ and 
 a type environment $\delta : \Atvar \parrow \Types$ 
 if,  for all $1\leq i\leq n$,\linebreak
 $\typingCtxTD{\phi}{\Delta}{E_i}{\theta_i\{\delta\}}{\theta'_i\{\delta\}}$.
 An active evaluation stack $(M,\theta,\Phi)::\EE$ is well-typed 
 for $\phi,\delta$ if $\typingTermTD{\phi}{\Delta}{\cdot}{M}{\theta\{\delta\}}$
 and $\EE$ is well-typed for $\phi,\delta$.
We can now specify which configurations are {valid}.

\begin{definition}
We call $\configuration{\EE}{\gamma}{\phi}{S}{\lambda}$ a \boldemph{valid configuration} if:
\begin{compactitem}[$\bullet$]
 \item $\dom{\gamma} = \{a \in \Apol \cup \Afun \cup \Atvar \sep \lambda(a) = P\}$,
 \item $\dom{\phi} = \dom{\lambda} \cap \Loc \subseteq \dom{S}$,
 \item for all $a \in \support{}{\EE,\codom{S},\codom{\gamma}} \backslash \Loc$, $\lambda(a) = O$,
 \item there exists $\phi'$ disjoint of $\phi$ 
 s.t.  $\typingHeapTD{S}{\phi \cup \phi'}$,
 \item $\EE$ is well-typed for $\phi \cup \phi',\gamma_{|\Atvar}$
 \item for all $p \in \support{}{\EE,\ntt{S},\codom{\gamma}} \cap \ApolT{\alpha}$, $\min{(\RCast{\phi}(\alpha))} = \{\alpha\}$.
\end{compactitem}
\end{definition}

\ntt{We write $C \xRedint{\faction{m}{S}{\rho}} C'$ 
when $C \xredint{}^{\!*}\! C'' \xredint{\faction{m}{S}{\rho}} C'$ for some configuration $C''$.}
Validity of configurations is preserved as follows.

\begin{lemma}
\label{thm:validconf}
If $C \xRedint{\faction{m}{S}{\rho}} C'$ and $C$ is valid then so is $C'$.
\end{lemma}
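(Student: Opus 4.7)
The proof proceeds by case analysis on which labeled rule governs the step $C'' \xredint{\faction{m}{S}{\rho}} C'$, preceded by zero or more internal steps $C \to^* C''$. For the internal phase, I invoke a single-step subject-reduction property (essentially the well-typing machinery underlying Lemma~\ref{thm:red-typedterm}): since (Int) only updates the top term and store, clauses 1, 3 and 6 are preserved trivially, while clauses 2, 4 and 5 hold because any freshly allocated location extends a \emph{private} typing $\phi''$ disjoint from $\phi$, and the active stack remains typable in $\phi \cup \phi' \cup \phi''$. Iterating through the reduction chain yields validity of $C''$, so it suffices to check the labeled step.

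For each of the four main labeled rules (PQ), (PA), (OQ), (OA), I verify the six validity clauses using the associated side conditions. Clauses 1 and 2, on the domains of $\gamma$ and $\phi$, follow from P1/O1, which define $\gamma' = \gamma_v \cdot \gamma_S \cdot \gamma_\rho$ and $\phi' = \phi_v \cup \phi_S \cup \phi_\rho$, together with P3/O3, which tie the new public location set to $\dom{\lambda'} \cap \Loc$ via the reachability closure $S^*$. Clause 3, requiring that non-location free names are O-owned, is immediate for O-moves by O1--O2 and the single-played restriction; for P-moves, the P-names freshly introduced in $\gamma'$ are captured by P1--P2, while the names appearing in their images come from sub-components of $u$ (respectively $S$ restricted to public locations) that were already present in the valid $C$, hence O-owned by induction. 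Clauses 4 and 5 on heap typing and stack well-typedness follow from the defining properties of $\AValD{\cdot}{\cdot}{\cdot}$, $\AStore{\cdot}{\cdot}$ and $\sem{\cdot}$, which produce $v, S'$ consistent with $\phi \cup \phi'$; the substitutions $\widetilde{Z} = Z\{\rho\}\{\gamma\}$ are type-preserving by the inductive validity of $\gamma$.

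The main obstacle is clause 6, which requires that any live $p \in \ApolT{\alpha}$ in the support of the updated $\EE$, $S$, $\codom{\gamma}$ satisfies $\min(\RCast{\phi \cup \phi'}(\alpha)) = \{\alpha\}$. This is exactly what conditions P* and O* are engineered to ensure: whenever the move widens the cast-class of some $\alpha$ (i.e.\ $X_p = \min(\kappa'(\alpha)) \setminus \min(\kappa(\alpha)) \ne \varnothing$), every affected $p \in \ApolT{\alpha}$ is placed in $\dom{\rho}$, and the substitution $\widetilde{\cdot}$ applied throughout the configuration eliminates $p$ from $\EE$, $S$, and the codomain of $\gamma$. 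Hence any $p$ remaining in the updated support must have $X_p = \varnothing$, which combined with the inductive hypothesis $\min(\RCast{\phi}(\alpha)) = \{\alpha\}$ yields $\min(\RCast{\phi \cup \phi'}(\alpha)) = \{\alpha\}$. The initial rule (Ini) is handled directly: starting from the wrapper configuration, the target has $\gamma = \emptyseq$, $S = \widetilde{S'}$, $\phi = \phi'$ and $\lambda = \lambda'$, and each validity clause reduces to the side conditions O1, O3 and O* imposed on the initial move.
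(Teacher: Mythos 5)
Your overall strategy --- subject reduction for the internal phase, then a clause-by-clause check of the labelled step against the side conditions --- is the same as the paper's, which reduces the internal steps to Lemma~\ref{thm:red-typedterm} and treats most clauses as routine. However, at the step the paper regards as the actual crux your justification does not go through. You claim that ``the substitutions $\widetilde{Z}=Z\{\rho\}\{\gamma\}$ are type-preserving by the inductive validity of $\gamma$''. Validity gives you nothing of the sort for the $\{\rho\}$ part: a disclosed name $p\in\ApolT{\alpha}$ occurs in $Z$ at type $\alpha$, whereas $\rho(p)$ is a value of some concrete type $\theta'\in\min(\RCast{\phi\cup\phi'}(\alpha))$, so re-typing $Z\{\rho\}$ requires the new cast-typing rule together with the fact that $(\theta',\alpha)$ lies in the enlarged cast relation, and one must check this compositionally through the typing derivation. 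This is precisely the paper's Lemma~\ref{lm:typing-subst-term}, one of only two lemmas its proof invokes; your argument needs it (or a proof of it) and currently offers nothing in its place.

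There is a second gap in your treatment of clause 6. Your mechanism --- ``every affected $p$ is placed in $\dom{\rho}$ and the substitution eliminates it from the configuration'' --- only makes sense for O-moves. In rules \textsc{(PQ)} and \textsc{(PA)} no $\widetilde{\cdot}$ is applied to the resulting configuration, and by condition P* the domain of $\rho$ consists of \emph{P-owned} polymorphic names, which by clause 3 never occur in $\support{}{\EE,\codom{S},\codom{\gamma}}$ in the first place. So for P-moves your argument is vacuous rather than load-bearing: what actually has to be shown is that a P-move cannot enlarge $\min(\RCast{\phi\cup\phi'}(\alpha))$ for a type variable $\alpha$ whose O-owned names survive in the configuration, and that rests on the well-typedness of the reduced term (again via Lemma~\ref{thm:red-typedterm}) constraining the types at which already-disclosed locations may be replayed, not on any substitution. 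As written, this case of your proof would fail.
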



\subsection{Composite reduction}

The main ingredient in the soundness argument is a refinement of the LTS introduced previously which will eventually allow us to compose term denotations, 
in a way akin to composition in game semantics: each term in the composition becomes the Opponent for the other term.
More concretely, in the composite LTS
the behaviour of Opponent is fully specified by expanding the configurations with an extra
evaluation stack, environment and store. 

The new LTS is called \boldemph{composite interaction reduction}\hspace{-1pt}.\hspace{-1pt}
It works
on \emph{composite configurations}
$\configurationF{\EE_P}{\EE_O}{\gamma_P,\gamma_O}{\phi}{S_P,S_O}$, where:
 \begin{compactitem}
  \item $\EE_P,\EE_O$ are evaluation stacks \gjt{(one passive and one active)}; $\gamma_P,\gamma_O$ are environments; and $S_P,S_O$ are stores;
  \item $\phi$ is a common typing function for locations.
 \end{compactitem}
The rules of the composite reduction, given in Appendix~\ref{c:app-comp}, are in effect the P-rules of the ordinary interaction reduction, plus dual forms thereof fleshing out the O-rules.

A trace $t$ is said to be \emph{generated} by a composite configuration $C$
if it can be written as a sequence $(\faction{m_1}{S_1}{\rho_1}) \cdots (\faction{m_n}{S_n}{\rho_n})$ of full moves
\st $C \xRedint{\faction{m_1}{S_1}{\rho_1}} C_1 \xRedint{\faction{m_2}{S_2}{\rho_2}} \ldots \xRedint{\faction{m_n}{S_n}{\rho_n}}
C_n$, in which case we write $C \xRedint{t} C_n$. 
 We say that a composite configuration $C$ \emph{terminates} with the trace $t$, written $C \Downarrow_t$, if
 there exists a store $S$ such that $C \xRedint{t \cdot (\ansO{\unit},S,\emptyPMap)} 
 \configurationF{\emptyStack}{\emptyStack}{\gamma'_P,\gamma'_O}{\phi'_P}{S'_P,S'_O}$.
 
\ntt{We now define how to merge configurations $C_P,C_O$ into a composite one.}
\ntt{For each $X\in\{O,P\}$ we write $X^\bot$ for its dual ($\{X,X^\bot\}=\{O,P\}$), and extend this to $\lambda^\bot=(\_^\bot)\circ\lambda$.} 

\begin{definition}
Given a pair of environments $(\gamma_P,\gamma_O)$ from $\Anl \backslash \Loc$ to values, 
 we say these are \emph{compatible} when:
 \begin{compactitem}[$\bullet$]
  \item $\dom{\gamma_P} \cap \dom{\gamma_O} = \varnothing$,
  \item for all $a \in \dom{\!\gamma_X\!}$ ($X\! \in\! \{\!P,O\}$), $\support{}{\!\gamma_X(a)} \backslash \Loc \subseteq \dom{\!\gamma_{X^\bot}\!}$, 
  \item setting $\gamma^0=\gamma_P \cdot \gamma_O$, and $\gamma^i = \{(a,v\{\gamma\}) \sep (a,v) \in \gamma^{i-1}\}$ $(i > 0)$,
  there is an integer $n$ such that $\support{}{\codom{\gamma^n}} \backslash \Loc = \varnothing$;
 \end{compactitem}
and write $(\gamma_P \cdot \gamma_O)^{*}$ for the environment from $\Anl \backslash\Loc$ to $\Val$
 defined as $\gamma^n$, for the least $n$ satisfying the latter condition above.

A pair of valid configurations $(C_P,C_O)$ are called \boldemph{compatible} if, given 
$C_X=\configuration{\EE_X}{\gamma_X}{\phi_X}{S_X}{\lambda_X}$ (for $X\in\{P,O\}$):
 \begin{compactitem}[$\bullet$]
  \item $\phi_P = \phi_O$ and $\lambda_P = \lambda_O^{\bot}$, 
  \item $(\gamma_P,\gamma_O)$ are compatible and $\dom{\ntt{\gamma_P\cdot\gamma_O}} = \dom{\lambda_P} \backslash \Loc$,
  \item $\dom{S_P} \cap \dom{S_O} = \dom{\lambda_P} \cap \Loc$,
  \item the merge $\configurationF{\EE_P}{\EE_O}{\gamma_P,\gamma_O}{\phi_P}{S_P,S_O}$ of $C_P$ and $C_O$ is valid (cf.\ Appendix~\ref{c:app-comp}).
 \end{compactitem}
We write $\mergeConf{C_P}{C_O}$ for $\configurationF{\EE_P}{\EE_O}{\gamma_P,\gamma_O}{\phi_P}{S_P,S_O}$.
\end{definition}

\ntt{We can merge the (well-typed) evaluation stacks $(\EE_P,\EE_O)$
of compatible configurations by the following operation:}
\[
 \begin{array}{@{}r@{\;\;}@{}l@{\;\;\;}l@{}}
\mergeStack{\emptyStack}{(E,\theta \rightsquigarrow \theta')} \defeq E\qquad
  \mergeStack{\left((M,\theta)::\EE_P\right)}{\EE_O}
  & \defeq &  
     \left(\mergeStack{\EE_P}{\EE_O}\right)[M]\\
  \mergeStack{\left((E,\theta \rightsquigarrow \theta')::\EE_P\right)}{\left((M,\theta)::\EE_O\right)} & \defeq & 
     \left(\mergeStack{\EE_P}{\EE_O}\right)[E[M]]\\
  \mergeStack{\left((E,\theta \rightsquigarrow \theta')::\EE_P\right)}{\left((E',\theta' \rightsquigarrow \theta'')::\EE_O\right)} & \defeq & 
     \left(\mergeStack{\EE_P}{\EE_O}\right)[E'[E]]
 \end{array}
\]
and obtain a correspondence with the operational semantics.



\begin{lemma}
\label{thm:redint-iff-red}
Given $C = \configurationF{\EE_P}{\EE_O}{\gamma_P,\gamma_O}{\phi}{S_P,S_O}$ a valid composite configuration and $\gamma = \gamma_P \cdot \gamma_O$,
there exists a complete trace $t$\linebreak \st $C \Downarrow_t$ iff
$(\mergeStack{\EE_P}{\EE_O}\{\gamma^{*}\},S_P\{\gamma^{*}\}) \red^*\!\! (\unit,S')$
for some $S'$.
\end{lemma}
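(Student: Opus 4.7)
The plan is to prove both directions by a step-by-step simulation between composite interaction reduction and the operational semantics of the merged configuration. The crucial observation is that $(\gamma_P \cdot \gamma_O)^{*}$ fully instantiates every functional, polymorphic and type-variable name produced during the interaction with the concrete value that is held by the other party; consequently the term $\mergeStack{\EE_P}{\EE_O}\{\gamma^{*}\}$ contains no names from $\Afun \cup \Apol \cup \Atvar$ on top of those of $\SystemReF$ and is a genuine closed \SystemReF\ expression over the merged store.

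For the forward direction I induct on the length of the composite trace $t$. Each \textsc{(Int)} step matches one internal $\red$ step of the merged configuration. For a P-move \textsc{(PQ)} labelled $\questP{f}{v}$, the merged term has shape $(\mergeStack{\EE_P}{\EE_O}\{\gamma^{*}\})\![f\{\gamma^{*}\}\,v\{\gamma^{*}\}]$ at the active head; since $\gamma_O(f)$ is the $\lambda$- or $\Lambda$-abstraction held by O, applying $\gamma^{*}$ turns this into a single beta (or type-beta) redex whose contractum is exactly the merged stack of the successor composite configuration (the body of $\gamma_O(f)$ with the argument substituted becomes the new active stack on O's side, while P's context is pushed onto P's passive stack). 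Rule \textsc{(PA)} is analogous: returning $v$ pops O's top $(E,\theta \rightsquigarrow \theta')$ and produces $E[v\{\gamma^{*}\}]$, matching one ambient evaluation step of the merged term. The dual composite forms of \textsc{(OQ)} and \textsc{(OA)} are handled symmetrically. The disclosure $\rho$ is operationally transparent: it reveals polymorphic values that are already fixed by $\gamma$, and the explicit $\{\rho\}$-substitutions applied on one side are matched by $\gamma^{*}$-substitutions on the other, so $\mergeStack{\EE_P}{\EE_O}\{\gamma^{*}\}$ is stable under the choice of $\rho$. When the final $\ansO{\unit}$ fires with both stacks empty, we have reached $(\unit, S')$ for a suitable $S'$.

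For the backward direction I induct on the length of the operational reduction from $(\mergeStack{\EE_P}{\EE_O}\{\gamma^{*}\}, S_P\{\gamma^{*}\})$ to $(\unit, S')$. Each head redex falls into one of three categories determined by the merged stack: a pure \SystemReF\ reduction internal to the active stack (mirrored by \textsc{(Int)}); a beta/type-beta/read/write redex whose outermost function or location is of the form $\gamma^{*}(f)$ for some $f \in \Afun$ owned by the other party (mirrored by the \textsc{(PQ)}/\textsc{(OA)} or \textsc{(OQ)}/\textsc{(PA)} composite pair, where the abstract move $v$ and disclosure $\rho$ are extracted from the concrete operational data using $\sf AVal$, $\sf AStore$ and $\sf AEnv$); or the final contraction into $\unit$ (mirrored by the closing $\ansO{\unit}$). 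Lemma~\ref{thm:validconf} ensures that the composite configurations produced along the way remain valid, so each successive move is admissible and the resulting trace is complete.

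The main obstacle is the bookkeeping at boundary crossings. One must check that at every such step the required disclosure $\rho$, freshly abstracted names, typing function extension $\phi'$, and abstract store $S'$ exist and can be chosen consistently on both sides so that compatibility of $(C_P, C_O)$ is preserved. This relies on the fact, noted after condition P*, that the disclosure data can be selected in a nominally deterministic way from $(\gamma, S, \phi, \phi')$, and on Lemma~\ref{thm:red-typedterm}, which guarantees that the type invariants required for $\sf AVal$ and $\RCast{\phi \cup \phi'}$ to be well-defined are maintained throughout the reduction.
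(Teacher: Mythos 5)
Your proposal follows essentially the same route as the paper's proof: a bidirectional induction in which internal steps match $\red$ steps one-for-one, boundary crossings are handled via the identities $u=v\{\gamma_v\}$ and $S=S'\{\gamma_S\}$ for abstractions together with the $\rho$-transparency fact, and Lemma~\ref{thm:red-typedterm} classifies irreducible terms as values or callbacks. One small correction: a \textsc{(PQ)}/\textsc{(OQ)} move does \emph{not} contract the beta redex --- the successor composite configuration carries the uncontracted application $\widetilde{u'}\,\widetilde{v}$ on the callee's stack, so the correct invariant is that the merged, $\gamma^{*}$-substituted term is \emph{unchanged} by the move, with the contraction occurring in the subsequent internal step; your stated invariant (successor equals the contractum) would not literally hold, though the repair is immediate. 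Also, reads and writes never cross the boundary in this model, since locations are shared concretely rather than abstracted by names, so the only boundary-crossing redexes are applications of names in $\Afun$.
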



On the other hand, there is a semantic way to compare Player and Opponent configurations, by 
checking that the traces they generate are compatible.
{Given a trace $t$, let us write $t^\bot$ for its dual obtained by switching the polarity of each move in $t$ (e.g.\ each $\questP{f}{v}$ is changed to $\questO{f}{v}$, and so on).}

\begin{definition}
 Let $C_P$ and $C_O$ be two  configurations.
We write $C_P|C_O \downarrow_t$ when there exists a complete trace $t$
and a store $S$ 
such that $t \in \sem{C_P}$ and $t^\bot\cdot(\faction{\ansO{\unit}}{S}{\emptyPMap}) \in \sem{C_O}$.
\end{definition}

We therefore have the following correspondence between semantic and syntactic composition.

\begin{theorem}
\label{thm:semEq-iff-mergeIRed}
For all pairs of compatible configurations  $C_P$ and $C_O$, $C_P|C_O \downarrow_T$ iff
$\mergeConf{C_P}{C_O} \Downarrow_T$.
\end{theorem}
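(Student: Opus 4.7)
The plan is to proceed by induction on the length of the trace $T$, using a one-step correspondence between composite transitions and pairs of dual single-configuration transitions. Throughout, I would rely on validity-preservation (Lemma~\ref{thm:validconf}) and the merge operation on stacks to set up the inductive step.

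First I would establish a one-step lemma: if $C_P,C_O$ are compatible, then for any full move $(m,S',\rho)$, $C_P \xredint{(m,S',\rho)} C_P'$ with $m$ a P-move iff $C_O \xredint{(m^\bot,S',\rho)} C_O'$, and in that case $\mergeConf{C_P}{C_O}$ has a matching composite P-transition leading to $\mergeConf{C_P'}{C_O'}$; symmetrically when $C_O$ plays P. Internal \textsc{(Int)} reductions on whichever side holds the active stack lift directly to composite internal steps by Lemma~\ref{thm:redint-iff-red} restricted to the active side. Compatibility ($\lambda_P = \lambda_O^\bot$, compatible environments, disjoint-but-overlapping-on-public stores) is exactly what guarantees that the components $\phi$, $\gamma_P,\gamma_O$, $S_P,S_O$ combine consistently after the step, so the resulting pair $(C_P',C_O')$ is again compatible.

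The crux is reconciling the starred conditions. When $C_P$ plays a P-move, condition P* determines the disclosure $\rho$ from $\gamma_P$ restricted to $\Apol$ via $\APEnv{-}{\kappa,\kappa'}$; the dual step for $C_O$ must validate the very same $\rho$ non-deterministically via O* using $\SemPEnv{-}{\kappa,\kappa'}$. The two agree because $\AVal{\gamma(p)}{\theta}$-witnesses project onto $\sem{\theta}$-witnesses, so any concrete $(\rho,\gamma_\rho,\phi_\rho)$ selected by P* gives rise to an admissible $(\rho,\phi_\rho)$ for O*. The same reasoning applies to $\AStore$ vs.~$\SemStore$ for $S'$ and to $\AVal{u}{\theta}$ vs.~$\sem{\theta}$ for the carried value $v$. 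Because $\phi$ is shared between the two sides in the composite, the cast relation $\RCast{\phi}$ is also shared, so the minimal-type set $X_p$ driving the disclosure is the same on both sides.

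With the one-step lemma in hand, both directions follow by induction on $|T|$. For $(\Rightarrow)$, given $T \in \sem{C_P}$ and $T^\bot\cdot(\faction{\ansO{()}}{S}{\emptyPMap}) \in \sem{C_O}$, interleave the two derivations move-by-move into a composite run; the terminal $\ansO{()}$ of $C_O$ matches the final PA of $C_P$ (since $T$ is complete), emptying both stacks, which is precisely $\mergeConf{C_P}{C_O} \Downarrow_T$. For $(\Leftarrow)$, project each composite step onto its P-side and O-side contributions, giving $C_P \xRedint{T} \_$ and $C_O \xRedint{T^\bot\cdot(\faction{\ansO{()}}{S}{\emptyPMap})} \_$; completeness of $T$ is forced by the emptying of both stacks at termination.

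The main obstacle will be the careful bookkeeping around freshness and ownership of names (conditions P1--P4 vs.~O1--O4): names freshly introduced by $C_P$ must simultaneously count as O-fresh for $C_O$ and must not clash with the accumulated $\lambda$ on either side. This is where $\lambda_P = \lambda_O^\bot$ is indispensable, since it means each name introduced by one side as P is exactly one that the other side has seen as O, keeping the two supports in lock-step. A secondary subtlety is handling polymorphic names owned by O on one side whose value $\gamma_P(p)$ is known to the other: the equality $v\{\gamma\}$ appearing in the \textsc{(OQ)}/\textsc{(OA)} rules for the single-sided reduction is precisely what the composite reduction inlines, so the operational content of the \textsc{(PQ)}/\textsc{(OQ)} matching pair coincides up to this substitution. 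Once this bookkeeping is in place, the remaining calculations are routine unfoldings of the rules in Figure~\ref{fig:ir} against their composite counterparts.
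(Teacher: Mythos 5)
Your proposal matches the paper's argument: both proceed by induction on the trace length, driven by a one-step correspondence between a composite transition of $\mergeConf{C_P}{C_O}$ and a dual pair of single-configuration transitions (the paper's Lemma~\ref{lm:mergIRed-iff-IRed}), together with preservation and decomposition of compatible/valid composite configurations (Lemmas~\ref{lm:validCompConf-red}, \ref{lm:decomp-compConf} and \ref{lm:semEq-to-IRed}). The points you flag as delicate --- reconciling P* against O* via the projection of $\AVal$/$\AStore$ witnesses onto $\sem{\cdot}$/$\SemStore$ witnesses, the $\lambda_P=\lambda_O^\bot$ bookkeeping, and the $v\{\gamma\}$ substitution --- are exactly where the paper's compatibility invariants do the work, so the proof is essentially the same.
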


\subsection{Soundness result}

We need two final pieces of machinery for soundness.
The first one is so-called \emph{ciu-equivalence}, 
which allows one to characterise contextual equivalence by restricting focus to
evaluation contexts.

\begin{definition}\label{def:ciu}
Let $\Sigma$ be a location context.
 Two terms
 $\typingTerm{\Sigma}{\Delta}{\Gamma}{M_1,M_2}{\theta}$
 are ciu-equivalent, written $\ciueqf{\Sigma}{\Delta}{\Gamma}{M_1}{M_2}{\theta}$, when 
 for all typing substitutions $\typingSubst{\cdot}{\delta}{\Delta}$, location contexts $\Sigma' \extend \Sigma$,
 closed stores ${S}:{\Sigma'}$, value substitutions $\typingSubst{\Sigma'}{\gamma}{\Gamma\{\delta\}}$
 and evaluation contexts $\typingCtx{\Sigma'}{\cdot}{E}{\theta\{\delta\}}{\theta'}$, we have
 $(E[{M_1\{\gamma\}\{\delta\}}],S) \Downarrow$ iff $(E[{M_2\{\gamma\}\{\delta\}}],S)\Downarrow$.
\end{definition}

\begin{theorem}
\label{thm:equiv-ciu-ctx}
$\ctxeqf{\Sigma}{\Delta}{\Gamma\!}{\!M_1}{\!M_2}{\theta}$\, iff $\ciueqf{\Sigma}{\Delta}{\Gamma\!}{\!M_1}{\!M_2}{\theta}$.
\end{theorem}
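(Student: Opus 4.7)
The plan is to prove both inclusions separately.

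For $\ctxeq \;\Rightarrow\; \ciueq$, given a ciu-test with data $\delta, \Sigma'\supseteq\Sigma, S:\Sigma', \gamma$ and evaluation context $E$ of type $\theta\{\delta\}\rightsquigarrow\theta'$, I build the closed context
\[
 C[\bullet] \;\equiv\; (\lambda y^{\theta'}\!.\,\unit)\,E\bigl[(\Lambda\alpha_1\cdots\Lambda\alpha_k.\,\lambda x_1\cdots\lambda x_n.\,\bullet)\,\delta(\alpha_1)\cdots\delta(\alpha_k)\,\gamma(x_1)\cdots\gamma(x_n)\bigr]
\]
enumerating $\Delta=\{\alpha_1,\ldots,\alpha_k\}$ and $\dom{\Gamma}=\{x_1,\ldots,x_n\}$. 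Then $C[M_i]$ has type $\Unit$ in $\Sigma'$ and a sequence of $\beta$-/type-$\beta$ steps reduces $(C[M_i],S)$ to $((\lambda y^{\theta'}\!.\,\unit)(E[M_i\{\gamma\}\{\delta\}]),S)$, which terminates iff $(E[M_i\{\gamma\}\{\delta\}],S)$ does. Contextual equivalence applied to $C$ then yields the ciu-test.

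For the converse, I show that $\ciueq$ is a congruence; granted this, for any closed unit-typed context $C$ an induction on $C$ gives $C[M_1]\ciueq C[M_2]$, and instantiating the ciu-test with the trivial evaluation context $\bullet$ and empty substitutions yields $(C[M_1],S)\Downarrow \iff (C[M_2],S)\Downarrow$. Congruence is established via compatibility with each one-hole term former $\mathcal{F}$, which in turn follows from an operational \emph{head-decomposition} lemma: given $\mathcal{F}[\bullet]$, evaluation context $E$, substitutions $\gamma,\delta$ and store $S$, either $(E[\mathcal{F}[M]\{\gamma\}\{\delta\}],S)$ diverges independently of $M$ (by determinism of internal reduction), or it reduces to some $(E'[M\{\gamma'\}\{\delta'\}],S')$ with $M$ sitting in a new evaluation context $E'$ under extended substitutions and an extended store (the extension accounting for any locations freshly allocated before $M$ is reached). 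At this point the ciu-hypothesis on $M_1,M_2$ closes the case.

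The main obstacle is the cases in which $\mathcal{F}$ places the hole under a binder: $\mathcal{F} \in \{\lambda x^\theta.\bullet,\; \Lambda\alpha.\bullet,\; \pack{\pair{\theta'}{\bullet}}\}$. The resulting value is opaque to the surrounding reduction and may be forced zero, one or many times by the enclosing computation, each time under a potentially different closing. One handles this by an inner induction over the future syntactic uses of the binder (application, type application, unpack) encountered during the reduction of the enclosing computation, threading $\gamma$ (resp.\ $\delta$) with each new closure. That we work in Church style is crucial here, because every type instantiation is syntactically explicit, so the extensions of $\delta$ range exactly over the closed types already quantified in Definition~\ref{def:ciu}; the reference-allocation case $\mathcal{F} = \nuref\bullet$ is absorbed by the quantification over $\Sigma'\supseteq\Sigma$ and closed stores $S:\Sigma'$. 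The overall argument follows the Mason--Talcott CIU pattern, adapted to polymorphism and higher-order state.
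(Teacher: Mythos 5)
You should first be aware that the paper itself contains no proof of Theorem~\ref{thm:equiv-ciu-ctx}: it is stated and used as a standard Mason--Talcott-style fact, so there is no in-text argument to measure yours against; your proposal has to stand on its own. The left-to-right direction is correct and is the expected construction: the context $(\lambda y^{\theta'}\!.\unit)\,E[(\Lambda\alpha_1\cdots\Lambda\alpha_k.\lambda x_1\cdots\lambda x_n.\bullet)\,\delta(\alpha_1)\cdots\delta(\alpha_k)\,\gamma(x_1)\cdots\gamma(x_n)]$ is $\Unit$-typed over $\Sigma'$, binds exactly $\Delta$ and $\Gamma$, and a deterministic prefix of $\beta$- and type-$\beta$-steps turns $(C[M_i],S)$ into $((\lambda y.\unit)(E[M_i\{\gamma\}\{\delta\}]),S)$, so contextual equivalence at this context is precisely the ciu test.

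The gap is in the congruence direction, in the treatment of holes under binders. Your head-decomposition lemma --- ``either the configuration diverges independently of $M$, or it reaches some $E'[M\{\gamma'\}\{\delta'\}]$'' --- fails as stated once $\mathcal{F}$ is $\lambda x^\theta.\bullet$, $\Lambda\alpha.\bullet$ or $\pack{\pair{\theta'}{\bullet}}$: the resulting value can be duplicated, stored in references, and forced zero, one or arbitrarily many times, and \emph{whether and how often} it is forced may itself depend on which of $M_1,M_2$ sits in the hole. The proposed repair, an ``inner induction over the future syntactic uses of the binder'', is not a well-founded induction as written, because the multiset of future uses is not determined in advance of the computation. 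The standard fix is twofold: (i) generalise the statement to an open, \emph{multi-hole} extension of $\ciueq$, relating configurations (term, stack and store together) in which several occurrences of $M_1$ on one side are matched against occurrences of $M_2$ on the other --- this is forced precisely because the heap can replicate the binder; and (ii) induct on the length of the terminating reduction sequence of the whole $\Unit$-typed configuration, invoking the ciu hypothesis each time one of the occurrences arrives in head position, and the induction hypothesis on the strictly shorter residual computation. With that restructuring your argument goes through; your observations that Church-style explicit instantiation keeps the extensions of $\delta$ within the quantification of Definition~\ref{def:ciu}, and that $\nuref\bullet$ is absorbed by the quantification over $\Sigma'\supseteq\Sigma$ and closed $S:\Sigma'$, are both correct and worth keeping.
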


As mentioned at the beginning of this section, we introduce an equivalence on term denotations which includes equality. 
The motivation for this is so as to prune out some distinctions that the model makes between behaviours that are in fact indistinguishable.
More precisely, our model abstracts away any actual values provided by Opponent for polymorphic inputs by names in $\Apol$. 
Moreover, when P plays back one of those names, O is in position to determine precisely which actual value is P returning in reality (as all polymorphic names introduced by O must be distinct). 
This discipline is based on the assumption that O can always instrument the values he provides to P so that he can later distinguish between them. 
It is a valid assumption, apart from the case when later in the trace there is some value disclosure for those polymorphic names which forbids O to implement such instrumentations. 

To remove this extra intensionality from the model, we introduce
an equivalence of traces which blurs out such distinctions:
\begin{compactitem}[$\bullet$]
  \item we first substitute in every \ntt{P-}move all the \emph{O polymorphic names} whose value have been disclosed by their disclosed value;
  \item we then enforce the freshness of \emph{P polymorphic names} played in P moves, which may be broken because of these
  substitutions.
\end{compactitem}
The latter step is implemented via a name-refreshing procedure, defined as follows. Given traces $t,t'$, we say that $t'$ is a \emph{P-refreshing} of $t$, written $t\pref t'$, if $t=t_1\cdot (m,S,\rho)\cdot t_2$, $t'=t_1\cdot t_2'$, with $m$ a P-move, and
there are polymorphic names $p,p'$ such that:
\begin{compactitem}[$\bullet$]
\item  $p\in\support{}{t_1}\cap\support{}{m,S,\codom{\rho}}$ is introduced in a P-move of $t_1$,
\item $p'\notin\support{}{t}$ and $t_2'$ is $(m,S,\rho)\cdot t_2$ 
where we first replace a single occurrence of $p$ in $(m,S,\codom{\rho})$ by $p'$,
and then replace any $[p\mapsto v]$ in the resulting subtrace by $[p\mapsto v]\cdot[p'\mapsto v]$.
\end{compactitem}
P-refreshing is bound to terminate in the traces we examine.
We write $\Freshen{}{}{t}$ for the set of all $t'$ such that $t\pref^* t'$ and $t'\not\pref$.

\cutout{
We first introduce a predicate $\Refresh{X}{c}$, which takes as a parameter a set of names $Y$ and a element $c$ of a nominal set (for example, a move, a store or a partial map),
and return the set of new element $c'$ where all the occurrences of the names in $X$ have been refreshed, so that each occurence of a name in $\support{c} \cap X$
is replaced by a fresh occurrence, together with a function $\psi$ mapping each name $a \in \support{c} \cap X$ to the set of name used to replace $a$ in $c'$.

When $m$ is a Player move, $\Freshen{X}{\psi}{(m,S,\rho) \cdot t}$ is defined as

\[ \left\{ \begin{array}{l}
 (m',S',\rho' \cdot \rho'') \cdot t' \sep ((m',S',\rho'),\psi') \in \Refresh{X}{m,S,\rho},\\ 
 \quad \rho'' = \{(p,v) \sep \exists p' \in \dom{\rho'}. \rho'(p') = v \land  p' \in \psi(p) \},\\
 \quad t' \in \Freshen{X}{\psi \cdot \psi'}{t} 
\end{array} \right\} \] 
Otherwise  it is defined as $(m,S,\rho) \cdot \Freshen{X}{\psi}{t}$

We write $\Freshen{}{}{t}$ for $\Freshen{X}{\emptyseq}{t}$, with $X$ the set of Player polymorphic names of $t$ (i.e.
the polymorphic names introduced in a Player move).
}                                     
                                     
\begin{definition}
Two traces $t_1,t_2$ are said to be equivalent, written $t_1 \equivT t_2$, if $\Freshen{}{}{\xbar{t_1}^\varepsilon} = \Freshen{}{}{\xbar{t_2}^\varepsilon}$, where $\xbar{\,t\,}^{\rho_1\cdots\rho_n}$ is defined as:
\[\xbar{t \cdot (m,S,\rho)}^{\rho_1\cdots\rho_n}\!\!\!\!\! \defeq \begin{cases} 
                                      \xbar{\,t\,}^{\rho_1\cdots\rho_n}\!\! \cdot ((m,S,\rho)\{\rho_1\}\cdots\{\rho_n\}) &\!\! \text{if } m \text{ a P-move} 
\\
                                      \xbar{\,t\,}^{\rho_1\cdots\rho_n\rho}\!\! \cdot (m,S,\rho) 
&\!\! \text{otherwise}
                                     \end{cases}\]
We extend equivalence to sets of traces in an elementwise fashion.
\end{definition}

%
%

\begin{lemma}
\label{thm:traceeq-conf}
 Let $t_1$ be a trace such that $t_1^\bot \in \Tr{(C)}$ with $C$ a valid configuration.
 Then for all $t_2 \equivT t_1$ we have $t_2^\bot \in \Tr{(C)}$.
\end{lemma}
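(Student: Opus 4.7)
The plan is to show that the set $\mathcal{T}(C)=\{t\mid t^\bot\in\Tr(C)\}$ is closed under the two elementary relations whose combined reflexive-symmetric-transitive closure yields $\equivT$. Indeed, $t_1\equivT t_2$ means that after normalization via $\xbar{\cdot}^\varepsilon$ (a finite iteration of single substitutions of disclosed O-polymorphic names by their values in later P-moves) and via $\Freshen{}{}{\cdot}$ (a finite iteration of $\pref$-steps), both traces reach a common trace $t^*$; hence $t_1$ and $t_2$ are linked by a zigzag of forward and backward elementary steps of each of the two kinds. The proof then proceeds by induction on the length of the trace, preserving, step by step, a derivation of $t^\bot$ from $C$ and adjusting it for the transformed trace.

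For the substitution case, suppose $t^\bot\in\Tr(C)$ and $t'$ differs from $t$ only in that one occurrence of an O-polymorphic name $p$ in some P-move of $t$ has been replaced by its disclosed value $v$ (disclosed in an earlier O-move via some $\rho$). Dually, from $C$'s viewpoint this is a P-polymorphic name whose value has been revealed by condition P* in an earlier P-move of $t^\bot$: the disclosure in that P-move contains $[p\mapsto v]$, and the updated cast relation $\RCast{\phi}$ now admits the cast at the type $\alpha$ of $p$ that witnesses $v\in\sem{\alpha}_\kappa$. Hence in the later O-move of $t^\bot$, condition O* permits O to play $v$ exactly where $p$ was played. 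On the side of $C$'s internal computation, the substitution $\widetilde{Z}=Z\{\rho\}\{\gamma\}$ applied to every O-move means that plugging in $p$ or its value $v=\gamma(p)$ yields the same configuration after the move; the inverse direction (un-substitution) works symmetrically.

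For the refreshing case, a step $t\pref t'$ picks a P-polymorphic name $p$ introduced in an earlier P-move of $t$, replaces a single later occurrence of $p$ in a P-move by a fresh $p'$, and duplicates each subsequent binding $[p\mapsto v]$ as $[p\mapsto v]\cdot[p'\mapsto v]$. Dually, $p$ is an O-polymorphic name of type $\alpha$ with $\lambda(\alpha)=P$ in $C$; at the later O-move of $t^\bot$, clauses O1 and O4 allow O to introduce the fresh name $p'$ of the same type in place of $p$. Subsequent P-moves of $t^\bot$ continue to satisfy condition P* with $\gamma$ extended by $[p'\mapsto\gamma(p)]$, since the abstract-value and abstract-store computations yield the same outcome for $p$ and $p'$ (they share the same underlying value). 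The inverse is again symmetric.

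The main obstacle I anticipate is the careful bookkeeping of the starred conditions P* and O* across a single transformation step, together with the validity invariants provided by Lemma~\ref{thm:validconf}. A key simplification is that neither transformation alters the typing function $\phi$ or the set $\support{\rmL}{\lambda}$ of public locations, so the cast relation $\RCast{\phi}$ that constrains both parties' choices is unchanged; the induction step then reduces to a local renaming/substitution check about a single pair $(p,v)$ or $(p,p')$.
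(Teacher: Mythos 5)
Your overall strategy---decomposing $\equivT$ into elementary rewrite steps and proving that $\{t \mid t^\bot\in\Tr{(C)}\}$ is closed under each step and its inverse---does not go through, for a structural reason. The zigzag $t_1 \to^* \xbar{t_1}^\varepsilon \to^* s \leftarrow^* \xbar{t_2}^\varepsilon \leftarrow^* t_2$ necessarily passes through the normalised trace $\xbar{t_1}^\varepsilon$ and its partial refreshings, and these intermediate objects are in general not traces of any configuration. Concretely: the normalisation $\xbar{\,\cdot\,}^{\rho_1\cdots\rho_n}$ accumulates the disclosure maps $\rho$ of O-moves while recursing from the right end of the trace, so a disclosure occurring in a \emph{later} O-move is substituted into \emph{earlier} P-moves (you have the direction reversed: you write ``disclosed in an earlier O-move''). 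At such an earlier P-move the cast relation $\kappa$ does not yet contain the disclosure, so $\min(\kappa(\alpha))=\{\alpha\}$ and the value played at a type-$\alpha$ position is \emph{forced} by $\AValD{\cdot}{\cdot}{\kappa}$ and $\sem{\cdot}_\kappa$ to be a name in $\ApolT{\alpha}$; replacing it by the disclosed value (an integer, a location, or a name of a different type component) yields a move that no configuration can emit or accept at that point. Dually, once the disclosure has happened the same conditions \emph{forbid} the name and force the value, so at no single position is there a free choice between $p$ and $v$: your elementary substitution step never relates two legal traces, and the claimed closure property is vacuous where you need it and unavailable where you apply it. The same holds for refreshing: a repeated occurrence of a P-introduced name in a later P-move already violates P4, so $\pref$ is only ever applicable to the (illegal) post-substitution traces it is designed to repair.

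A correct argument has to compare $t_1$ and $t_2$ \emph{directly} as two traces sharing a normal form, by induction along the trace, which is how the paper proceeds: it first shows that $\equivT$ forces the O-moves of $t_1$ and $t_2$ (i.e.\ the P-moves of $t_i^\bot$, which the normalisation does not rewrite) to coincide, and then shows that a configuration fed two values differing only in polymorphic names it cannot inspect produces continuations differing only in the corresponding name occurrences. Your observation that $\widetilde{Z}=Z\{\rho\}\{\gamma\}$ collapses $p$ and its value to the same term inside the configuration is the germ of this second lemma, but it must be packaged as an invariant relating two runs of $C$, not as closure under single rewrites. Note also that O* constrains only the $\rho$ component of a move, not the played value, so it cannot ``permit O to play $v$ where $p$ was played''; and a name $p\in\ApolT{\alpha}$ introduced fresh in a P-move of $t$ has $\lambda(\alpha)=O$, not $P$, from $C$'s viewpoint.
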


We can now prove the main theorem of this section.

\begin{theorem}[Soundness]
\label{thm:sound}
For all terms $\typingTerm{\Sigma}{\Delta}{\Gamma}{M_1,M_2}{\theta}$, 
 $\sem{{M_1}} \equivT \sem{{M_2}}$
implies $M_1\cong M_2$.
\end{theorem}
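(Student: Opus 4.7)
The plan is to combine the ciu-characterisation of contextual equivalence (Theorem~\ref{thm:equiv-ciu-ctx}) with the composition machinery (Lemma~\ref{thm:redint-iff-red} and Theorem~\ref{thm:semEq-iff-mergeIRed}), and close the resulting trace-level statement under $\equivT$ via Lemma~\ref{thm:traceeq-conf}.

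First, by Theorem~\ref{thm:equiv-ciu-ctx} it suffices to establish $\ciueqf{\Sigma}{\Delta}{\Gamma}{M_1}{M_2}{\theta}$. Fix a typing substitution $\delta$, an extension $\Sigma' \supseteq \Sigma$, a closed store $S:\Sigma'$, a value substitution $\gamma$ typed against $\Gamma\{\delta\}$, and an evaluation context $\typingCtx{\Sigma'}{\cdot}{E}{\theta\{\delta\}}{\Unit}$ (WLOG the result type is $\Unit$); one must show that $(E[M_i\{\gamma\}\{\delta\}], S) \Downarrow$ is independent of $i \in \{1,2\}$.

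Next, from $(\delta, \gamma, S, E)$ I construct a valid Opponent configuration $C_O$ compatible, in the sense of the composite LTS, with $\inbrax{\typingTerm{\Sigma}{\Delta}{\Gamma}{M_i}{\theta}}$ for both $i$. Informally, $C_O$ holds $E$ on its passive stack, carries in its environment the concrete values for the polymorphic, functional and type names that it has committed to reveal, keeps the O-owned portion of $S$, and is poised to play as its first move the initial $\questOinit{v}$ with $(v,\phi_v)\in\sem{\Delta,\Sigma,\Gamma}$ chosen so that it realises the data of $\delta$, $\Sigma'$ and $\gamma$ through the clauses for $\AVal{\cdot}{\cdot}$ and $\AStore{\cdot}{\cdot}$. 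By Lemma~\ref{thm:redint-iff-red}, $(E[M_i\{\gamma\}\{\delta\}], S) \Downarrow$ iff $\mergeConf{\inbrax{M_i}}{C_O}\Downarrow_t$ for some complete $t$; and by Theorem~\ref{thm:semEq-iff-mergeIRed}, this is in turn equivalent to the existence of $t \in \sem{M_i}$ whose dual $t^\bot$, extended by $(\ansO{\unit}, S', \varepsilon)$ for some $S'$, lies in $\Tr(C_O)$.

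For the $\equivT$-closure step, suppose $(E[M_1\{\gamma\}\{\delta\}], S) \Downarrow$ with witness $t_1 \in \sem{M_1}$. By hypothesis there is $t_2 \in \sem{M_2}$ with $t_2 \equivT t_1$; a direct inspection of the definitions of $\xbar{\cdot}^{\vec\rho}$ and $\Freshen{}{}{\cdot}$ shows that appending a single O-move preserves $\equivT$, so $t_2 \cdot (\ansO{\unit}, S', \varepsilon) \equivT t_1 \cdot (\ansO{\unit}, S', \varepsilon)$. Lemma~\ref{thm:traceeq-conf} applied to $C_O$ then yields $t_2^\bot \cdot (\ansO{\unit}, S', \varepsilon) \in \Tr(C_O)$, and retracing the composition chain delivers $(E[M_2\{\gamma\}\{\delta\}], S) \Downarrow$; the reverse direction is symmetric. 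The main obstacle I expect is the explicit construction of $C_O$ together with the verification that $\mergeConf{\inbrax{M_i}}{C_O}$ is a valid composite configuration: the split of $\Sigma'$-locations between P- and O-owned regions, and the bookkeeping of fresh functional, polymorphic and type names generated by the initial {\sc Ini} move, have to be arranged so that Lemma~\ref{thm:redint-iff-red} actually applies. Everything downstream is essentially a direct invocation of the listed results.
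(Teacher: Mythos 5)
Your proposal is correct and follows essentially the same route as the paper: reduce to ciu-equivalence via Theorem~\ref{thm:equiv-ciu-ctx}, build the Opponent configuration $C_O$ from $(\delta,\gamma,S,E)$, translate termination back and forth through Lemma~\ref{thm:redint-iff-red} and Theorem~\ref{thm:semEq-iff-mergeIRed}, and transport the witnessing trace from $\sem{M_1}$ to $\sem{M_2}$ using the $\equivT$ hypothesis together with Lemma~\ref{thm:traceeq-conf}. The only differences are presentational (you flag explicitly the compatibility/validity checks for $\mergeConf{\inbrax{M_i}}{C_O}$ and the preservation of $\equivT$ under appending the final O-answer, both of which the paper leaves implicit).
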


\begin{proof}
Suppose $\sem{{M_1}} \equivT \sem{{M_2}}$. 
Using Theorem~\ref{thm:equiv-ciu-ctx}, we prove that ${M_1}\ciueq{M_2}$.
Let us take $\delta,\Sigma'\supseteq\Sigma,S,\gamma$ and $E$ as in Definition~\ref{def:ciu},
and suppose that $(E[M_1\{\gamma\}\{\delta\}],S) \Downarrow$.
\\
Take $(\vec\alpha,\vec l,\vec u) \in \sem{\Delta,\Sigma,\Gamma}$ and write
 $C_{P,1}$ for the P-configuration $\configuration{(M_1\overrightarrow{\subst{x}{\widetilde{u}}},\!\theta)}{\emptyPMap}{\phi}{S}{\lambda}$,
 so  $\inbrax{\typingTerm{\Sigma}{\Delta}{\Gamma}{M_1}{\theta}} \xredint{\faction{\!\!\!\questO{?\,}{\vec\alpha,\vec l,\vec u)}}{S'\!\!}{\,\rho}\!\!\!} C_{P\!,1}$.\!\!\!\!\!
 Let $C_O=\configuration{(E,\theta \rightsquigarrow\theta')}{\gamma' \cdot \delta}{\phi}{S}{\lambda^{\bot}}$
 where $\gamma' = \{(u_i,v_i) \sep \gamma(x_i) = v_i\}$. 
From Lemma~\ref{thm:redint-iff-red}, there exists a complete trace $t$ such that $\mergeConf{C_{P,1}}{C_O} \Downarrow_t$.
 Then, from Theorem~\ref{thm:semEq-iff-mergeIRed}, $C_{P,1} | C_O \downarrow_t$, so that $t \in \Tr{(C_{P,1})}$ and
 $t^\bot \in \Tr{(C_O)}$.
 Writing $C_{P,2}$ for the Player configuration $\configuration{(M_2\overrightarrow{\subst{x}{\widetilde{u}}},\theta)}{\emptyPMap}{\phi}{S}{\lambda}$,
 from the hypothesis of the theorem, there exists a complete trace $t' \equivT t$ such that $t' \in \Tr{(C_{P,2})}$.
 From Proposition~\ref{thm:traceeq-conf}, $t'^{\bot} \in \Tr{(C_O)}$, 
 so that $C_{P,2} | C_O \downarrow_t'$,
 and using Theorem~\ref{thm:semEq-iff-mergeIRed} (in the other direction), we get that $\mergeConf{C_{P,2}}{C_O} \Downarrow_t'$.
 Finally, using Lemma~\ref{thm:redint-iff-red}, we get that $(E[M_2\{\gamma\}\{\delta\}],S) \Downarrow$.
\end{proof}


\section{Completeness}

\newcommand\myto{\!\rightsquigarrow}
\newcommand{\good}[2]{\mathrm{good}_{#1}(#2)}
\newcommand{\available}[1]{\mathrm{gtv}(#1)}

\newcommand\getval{\mathsf{getval}}
\newcommand\getv{\mathsf{val}}
\newcommand\names{\mathsf{names}}
\newcommand\fshvals{\mathsf{Fshvals}}
\newcommand\chkval{\mathsf{chkval}}
\newcommand\chkvals{\mathsf{Chkvals}}
\newcommand\setvals{\mathsf{Setstor}}
\newcommand\updvals{\mathsf{Updvals}}
\newcommand\newvals{\mathsf{Newvals}}
\newcommand\fshval{\mathsf{fshval}}
\newcommand\castPk{\mathsf{castPk}}
\newcommand\caster{\mathsf{caster}}
\newcommand\play{\mathsf{Play}}
\newcommand\pairer{\mathsf{pair}}
\newcommand\unpair{\mathsf{unpair}}
\newcommand\ctr{\mathsf{cnt}}
\newcommand\cde{\mathsf{cde}}

While sound, our model fails to be fully abstract as it overestimates the power of O:
the way cast relations ($\sf Cast$) are computed over-approximates the casts that can be implemented by the context in practice, as inhabitation constraints are not taken into account.  For instance, a cast from $\theta\to\theta_1$ to $\theta\to\theta_2$ does not yield one from $\theta_1$ to $\theta_2$ unless a value of type $\theta$ is available.
In this section we restrict our attention to a fragment of \SystemReF, called \SystemReF*, carved in such a way that the above problem cannot be manifested. We then prove our model fully abstract for terms in \SystemReF*.

\SystemReF* is defined by means of restricting the types allowed at the type interface of a term.
In particular, we pose the following restrictions affecting the types which can appear under a $\rm ref$ constructor. First, we do not allow any binders $\forall,\exists$ to appear in the scope of a  $\rm ref$ and, moreover, any type variable $\alpha$ inside a  
 $\refer\theta$ must be {\em reachably inhabited}: in order for a value of type $\refer\theta$ to be played in a trace, a value of type $\alpha$ must have been played before.

Both these restrictions are captured by the following type predicate $\good{\Upsilon}{\theta}$, which determines whether a type $\theta$ is in the defined fragment, assuming that the type variables in $\Upsilon$ are inhabited.\vspace{-.25mm}
\[\vspace{-.5mm}
\begin{array}{l@{\;}l@{\;\;}l} 
  \good{\Upsilon}{\refer\theta} & \defeq & \good{\Upsilon}{\theta}\land \support{\rm T}{\theta} \subseteq \Upsilon \land \theta\text{ is quantifier-free} \\
  \good{\Upsilon}{\theta\rightarrow\theta'} & \defeq & \good{\Upsilon}{\theta} \land \good{\Upsilon \cup \available{\theta}}{\theta'} \\
  \good{\Upsilon}{\forall\alpha.\theta} & \defeq & \good{\Upsilon}{\theta}  \\  
  \good{\Upsilon}{\theta\times\theta'} & \defeq & \good{\Upsilon}{\theta} \land \good{\Upsilon}{\theta'} \\
  \good{\Upsilon}{\exists\alpha.\theta} & \defeq & \good{\Upsilon\cup\{\alpha\}}{\theta}  \\
  \good{\Upsilon}{\theta} & \defeq & \true \quad \text{ otherwise}  
\end{array}
\]
Above, $\available{\theta}$ returns the type variables at the ground level of $\theta$:\vspace{-.75mm}
\[\vspace{-.75mm}
\begin{array}{l@{\;}l@{\;\,}ll@{\;}l@{\;\,}l} 
  \available{\alpha} & \defeq & \{\alpha\} &
  \available{\theta \times \theta'} & \defeq & \available{\theta} \cup \available{\theta'} \\
  \available{\exists \alpha.\theta} & \defeq & \available{\theta} \backslash \{\alpha\}  &
  \available{\refer\theta} & \defeq & \available{\theta}
\end{array}
\]
and  $ \available{\theta} \defeq \varnothing$ otherwise. We extend goodness to type interfaces by setting, given $\Sigma=\{l_1\!:\theta_1,\cdots,l_n\!:\theta_n\}$, 
$\Gamma=\{x_1\!:\!\theta_1',\cdots,x_m\!:\!\theta'_m\}$:\vspace{-.5mm}
\[
\good{}{\Delta;{\Sigma};{\Gamma}\vdash{\theta}} = \good{\emptyset}{(\refer\theta_1\times\cdots\times\refer\theta_n\times\theta_1'\times\cdots\times\theta_m')\to \theta}
\]

\begin{definition}
We let \SystemReF* contain all terms  $\typingTerm{\Sigma}{\Delta}{\Gamma}{M}{\theta}$ such that $\good{}{\Delta;{\Sigma};{\Gamma}\vdash{\theta}}$ holds.
\end{definition}

\begin{example}
The terms form Example~\ref{ex:quiz}\,(2) are not in \SystemReF*, as $\alpha'$ is not inhabited.
The two terms are then equivalent, because Opponent cannot cast $\alpha$ to $\Int$, lacking a value of 
type $\alpha'$ to do so. Our model, however, does not capture this equivalence.
\end{example}

Moreover, we call an initial  configuration 
$\inbrax{\typingTerm{\Sigma}{\Delta}{\Gamma}{M}{\theta}}$ \boldemph{good} 
just if its interface is, while a valid configuration $\inbrax{{\cal E},\gamma,\phi,S,\lambda}$ is good just if, 
taking $X_\lambda=\{\alpha\mid\nu(\lambda)\cap\Apol_\alpha\not=\emptyset\}$,
$\support{\rm T}{\phi}\subseteq X_\lambda$ and
$\good{X_\lambda\!\!}{\theta}$ 
 hold,
for all $\theta\in\codom{\phi}\cup\{\theta\mid\nu(\lambda)\cap\Afun_\theta\not=\emptyset\}$. We can then check that goodness is preserved under reduction. 

Working in this restricted fragment, we can always implement all possible casts anticipated from the cast closure construction of Section~\ref{sec:trace-sem}.
More specifically,
a \boldemph{cast-term} from  $\theta$ to $\theta'$ based
 on \emph{aliased pairs} $(\theta_1,\theta'_1),\ldots,(\theta_n,\theta'_n)$ 
and \emph{inhabited variables} $\alpha_1,\cdots,\alpha_m$  is a term $\cast{}{\theta}{\theta'}$
 such that:
 \begin{compactitem}[$\bullet$]
  \item $\typingTerm{}{\Delta}{\overrightarrow{x_i:\refer\theta_i},\overrightarrow{y_i:\refer\theta'_i},\overrightarrow{z_j:\alpha_j}}{\cast{}{\theta}{\theta'\!}}{\!\theta {\rightarrow} \theta'}$
  \item for any $\Sigma = \{\overrightarrow{l_i:\theta_i}\}$, $p_j\in\Apol_{\alpha_j}$, ${S}:{\Sigma}$ and $\typingTerm{\Sigma'}{\Delta}{}{v}{\theta}$, 
  $(\!(\cast{}{\theta}{\theta'}\overrightarrow{\subst{x_i,y_i}{l_i}}\overrightarrow{\subst{z_j}{p_j}}) v,S) \red^{\!\!*}\! (v',S \cdot S')$ with $v\cong v'$,
 \end{compactitem}\smallskip

\noindent
with $S'$ disjoint of $S$.
Recall now $\RCast{\phi}$ from Definition~\ref{def:Cast} 
and define its restriction
 $\RCastt{\phi}$, the  
closure of $\{(\theta,\theta') \sep \exists l.\, \refer\theta,\refer\theta' \in \phi(l)\}$ using all cast closure rules from Section~\ref{sec:trace-sem} apart from ($*$).

\begin{lemma}
\label{thm:castf}
Let $\phi$ be a valid typing function with
$\vec\alpha$ all free type variables in $\phi$.
Then, for all $(\theta,\theta')\in \RCastt{\phi}$ 
 there is a cast-term $\cast{}{\theta}{\theta'}$ based on pairs $\{(\theta'',\theta''') \sep \exists l.\, \theta'',\theta''' \in \phi(l) \}$ and $\vec\alpha$.
\end{lemma}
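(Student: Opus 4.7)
The plan is to proceed by induction on the derivation of $(\theta,\theta') \in \RCastt{\phi}$, constructing the cast-term $\cast{}{\theta}{\theta'}$ from the aliased-pair variables $x_i,y_i$ and the inhabitant variables $z_j$ given by the statement. It is crucial that $\RCastt{\phi}$ omits rule $(*)$, so no substitutions under type binders arise and we never need to fabricate inhabitants for reference types appearing deep in the derivation.

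For the base case, with $\refer\theta,\refer\theta' \in \phi(l)$ for some $l$, the pair $(\refer\theta,\refer\theta')$ is among the aliased pairs of the statement, furnishing variables $x:\refer(\refer\theta)$ and $y:\refer(\refer\theta')$. I would set
\[
\cast{}{\theta}{\theta'} \;\equiv\; \lambda z:\theta.\,(!x):=z;\,!(!y),
\]
which works because $x$ and $y$ evaluate to the same location in any compatible store, hence so do $!x$ and $!y$; writing $z$ at $!x$ and reading through $!y$ returns a value equivalent to $z$ now viewed at type $\theta'$.

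The structural closure rules compose inductively: $\lambda z.z$ for reflexivity, composition for transitivity, $\lambda z.\,!(c(\nuref z))$ for the asymmetric reference rule, and the evident covariant/contravariant combinations for the product and arrow \emph{constructor} rules. For the quantifier constructor rule, I would wrap the inner cast inside a $\Lambda$-abstraction (for $\forall$) or an $\unpack$/$\pack$ binding (for $\exists$), using the side condition $\alpha\notin\support{}{\kappa}$ to ensure that the inner cast does not itself depend on $\alpha$ through any aliased pair used by the induction hypothesis.

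The main obstacle will be the \emph{deconstruction} rules for products and arrows, especially the contravariant arrow case: from $c:(\theta'_1{\to}\theta_2)\to(\theta_1{\to}\theta'_2)$ one must extract a cast $\theta_1\to\theta'_1$. My plan is to introduce a fresh reference $r$ initialised with a default inhabitant of $\theta'_1$, build an auxiliary callback $f\equiv\lambda w:\theta'_1.\,(r:=w;\,d_{\theta_2})$ that captures its argument into $r$, invoke $c(f)$ on the given input $v:\theta_1$ to trigger that callback, and then read $!r$ to return a value of type $\theta'_1$. This, and the simpler product-deconstruction case, relies on producing default inhabitants of auxiliary types, which I would build by a secondary induction on type syntax using $\unit$, $0$, the $z_j$ for type variables, and $\lambda x.\Omega$, $\Lambda\alpha.\Omega$, $\pack{\pair{\alpha_j}{z_j}}$ for the quantified and higher-order cases. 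The exclusion of rule $(*)$ guarantees that no auxiliary type ever requires inhabiting a reference type from within the derivation, keeping the secondary induction well-founded and the overall construction sound.
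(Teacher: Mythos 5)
Your overall strategy coincides with the paper's: induction on the cast-closure derivation, with the generators handled by writing and reading through the aliased reference variables, the structural rules by the evident combinators, the quantifier rules by wrapping under $\Lambda$/$\mathsf{unpack}$, and the deconstruction rules by a callback that captures its argument into a fresh reference, with the inhabitation hypothesis supplying the auxiliary default values. Two concrete steps, however, would fail as written. First, your base-case term $\lambda z.\,(!x):=z;\,!(!y)$ permanently overwrites the contents of the shared location, whereas the cast-term specification requires the application to reduce to $(v',S\cdot S')$ with $S'$ \emph{disjoint} from $S$, i.e.\ the pre-existing store must be returned unchanged. The paper's construction saves the old contents before the write and restores them afterwards, as in $\lambda z.\,\mathsf{let}\ a={!x}\ \mathsf{in}\ x:=z;\,\mathsf{let}\ b={!y}\ \mathsf{in}\ x:=a;\,b$; your double-dereference variant for the generators of $\RCastt{\phi}$ is fine once the same save/restore discipline is added.

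Second, in the contravariant arrow case you initialise the capture reference with a ``default inhabitant of $\theta'_1$'' --- but $\theta'_1$ is exactly the type whose values you are trying to manufacture, and your secondary induction cannot inhabit a type variable that is not among the $z_j$'s (e.g.\ one bound by an enclosing quantifier rule). The paper sidesteps this obligation by storing a thunk: the reference is allocated as $\nuref{(\lambda\_.\Omega)}$ at type $\refer(\Unit\to\theta'_1)$, which always typechecks regardless of whether $\theta'_1$ is inhabited, and the callback writes $\lambda\_.w$ and later forces it with $()$. The inhabitation hypothesis is then needed only for the callback's return type $\theta_2$ (and, symmetrically, for the argument type in the covariant deconstruction), which is where the paper, like you, appeals to the restrictions of \SystemReF*. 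With these two adjustments your argument matches the paper's proof.
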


The ($*$) rule, though useful for soundness, has no clear way to be implemented with cast-terms, hence the reason for aiming at its exclusion.
The restriction we pose on \SystemReF*\ in that quantifiers cannot appear under a $\rm ref$ constructor renders the rule indeed redundant. 
Each $\phi$ produced in the model contains no types with quantifiers, so that the ($*$) rule can be eliminated (Lemma~\ref{lm:rcast-subst2}). 

The proof of full abstraction is based on a definability result: we show that every complete trace produced by a good P-configuration $C_P$ can be accepted by an appropriately designed O-configuration $C_O$. 
In addition, the given trace is all $C_O$ can accept up to nominal and trace equivalence.
The technique follows e.g.~\cite{laird2007fully}, albeit expanded to the polymorphic setting. Note that the absence of generic types~\cite{Longo93} in our language, because of type disclosure, 
rules out the option of reducing the problem to that for the monomorphic setting.

\begin{theorem}[Definability]
\label{thm:defin}
Let $C_P$ be a good configuration and $t$ a complete trace in $\Tr{(C_P)}$ with final store $S$.
There exists a valid configuration $C_O$ compatible with $C_P$
such that $\Tr{(C_O)}=\{\pi\star t' \sep (\forall a\!\in\!\nu(t)\backslash\nu(C_O).\,\pi(a)\!=\!a)\land \exists t''\! \equivT t\cdot(\faction{\langle\bar{()}\rangle}{S}{\emptyset}).\ t' \prefix t''^{\bot}\}$.
\end{theorem}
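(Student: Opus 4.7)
The plan is to construct $C_O$ as an explicit Opponent-controlled context that simulates the dual trace $t^\bot\cdot(\faction{\langle\bar{()}\rangle}{S}{\emptyset})$, performing exactly the moves dictated by $t^\bot$ at each of its turns, and checking that Player's incoming moves conform to $t$. The construction proceeds by recursion on the length of $t$. The context maintains a private store recording every functional, polymorphic and location name ever received from Player, together with a counter reference $c$ tracking the current position in $t$.

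First I would describe the action at each turn. At an O-turn (a P-move of $t^\bot$, so an O-move of $t$), the context reads $c$, writes the prescribed abstract store $S_i$ by a sequence of assignments to existing and freshly allocated locations, and produces the concrete values required to realise the abstract value of the move: lambda terms whose bodies dispatch on $c$ serve as function names, integers or previously stored witnesses serve as polymorphic names, and then a call or a return is performed. At a P-turn of $C_O$ (an O-move of $t^\bot$, so a P-move of $t$), the context first tests that the function being called is the expected one, and tests the incoming arguments against the expected abstract value: reference equality for locations, equality against stored values for recurring function/polymorphic names; fresh names are simply recorded. Any mismatch diverges via $\Omega$.

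The type disclosures $\rho$ that accompany O-moves of $t$ are realised by concrete casts performed just before the move is emitted. Here the restriction to \SystemReF*\ is critical: since reference types carry no quantifiers and every type variable under a $\mathrm{ref}$ is reachably inhabited, every cast required by the trace and present in $\RCast{\phi}$ is already derivable in $\RCastt{\phi}$, hence implementable by a cast-term as supplied by Lemma~\ref{thm:castf} (using the aliased pairs witnessed by $\phi$ and stored polymorphic witnesses for the inhabited variables). Goodness is preserved along reductions, so at each step the cast-terms we need are indeed available.

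The main obstacle is the backward inclusion: every complete trace produced by $C_O$ must lie in the prescribed set. Two sources of non-determinism in $C_O$'s traces correspond to the two quotients appearing on the right. First, the abstractions $\AVal$ and $\AStore$ yielding $C_O$'s played moves are unique only up to renaming of freshly introduced names, which accounts for the permutations $\pi$ fixing $\nu(C_O)$. Second, Player may reply with distinct polymorphic names in positions where $C_O$, lacking a cast that would expose their underlying values, cannot implement any test distinguishing them; this is exactly the identification enforced by $\equivT$ via its P\nobreakdash-refreshing on disclosed polymorphic names. Assembling these observations into an induction on the length of $t$, and relying on compatibility and on preservation of validity under interaction reduction (Lemma~\ref{thm:validconf}), yields the claimed equality of trace sets.
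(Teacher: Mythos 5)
Your overall strategy coincides with the paper's: an Opponent configuration built by recursion on the length of $t$ that keeps a move counter, dispatches on it through references holding the code of each O-owned function name, checks each incoming P-move against the expected one (diverging on mismatch), realises disclosures via the cast-terms of Lemma~\ref{thm:castf} under the \SystemReF* restrictions, and accounts for the two quotients ($\pi$ and $\equivT$) in the backward inclusion. However, there is a concrete gap in the step ``the context maintains a private store recording every functional, polymorphic and location name ever received.'' Such a store must hold values whose types mention type variables that are only brought into scope dynamically --- by the context's own $\Lambda$-abstractions, or by unpacking the existential packages that P plays. These binders do not scope across turn boundaries, so a reference of type $\refer\theta$ with $\theta$ mentioning such a variable cannot simply be allocated at one turn and dereferenced at the next. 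The paper's solution is the $\getval_i$ machinery: at each step the entire history of values is repackaged into a location whose content has a fully existentially quantified type; on re-opening it the freshly unpacked variables $\vec\alpha'$ differ \emph{statically} from the intended $\vec\alpha$, and the mismatch is repaired by the $\castPk$ function, which uses a bank of private locations $L$ (threaded through the packages) to cast each $\theta_i'$ back to $\theta_i$ for the visible types. Without this, or an equivalent device, your private store is not typeable --- and this is exactly why the paper notes that the lack of generic types blocks a reduction to the monomorphic definability argument.

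A second, smaller gap: you propose to test recurring polymorphic names by ``equality against stored values,'' but when the disclosed instantiation of the corresponding type variable is a function type, \SystemReF\ has no equality test on functions. The paper instead pre-instruments every function that O supplies as a polymorphic value so that calling it writes a unique tag into a private reference $\cde$; recognising a returned name then amounts to applying it to a default argument and inspecting $\cde$. This requires the argument and result types to be inhabited, which is precisely what the goodness condition of \SystemReF* guarantees --- so the fragment restriction is needed for the checking phase as well, not only for implementing the casts.
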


We present the main ingredients of the definability argument (cf.\ Appendix~\ref{app:defn}).
We argue by induction on the length of $t$.
Suppose $C_P=\inbrax{\EE_P,\gamma_P,\phi_P,S_P,\lambda}$, 
let $A_0$ be the set of all the names that appear in $t$ and $C_P$.
To determine the types behind the O-type-variables in $A_0$, we define a mapping $\delta$ by collecting all type constraints we can derive from the trace $t$ about O-type-variables,
{mapping to $\Int$ when no such constraints exist} . 
We number P-moves in $t$ in decreasing order, that is, the head move of $t$ has index $\|t\|=(t+1)/2$, and
let $\Theta_{\|t\|}$ recursively \ntt{include all function, reference and variable types that appear in $\phi_P\{\delta\}$ and $\lambda\{\delta\}$.}
At the $i$-th P-move of $t$, this set is updated to $\Theta_i=\{\theta_1^i,\theta_2^i,\cdots,\theta_{ts(i)}^i\}$ by including all the types disclosed in intermediate moves.
%

We use a counter $\ctr$ to determine the position we are in $t$ and
inductively construct $C_O=\inbrax{\EE_O,\gamma_O,\phi_O,S_O,\lambda^\bot}$ with the additional assumptions that:
\begin{compactitem}[$-$]
\item $\EE_O=(E_n,\eta_n\myto\eta_n',\Phi_n)::\cdots::(E_1,\eta_1\!\rightsquigarrow\eta_1',\Phi_1)$,
$n$ is determined from $t$ and
$E_i\equiv (\lambda z.\,{!r_i}(!\ctr)z)\bullet$\,,
for each $i$;
\item $\gamma_O$ obeys $\delta$ \ntt{(i.e.\ $\restrictTr{{\gamma_O}}{{\Atvar}}\subseteq\delta$)} and, moreover, assigns values to each function or pointer name belonging to O by referring to purpose-specific private references in $S_O$:
\begin{compactitem}[$\square$]
\item for each $f$ of arrow type, $\gamma_O(f)=\lambda z.\,{!q_f}(!\ctr)z$
\item for each $g$ of universal type, $\gamma_O(g)=\Lambda\alpha.\,!q_g'(!\ctr)\alpha$
\item for each pointer name $p$ of type $\beta$,
\begin{compactitem}[$\bullet$]
\item if $\gamma_O(\beta)$ an arrow type, $\gamma_O(p)=\lambda z.\,{!q_p}(!\ctr)z$
\item if $\gamma_O(\beta)$ a universal type, $\gamma_O(p)=\Lambda\alpha.\,!q_p'(!\ctr)\alpha$
\item if $\gamma_O(\beta)$ an existential type, $\gamma_O(p)=\inbrax{\alpha',v}$ and $v$ recursively follows the same discipline
\item if $\gamma_O(\beta)$ a product type, $\gamma_O(p)=\inbrax{v_1,v_2}$ and $v_1,v_2$ recursively follow the same discipline
\item if $\gamma_O(\beta)=\Int/\refer\theta$ and the value of $p$ gets disclosed in $t$, $\gamma_O(p)$ is the revealed value; otherwise, $\gamma_O(p)$ is a unique integer/location representing $p$
\item if $\gamma_O(\beta)=\alpha'$, $\gamma_O(p)$ is some polymorphic name respecting the type disclosures in $t$;
\end{compactitem}
\end{compactitem}
\item $\dom{S_O}$ contains $Q_F \uplus Q_F' \uplus Q_P \uplus Q_P' \uplus\{r_1,\cdots,r_n,l_1,\cdots,l_k\} 
  \uplus\{\ctr\} \uplus \{\ell_1,\cdots,\ell_{ts(\|t\|)}\}
  \uplus\{\getval_i\mid i\in [1,\|t\|]\}$;
\end{compactitem}
  where $Q_F$ contains a unique location $q_f$ for each function name $f$ in $\dom{\gamma_O}$, 
  $Q_F'$ contains the $q_g'$'s, $Q_P$ the $q_p$'s, and $Q_P'$ the $q_p'$'s. 

The main engine behind the construction is the use of references to record values played, continuations, functions, and generally all history of $t$ so that O can refer to it in order to: decide to accept each expected move by P, and play the corresponding expected move themselves. Looking at the domain of $S_O$,
the $l_i$'s are the shared locations between $C_P$ and $C_O$,
while 
$\ctr$ is an integer counter that counts the remaining P-moves in $t$.
We set $S_O(\ctr)=\|t\|$.
%
%
%
$L=\{\ell_1,\cdots,\ell_{ts(\|t\|)}\}$ is a set of private auxiliary locations which we shall 
use in order to cast between known types and types obtained by opening existential packages.

The role of the $\getval$'s is to  us to store all names that appear in the trace. 
For each $i$, $\getval_i$ is a location of type:\vspace{-2mm}
\begin{multline*}
\exists\vec{\alpha}.\left((\Int\to\theta_1^i)\times\cdots\times(\Int\to\theta_{ts(i)}^i)\right)\\[-1mm]
\times\left((\Unit\to{\refer\theta^i_1})\times\cdots\times(\Unit\to{\refer\theta^i_{{ts}(i)}})\right)\\[-6mm]
\end{multline*}
where $\vec{\alpha}$ is the sequence of all free type variables in $\Theta_i$.
Thus, the value of $\getval_i$ is an existential package whose first component contains enumerations of all values of type $\theta_j^i$, for each $i,j$. 
These is enough to represent all the available values at each point in the trace.
The second component inside the package stored in $\getval_i$ contains a single reference for each type and we shall {assign} 
to it a special role, namely of holding a private reference from the set $L$.

To see how the above work, let
$t=(\faction{{m}_1}{S_1}{\rho_1})\cdot(\faction{{m}_2}{S_2}{\rho_2})\cdot t'$
and suppose ${m}_1$ is a question $\questP{f}{v}$, introducing fresh type variables $\beta_1,\cdots,\beta_\iota$ (via values of existential type).
We encode acceptance of these first two moves in $q_f$, by setting $S_O(q_f)(\|t\|)$ to be:
\begin{align*}
 \
&\unpack{!\getval_{\|t\|}}{\inbrax{\vec{\alpha}',\inbrax{z',h}}}{}\\
&\letin{z=\castPk\inbrax{z',h}}{}
\\
&\lambda x_0.\, \unpack{N_1}{\inbrax{\beta_1,x_1}}{\cdots\;\unpack{N_{\iota}}{\inbrax{\beta_{\iota},x_{\iota}}}{}}\\
&\qquad\,\letin{\getv=\refer\inbrax{z,\lambda\_.\Omega,\cdots\,,\lambda\_.\Omega}}{}\\ 
&\qquad\,
\ctr\,{-}{-};
\fshvals;\, 
\chkvals;\, 
\newvals;\,\setvals;\,\play\tag{$*$}
\end{align*}
Since the type of
$!\getval_{\|t\|}$ is fully existentially quantified, when we (statically) unpack $!\getval_{\|t\|}$ and get $\vec\alpha',z'$, 
the $\vec\alpha'$ 
are distinct from the type variables $\vec\alpha$ in $\Theta_{\|t\|}$ and, consequently, each component $z_i':\Int\to\theta_i'$ of $z'$ is not of the expected type $\Int\to\theta_i$. 
However, when the unpack will actually happen this mismatch will be resolved.
For visible types (in the game-theoretic \emph{view} sense~\cite{view}), we need this mismatch to also be resolved statically, as we would like to be able to relate the values in $z'$ with $x_0$, any open variables, 
or the  return value of $!q_f$. 
Hence, we employ the $\castPk$ function which casts values of type $\theta_i'$ to $\theta_i$ in $z'$, using the locations in $L$ (each of type $\theta_i$) and their representations in $h$.

Each term $N_i$ is selected in such a way so that, using $\getv$ and $x_0,x_1,\cdots,x_{i-1}$, 
it captures the precise position within $({ m}_1,S_1,\rho_1)$ which introduces the type variable $\beta_i$. 
Note that, here and below, in order to access the values of $\rho_1$ we make use of the 
$\bf cast$ terms of Lemma~\ref{thm:castf}.
We then create the location $\getv$ to contain the old value stores $(z)$, extended with an empty store for each $\beta_i$ ($\lambda\_.\Omega$).
%
Also:
\begin{asparaitem}[$\Box$]
\item $\fshvals$ detects the positions inside $({ m}_1,S_1,\rho_1)$ that introduce fresh names
and updates $\getv$ by adding them as new values in their corresponding types.
This  yields an updated store $S_O'$. 
\item 
$\chkvals$ checks that $x_0$, the public part of $S_O'$ and the values revealed by type disclosure are the ones expected, that is, $v$, $S_1$ and $\rho_1$ respectively. 
{For these comparisons to be implemented, it suffices to focus on variable types only: the rest are either integers/references (can always be checked), or units/functions (no need to check them). 
Variable types belonging to P cannot be checked (P always plays fresh names for them), so we skip them.
Values of variable types $\alpha$ belonging to O will appear e.g.\ in $x_0$ with their instantiated types $\delta(\alpha)$. 
In this case, we are in position to distinguish between function names: these are functions provided by O as polymorphic values so O can pre-instrument so that when
calling them they each produce a unique observable effect. 
}
%
%
\item
$\newvals$ creates all the fresh locations of $({ m}_2,S_2)$ 
and stores them in the corresponding index of $\getv$. Moreover, for each name $f'$ of arrow type
in $({ m}_2,S_2)$,
$\newvals$ includes a code portion creating a reference $q_{f'}$
\ntt{to store a function which takes as an argument the value of the counter
specifying the current move, and returns a function
following the expected behaviour (and that stipulated by the store obtained for $t'$ by the inductive hypothesis).}
%
Similarly for names of universal types. 
Finally,
for each polymorphic O-name $p$ in $({ m}_2,S_2)$ of type $\alpha$,
$\newvals$ includes code creating $q_p$ and adding a function in $\getv$ according to the type $\gamma_O(\alpha)$ (e.g.\ if an arrow type then we add
$\lambda z.\,{!q_{p}}(!\ctr)z$, where $q_p$ encapsulates an effect which allows its recognition in the future).
\item $\setvals$
updates the store in such a way that all the values of $S_2$ are set, while $\play$ is defined by case analysis on ${m}_2$.
\end{asparaitem}
\smallskip

Using Definability, we can now prove the main theorem.

\begin{theorem}[Completeness]
Given \SystemReF*\ terms
$\typingTerm{\Sigma}{\Delta}{\Gamma}{M_1,M_2}{\theta}$, if
$M_1\cong M_2$ then $\sem{M_1}\sim\sem{M_2}$.
\end{theorem}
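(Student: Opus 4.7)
The plan is to prove completeness by contraposition, using Definability as the key bridge between semantic distinctions and syntactic contexts. Suppose $\sem{M_1}\not\sim\sem{M_2}$ for two \SystemReF* terms. Then, without loss of generality, there is a complete trace $t\in\sem{M_1}$ such that no $t'\in\sem{M_2}$ satisfies $t\equivT t'$. Our goal is to produce an evaluation context $E$, a type substitution $\delta$, a value substitution $\gamma$ and a store $S$ such that $(E[M_1\{\gamma\}\{\delta\}],S)\Downarrow$ but $(E[M_2\{\gamma\}\{\delta\}],S)\not\Downarrow$, which by Theorem~\ref{thm:equiv-ciu-ctx} contradicts $M_1\cong M_2$.

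Concretely, first I would form the initial Player configurations $C_{P,i}=\inbrax{\typingTerm{\Sigma}{\Delta}{\Gamma}{M_i}{\theta}}$ and observe that, since $\Delta,\Sigma,\Gamma\vdash\theta$ is good, $C_{P,i}$ are good configurations; goodness is preserved under reduction, so every configuration reached by Interaction Reduction from $C_{P,i}$ remains good, and in particular the trace $t$ involves only types in \SystemReF*. Next, I apply Theorem~\ref{thm:defin} to the good configuration $C_{P,1}$ and the trace $t$, obtaining an Opponent configuration $C_O$ compatible with $C_{P,1}$ whose trace set is exactly the permutation- and prefix-closure of $\{t''^\bot \mid t''\equivT t\cdot(\ansO{()},S_t,\emptyset)\}$. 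In particular, $C_{P,1}|C_O\downarrow_{t\cdot(\ansO{()},S_t,\emptyset)}$.

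From Theorem~\ref{thm:semEq-iff-mergeIRed} we then obtain $\mergeConf{C_{P,1}}{C_O}\Downarrow_{t\cdot(\ansO{()},S_t,\emptyset)}$, and Lemma~\ref{thm:redint-iff-red} translates this into termination of the concrete program $(\mergeStack{\EE_{P,1}}{\EE_O}\{\gamma^*\},\, S_{P,1}\{\gamma^*\})$ where $\gamma=\gamma_{P,1}\cdot\gamma_O$. Unfolding the merge of stacks for an initial $C_{P,1}$ yields a closed evaluation context $E$ (built from $\EE_O\{\gamma^*\}$, reflecting the wiring of the environment around the term) together with a closing substitution $\gamma_{\mathrm{cls}}$ for the free variables of $M_1$ and a type substitution $\delta$ obtained from the type-variable component of $\gamma_O$, so that the computation is $(E[M_1\{\gamma_{\mathrm{cls}}\}\{\delta\}],S)$. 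On the other hand, if $C_{P,2}|C_O\downarrow_{t'}$ held for some complete trace $t'$, then $t'^\bot\in\Tr(C_O)$ would force $t'\equivT t\cdot(\ansO{()},S_t,\emptyset)$ by the characterisation of $\Tr(C_O)$ combined with Lemma~\ref{thm:traceeq-conf}, contradicting our assumption on $t$; hence by Theorems~\ref{thm:semEq-iff-mergeIRed} and Lemma~\ref{thm:redint-iff-red}, $(E[M_2\{\gamma_{\mathrm{cls}}\}\{\delta\}],S)\not\Downarrow$. This is precisely a ciu-distinction, and Theorem~\ref{thm:equiv-ciu-ctx} converts it to a contextual one.

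The main obstacle lies squarely in invoking Definability correctly: verifying that the trace $t$ lives in the good fragment so that Lemma~\ref{thm:castf} supplies the cast-terms the construction of $C_O$ relies on, and that the trace set of $C_O$ really captures exactly the $\equivT$-class of $t$ modulo nominal permutation and prefixes. Once this is in place, the argument becomes a routine chaining of Theorem~\ref{thm:semEq-iff-mergeIRed}, Lemma~\ref{thm:redint-iff-red}, Lemma~\ref{thm:traceeq-conf} and Theorem~\ref{thm:equiv-ciu-ctx}, with the only subtle point being to package $C_O$ as a genuine ciu-test, that is, an evaluation context with a closing value substitution and a closing type substitution, using the fact that $C_O$ is an Opponent (passive) configuration obtained by an initial move from a term-in-context.
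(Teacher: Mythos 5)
Your proposal is correct and follows essentially the same route as the paper's own argument: both hinge on Definability (Theorem~\ref{thm:defin}) to turn a complete trace $t$ of $M_1$ into a compatible Opponent configuration $C_O$ accepting exactly the $\equivT$-class of $t^\bot$ (up to prefix and permutation), and then chain Theorem~\ref{thm:semEq-iff-mergeIRed}, Lemma~\ref{thm:redint-iff-red} and Theorem~\ref{thm:equiv-ciu-ctx} to transfer termination of the induced ciu-test from $M_1$ to $M_2$ and back into a trace $t_2\equivT t$ of $M_2$. The only difference is presentational — you argue by contraposition where the paper argues directly that every complete trace of $M_1$ has an $\equivT$-match in $\sem{M_2}$ — and your extra appeal to Lemma~\ref{thm:traceeq-conf} is harmless but unnecessary, since the characterisation of $\Tr(C_O)$ already builds in the $\equivT$-closure.
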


\cutout{
\begin{proof}
Let us take $\typingSubst{\delta}{\Delta}$, $\Sigma' \extend \Sigma$,
closed store $\typingHeap{S}{\Sigma'}$,
some $(\vec\alpha,\vec l,\vec u) \in \sem{\Delta,\Sigma,\Gamma}$, and write
 $C_{P,i}$ for $\configuration{M_i\overrightarrow{\subst{x}{\widetilde{u}}}}{\emptyPMap}{\phi}{S}{\lambda}$,
  so $\inbrax{\typingTerm{\Sigma}{\Delta}{\Gamma}{M_i}{\theta}} \xredint{\faction{\questO{?\,}{\vec\alpha,\vec l,\vec u)}}{S'\!\!}{\,\rho}} C_{P,i}$.
 Given $t_1\! \in\! \comp{\Tr{(C_{P,1})}}$,
 we build a trace $t_2 \in \comp{\Tr{(C_{P,2})}}$
 such that $t_1 \equivT t_2$.
 
 From Theorem~\ref{thm:defin}, there is a valid configuration $C_O$,
 compatible with $C_{P,1}$, such that $\Tr{(C_O)}$ is equal to $\{t' \sep \exists t'' \equivT t_1. t' \prefix t''^{\bot}\}$.
 Thus, $C_{P,1} | C_O \downarrow_t$, so that from Theorem~\ref{thm:semEq-iff-mergeIRed} and Lemma~\ref{thm:redint-iff-red} 
we get that $(E[M_1\{\gamma\}\{\delta\}],S) \Downarrow$.
 Using Theorem~\ref{thm:equiv-ciu-ctx}, one has that $\ciueqf{\Sigma}{\Delta}{\Gamma}{M_1}{M_2}{\theta}$,
 so that $(E[M_2\{\gamma\}\{\delta\}],S) \Downarrow$.
 
From Lemma~\ref{thm:redint-iff-red} there is $t_2$ such that $\mergeConf{C_{P,2}}{C_O} \Downarrow_{t_2}$.
 Using Theorem~\ref{thm:semEq-iff-mergeIRed}, we get $C_{P,2} | C_O \downarrow_{t_2}$,
 thus $t_2^\bot \in \Tr{(C_O)}$, so $t_2 \equivT t_1$.\!\!\!\!
\end{proof}
}



\bibliographystyle{abbrv}
\begin{spacing}{0.5}
\newcommand\qweqwe{}


\end{spacing}

\clearpage
\onecolumn
\appendix
\begin{figure*}[t]\small
\begin{gather*}
  \inferrule*{(x : \theta) \in \Gamma}{\typingTerm{\Sigma}{\Delta}{\Gamma}{x}{\theta}}
\quad
  \inferrule*{(l:\theta) \in \Sigma}{\typingTerm{\Sigma}{\Delta}{\Gamma}{l}{\refer\theta}}
\quad
  \inferrule*{ }{\typingTerm{\Sigma}{\Delta}{\Gamma}{\nb{n}}{\Int}}
\quad
  \inferrule*{\typingTerm{\Sigma}{\Delta}{\Gamma}{M_1}{\Int} \quad 
              \typingTerm{\Sigma}{\Delta}{\Gamma}{M_2,M_3}{\theta}}
             {\typingTerm{\Sigma}{\Delta}{\Gamma}{\ifte{M_1}{M_2}{M_3}}{\theta}} 
\quad
\inferrule*{\typingTerm{\Sigma}{\Delta}{\Gamma}{M_1,M_2}{\Int}}
                {\typingTerm{\Sigma}{\Delta}{\Gamma}{M_1 \oplus M_2}{\Int}}
\\             
  \inferrule*{\typingTerm{\Sigma}{\Delta}{\Gamma}{M}{\theta}  \quad
              \typingTerm{\Sigma}{\Delta}{\Gamma}{N}{\theta'}} 
             {\typingTerm{\Sigma}{\Delta}{\Gamma}{\pair{M}{N}}{\theta \times \theta'}}
\quad
  \inferrule*{\typingTerm{\Sigma}{\Delta}{\Gamma}{M_1}{\refer\theta_1} \quad \typingTerm{\Sigma}{\Delta}{\Gamma}{M_2}{\refer\theta_2}}
             {\typingTerm{\Sigma}{\Delta}{\Gamma}{M_1 = M_2}{\Int}}
\quad  
  \inferrule*{\typingTerm{\Sigma}{\Delta}{\Gamma}{M}{\theta_1 \times \theta_2}} 
             {\typingTerm{\Sigma}{\Delta}{\Gamma}{\proj{i}(M)}{\theta_i}}  
\quad
\inferrule*{\typingTerm{\Sigma}{\Delta}{\Gamma,x:\theta}{M}{\theta'}}{\typingTerm{\Sigma}{\Delta}{\Gamma}{\lambda x.M}{\theta \rightarrow \theta'}}
\\
  \inferrule*
  {\typingTerm{\Sigma}{\Delta,\alpha}{\Gamma}{M}{\theta}}
              {\typingTerm{\Sigma}{\Delta}{\Gamma}{\Lambda\alpha.M}{\forall\alpha.\theta}}
\quad
  \inferrule*{\typingTerm{\Sigma}{\Delta}{\Gamma}{M }{\theta \rightarrow \theta'}\quad
              \typingTerm{\Sigma}{\Delta}{\Gamma}{N}{\theta}} 
             {\typingTerm{\Sigma}{\Delta}{\Gamma}{M N}{\theta'}}
\quad             
 \inferrule*{\typingTerm{\Sigma}{\Delta}{\Gamma}{M }{\forall\alpha.\theta}}
             {\typingTerm{\Sigma}{\Delta}{\Gamma}{M\theta'}{\theta\subst{\alpha}{\theta'}}}
\quad
\inferrule*{\typingTerm{\Sigma}{\Delta}{\Gamma}{M}{\theta\subst{\alpha}{\theta'}}}
             {\typingTerm{\Sigma}{\Delta}{\Gamma}{\pack{\pair{\theta'}{M}}}{\exists \alpha.\theta}}
\\
\inferrule*
          {\typingTerm{\Sigma}{\Delta}{\Gamma}{M}{\exists \alpha.\theta} \\ 
           \typingTerm{\Sigma}{\Delta,\alpha}{\Gamma,x:\theta}{N}{\theta'}}
          {\typingTerm{\Sigma}{\Delta}{\Gamma}{\unpack{M}{\pair{\alpha}{x}}{N}}{\theta'}}
\quad
\inferrule*{\typingTerm{\Sigma}{\Delta}{\Gamma}{M}{\theta}}{\typingTerm{\Sigma}{\Delta}{\Gamma}{\nuref M}{\refer\theta}}             
\quad             
  \inferrule*{\typingTerm{\Sigma}{\Delta}{\Gamma}{M}{\refer \theta}}{\typingTerm{\Sigma}{\Delta}{\Gamma}{!M}{\theta}}
\quad
  \inferrule*{\typingTerm{\Sigma}{\Delta}{\Gamma}{M}{\refer \theta} \quad
              \typingTerm{\Sigma}{\Delta}{\Gamma}{N}{\theta}}
             {\typingTerm{\Sigma}{\Delta}{\Gamma}{M := N}{\Unit}}
\\
\end{gather*}
\caption{Typing Rules of \SystemReF}
\end{figure*}


\section{Properties of Cast Relations}
\label{sec:propcast}

\begin{definition}
 A cast relation $\kappa$ is said to be \boldemph{valid} if for all types $\theta$  there exists a $\theta'$ that is smaller than all types in $\kappa(\theta)$. 
\end{definition}

\begin{lemma}
\label{lm:validcastrel}
 If $\phi$ is a valid typing function, then $\RCast{\phi}$ is a valid cast relation.
\end{lemma}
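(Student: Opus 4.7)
The plan is to construct, for each type $\theta$, an explicit minimum $\theta_{\star} \in \RCast{\phi}(\theta)$ such that $\theta_{\star} \TRel{\Phi} \theta'$ for every $\theta' \in \RCast{\phi}(\theta)$. This rests on a normal-form analysis of derivations in $\RCast{\phi}$: informally, all structural congruence rules (for $\times$, $\to$, and $Q\alpha.\_$) can be pushed outwards and all base pairs from $\{(\sigma,\sigma')\mid \exists l.\,\sigma,\sigma'\in\phi(l)\}$ pushed inwards to sit directly underneath a ref-congruence. The construction of $\theta_{\star}$ then proceeds by structural recursion on $\theta$, distributing through the type constructors and, at each $\refer\sigma$, replacing $\sigma$ by the minimum (w.r.t.\ $\TRel{\Phi}$) of the ref-level base-closure of $\sigma$.

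First I would prove the normal form. The congruence rules for $\times$, $\to$, and $Q\alpha.\_$ are invertible (unlike the ref rule), so any application of a base pair at a non-ref position can be rewritten as applications of base pairs in the ref-containing subterms together with the structural rules. The instantiation rule $(*)$ is compatible with this reorganisation thanks to the $\chi$-condition, which forbids the substituted variable from occurring under a ref; hence substitution does not create new base-pair applications at ref positions. Consequently $(\theta,\theta')\in\RCast{\phi}$ iff $\theta'$ is obtained from $\theta$ by replacing each reference subterm $\refer\sigma$ by some $\refer\sigma'$ with $\sigma'$ in the transitive closure of the base relation restricted to the ref class of $\sigma$, interleaved with $\TRel{\Phi}$-weakenings.

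Second, validity of $\phi$ provides, for each $l$, a type $\theta_{0}^{l}$ with $\theta_{0}^{l} \TRel{\Phi} \theta$ for all $\theta\in\phi(l)$; using transitivity of $\TRel{\Phi}$ these local minima combine into a single minimum for the ref-level base-closure of any $\sigma$ that appears at some location. Feeding this back into the structural recursion produces the witness $\theta_{\star}$. I would then verify (a) $\theta_{\star}\in \RCast{\phi}(\theta)$ by reassembling the closure rules along the recursion, and (b) $\theta_{\star}\TRel{\Phi}\theta'$ for every $\theta'\in\RCast{\phi}(\theta)$ by induction on the normal-form derivation produced in the first step, using the structural clauses of $\TRel{\Phi}$ and the fact that the ref-minima dominate every ref-class element.

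The main obstacle will be showing the $(*)$ rule interacts cleanly with both the normal form and the minimum construction. The $\chi$-condition is essential: it makes substitution commute with the ref-position decomposition, so minima at ref positions are preserved under instantiation. A secondary delicacy is that $\Phi$ records trace precedence of type variables, and the variable clause of $\TRel{\Phi}$ requires $\support{\rm T}{\theta}<_{\Phi}\alpha$; so whenever the construction selects a type variable inside a minimum, one must check this precedence, which follows from how $\Phi_\lambda$ is extracted from the ownership function $\lambda$ of a configuration and from the fact that base pairs in $\phi$ only involve types already supported by names preceding the current position.
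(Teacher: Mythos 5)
The paper does not actually include a proof of Lemma~\ref{lm:validcastrel} (it is stated bare in Appendix~\ref{sec:propcast}), so there is nothing to compare your argument against; judged on its own terms, your proposal has two genuine gaps. First, the normal form on which everything rests is false as stated. The generating pairs of $\RCast{\phi}$ are arbitrary pairs $(\theta,\theta')$ with $\theta,\theta'\in\phi(l)$, and these may relate a type variable directly to $\Int$ or to an arrow type with no $\refer$ constructor in sight (e.g.\ $\phi(l)=\{\alpha,\Int\to\Int\}$ puts $(\alpha,\Int\to\Int)$ into $\RCast{\phi}$). There is no ref-congruence rule in $\TDclos{\kappa}$ for base pairs to ``sit underneath'' --- the only $\refer$ rule is the destructor, and the paper stresses that this rule is the one that is \emph{not} reversible --- and ``$\TRel{\Phi}$-weakening'' is not a closure rule at all, so interleaving it cannot be part of a derivation. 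What you describe is much closer to the restricted relation $\RCastt{\phi}$ of the completeness section than to $\RCast{\phi}$ itself; consequently your witness $\theta_\star$, built by editing only the contents of $\refer$ subterms, does not dominate a class that merges a variable with a structured type at the top level.

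Second, and more fundamentally, the step where ``using transitivity of $\TRel{\Phi}$ these local minima combine into a single minimum'' is precisely where the real content of the lemma lies, and transitivity does not deliver it. Validity of $\phi$ is a \emph{per-location} condition: it gives, for each $l$, some $\theta_0^l$ below every type in $\phi(l)$. A single $\RCast{\phi}$-class can chain through several locations ($\theta_1,\theta_2\in\phi(l_1)$, $\theta_2,\theta_3\in\phi(l_2)$), so you need a common lower bound of $\{\theta_0^{l_1},\theta_0^{l_2}\}$; but two lower bounds of a common type need not be comparable in $\TRel{\Phi}$ (both $\Int$ and $\Unit$ lie below $\alpha$, yet have no common lower bound). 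Turning pairwise, per-location consistency into consistency of an entire chained equivalence class is essentially a simultaneous-unification argument, using the acyclicity imposed by the ordering $\Phi$ and the structural closure rules to propagate and merge constraints; your proposal assumes this away in one clause. The treatment of rule~($*$) is likewise only asserted, not carried out, but the two issues above would need to be repaired before that becomes the bottleneck.
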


\begin{lemma}
\label{lm:sym-clos}
 Let $\kappa$ be a casting relation such that for all $(\theta,\theta') \in \kappa, (\theta',\theta) \in \kappa$.
 Then for all $(\theta,\theta') \in \TDclos{\kappa}, (\theta',\theta) \in \TDclos{\kappa}$.
\end{lemma}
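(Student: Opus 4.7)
The plan is to proceed by induction on the derivation of $(\theta,\theta') \in \TDclos{\kappa}$, showing that for each closure rule, the symmetric pair can be rederived using (possibly a variant of) the same rule together with the induction hypothesis. The hypothesis that $\kappa$ itself is symmetric handles the base case directly; the reflexivity rule is trivially symmetric; and transitivity follows by applying the induction hypothesis to both premises and then transitivity again, swapping the order.

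For the structural rules on products, references, and quantifiers, symmetry preservation is immediate: the rules are ``balanced'' in the sense that swapping both sides of the premise pair yields the swapped version of the conclusion pair, so a single application of the induction hypothesis followed by the same rule suffices. The quantifier introduction rule (and its sibling $(*)$) likewise transport swapping through, since the side-conditions $\alpha \notin \support{}{\kappa}$ and $\chi(\alpha,\theta,\theta')$ are themselves symmetric in $\theta,\theta'$.

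The slightly more delicate cases are the arrow formation and arrow-domain projection rules, because of the contravariance of the domain position. For arrow formation, the premises $(\theta_1',\theta_1) \in \TDclos\kappa$ and $(\theta_2,\theta_2') \in \TDclos\kappa$ become, by induction hypothesis, $(\theta_1,\theta_1')$ and $(\theta_2',\theta_2)$, and re-applying the rule with roles suitably shifted delivers $(\theta_1' \to \theta_2', \theta_1 \to \theta_2)$, exactly as desired. For the arrow-domain projection from $(\theta_1' \to \theta_2, \theta_1 \to \theta_2')$, the induction hypothesis gives $(\theta_1 \to \theta_2', \theta_1' \to \theta_2)$; pattern-matching this against the same domain-projection rule (with the primed/unprimed variables swapped) yields the pair $(\theta_1', \theta_1)$, which is the swap of the original conclusion. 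The codomain projection and the analogous product-projection rules are handled in the same way.

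I do not expect any genuine obstacle here: the proof is entirely routine once one carefully matches each rule's pattern with its swapped instance. The only thing to be careful about is bookkeeping for the arrow rules, where one must verify that the contravariant placement of domains is respected when re-applying the rule after the induction hypothesis; all other rules preserve symmetry transparently.
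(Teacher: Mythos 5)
Your proof is correct: the paper states Lemma~\ref{lm:sym-clos} without proof (treating it as routine), and your rule-by-rule induction on the derivation is exactly the standard argument one would supply. All cases check out, including the two delicate ones — for arrow formation the induction hypothesis turns $(\theta_1',\theta_1)$ and $(\theta_2,\theta_2')$ into precisely the premises needed to derive $(\theta_1'\to\theta_2',\theta_1\to\theta_2)$, and for the domain projection the swapped premise pattern-matches the same rule to give $(\theta_1',\theta_1)$ — and the side conditions $\alpha\notin\support{}{\kappa}$ and $\chi(\alpha,\theta,\theta')$ are indeed invariant under swapping the pair.
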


\begin{lemma}
\label{lm:sym-RCast}
 Taking $\phi$ a valid typing function, for all $(\theta,\theta') \in \RCast{\phi}, (\theta',\theta) \in \RCast{\phi}$. 
\end{lemma}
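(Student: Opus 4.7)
The plan is to derive this as an immediate corollary of Lemma~\ref{lm:sym-clos}, which says that the closure operator $\TDclos{-}$ preserves symmetry. So the only work left is to exhibit symmetry of the generating relation behind $\RCast{\phi}$.

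Recall from Definition~\ref{def:Cast} that $\RCast{\phi} = \TDclos{\kappa_0}$, where
\[
\kappa_0 = \{(\theta,\theta') \mid \exists l.\ \theta,\theta' \in \phi(l)\}.
\]
The first step is to observe that $\kappa_0$ is visibly symmetric: the condition ``$\theta,\theta' \in \phi(l)$'' is symmetric in $\theta$ and $\theta'$, so $(\theta,\theta') \in \kappa_0$ iff $(\theta',\theta) \in \kappa_0$. The second step is to invoke Lemma~\ref{lm:sym-clos} with $\kappa := \kappa_0$ to conclude that $\TDclos{\kappa_0} = \RCast{\phi}$ is symmetric, which is precisely the statement to be proved.

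There is no real obstacle here, since all the heavy lifting—inductively showing that each closure rule (reflexivity, transitivity, congruence under $\refer{-}$, products, arrows, quantifiers, the $\rm ref$-inversion rule, and the ($*$) rule for quantifier instantiation) reflects symmetry back through its premises—has already been carried out in the proof of Lemma~\ref{lm:sym-clos}. Notably, the validity hypothesis on $\phi$ plays no role in this argument: the conclusion holds for the $\TDclos{-}$-closure of any symmetric base relation. We retain the hypothesis in the statement only for uniformity with the adjacent Lemma~\ref{lm:validcastrel} and its downstream uses in Section~\ref{sec:trace-sem}.
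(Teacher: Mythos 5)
Your proof is correct and is essentially identical to the paper's: the paper likewise observes that the generating relation of $\RCast{\phi}$ is symmetric and then invokes Lemma~\ref{lm:sym-clos}. (The only cosmetic difference is that the paper's appendix writes the base relation with $\refer\theta,\refer\theta'\in\phi(l)$ rather than $\theta,\theta'\in\phi(l)$ as in Definition~\ref{def:Cast}, but either base is symmetric, so the argument is unaffected; your side remark that validity of $\phi$ is not actually needed is also accurate.)
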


\begin{proof}
 Straightforward from Lemma~\ref{lm:sym-clos} and the fact that
 $\{(\theta,\theta') \sep \exists l \in \dom{\phi}, \refer\theta,\refer\theta' \in \phi(l)\}$ is symmetric.
\end{proof}

\begin{definition}
 Two types $\theta,\theta'$ are said two be skeleton-equivalent if:
 \begin{itemize}
  \item $\theta = \theta'$,
  \item $\theta = \refer\theta_0$, $\theta' = \refer\theta'_0$,
  \item or $\theta = \theta_1 \times \theta_2$, $\theta' = \theta'_1 \times \theta'_2$
  and $\theta_i,\theta'_i (i \in \{1,2\}$ are skeleton-equivalent,
  \item or $\theta = \theta_1 \rightarrow \theta_2$, $\theta' = \theta'_1 \rightarrow \theta'_2$
  and $\theta_i,\theta'_i (i \in \{1,2\}$ are skeleton-equivalent,  
  \item $\theta = Q\alpha\theta_0$, $\theta' = Q\alpha\theta'_0$ $(Q \in \{\forall,\exists\})$,
  and $\theta,\theta'$ are skeleton-equivalent.  
 \end{itemize}

\end{definition}

\begin{definition}
Given a set of types $X$, a type $\theta$ is its \boldemph{most general instantiation} if, 
for all $\theta' \in X$, $\theta \leq_\Phi \theta'$, and for all $\theta''$ such that for all $\theta' \in X, \theta'' \leq_\Phi \theta'$, we have $\theta'' \leq_\Phi \theta$. 
We write  $\theta$ as $\MGI{X}$.
\end{definition}

\begin{lemma}
\label{lm:min-rcast}
 Let $\kappa = \RCast{\phi}$ with $\phi$ a valid typing function, then
 for any type $\theta$, $\min(\kappa(\theta))$ is 
 formed by skeleton-equivalent types.
\end{lemma}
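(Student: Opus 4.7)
The plan is to establish the claim by combining the validity of $\RCast{\phi}$ (Lemma~\ref{lm:validcastrel}) with a structural analysis of the closure rules of $\TDclos{\cdot}$ and of the ordering $\TRel{\Phi}$. First I would apply Lemma~\ref{lm:validcastrel} to obtain, for the given $\theta$, a common lower bound $\theta_\perp$ such that $\theta_\perp \TRel{\Phi} \eta$ for every $\eta \in \kappa(\theta)$. This $\theta_\perp$ serves as a template prescribing the outside-$\refer$ shape that each element of $\kappa(\theta)$ must respect.

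Second, I would inspect the inductive rules defining $\TRel{\Phi}$ and argue that the only way to increase shape at a non-$\refer$ position is by the variable rule $\frac{\support{\rmT}{\theta} <_\Phi \alpha}{\theta \TRel{\Phi} \alpha}$, which collapses an entire subterm into a single type variable; all other rules (product, arrow, quantifier) either preserve the constructor at the head position or go strictly under a $\refer$. Consequently, for every $\eta$ in $\kappa(\theta)$, the outside-$\refer$ skeleton of $\eta$ is obtained from that of $\theta_\perp$ by choosing, at certain outside-$\refer$ positions, to collapse the corresponding subtree into a type variable. This immediately reduces the claim to showing that any two minimal elements of $\kappa(\theta)$ perform the same collapses.

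Third, I would perform a structural induction on $\theta_\perp$ and argue by contradiction: suppose $\eta_1, \eta_2 \in \min(\kappa(\theta))$ disagree at some outside-$\refer$ position $p$, say $\eta_1$ keeps the structure of $\theta_\perp$ at $p$ while $\eta_2$ collapses $p$ to a type variable $\alpha$. Using symmetry (Lemma~\ref{lm:sym-RCast}) and transitivity we get $(\eta_1,\alpha)\in\kappa$, and using the bidirectional decomposition rules for $\times,\to,Q\alpha$ together with the ($*$) substitution rule at any quantifier positions along the path to $p$, I would extract the subterm of $\eta_1$ rooted at $p$ and pair it with $\alpha$ inside $\kappa$. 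Because $\alpha$ is a type variable and the variable rule is the only way to reach a variable from structured types, the free variables of that subterm must lie before $\alpha$ in $\Phi$; this lets us replace $p$ inside $\eta_2$ by the subterm of $\eta_1$, producing a fresh element $\eta_2' \in \kappa(\theta)$ with $\eta_2' \TRel{\Phi} \eta_2$ strictly, contradicting minimality of $\eta_2$. A symmetric argument handles the opposite polarity, and the induction on the subterms (which only decreases strictly at non-$\refer$ constructors, exactly as skeleton-equivalence requires) completes the case.

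The main obstacle will be the third step, namely the ``paste'' operation that produces $\eta_2'$. Getting this requires a careful bookkeeping of two things: (i) that the bidirectional structural rules give us access to the subterm of $\eta_1$ at $p$ as a bona fide element of $\kappa$, even when the path from the root to $p$ crosses alternating $\times$, $\to$ and $Q\alpha$ constructors; and (ii) that the side condition $\chi(\alpha,\theta,\theta')$ of the ($*$) rule is satisfied in the positions where we need to instantiate a quantified variable (which is the reason the lemma is about the outside-$\refer$ skeleton rather than the full shape). Once this reconstruction is in place the argument closes uniformly across all constructors, the $\refer$ case being trivial since skeleton-equivalence imposes no constraint under $\refer$.
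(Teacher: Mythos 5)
Your overall strategy differs from the paper's: the paper does not argue by contradiction and performs no paste/surgery on types. It takes the most general instantiation $\theta_0=\MGI{\kappa(\theta)}$ (whose existence is exactly what Lemma~\ref{lm:validcastrel} buys) and shows by a short structural induction that \emph{every} $\theta_1\in\min(\kappa(\theta))$ is skeleton-equivalent to $\theta_0$: since $\theta_0\TRel{\Phi}\theta_1$, and no rule of $\TRel{\Phi}$ places anything other than a product below a product (resp.\ a $\refer$-type below a $\refer$-type), $\theta_0$ must share $\theta_1$'s head constructor, and the closure properties of $\kappa$ push the argument into the components; mutual skeleton-equivalence of the minimal elements then follows by transitivity through $\theta_0$. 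Your first two steps essentially rediscover this observation and are fine.

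The genuine gap is in your third step. To contradict minimality of $\eta_2$ you need both (i) $\eta_2'\in\kappa(\theta)$ and (ii) $\eta_2'\TRel{\Phi}\eta_2$ with $\eta_2'\neq\eta_2$. For (ii) you must establish $\support{\rmT}{\eta_1|_p}<_\Phi\alpha$, and your justification (``the variable rule is the only way to reach a variable from structured types'') does not deliver it: the variable rule belongs to the \emph{ordering} $\TRel{\Phi}$, whereas the fact you actually extract at position $p$ is the \emph{cast-membership} $(\eta_1|_p,\alpha)\in\TDclos{\kappa}$, and the closure rules of $\TDclos{\cdot}$ carry no side conditions about $\Phi$ at all, so no ordering information can be read off them. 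The ordering facts you do have from validity, $\theta_\perp\TRel{\Phi}\eta_1$ and $\theta_\perp\TRel{\Phi}\eta_2$, only bound the variables of $\theta_\perp|_p$ by $\alpha$; but $\eta_1|_p$ may contain further type variables (those into which $\eta_1$ itself collapses sub-subterms of $\theta_\perp|_p$), and nothing forces these to precede $\alpha$ in $\Phi$. Pasting $\theta_\perp|_p$ instead of $\eta_1|_p$ repairs (ii) but breaks (i), since $\theta_\perp$ and its subterms need not be related to anything in $\TDclos{\kappa}$. So the surgery as described does not close; there are also unverified side conditions when rebuilding $\eta_2'$ across quantifier positions (the freshness condition $\alpha\notin\support{}{\kappa}$ of the congruence rule and the $\chi$ condition of rule ($*$)). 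The paper's route avoids all of this by never needing to re-enter the cast class with a modified type.
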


\begin{proof}
From Lemma~\ref{lm:validcastrel}, $\RCast{\phi}$ is a valid cast relation.
Thus, $\MGI{\kappa(\theta)}$ exists, let us write it $\theta_0$.
We prove that $\theta_0$ is skeleton equivalent to any type $\theta_1 \in \min(\kappa(\theta))$,
by induction on $\theta_1$:
\begin{itemize}
 \item If $\theta_1$ is equal to a type variable $\alpha$, to $\Int$ or $\Unit$, then since the order is total on type variables (which are the only types
 which can be above $\theta_0$), we have $\min(\kappa(\theta)) = \{\theta_0\}$.
 \item If $\theta_1$ is equal to $\theta^1_1 \times \theta^2_1$, then from the fact that $\theta_0 \leq \theta_1$,
 we get that $\theta_0 = \theta^1_0 \times \theta^2_0$, and the closure properties of $\kappa$
 gives us that $\MGI{\kappa(\theta^i_1)} = \theta^i_0$, so we conclude using the induction hypothesis.
 \item If $\theta_1$ is equal to $\refer\theta'_1$, then from the fact that $\theta_0 \leq \theta_1$,
 we get that $\theta_0 = \refer\theta'_0$. which are indeed skeleton equivalent with $\theta_1$.
\end{itemize}
\end{proof}

\section{Refined Type System}
\label{app-rts}

We prove in this section the safety property of our model (Lemma~\ref{thm:red-typedterm}),
using the standard progress-and-preservation technique.

\begin{lemma}
\label{lm:subst-typing}
 Taking $\typingTermTD{\phi}{\Delta}{\Gamma,x:\theta'}{M}{\theta}$
 and $\typingTermTD{\phi}{\Delta}{\cdot}{v}{\theta'}$,
 we have $\typingTermTD{\phi}{\Delta}{\Gamma}{M\subst{x}{v}}{\theta}$.
\end{lemma}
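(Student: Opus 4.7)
The proof goes by induction on the derivation of $\typingTermTD{\phi}{\Delta}{\Gamma,x:\theta'}{M}{\theta}$. The structure is largely the standard substitution-lemma argument, but with the extra subtlety introduced by the new cast rule that derives $\typingTermTD{\phi}{\Delta}{\Gamma}{M}{\theta'}$ from $\typingTermTD{\phi}{\Delta}{\Gamma}{M}{\theta}$ whenever $(\theta,\theta')\in\RCast\phi$. Before starting the induction, I would establish a routine weakening lemma for $\vdash_e$ stating that if $\typingTermTD{\phi}{\Delta}{\Gamma}{N}{\eta}$ then $\typingTermTD{\phi}{\Delta}{\Gamma,\Gamma'}{N}{\eta}$; this is proved by straightforward induction on the derivation, and does not interact with the cast rule since the latter leaves $\Gamma$ untouched.

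The base cases concern variables, constants, and locations. For the variable case, if $M = x$ then inspection of the variable rule forces $\theta=\theta'$, and $M\subst{x}{v}=v$; the hypothesis $\typingTermTD{\phi}{\Delta}{\cdot}{v}{\theta'}$ together with weakening gives $\typingTermTD{\phi}{\Delta}{\Gamma}{v}{\theta}$. For $M = y$ with $y\neq x$, $M\subst{x}{v}=y$ and $(y:\theta)\in\Gamma$, so we re-apply the variable rule. Integer constants, units, and locations are trivial since they carry no free variables and their typing is independent of $\Gamma$.

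The inductive cases for compound terms (applications, pairs, projections, conditionals, dereference, assignment, equality tests, reference creation, type application, pack, unpack, $\lambda$-abstraction and $\Lambda$-abstraction) follow the standard recipe: apply the induction hypothesis to each immediate subderivation and reassemble with the same rule. For the binder cases ($\lambda y.N$, $\Lambda\alpha.N$, $\unpack{N_1}{\pair{\alpha}{y}}{N_2}$), I would tacitly alpha-rename so that the bound variable $y$ (resp.\ type variable $\alpha$) is distinct from $x$ and fresh for $v$, making the induction hypothesis directly applicable to $N$ in the extended variable context; the fact that $v$ is typed in the empty variable context means no capture can occur. The case of the new cast rule is handled uniformly: if the last step concluded $\typingTermTD{\phi}{\Delta}{\Gamma,x:\theta'}{M}{\theta}$ from $\typingTermTD{\phi}{\Delta}{\Gamma,x:\theta'}{M}{\theta_0}$ and $(\theta_0,\theta)\in\RCast{\phi}$, apply the induction hypothesis to the premise and then reapply the same cast rule to the result.

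The only genuinely delicate point is the cast case, because one must be careful that $M\subst{x}{v}$ can still be retyped at both $\theta_0$ and $\theta$; but since the cast rule acts only on the resulting type and leaves the term and the context unchanged, this goes through without friction. The binder cases are the other place where one must pay attention, but as noted the conventional freshness assumptions (combined with the fact that $v$ is typed in an empty variable context so $\FV{v}$ poses no capture hazard) dispose of them. No hypothesis on $\phi$ beyond what is assumed by the judgement itself is needed, and the location context $\phi$ is preserved throughout since substitution of a term value does not change the locations in play.
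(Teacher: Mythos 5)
Your proof is correct and follows the same route as the paper, which simply states that the lemma is proved by induction on the derivation of $\typingTermTD{\phi}{\Delta}{\Gamma,x:\theta'}{M}{\theta}$. The details you supply — weakening for the variable case, the observation that the cast rule leaves the term and context untouched, and the standard freshness handling of binders — are exactly the ones the paper leaves implicit.
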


\begin{proof}
 By induction on the proof of $\typingTermTD{\phi}{\Delta}{\Gamma,x:\theta'}{M}{\theta}$.
\end{proof}

\begin{lemma}
\label{lm:rcast-goodprop}
 Let $\phi$ be a valid typing function, and $\kappa = \RCast{\phi}$.
 Suppose that $\alpha \notin \support{}{\phi}$,
 then 
\begin{itemize}
 \item $\kappa(\alpha) = \{\alpha\}$,
 \item for all $(\theta,\theta') \in \kappa$, 
 $\alpha \in \support{}{\theta}$ iff $\alpha \in \support{}{\theta'}$
 \item for all $(\refer\theta,\refer\theta') \in \kappa$, 
 if $\theta \neq \theta'$ then $\alpha \notin \support{}{\theta,\theta'}$.
\end{itemize}
\end{lemma}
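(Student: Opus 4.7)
The three properties will be proved simultaneously by induction on the derivation of $(\theta,\theta') \in \TDclos{\kappa_0}$, where $\kappa_0 = \{(\theta,\theta') \sep \exists l.\,\theta,\theta'\in\phi(l)\}$ is the base relation. Since $\alpha \notin \support{}{\phi}$, we have $\alpha \notin \support{}{\kappa_0}$, so in the base case both types are $\alpha$-free and all three conclusions hold trivially. Closure rules do not introduce free names, hence $\support{}{\kappa} = \support{}{\phi}$; in particular the side condition $\gamma \notin \support{}{\kappa}$ of the quantifier rule is equivalent to $\gamma \notin \support{}{\phi}$, and this lets me invoke the IH for such a $\gamma$ as well (treating $\gamma$ as a ``fresh'' variable of the same benign kind as $\alpha$).

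To make the induction go through, I strengthen Property 2 to a structural invariant (I) phrased in terms of an $\alpha$-skeleton. Define $\mathrm{sk}_\alpha(\theta)$ recursively: the placeholder $\square$ when $\alpha \notin \support{}{\theta}$; the variable $\alpha$ itself when $\theta = \alpha$; and otherwise the same outer constructor ($\times$, $\rightarrow$, $\refer$, or $Q\gamma.\_$ with $\gamma\neq\alpha$) applied to the $\alpha$-skeletons of the immediate subterms. Invariant (I) states that for every $(\theta,\theta') \in \kappa$, $\mathrm{sk}_\alpha(\theta) = \mathrm{sk}_\alpha(\theta')$. Checking (I) rule-by-rule is routine: base, reflexivity, and transitivity compose; the structural intro/down rules for products, arrows and $\refer$ follow by unfolding the recursive definition of $\mathrm{sk}_\alpha$ against the IH; and the quantifier rules split on whether $\gamma=\alpha$ (in which case both skeletons are $\square$) or $\gamma\neq\alpha$ (handled by IH on the subderivation). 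From (I), Property 2 is immediate, and Property 1 follows because $\mathrm{sk}_\alpha(\theta') = \mathrm{sk}_\alpha(\alpha) = \alpha$ forces $\theta' = \alpha$.

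Property 3 is established by a further induction on the derivation of $(\refer\theta,\refer\theta') \in \kappa$. Invariant (I) controls the shape of intermediates in transitivity and in the pair/arrow-down rules, so any ref-level pair with different contents must be traceable through shape-preserving closure steps back to a ref-level mismatch already present in $\kappa_0$ where $\alpha$-freeness is automatic. The delicate case is the ($*$) rule: starting from $(Q\gamma.\theta, Q\gamma.\theta')$, the substituted pair $(\theta\subst{\gamma}{\theta_0},\theta'\subst{\gamma}{\theta_0})$ can in principle introduce an occurrence of $\alpha$ (when $\alpha \in \support{}{\theta_0}$) precisely at positions where $\gamma$ was free. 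The side condition $\chi(\gamma,\theta,\theta')$ is exactly the hypothesis needed here: it forbids $\gamma$ from appearing under any $\refer$ in $\theta$ or $\theta'$, so the substitution cannot alter any ref-wrapped subterm, hence no new $\alpha$-occurrence lands inside a $\refer$. This preserves both (I) and Property 3 through ($*$).

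The main obstacle is precisely this ($*$) case, because the substitution cuts across all three properties: $\theta_0$ may itself contain $\alpha$, and the interaction among the shape of $\theta,\theta'$, the scope restriction $\chi$, and the $\alpha$-skeleton requires careful bookkeeping. The crucial observation is that $\chi$ localizes the substitution to positions visible in the $\alpha$-skeleton, so new $\alpha$-occurrences (if any) appear \emph{symmetrically} on both sides and never under a $\refer$ constructor, which is exactly what Properties 2 and 3 demand.
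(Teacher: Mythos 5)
The paper states this lemma without proof, so there is no official argument to compare against; judged on its own terms, your proposal has a genuine gap at the ($*$) rule — precisely the case you flag as delicate. Your strengthened invariant (I) asserts $\mathrm{sk}_\alpha(\theta)=\mathrm{sk}_\alpha(\theta')$ for every pair in $\kappa$, and you discharge the ($*$) case by claiming that the $\alpha$-occurrences introduced by the substitution $\subst{\gamma}{\theta_0}$ land ``symmetrically'' on both sides. Nothing you have proved supports this: the inductive hypothesis applies to the pair $(Q\gamma.\theta,Q\gamma.\theta')$, for which $\mathrm{sk}_\gamma$ of both sides is just $\square$ (the bound $\gamma$ is not free), and $\mathrm{sk}_\alpha$ of both sides says nothing about where $\gamma$ sits inside $\theta$ versus $\theta'$. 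The pair of bodies $(\theta,\theta')$ need not itself belong to $\kappa$ — the quantified pair may come directly from the base relation $\kappa_0$ or from transitivity — so no instance of (I) controls the positions of $\gamma$, and the side condition $\chi$ only forbids $\gamma$ under a $\refer$ constructor, not asymmetric occurrences elsewhere. Concretely, if $\phi(l)$ contained $\forall\gamma.(\gamma\times\Int)$ and $\forall\gamma.(\Int\times\gamma)$, then ($*$) instantiated with $\theta_0=\alpha$ followed by the product projection rules would put $(\alpha,\Int)$ into $\kappa$, refuting (I) and the first two clauses of the lemma.

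Whatever excludes such base pairs must be the \emph{validity} of $\phi$ — the existence of a common $\TRel{\Phi}$-lower bound for each $\phi(l)$, which (under the intended reading that nothing other than $\gamma$ itself lies $\TRel{\Phi}$-below a variable $\gamma\notin\Phi$) forces the bodies of any quantified pair in $\kappa_0$ to agree on the positions of the bound variable. Your proof never invokes validity beyond the hypothesis $\alpha\notin\support{}{\phi}$, so it cannot see this. A repair would need (i) to prove (I) simultaneously for the bound variables of all quantified pairs arising in the closure, (ii) a characterisation of which quantified pairs can occur in $\TDclos{\kappa_0}$ at all (transitivity through an arbitrary middle type is a further complication your sketch for Property 3 also glosses over), and (iii) an explicit appeal to validity in the base case for quantified types — morally, a normal-form argument in the style of the paper's skeleton-equivalence lemma. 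A smaller quibble: ``closure rules do not introduce free names'' is only true of the support of $\kappa$ as a set; individual pairs produced by ($*$) do mention fresh variables such as $\alpha$, which is the whole point of the lemma.
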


\begin{lemma}
\label{lm:rcast-subst}
Let $\phi$ be a valid typing function with $\alpha\notin\nu(\phi)$, and let $\kappa = \RCast{\phi}$.
If $(\theta,\theta') \in \kappa$
 then $(\theta\subst{\alpha}{\theta_1},\theta'\subst{\alpha}{\theta_1}) \in \kappa$.
\end{lemma}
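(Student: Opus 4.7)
The plan is to proceed by induction on the derivation of $(\theta,\theta') \in \kappa = \RCast{\phi}$, using that $\kappa$ is defined as the closure of the set $B = \{(\theta,\theta') \sep \exists l.\, \theta,\theta' \in \phi(l)\}$ under the rules of Section~\ref{sec:trace-sem}. For the base case, any pair in $B$ has $\support{\rm T}{\theta},\support{\rm T}{\theta'} \subseteq \nu(\phi)$, so since $\alpha \notin \nu(\phi)$ the substitution $\subst{\alpha}{\theta_1}$ acts as the identity on $\theta$ and $\theta'$, and the conclusion is immediate.

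For the inductive step I would case-split on the last rule applied. The reflexivity, transitivity, and the purely structural rules (for $\mathrm{ref}$, $\times$ and $\to$, in either direction) all commute with substitution in the obvious way: applying the induction hypothesis to the premises and re-applying the rule gives the conclusion at once. For the rule introducing $(Q\beta.\theta,Q\beta.\theta')$ from $(\theta,\theta')$ under the side condition $\beta \notin \nu(\kappa)$, I would first alpha-rename so that $\beta \notin \{\alpha\}\cup\nu(\theta_1)$; the side condition remains intact because $\kappa$ itself is pinned by $\phi$ and is untouched by the lemma's substitution, so re-applying the rule to the IH delivers the result.

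The step I expect to be the main obstacle is the substitutive rule $(*)$, which derives $(\theta\subst{\beta}{\theta_0},\theta'\subst{\beta}{\theta_0}) \in \kappa$ from $(Q\beta.\theta,Q\beta.\theta') \in \kappa$ together with $\chi(\beta,\theta,\theta')$, because it interacts nontrivially with our outer substitution $\subst{\alpha}{\theta_1}$. After alpha-renaming to ensure $\beta \neq \alpha$ and $\beta \notin \nu(\theta_1)$, the induction hypothesis applied to the premise yields $(Q\beta.\theta\subst{\alpha}{\theta_1},Q\beta.\theta'\subst{\alpha}{\theta_1}) \in \kappa$. The side condition $\chi(\beta,\theta\subst{\alpha}{\theta_1},\theta'\subst{\alpha}{\theta_1})$ is preserved, because substituting the distinct variable $\alpha$ by a type $\theta_1$ in which $\beta$ does not appear cannot introduce any new $\beta$ under a $\mathrm{ref}$ constructor. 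Re-applying $(*)$ with the witness $\theta_0\subst{\alpha}{\theta_1}$ then produces a pair that, by the standard substitution lemma of the lambda-calculus (commutativity of $\subst{\alpha}{\theta_1}$ and $\subst{\beta}{\theta_0}$ when $\beta \neq \alpha$ and $\beta \notin \nu(\theta_1)$), coincides with $((\theta\subst{\beta}{\theta_0})\subst{\alpha}{\theta_1},(\theta'\subst{\beta}{\theta_0})\subst{\alpha}{\theta_1})$, delivering the desired conclusion.
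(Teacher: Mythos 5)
Your proof is correct, but it follows a genuinely different route from the paper's. You induct on the derivation of $(\theta,\theta')\in\RCast{\phi}$, discharging each closure rule in turn, whereas the paper inducts on the structure of $\theta$ and leans on the auxiliary Lemma~\ref{lm:rcast-goodprop}: when $\alpha\notin\nu(\phi)$, one has $\kappa(\alpha)=\{\alpha\}$, $\alpha$ occurs in $\theta$ iff it occurs in $\theta'$ for any $(\theta,\theta')\in\kappa$, and $\alpha$ cannot occur inside a nontrivial $\refer$-cast. Those facts let the paper enumerate the possible shapes of $\theta'$ given the top-level constructor of $\theta$ (a variable other than $\alpha$, or a type of the same shape) and push the substitution through without ever touching the closure rules --- in particular rule ($*$) is never confronted. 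Your derivation induction is more self-contained (no auxiliary lemma), but must handle ($*$) head-on, which you do correctly via $\alpha$-renaming of the bound variable and the standard commutation of the two substitutions; your base-case observation that pairs in $\{(\theta,\theta')\mid\exists l.\,\theta,\theta'\in\phi(l)\}$ contain no occurrence of $\alpha$ is exactly where the hypothesis $\alpha\notin\nu(\phi)$ enters, matching the role it plays in the paper. One small inaccuracy: the $\refer$ rule only goes in the destructor direction (the paper notes this explicitly), so ``in either direction'' overstates that case, though your generic commutation argument covers whichever rules are actually present. Finally, note that the paper reuses its structural induction almost verbatim for Lemma~\ref{lm:rcast-subst2}; your derivation induction adapts equally well there by observing that ($*$) is never applicable once $\codom{\phi}$ is quantifier-free.
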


\begin{proof}
 By induction on $\theta$, using Lemma~\ref{lm:rcast-goodprop} when necessary:
 \begin{itemize}
  \item If $\theta = \alpha$, then $\theta' = \alpha$ since $\kappa(\alpha) = \{\alpha\}$.
  Thus, $\theta\subst{\alpha}{\theta_1}= \theta_1$ and $\theta'\subst{\alpha}{\theta_1} = \theta_1$.
  \item If $\theta = \alpha' \neq \alpha$, then $\alpha \notin \support{}{\theta'}$,
  so that $\theta\subst{\alpha}{\theta_1}= \theta'$ and $\theta'\subst{\alpha}{\theta_1} = \theta'$.
  \item If $\theta = \Int$, then $\alpha \notin \support{}{\theta'}$,
  so that $\theta\subst{\alpha}{\theta_1}= \theta'$ and $\theta'\subst{\alpha}{\theta_1} = \theta'$.
  \item If $\theta = \refer\theta_0$, then  $\alpha \notin \support{}{\theta_0}$, so that
  $\alpha \notin \support{}{\theta'}$,
  thus  $\theta\subst{\alpha}{\theta_1}= \theta'$ and $\theta'\subst{\alpha}{\theta_1} = \theta'$.
  \item If $\theta = \theta^1 \times \theta^2$, then there are two possibilities:
  \begin{itemize}
   \item $\theta' = \alpha' \neq \alpha$, then $\alpha' \notin \support{}{\theta}$,
   thus $\theta\subst{\alpha}{\theta_1}= \theta'$ and $\theta'\subst{\alpha}{\theta_1} = \theta'$.
   \item $\theta' = \theta'^1 \times \theta'^2$, then
   due to the closure property of $\kappa$, we get that $(\theta_i,\theta'_i) \in \kappa (i \in \{1,2\})$,
   so the induction hypothesis gives us that $(\theta^i\subst{\alpha}{\theta_1},\theta'^i\subst{\alpha}{\theta_1}) \in \kappa$,
   thus  $((\theta^1 \times \theta^2)\subst{\alpha}{\theta_1},(\theta'^1 \times \theta'^2)\subst{\alpha}{\theta_1}) \in \kappa$.
  \end{itemize}
  \item If $\theta = \forall \alpha'.\theta_0$, then there are two possibilities:
   \begin{itemize}
   \item $\theta' = \alpha'' \neq \alpha$, then $\alpha \notin \support{}{\theta}$,
   thus $\theta\subst{\alpha}{\theta_1}= \theta$ and $\theta'\subst{\alpha}{\theta_1} = \theta'$.
   \item $\theta' = \forall\alpha'.\theta'_0$, then 
   due to the closure property of  $\kappa$, we get that $(\theta_0, \theta'_0) \in \kappa$.
   Then, applying the induction hypothesis, we get that $(\theta_0\subst{\alpha}{\theta_1}, \theta'_0\subst{\alpha}{\theta_1}) \in \kappa$,
   thus $((\forall\alpha'.\theta_0)\subst{\alpha}{\theta_1}, (\forall\alpha'.\theta'_0)\subst{\alpha}{\theta_1}) \in \kappa$.
  \end{itemize}
 \end{itemize}
\end{proof}

\begin{lemma}
\label{lm:rcast-subst2}
Let $\phi$ be a valid typing function with $\alpha\notin\nu(\phi)$ and suppose all types in $\codom{\phi}$ are quantifier-free. 
If $(\theta,\theta') \in \RCast{\phi}$
 then $(\theta\subst{\alpha}{\theta_1},\theta'\subst{\alpha}{\theta_1}) \in \RCastt{\phi}$. Hence, in particular, $\RCast{\phi}=\RCastt{\phi}$.
\end{lemma}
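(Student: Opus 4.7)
The plan is to prove the substitution statement $(\theta\subst{\alpha}{\theta_1}, \theta'\subst{\alpha}{\theta_1}) \in \RCastt{\phi}$ by induction on the derivation of $(\theta,\theta') \in \RCast{\phi}$. The equality $\RCast{\phi} = \RCastt{\phi}$ follows at once: $\RCastt \subseteq \RCast$ is trivial (fewer rules), and $\RCast \subseteq \RCastt$ is the instance $\theta_1 = \alpha$, which makes the substitution the identity.

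The routine cases — base, reflexivity, transitivity, the $\refer$-elim rule, and the intro/elim rules for $\times$ and $\rightarrow$ — go through by applying the corresponding $\RCastt$-rule to the induction hypothesis, since substitution commutes with each type constructor; the base case uses $\alpha \notin \nu(\phi)$ together with quantifier-freeness of $\phi$ to see that $\subst{\alpha}{\theta_1}$ fixes every pair of $\phi$. The quantifier-intro rule is dispatched by first $\alpha$-renaming its bound variable off $\{\alpha\} \cup \nu(\theta_1)$ and then invoking the induction hypothesis. The sole nontrivial case is $(*)$: the final step is $(\theta_0\subst{\beta}{\theta^*}, \theta_0'\subst{\beta}{\theta^*}) \in \RCast{\phi}$, derived from $(Q\beta.\theta_0, Q\beta.\theta_0') \in \RCast{\phi}$ with $\chi(\beta,\theta_0,\theta_0')$. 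After $\alpha$-renaming $\beta$ away from $\{\alpha\} \cup \nu(\theta_1)$, the induction hypothesis yields $(Q\beta.\theta_0\subst{\alpha}{\theta_1}, Q\beta.\theta_0'\subst{\alpha}{\theta_1}) \in \RCastt{\phi}$. To peel off the quantifier without invoking $(*)$, I appeal to a \emph{Structural Lemma}: when $\phi$ is quantifier-free, any $\RCastt$-pair of the form $(Q\beta.\tau_1,\sigma)$ with $\beta \notin \nu(\phi)$ satisfies $\sigma = Q\beta.\tau_2$ for some $\tau_2$ with $(\tau_1,\tau_2) \in \RCastt{\phi}$. Applying this, I obtain $(\theta_0\subst{\alpha}{\theta_1}, \theta_0'\subst{\alpha}{\theta_1}) \in \RCastt{\phi}$, and then apply an $\RCastt$-analog of Lemma~\ref{lm:rcast-subst} with the substitution $\subst{\beta}{\theta^*\subst{\alpha}{\theta_1}}$; by the freshness conditions $\beta \neq \alpha$ and $\beta \notin \nu(\theta_1)$, this coincides on $\theta_0,\theta_0'$ with the composite $\subst{\beta}{\theta^*}\subst{\alpha}{\theta_1}$, delivering the required conclusion.

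The Structural Lemma is the main obstacle and is proved by induction on derivation size. The base rule is ruled out by quantifier-freeness of $\phi$; reflexivity and the quantifier-intro rule give the conclusion outright; transitivity is handled by applying the lemma to both halves and concatenating via transitivity of $\RCastt$. The delicate cases are the elimination rules for $\refer$, $\times$, $\rightarrow$, which could a priori output a top-level-quantified pair from a premise of the shape $(\refer(Q\beta.\tau),\refer\sigma)$, $(Q\beta.\tau \times \tau',\rho \times \rho')$, or similar; however, since $\RCastt$ has no introduction rule for $\refer$ and $\phi$ contains no quantifier types, such a premise cannot come from the base rule, and the only non-trivial derivations available (reflexivity and transitivity) force $\sigma$ (resp.\ $\rho$) to have matching top-level quantifier structure by inner induction, ultimately collapsing the eliminated pair into either a reflexive instance or a pair with the desired shape. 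Finally, the $\RCastt$-analog of Lemma~\ref{lm:rcast-subst} is established by the same induction on type structure as in the appendix, with the Structural Lemma supplying the step that the paper's original proof of Lemma~\ref{lm:rcast-subst} tacitly relied upon in its quantifier sub-case.
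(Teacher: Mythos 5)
Your proof is correct in outline but takes a genuinely different route from the paper's. The paper proves the lemma by induction on the \emph{structure of the type} $\theta$, replaying the argument of Lemma~\ref{lm:rcast-subst} and modifying only the quantifier case: there it asserts that, because $\phi$ is quantifier-free, $(Q\alpha'.\theta_0,Q\alpha'.\theta'_0)\in\RCast{\phi}$ forces $(\theta_0,\theta'_0)\in\RCast{\phi}$, and then closes with the quantifier-introduction rule of $\RCastt{\phi}$ --- so the rule $(*)$ never has to be simulated, because the type-directed case analysis never passes through it. You instead induct on the \emph{derivation} of $(\theta,\theta')\in\RCast{\phi}$ and confront $(*)$ head-on, discharging it with your Structural Lemma (a quantifier-inversion property of $\RCastt{\phi}$ over quantifier-free $\phi$) followed by substitution-closure of $\RCastt{\phi}$. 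The two routes need essentially the same hidden ingredient: your Structural Lemma is precisely the inversion step the paper asserts without proof (``it must be that $(\theta_0,\theta'_0)\in\RCast{\phi}$''), transposed to $\RCastt{\phi}$. What your route buys is that this obligation is isolated and named rather than folded into an appeal to ``closure properties''; what it costs is that the derivation induction forces you to also treat transitivity and all the elimination rules explicitly, which the type-structural induction sidesteps.

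Two caveats on the details. First, your sketch of the Structural Lemma underestimates the elimination cases: a pair $(Q\beta.\tau_1,\sigma)$ can be produced by projecting from, say, $(Q\beta.\tau_1\times\mu,\ \sigma\times\mu')$, and that parent pair need not come only from reflexivity or transitivity --- it can itself arise from further eliminations or from product-introduction. To make this airtight you must strengthen the induction hypothesis to a simultaneous invariant (matching quantifier skeletons together with componentwise derivability) for quantified subtypes occurring at arbitrary positions and polarities, which is what your ``inner induction'' gestures at but does not pin down. Second, your detour is longer than necessary at one point: since $\RCastt{\phi}$ lacks the rule $(*)$, its closure under substitution of a variable outside $\nu(\phi)$ follows by a direct induction on $\RCastt$-derivations (every remaining rule commutes with the substitution after renaming bound variables), with no appeal to the Structural Lemma; the Structural Lemma is needed only for the inversion step itself.
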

\begin{proof}
The proof is by induction on $\theta$ and follows exactly that of the previous lemma, apart from the very last subcase.
So let $\theta = \forall \alpha'.\theta_0$, 
with $\theta' = \forall\alpha'.\theta'_0$. Then, since $\phi$ is quantifier-free, in order for $(\theta,\theta')$ to be in $\RCast{\phi}$, it must be that  $(\theta_0, \theta'_0) \in \RCast{\phi}$.
   Then, applying the induction hypothesis, we get that $(\theta_0\subst{\alpha}{\theta_1}, \theta'_0\subst{\alpha}{\theta_1}) \in \RCastt{\phi}$,
   thus $((\forall\alpha'.\theta_0)\subst{\alpha}{\theta_1}, (\forall\alpha'.\theta'_0)\subst{\alpha}{\theta_1}) \in \RCastt{\phi}$.
\end{proof}

\begin{lemma}
\label{lm:subst-TVar-typing}
 Taking $\typingTermTD{\phi}{\Delta,\alpha}{\Gamma}{M}{\theta_0}$
 with
 \begin{itemize}
  \item $\theta'_0 \in \RCast{\phi}(\theta_0)$,
  \item $\alpha \notin \support{}{\phi}$,
  \item $\support{}{M} \cap \ApolT{\alpha} = \varnothing$,
  \item for all $f \in \support{}{M} \cap \AfunT{\theta}$, $\alpha \notin \support{}{\theta}$,
 \end{itemize} 
 we have $\typingTermTD{\phi}{\Delta}{\Gamma\subst{\alpha}{\theta_1}}{M\subst{\alpha}{\theta_1}}{\theta'_0\subst{\alpha}{\theta_1}}$.
\end{lemma}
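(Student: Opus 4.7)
The plan is first to reduce to a cleaner substitution statement and then to proceed by induction on the typing derivation. Observe that a single application of the cast rule to $\typingTermTD{\phi}{\Delta,\alpha}{\Gamma}{M}{\theta_0}$, using the assumption $\theta'_0 \in \RCast{\phi}(\theta_0)$, yields $\typingTermTD{\phi}{\Delta,\alpha}{\Gamma}{M}{\theta'_0}$. It therefore suffices to prove the following substitution principle: \emph{If $\typingTermTD{\phi}{\Delta,\alpha}{\Gamma}{M}{\theta}$, together with the other three side-conditions on $\phi$ and $M$, then $\typingTermTD{\phi}{\Delta}{\Gamma\subst{\alpha}{\theta_1}}{M\subst{\alpha}{\theta_1}}{\theta\subst{\alpha}{\theta_1}}$.}

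To establish this principle I would go by induction on the derivation of $\typingTermTD{\phi}{\Delta,\alpha}{\Gamma}{M}{\theta}$, adopting Barendregt's convention so that every bound type variable encountered is distinct from $\alpha$ and fresh for $\theta_1$. The leaf rules are immediate from the hypotheses: for a variable $x:\theta\in\Gamma$ the substituted context gives $x:\theta\subst{\alpha}{\theta_1}\in\Gamma\subst{\alpha}{\theta_1}$; for a location $l$ with $(l,\theta)\in\phi$, the hypothesis $\alpha\notin\support{}{\phi}$ gives $\theta\subst{\alpha}{\theta_1}=\theta$; for an integer constant the type $\Int$ is unaffected; for $f\in\AfunT{\theta}\cap\support{}{M}$ the fourth hypothesis gives $\theta\subst{\alpha}{\theta_1}=\theta$; and for $p\in\ApolT{\alpha'}\cap\support{}{M}$ the third hypothesis forces $\alpha'\neq\alpha$, so the type $\alpha'$ is also unaffected. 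The structural rules (pairs, projections, application, conditionals, references, equality, assignment, lambda abstraction, pack/unpack) follow directly by the IH applied to each premise, noting that the substitution commutes with all term and type constructors under the freshness assumptions. For type application $M'\,\theta_2$ and for $\pack{\pair{\theta_2}{M'}}$ I rely on the standard substitution lemma for types, $(\theta\subst{\alpha'}{\theta_2})\subst{\alpha}{\theta_1} = \theta\subst{\alpha}{\theta_1}\subst{\alpha'}{\theta_2\subst{\alpha}{\theta_1}}$ (valid because $\alpha\neq\alpha'$ and $\alpha'\notin\FTV{\theta_1}$).

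The only non-routine case is the cast rule. Suppose the derivation ends with $\typingTermTD{\phi}{\Delta,\alpha}{\Gamma}{M}{\theta'}$ inferred from $\typingTermTD{\phi}{\Delta,\alpha}{\Gamma}{M}{\theta}$ and $(\theta,\theta')\in\RCast{\phi}$. The IH gives $\typingTermTD{\phi}{\Delta}{\Gamma\subst{\alpha}{\theta_1}}{M\subst{\alpha}{\theta_1}}{\theta\subst{\alpha}{\theta_1}}$. To close the case I need $(\theta\subst{\alpha}{\theta_1},\theta'\subst{\alpha}{\theta_1})\in\RCast{\phi}$, which is exactly Lemma~\ref{lm:rcast-subst} (applicable since $\alpha\notin\support{}{\phi}$); one more cast step then yields the desired judgment. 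This is the step I expect to be the crux, since it is where the extended type system differs from the ordinary one and where the validity hypothesis on $\phi$ is genuinely used. Finally, the original lemma follows by combining the simplified principle with the initial application of the cast rule mentioned above.
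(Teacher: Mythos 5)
Your proof is correct and follows essentially the same route as the paper: an induction on the typing derivation whose crux is Lemma~\ref{lm:rcast-subst}, which lets the cast relation commute with the type substitution. The only (harmless) difference is organisational: you discharge the hypothesis $\theta'_0\in\RCast{\phi}(\theta_0)$ by one top-level application of the cast rule and then prove the plain substitution principle, whereas the paper carries that hypothesis through the induction (using Lemma~\ref{lm:rcast-subst} at the leaves and transitivity of $\RCast{\phi}$ in the cast-rule case) --- both arguments invoke the same key lemma in the same essential way.
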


\begin{proof}
 By induction on the derivation of $\typingTermTD{\phi}{\Delta,\alpha}{\Gamma}{M}{\theta_0}$.
 \begin{itemize}
  \item If $M$ is a variable $x$ such that $(x,\theta_0) \in \Gamma$, then
  $(x,\theta_0\subst{\alpha}{\theta_1}) \in \Gamma\subst{\alpha}{\theta_1}$
  From $\theta'_0 \in \RCast{\phi}(\theta_0)$,, using Lemma~\ref{lm:rcast-subst}, we get that $\theta'_0\subst{\alpha}{\theta_1} \in \RCast{\phi}(\theta_0\subst{\alpha}{\theta_1})$.
  and we conclude using the subtyping rule.
  \item If $M$ is a location $l$ such that $\theta_0 = \refer\theta'_0$ and $(l,\theta'_0) \in \phi$,
  then $\alpha \notin \support{}{\theta'_0}$, so that $\theta_0\subst{\alpha}{\theta_1} = \theta_0$,
  and we conclude using the fact that $\theta'_0 \in \RCast{\phi}(\theta_0)$.
  \item If $M$ is a polymorphic type $p \in \ApolT{\alpha'}$ with $\theta_0 = \alpha'$, then $\alpha' \neq \alpha$
  so that $\theta_0\subst{\alpha}{\theta_1} = \theta_0$
  and we conclude again using the fact that $\theta'_0 \in \RCast{\phi}(\theta_0)$.
  \item If $M$ is a functional name $f \in \AfunT{\theta_0}$, then $\alpha \notin \support{}{\theta_0}$, 
  so that $\theta_0\subst{\alpha}{\theta_1} = \theta_0$
  and we conclude again using the fact that $\theta'_0 \in \RCast{\phi}(\theta_0)$.
  \item If $M$ is an arbitrary term such that $\typingTermTD{\phi}{\Delta,\alpha}{\Gamma}{M}{\theta''_0}$
  with $(\theta''_0,\theta_0) \in \RCast{\phi}$, then
  we get that $\theta'_0 \in \RCast{\phi}(\theta''_0)$ by transitivity, so
  the induction hypothesis gives us that 
  $\typingTermTD{\phi}{\Delta}{\Gamma\subst{\alpha}{\theta_1}}{M\subst{\alpha}{\theta_1}}{\theta'_0\subst{\alpha}{\theta_1}}$.
  \item If $M$ is a pair $\pair{M_1}{M_2}$ such that $\theta_0 = \theta_{0,1} \times \theta_{0,2}$
  and $\typingTermTD{\phi}{\Delta,\alpha}{\Gamma}{M_i}{\theta_{0,i}}$ for $i \in \{1,2\}$,
  then the induction hypothesis gives us that 
  $\typingTermTD{\phi}{\Delta}{\Gamma\subst{\alpha}{\theta_1}}{M_i\subst{\alpha}{\theta_1}}{\theta'_{0,i}\subst{\alpha}{\theta_1}}$
  so that $\typingTermTD{\phi}{\Delta}{\Gamma\subst{\alpha}{\theta_1}}{\pair{M_1}{M_2}\subst{\alpha}{\theta_1}}{\theta'_0\subst{\alpha}{\theta_1}}$.
  \item If $M$ is an application $M' \ N$ such that 
  $\typingTermTD{\phi}{\Delta,\alpha}{\Gamma}{M'}{\theta'_0 \rightarrow \theta_0}$ and
  $\typingTermTD{\phi}{\Delta,\alpha}{\Gamma}{N}{\theta'_0}$,
  then the induction hypothesis gives us that 
  $\typingTermTD{\phi}{\Delta}{\Gamma\subst{\alpha}{\theta_1}}{M'\subst{\alpha}{\theta_1}}{(\theta'_0 \rightarrow \theta_0)\subst{\alpha}{\theta_1}}$
  and $\typingTermTD{\phi}{\Delta}{\Gamma\subst{\alpha}{\theta_1}}{N\subst{\alpha}{\theta_1}}{\theta'_0\subst{\alpha}{\theta_1}}$
  so that $\typingTermTD{\phi}{\Delta}{\Gamma\subst{\alpha}{\theta_1}}{(M' \ N)\subst{\alpha}{\theta_1}}{\theta_0\subst{\alpha}{\theta_1}}$.
  \item If $M$ is a type application $M' \ \theta'_1$ such that $\theta_0 = \theta'_0\subst{\alpha'}{\theta'_1}$ and
  $\typingTermTD{\phi}{\Delta,\alpha}{\Gamma}{M'}{\forall \alpha'.\theta'_0}$,
  then the induction hypothesis gives us that 
  $\typingTermTD{\phi}{\Delta}{\Gamma\subst{\alpha}{\theta_1}}{M'\subst{\alpha}{\theta_1}}{(\forall \alpha'.\theta'_0)\subst{\alpha}{\theta_1}}$
  so that $\typingTermTD{\phi}{\Delta}{\Gamma\subst{\alpha}{\theta_1}}{(M' \ \theta'_1)\subst{\alpha}{\theta_1}}{\theta_0\subst{\alpha}{\theta_1}}$.
  \item If $M$ is a $\lambda$-abstraction $\lambda x.N$ such that $\theta_0 = \theta'_0 \rightarrow \theta''_0$ with
  $\typingTermTD{\phi}{\Delta,\alpha}{\Gamma,x:\theta'_0}{N}{\theta''_0}$
  then the induction hypothesis gives us that 
  $\typingTermTD{\phi}{\Delta}{\Gamma\subst{\alpha}{\theta_1},x:\theta'_0\subst{\alpha}{\theta_1}}{N\subst{\alpha}{\theta_1}}{\theta''_0\subst{\alpha}{\theta_1}}$
  so that $\typingTermTD{\phi}{\Delta}{\Gamma\subst{\alpha}{\theta_1}}{(\lambda x.N)\subst{\alpha}{\theta_1}}{(\theta'_0 \rightarrow \theta''_0)\subst{\alpha}{\theta_1}}$.
  \item If $M$ is a $\Lambda$-abstraction $\Lambda \alpha'.N$ (with $\alpha' \neq \alpha$) such that $\theta_0 = \forall\alpha'\theta'_0$ with
  $\typingTermTD{\phi}{\Delta,\alpha,\alpha'}{\Gamma}{N}{\theta'_0}$
  then the induction hypothesis gives us that 
  $\typingTermTD{\phi}{\Delta,\alpha'}{\Gamma\subst{\alpha}{\theta_1}}{N\subst{\alpha}{\theta_1}}{\theta'_0\subst{\alpha}{\theta_1}}$
  so that $\typingTermTD{\phi}{\Delta}{\Gamma\subst{\alpha}{\theta_1}}{(\Lambda \alpha.N)\subst{\alpha}{\theta_1}}{(\forall\alpha.\theta'_0)\subst{\alpha}{\theta_1}}$.
   \item If $M$ is a pair $\pair{\theta'_1}{N}$ such that $\theta_0 = \exists\alpha'.\theta'_{0}$ (with $\alpha' \neq \alpha$)
  and $\typingTermTD{\phi}{\Delta,\alpha}{\Gamma}{N}{\theta'_0\subst{\alpha'}{\theta'_1}}$,
  then the induction hypothesis gives us that 
  $\typingTermTD{\phi}{\Delta}{\Gamma\subst{\alpha}{\theta_1}}{N\subst{\alpha}{\theta_1}}{(\theta'_0\subst{\alpha'}{\theta'_1})\subst{\alpha}{\theta_1}}$
  so that $\typingTermTD{\phi}{\Delta}{\Gamma\subst{\alpha}{\theta_1}}{N\subst{\alpha}{\theta_1}}{(\theta'_0\subst{\alpha}{\theta_1})\subst{\alpha'}{\theta'_1\subst{\alpha}{\theta_1}}}$.
  We then conclude that $\typingTermTD{\phi}{\Delta}{\Gamma\subst{\alpha}{\theta_1}}{\pair{\theta'_1}{N}\subst{\alpha}{\theta_1}}{(\exists\alpha'.\theta'_0)\subst{\alpha}{\theta_1}}$. 
  \item If $M$ is an unpack $\unpack{M_1}{\pair{\alpha'}{x}}{M_2}$ (with $\alpha' \neq \alpha$) such that $\typingTermTD{\phi}{\Delta,\alpha}{\Gamma}{M_1}{\exists\alpha'.\theta'_1}$
  and $\typingTermTD{\phi}{\Delta,\alpha,\alpha'}{\Gamma,x:\theta'_1}{M_2}{\theta_0}$,
  then the induction hypothesis gives us that
  $\typingTermTD{\phi}{\Delta,\alpha'}{\Gamma\subst{\alpha}{\theta_1},x:\theta'_1\subst{\alpha}{\theta_1}}{M_2\subst{\alpha}{\theta_1}}{\theta_0\subst{\alpha}{\theta_1}}$
  and $\typingTermTD{\phi}{\Delta}{\Gamma\subst{\alpha}{\theta_1}}{M_1\subst{\alpha}{\theta_1}}{(\exists\alpha'.\theta'_1)\subst{\alpha}{\theta_1}}$.
  Thus, $\typingTermTD{\phi}{\Delta}{\Gamma\subst{\alpha}{\theta_1}}{\unpack{M_1\subst{\alpha}{\theta_1}}{\pair{\alpha'}{x}}{M_2\subst{\alpha}{\theta_1}}}{\theta_0\subst{\alpha}{\theta_1}}$.
 \end{itemize}

\end{proof}

\begin{lemma}
\label{lm:decomp-typing}
 Taking $\typingTermTD{\phi}{\Delta}{\cdot}{E[M]}{\theta}$,
 there exists a type $\theta'$ such that 
 $\typingTermTD{\phi}{\Delta}{\cdot}{M}{\theta'}$
 and $\typingCtxTD{\phi}{\Delta}{E}{\theta'}{\theta}$
\end{lemma}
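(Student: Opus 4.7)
The plan is to proceed by induction on the structure of the evaluation context $E$, inverting the typing derivation of $E[M]$ at each step and then applying the inductive hypothesis to the smaller subcontext.

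For the base case $E = \hole$, we take $\theta' = \theta$; the context typing $\typingCtxTD{\phi}{\Delta}{\hole}{\theta}{\theta}$ holds by the variable rule applied to the singleton environment $x{:}\theta$.

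For the inductive step, fix one of the non-trivial context formers, say $E = E'\,N$ (the remaining cases $E = E'\,\theta_0$, $v\,E'$, $E'\oplus N$, $v\oplus E'$, $\mathsf{if}\,E'\,M_1\,M_2$, $\nuref E'$, $!E'$, $\pair{E'}{N}$, $\pair{v}{E'}$, $\proj{i}(E')$, $\pack{\pair{\theta_0}{E'}}$ and $\unpack{E'}{\pair{\alpha}{x}}{N}$ are handled identically). The derivation of $\typingTermTD{\phi}{\Delta}{\cdot}{E'[M]\,N}{\theta}$ may interleave applications of the subsumption rule (the new rule casting through $\RCast{\phi}$) with the syntax-directed rule for application. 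Using reflexivity and transitivity of $\TDclos{\_}$, we can collapse any chain of subsumptions into a single cast, so without loss of generality the derivation ends with the application rule yielding some type $\theta_2$ and then a single cast $(\theta_2,\theta)\in\RCast{\phi}$. Inversion gives $\typingTermTD{\phi}{\Delta}{\cdot}{E'[M]}{\theta_1\to\theta_2}$ and $\typingTermTD{\phi}{\Delta}{\cdot}{N}{\theta_1}$. By the induction hypothesis applied to $E'$, we obtain some $\theta'$ with $\typingTermTD{\phi}{\Delta}{\cdot}{M}{\theta'}$ and $\typingCtxTD{\phi}{\Delta}{E'}{\theta'}{\theta_1\to\theta_2}$, i.e.\ $\typingTermTD{\phi}{\Delta}{x{:}\theta'}{E'[x]}{\theta_1\to\theta_2}$. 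Reapplying the application rule (with $N$) and then the subsumption rule with the cast from $\theta_2$ to $\theta$ yields $\typingTermTD{\phi}{\Delta}{x{:}\theta'}{E'[x]\,N}{\theta}$, which is exactly $\typingCtxTD{\phi}{\Delta}{E}{\theta'}{\theta}$.

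The other cases are analogous: in each one we invert the corresponding syntax-directed rule (possibly after collapsing subsumptions), invoke the induction hypothesis on the unique subcontext $E'$, and reconstruct the typing of $E[x]$ by reapplying the same syntax-directed rule followed, if needed, by a subsumption step. The main (minor) obstacle is precisely the presence of the subsumption rule, which prevents a direct syntax-directed inversion. We deal with it uniformly via the above ``cast-collapsing'' observation, which itself is immediate from the reflexivity and transitivity rules in the definition of $\TDclos{\_}$.
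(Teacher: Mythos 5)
Your proof is correct; the paper actually states Lemma~\ref{lm:decomp-typing} without proof, treating it as routine, and your structural induction on $E$ is exactly the expected argument. The one point that makes the lemma not entirely immediate is the non-syntax-directed subsumption rule, and you handle it properly: since $\RCast{\phi}$ is by definition a $\TDclos{\cdot}$-closure and hence reflexive and transitive, any chain of casts at the end of a derivation collapses to a single one, after which inversion of the unique syntax-directed rule for the outermost context former goes through. (The only detail left implicit is the routine weakening by $x{:}\theta'$ needed for the side premises that do not contain the hole, e.g.\ the body $N$ in the $\mathsf{unpack}$ case.)
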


\begin{lemma}[Subject Reduction]
 Taking $\typingTermTD{\phi}{\Delta}{\cdot}{M}{\theta}$ and $\typingHeapTD{S}{\phi}$,
 if $(M,S) \red (M',S')$ then there exists $\phi'$ a typing function with $\dom{\phi} \cap \dom{\phi'} = \varnothing$ such that
 $\typingTermTD{\phi \cup \phi'}{\Delta}{\cdot}{M'}{\theta}$ and $\typingHeapTD{S'}{\phi \cup \phi'}$.
\end{lemma}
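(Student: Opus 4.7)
The plan is to proceed by induction on the reduction step $(M,S)\red(M',S')$, following the classical progress-and-preservation template but accommodating the refined type system with multi-valued $\phi$ and the cast-based subtyping rule. The first step is to apply Lemma~\ref{lm:decomp-typing} to decompose $M=E[M_0]$, obtaining some $\theta_0$ with $\typingTermTD{\phi}{\Delta}{\cdot}{M_0}{\theta_0}$ and $\typingCtxTD{\phi}{\Delta}{E}{\theta_0}{\theta}$, so that we can focus on the redex $M_0$. In each case we will produce a $\phi'$ (most often empty), show $\typingTermTD{\phi\cup\phi'}{\Delta}{\cdot}{M_0'}{\theta_0}$ for the contractum, and observe that since the context $E$ is typed under $\phi$ it is still typed under $\phi\cup\phi'$, so we can re-plug and get $\typingTermTD{\phi\cup\phi'}{\Delta}{\cdot}{E[M_0']}{\theta}$; preservation of $\typingHeapTD{S'}{\phi\cup\phi'}$ is then checked locally.

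The pure cases are handled by the standard lemmas: $\beta$-reduction $(\lambda x.N)v\red N\{v/x\}$ uses the value substitution Lemma~\ref{lm:subst-typing}; projection and conditional require only trivial derivations together with the subtyping rule; and type application $(\Lambda\alpha.N)\theta_1\red N\{\theta_1/\alpha\}$ and the unpack rule for $\exists$ are handled by Lemma~\ref{lm:subst-TVar-typing} (the side conditions on $\alpha$ not appearing in $\phi$ or in the supports of the relevant names hold by $\alpha$-conversion and freshness). For reference reads $!l\red S(l)$: from typing, $l:\refer\theta_0$ comes from some $\theta'\in\phi(l)$ with $(\refer\theta',\refer\theta_0)\in\RCast{\phi}$; by $\typingHeapTD{S}{\phi}$ we have $\typingTermTD{\phi}{\support{\rm T}{\phi}}{\cdot}{S(l)}{\theta''}$ for some $\theta''\in\phi(l)$, and then the subtyping rule together with the closure rules of $\RCast{\phi}$ give $\typingTermTD{\phi}{\Delta}{\cdot}{S(l)}{\theta_0}$. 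Reference equality trivially produces an $\Int$.

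The two non-trivial cases are allocation and assignment. For $\nuref{v}\red l$ with $l$ fresh, we take $\phi'=[l\mapsto\{\theta_0'\}]$ where $\theta_0=\refer\theta_0'$; then $l:\refer\theta_0'$ is derivable under $\phi\cup\phi'$, $\phi\cup\phi'$ remains valid (freshness keeps the constraint on $\Phi_\lambda$ vacuous for $l$), and $\typingHeapTD{S\cdot[l\mapsto v]}{\phi\cup\phi'}$ follows from weakening the derivation of $\typingTermTD{\phi}{\Delta}{\cdot}{v}{\theta_0'}$. For $l:=v\red ()$ with $l:\refer\theta_0'$ and $v:\theta_0'$: by the same argument as for dereference, $(\refer\theta',\refer\theta_0')\in\RCast{\phi}$ for some $\theta'\in\phi(l)$, hence by closure $(\theta_0',\theta')\in\RCast{\phi}$, so applying subtyping yields $\typingTermTD{\phi}{\Delta}{\cdot}{v}{\theta'}$, which preserves $\typingHeapTD{S[l\mapsto v]}{\phi}$ (taking $\phi'=\varnothing$).

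The main obstacle I anticipate is precisely this last manoeuvre in the assignment case: it relies on the observation that $\RCast{\phi}$ is symmetric on $\rm ref$-types (Lemma~\ref{lm:sym-RCast}), so that a cast from $\refer\theta'$ to $\refer\theta_0'$ yields a cast back to the stored type, enabling the updated value to be re-typed at some $\theta'\in\phi(l)$ as required by $\typingHeapTD{\cdot}{\phi}$. A secondary subtlety is that in the $\exists$-unpack and type-$\beta$ cases the hypotheses of Lemma~\ref{lm:subst-TVar-typing} on non-occurrence of $\alpha$ in $\phi$ and in the types of functional and polymorphic names have to be discharged, which follows from $\alpha$-conversion plus the side conditions already carried by the typing rules; if the types in $\phi$ were allowed to contain quantifiers binding $\alpha$ we would appeal to Lemma~\ref{lm:rcast-subst2} to stay inside $\RCastt{\phi}$.
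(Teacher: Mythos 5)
Your proposal is correct and follows essentially the same route as the paper's proof: decompose via Lemma~\ref{lm:decomp-typing}, case-analyse the redex, use Lemmas~\ref{lm:subst-typing} and~\ref{lm:subst-TVar-typing} for the substitution cases, take $\phi'=[l\mapsto\{\theta_0'\}]$ for allocation and $\phi'=\varnothing$ elsewhere, and exploit the closure/symmetry of $\RCast{\phi}$ on $\rm ref$-types to re-type the stored value in the assignment and dereference cases. The paper states the assignment step more tersely, but the underlying argument is the one you give.
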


\begin{proof}
 We write $M$ as $E[M_0]$ such that $(M_0,S) \red (M'_0,S')$ and $M' = E[M'_0]$.
 From Lemma~\ref{lm:decomp-typing}, there exists a type $\theta_0$ such that
 $\typingTermTD{\phi}{\Delta}{\cdot}{M_0}{\theta_0}$
 and $\typingCtxTD{\phi}{\Delta}{E}{\theta_0}{\theta}$.
 We then reason by induction on the shape of $M_0$ to prove that 
 $\typingTermTD{\phi}{\Delta}{\cdot}{M'_0}{\theta_0}$
 and $\typingHeapTD{S'}{\phi \cup \phi'}$, writing $\kappa$ for $\RCast{\phi}$:
 \begin{itemize}
  \item If $M_0$ is a redex $(\lambda x.N) v$, then $M'_0 = N\subst{x}{v}$ and $S'=S$.
    We then have $\typingTermTD{\phi}{\Delta}{\cdot}{\lambda x.N}{\theta_1 \rightarrow \theta_0}$
    and $\typingTermTD{\phi}{\Delta}{\cdot}{v}{\theta_1}$.
    There exists $\theta'_0,\theta'_1$ such that $(\theta_1 \rightarrow \theta_0,\theta'_1 \rightarrow \theta'_0) \in \kappa$
    and $\typingTermTD{\phi}{\Delta}{x:\theta'_1}{N}{\theta'_0}$.
    Then, from the closure properties $\kappa$, we get that $(\theta_1,\theta'_1) \in \kappa$ and
    $(\theta_0,\theta'_0) \in \kappa$, so using the subtyping rules,
    we get that
    $\typingTermTD{\phi}{\Delta}{x:\theta'_1}{N}{\theta_0}$
    and  $\typingTermTD{\phi}{\Delta}{\cdot}{v}{\theta'_1}$
    and we conclude using Lemma~\ref{lm:subst-typing}.
  \item If $M_0$ is a type redex $(\Lambda \alpha.N) \theta_1$, then $M'_0 = N\subst{\alpha}{\theta_1}$ and $S'=S$.
    We then have $\typingTermTD{\phi}{\Delta}{\cdot}{\Lambda \alpha.N}{\forall\alpha.\theta'_0}$
    with $\theta_0 = \theta'_0\subst{\alpha}{\theta_1}$.
    There exists $\theta''_0$ such that $(\forall\alpha.\theta'_0,\forall\alpha.\theta''_0) \in \kappa$
    and $\typingTermTD{\phi}{\Delta,\alpha}{\cdot}{N}{\theta''_0}$.
    Using Lemma~\ref{lm:subst-TVar-typing}, we get that 
     $\typingTermTD{\phi}{\Delta}{\cdot}{N\subst{\alpha}{\theta_1}}{\theta''_0\subst{\alpha}{\theta_1}}$.
    Then, from the closure properties $\kappa$, we get that
    $(\theta''_0\subst{\alpha}{\theta_1},\theta'_0\subst{\alpha}{\theta_1}) \in \kappa$, so using the subtyping rules,
    we get that
    $\typingTermTD{\phi}{\Delta}{\cdot}{N\subst{\alpha}{\theta_1}}{\theta'_0\subst{\alpha}{\theta_1}}$.
  \item if $M_0$ is the term $\unpack{(\pack{\pair{\theta_1}{v}})}{\pair{\alpha}{x}}{N}$, then $M'_0 = N\subst{\alpha}{\theta_1}\subst{x}{v}$ and
  $S' = S$.
  We have $\typingTermTD{\phi}{\Delta,\alpha}{x:\theta'_0}{N}{\theta_0}$, with $\alpha \notin \support{}{\theta_0}$, and
  $\typingTermTD{\phi}{\Delta}{\cdot}{\pack{\pair{\theta_1}{v}}}{\exists\alpha.\theta'_0}$.
  There exists $\theta''_0$ such that $(\exists\alpha.\theta''_0,\exists\alpha.\theta'_0) \in \kappa$
  and $\typingTermTD{\phi}{\Delta}{\cdot}{v}{\theta''_0\subst{\alpha}{\theta_1}}$.
  Thus, using the closure properties of $\kappa$, we get that $(\theta''_0\subst{\alpha}{\theta_1},\theta'_0\subst{\alpha}{\theta_1}) \in \kappa$
  and using the subtyping rules,
  $\typingTermTD{\phi}{\Delta}{\cdot}{v}{\theta'_0\subst{\alpha}{\theta_1}}$.
  
  Using Lemma~\ref{lm:subst-TVar-typing}, we get that $\typingTermTD{\phi}{\Delta}{x:\theta'_0\subst{\alpha}{\theta_1}}{N\subst{\alpha}{\theta_1}}{\theta_0}$
  and, from Lemma~\ref{lm:subst-typing}, we finally get that $\typingTermTD{\phi}{\Delta}{\cdot}{N\subst{\alpha}{\theta_1}\subst{x}{v}}{\theta_0}$
  \item If $M_0$ is the term $!l$ then $M'_0$ is the value $S(l)$ and $S'=S$.
    There exist $\theta'_0$ such that $(\refer\theta'_0,\refer\theta_0) \in \kappa$ and
    $(l,\theta'_0) \in \phi$.
    From $\typingHeapTD{S}{\phi}$ we get 
    the existence of $\theta''_0$ such that $(\refer\theta''_0,\refer\theta'_0) \in \kappa$
    that $\typingTermTD{\phi}{\Delta}{\cdot}{S(l)}{\theta''_0}$.
    
    By transitivity and the definition of $\phi$, $(\theta''_0,\theta_0) \in \kappa$, thus 
    $\typingTermTD{\phi}{\Delta}{\cdot}{S(l)}{\theta_0}$.
    \item If $M_0$ is the term $l:=v$ then $M'_0$ is the value $\unit$ and $S'=S[l \mapsto v]$.
    Then there exists $\theta_0$ such that 
    $\typingTermTD{\phi}{\Delta}{v}{\theta_0}$.
    Moreover, there exist $\theta'_0$ such that $(\refer\theta'_0,\refer\theta_0) \in \kappa$ and
    $(l,\theta'_0) \in \phi$. Thus, $(l,\theta_0) \in \phi$, so that
     $\typingHeapTD{S'}{\phi}$
  \item If $M_0$ is equal to the term $\refer \ v$, then $M'_0$ is equal to the location $l$ 
    and $S'=S \cdot [l \mapsto v]$. Thus, there exists a type $\theta_1$ such that 
    $\typingTermTD{\phi}{\Delta}{v}{\theta_1}$
    and $(\refer\theta_1,\refer \theta_0) \in \kappa$, so that
    $(\theta_1,\theta_0) \in \kappa$.
    Defining $\phi'$ as $\{(l,\theta_1)\}$, we get that
    $\typingHeapTD{S'}{\phi \cup \phi'}$
    and $\typingTermTD{\phi \cup \phi'}{\Delta}{\cdot}{l}{\refer\theta_0}$.
\end{itemize}
\end{proof}

\begin{lemma}
\label{lm:irred-val}
 Taking $\typingTermTD{\phi}{\Delta}{\cdot}{M}{\theta}$ and $\typingHeapTD{S}{\phi}$,
 such that for all $p \in \support{}{M,S} \cap \ApolT{\alpha}$, $\min{(\RCast{\phi}(\alpha))} = \{\alpha\}$.
 Suppose that $(M,S)$ is irreducible, then either $M$ is a value or a callback.
\end{lemma}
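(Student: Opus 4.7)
The plan is to prove this as a standard progress theorem, adapted to the refined type system with casts. I would proceed by induction on the structure of $M$, which, because of the evaluation-context grammar of Figure~\ref{fig-def}, amounts to finding a decomposition $M = E[M_0]$ with $M_0$ in elimination-form position, and then doing case analysis on $M_0$.

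First I would establish a \emph{canonical forms} lemma: if $v$ is a value with $\typingTermTD{\phi}{\Delta}{\cdot}{v}{\theta}$, then the syntactic shape of $v$ is determined by the skeleton of any type in $\min(\RCast{\phi}(\theta))$. The point is that the extra subtyping rule $(\theta,\theta')\in\RCast{\phi}$ breaks syntax-directedness of inversion, but Lemma~\ref{lm:min-rcast} tells us all minimal types in $\RCast{\phi}(\theta)$ are skeleton-equivalent, so I can still read off the top constructor of $v$ from $\theta$. Concretely: values of a product type are pairs, of an existential type are $\mathsf{pack}\pair{\theta'}{v'}$, of a reference type are locations in $\dom\phi$ (hence, by $\typingHeapTD{S}{\phi}$, in $\dom S$), of $\Int$ are integer literals, and of an arrow or universal type are either a $\lambda$/$\Lambda$-abstraction, a functional name $f$, or a polymorphic name $p\in\ApolT\alpha$ whose type $\alpha$ casts to the given function type.

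Next, assuming $M$ is not a value, I would decompose $M=E[M_0]$ and do a case split on $M_0$. For each construct whose head subterm is a \emph{value}, i.e.\ $(\lambda x.N)v$, $(\Lambda\alpha.N)\theta$, $\proj i\pair{v_1}{v_2}$, $\ifte{n}{N_1}{N_2}$, $!l$, $l:=v$, $l=l'$, $\nuref v$, $\unpack{(\pack\pair\theta v)}{\pair\alpha x}{N}$, the canonical forms lemma plus $\typingHeapTD{S}{\phi}$ guarantee a reduction step applies, contradicting irreducibility. This handles all built-in destructors. The remaining possibilities are $M_0 \equiv w\,u$, $M_0\equiv w\,\theta$, or $M_0\equiv \unpack{w}{\pair\alpha x}{N}$ where $w$ is a non-$\lambda$/non-$\Lambda$/non-$\mathsf{pack}$ value, i.e.\ either a functional name $f$ or a polymorphic name $p$. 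If the head is $f\in\Afun$, then $M=E[f\,u]$ or $E[f\,\theta]$ which, reading type applications as a form of callback via the extended syntax, matches the definition of a callback.

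The crux of the argument, and the only place the hypothesis $\min(\RCast{\phi}(\alpha))=\{\alpha\}$ for every polymorphic $p\in\support{}{M,S}\cap\ApolT\alpha$ is used, is to rule out the polymorphic-head cases. If $M_0$ has head $p\in\ApolT\alpha$, then $p$ is a value of type $\alpha$; by the canonical forms lemma, for $p\,u$, $p\,\theta$, or $\unpack{p}{\pair\beta x}{N}$ to be well-typed in the refined system we would need $\alpha$ to cast to an arrow, universal, or existential type respectively, i.e.\ $\min(\RCast{\phi}(\alpha))$ would have to contain such a skeleton; but by hypothesis $\min(\RCast{\phi}(\alpha))=\{\alpha\}$, so no such cast exists. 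This contradiction eliminates the polymorphic cases and closes the enumeration. The main technical obstacle I expect is the canonical forms lemma itself, since the cast-based subtyping rule is not syntax-directed and one must carefully push it through inversion using Lemmas~\ref{lm:min-rcast} and~\ref{lm:sym-RCast}; once that is in place, the progress case analysis is routine.
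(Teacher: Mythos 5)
Your proposal is correct and follows essentially the same route as the paper: a progress-style case analysis whose only non-trivial cases are destructors applied to polymorphic names, which are ruled out because they would force a concrete type (arrow, $\Int$, $\refer\theta$, \ldots) into $\RCast{\phi}(\alpha)$, contradicting $\min(\RCast{\phi}(\alpha))=\{\alpha\}$ since any such type is $\TRel{\Phi}\alpha$. The paper's own proof only spells out three representative stuck-term cases ($E[v_1+v_2]$, $E[p\,v]$, $E[!p]$) and leaves the canonical-forms reasoning via skeleton-equivalence implicit, so your version is simply a more systematic write-up of the same argument.
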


\begin{proof}
\begin{itemize}
 \item  Suppose that $M$ is equal to $E[v_1 + v_2]$ with $\typingTermTD{\phi}{\Delta}{\cdot}{v_1,v_2}{\Int}$.
 If one the $v_i$ is equal to a polymorphic name $p \in \ApolT{\alpha}$, then 
 $\Int \in \RCast{\phi}(\alpha)$, which is impossible since $\min{(\RCast{\phi}(\alpha))} = \alpha$ and $\Int \TRel{\Phi} \alpha$. 
 Thus, the $v_i$ are equal to some integers, which is absurd since $(M,S)$ is irreducible.
 \item Suppose that $M$ is equal to $E[p \ v]$ with $p \in \ApolT{\alpha}$, then 
 $\theta_1 \rightarrow \theta_2 \in \RCast{\phi}(\alpha)$, which is impossible since $\min{(\RCast{\phi}(\alpha))} = \alpha$ and $\theta_1 \rightarrow \theta_2 \TRel{\Phi} \alpha$.
 \item Suppose that $M$ is equal to $E[!p]$ with $p \in \ApolT{\alpha}$, then 
 $\refer\theta \in \RCast{\phi}(\alpha)$, , which is impossible since $\min{(\RCast{\phi}(\alpha))} = \alpha$ and $\refer\theta \TRel{\Phi} \alpha$.
\end{itemize}
\end{proof}

\section{Preservation of Valid Configurations}

We now prove the preservation of the validity of a configuration by the interaction reduction (Lemma~\ref{thm:validconf}).

\begin{lemma}
\label{lm:typing-subst-term}
Suppose that $\typingTermTD{\phi}{\Delta}{\cdot}{M}{\theta}$, then taking 
\begin{itemize}
 \item $\phi' \supseteq \phi$ valid;
 \item $\xi$ a set of polymorphic names;
 \item $(\rho,\phi_\rho) \in \sem{\xi}_{\kappa,\kappa'}$ where $\kappa = \RCast{\phi}$ and $\kappa' = \RCast{\phi'}$,
 with $\phi_\rho \subseteq \phi'$
\end{itemize}
we have $\typingTermTD{\phi'}{\Delta}{\cdot}{M\{\rho\}}{\theta}$.
\end{lemma}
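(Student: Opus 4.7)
The plan is to prove the statement by structural induction on the derivation of $\typingTermTD{\phi}{\Delta}{\cdot}{M}{\theta}$. All congruence cases (application, $\lambda$- and $\Lambda$-abstraction, pairs and projections, $\mathsf{pack}$/$\mathsf{unpack}$, dereferencing, assignment, reference creation, equality, conditional, integer operations) go through essentially by applying the induction hypothesis to subterms, since $\rho$ commutes with every term constructor. The base cases for integers, $\unit$, variables, and locations are trivial: these terms are unaffected by $\rho$, and for a location $l$ with $(l,\theta)\in\phi$, one has $(l,\theta)\in\phi'$ by $\phi\subseteq\phi'$. Functional names $f$ and polymorphic names $p\notin\dom{\rho}$ are similarly untouched, and retain their types in $\phi'$. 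Finally, for the subtyping rule $\typingTermTD{\phi}{\Delta}{\Gamma}{M}{\theta_0}$ with $(\theta_0,\theta)\in\RCast{\phi}$, I note that $\phi\subseteq\phi'$ implies $\RCast{\phi}\subseteq\RCast{\phi'}$ (immediate from Definition~\ref{def:Cast}), so the inclusion $(\theta_0,\theta)\in\RCast{\phi'}$ lets me re-apply the rule in the extended system.

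The central case is when $M$ is a polymorphic name $p\in\dom{\rho}$ of type $\alpha$, so $p\in\ApolT{\alpha}$ and $\theta=\alpha$. By the definition of $\SemPEnv{\xi}{\kappa,\kappa'}$, the set $X_p=\min{(\kappa'(\alpha))}\setminus\min{(\kappa(\alpha))}$ is nonempty and for each $\theta'\in X_p$ there is a $\phi_{\theta'}\subseteq\phi_\rho\subseteq\phi'$ with $(\rho(p),\phi_{\theta'})\in\sem{\theta'}$. I pick any such $\theta'$. The argument then relies on an auxiliary lemma, which I would state and prove separately by induction on $\theta'$:
\begin{quote}
\emph{If $(v,\phi_v)\in\sem{\theta'}$ and $\phi_v\subseteq\phi''$, then $\typingTermTD{\phi''}{\Delta}{\cdot}{v}{\theta'}$.}
\end{quote}
This auxiliary lemma is direct: the semantic clauses for $\sem{\theta'}$ mirror the typing rules for abstract values (integers, locations carrying their type in $\phi_v$, functional and polymorphic names with their intrinsic types, pairs, and existentially-packed pairs), so the induction falls out. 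Applied with $\phi''=\phi'$, it yields $\typingTermTD{\phi'}{\Delta}{\cdot}{\rho(p)}{\theta'}$. Since $\theta'\in\min{(\kappa'(\alpha))}\subseteq\RCast{\phi'}(\alpha)$, Lemma~\ref{lm:sym-RCast} delivers $(\theta',\alpha)\in\RCast{\phi'}$, and a single application of the subtyping rule concludes $\typingTermTD{\phi'}{\Delta}{\cdot}{\rho(p)}{\alpha}$, as required.

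The main obstacle is precisely the auxiliary semantic-to-syntactic typing lemma, specifically its handling of locations and existentially-quantified values. For a location $l$ occurring in $v$, its type must be present in $\phi'$ in order for the typing rule for locations (and any subsequent subtyping step) to fire; this is exactly why the hypothesis $\phi_\rho\subseteq\phi'$ is needed, and why we require the inclusion $\phi_{\theta'}\subseteq\phi_v\subseteq\phi_\rho$ built into the definition of $\SemPEnv{\cdot}{\cdot,\cdot}$. A secondary subtlety, which is the reason for invoking Lemma~\ref{lm:sym-RCast}, is that cast relations are genuinely symmetric on value types even though our subtyping rule is phrased directionally; without symmetry one could type $\rho(p)$ only at $\theta'$ and not cast it back to $\alpha$.
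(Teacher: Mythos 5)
Your proof is correct and follows essentially the same route as the paper's: structural induction on the typing derivation, with the only non-trivial case being a disclosed polymorphic name $p\in\xi$, handled by typing $\rho(p)$ at some $\theta'\in\min(\kappa'(\alpha))$ (using $\phi_\rho\subseteq\phi'$ for any locations involved) and then casting back to $\alpha$ via the subtyping rule. You are in fact somewhat more explicit than the paper, which leaves the semantic-to-syntactic typing step for $\rho(p)$ and the appeal to symmetry of $\RCast{\phi'}$ implicit.
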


\begin{proof}
  By induction on the structure of the typing judgment $\typingTermTD{\phi}{\Delta}{}{M}{\theta}$.
 \begin{itemize}
  \item If $M$ is a polymorphic name $p \in \ApolT{\alpha} \cap {\xi}$ such that $\theta=\alpha$,
   then $M\{\rho\} = \rho(p)$ which is of type $\theta' \in \min{(\kappa(\alpha))}$.
   We conclude using the subtyping rule, using the crucial fact that $\phi_\rho \subseteq \phi'$ if $\rho(p)$ is a location.
  \item If $M$ is a polymorphic name $p \in \ApolT{\alpha} \backslash \dom{\xi}$ such that $\theta=\alpha$,
   then $M\{\rho\} = p$ which is of type $\alpha$.
  \item If $M$ is a functional name $f \in \AfunT{\theta}$, 
  then $M\{\rho\} = f$ which is of type $\theta$.
  \item If $M$ is a location $l \in \dom{\phi}$ such that $\theta=\refer\theta'$ with $(l,\theta') \in \phi$,
   then from the fact that  $\phi' \supseteq \phi$, $(l,\theta') \in \phi'$.
   \item If $\typingTermTD{\phi}{\Delta}{}{M}{\theta'}$
   with $\theta' \in \RCast{\phi}(\theta)$, then the induction hypothesis gives us that
   $\typingTermTD{\phi'}{\Delta}{}{M\{\rho\}}{\theta'}$, and
   we conclude using the fact that $\theta' \in \RCast{\phi'}(\theta)$, since $\phi' \supseteq \phi$.
   \item The other cases are straightforward by induction.
 \end{itemize} 
\end{proof}

\begin{proof}[Proof of Lemma~\ref{thm:validconf}]
The hard part is to prove that the evaluation stack and the store of $C'$ are well typed.
If $C'$ is a Player configuration, then there exists a configuration
$C''$ such that $C \xredint{} C'' \xredint{\faction{m}{S}{\rho}} C'$.
From Lemma~\ref{thm:red-typedterm}, we get that $C''$ is a valid configuration.
Then, it is straightforward that the evaluation stack and the store of $C'$ are well-typed.
If $C'$ is an Opponent configuration, then $C \xredint{\faction{m}{S}{\rho}} C'$,
and we conclude using Lemma~\ref{lm:typing-subst-term}.
\end{proof}


\section{Well-definedness of the interaction reduction}

In this section, we show that the interaction reduction can always reduces
valid configurations
until configurations with empty stacks are reached,
and that valid Player configurations behave deterministically.
These properties will be used throughout the proof of soundness.

\begin{lemma}\label{lem:unique}
Given $\gamma$ an environment, $\phi$ a typing function and $\kappa$ a cast relation, suppose
that for all $p \in \dom{\gamma} \cap \ApolT{\alpha}$, $\typingTermTD{\phi}{\Delta}{\cdot}{\gamma(p)}{\gamma(\alpha)}$
  and $\gamma(\alpha)$ is lower than all the types in $\RCast{\phi}$.
Then there exist a unique (up to permutation of fresh names) triple $(\rho,\gamma_\rho,\phi_\rho)$ and $\kappa'$ such that
 \begin{itemize}
  \item$ \phi\cup\phi_\rho$ is valid and $\gamma \cdot \gamma_\rho$ is well-defined
  \item $(\rho,\gamma_\rho,\phi_\rho) \in \APEnv{\gamma}{\kappa,\kappa'}$,
  \item $\kappa' = \RCast{\phi\cup\phi_\rho}$ and $\kappa \subseteq \kappa'$.
 \end{itemize}
\end{lemma}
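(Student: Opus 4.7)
The plan is to construct $(\rho, \gamma_\rho, \phi_\rho)$ by iteratively accumulating disclosures until a fixed point is reached, and then deduce uniqueness from a structural analysis of $\sf AVal$.

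First I would bootstrap an iteration. Set $\rho^0 = \varepsilon$, $\gamma_\rho^0 = \varepsilon$, $\phi_\rho^0 = \varnothing$, and $\kappa^0 = \kappa$. At stage $i{+}1$, assuming $(\rho^i, \gamma_\rho^i, \phi_\rho^i)$ has been defined with $\kappa^i = \RCast{\phi \cup \phi_\rho^i}$ a valid cast relation, I would inspect each $p \in \dom{\gamma} \cap \ApolT{\alpha}$ for which $X_p^i := \min(\kappa^i(\alpha)) \setminus \min(\kappa(\alpha))$ strictly enlarges on the previous stage. For every such $p$, I would simultaneously select, for every $\theta \in X_p^i$, a triple $(v, \gamma_v, \phi_v) \in \AVal{\gamma(p)}{\theta}$ whose value and environment components agree across all $\theta \in X_p^i$. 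The possibility of this simultaneous choice relies on Lemma~\ref{lm:min-rcast}: the minimal types in $\kappa^i(\alpha)$ all share the same skeleton, so a single abstract value $v$ can be read off the structure of $\gamma(p)$ coherently with each choice of minimal type. The resulting $\phi_\rho^{i+1}$ then enlarges $\phi_\rho^i$ and yields a new relation $\kappa^{i+1}$, and the iteration stabilises after at most $|\dom{\gamma}|$ steps, because at every non-trivial stage at least one new polymorphic name is mapped in $\rho$.

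Next I would check the membership and validity conditions. For each $p$ with final $X_p = \min(\kappa'(\alpha)) \setminus \min(\kappa(\alpha)) \neq \emptyset$, the abstract value $v$ constructed above belongs to $\AVal{\gamma(p)}{\theta}$ for every $\theta \in X_p$, giving $(\rho, \gamma_\rho, \phi_\rho) \in \APEnv{\gamma}{\kappa, \kappa'}$. Validity of $\phi \cup \phi_\rho$ follows from the hypothesis that $\gamma(\alpha)$ is $\TRel{\Phi}$-lower than any $\theta \in \RCast{\phi}(\alpha)$: every new type assignment placed on a location $l$ by $\phi_\rho$ arises from an instantiation of $\gamma(p)$, hence is bounded below by $\gamma(\alpha)$'s substructure, preventing incompatible types from accumulating on the same location. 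The cast relation $\kappa' = \RCast{\phi \cup \phi_\rho}$ is then valid by Lemma~\ref{lm:validcastrel}, and $\kappa \subseteq \kappa'$ is immediate from $\phi \subseteq \phi \cup \phi_\rho$.

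For uniqueness up to permutation of fresh names, I would proceed by induction on the structure of $\gamma(p)$, equivalently on the skeleton common to the types in $X_p$ (Lemma~\ref{lm:min-rcast}). The inductive analysis shows that each clause in the definition of $\AVal{u}{\theta}$ either pins down the abstract value (for units, integers, locations, already-introduced polymorphic names) or admits exactly the freedom of choosing a fresh name from $\Afun$, $\Apol$ or $\Atvar$. Two valid choices therefore differ only by a bijective renaming of these fresh names, which extends canonically to a permutation on $\rho$, $\gamma_\rho$ and $\phi_\rho$ fixing all previously played names. The main obstacle is controlling the mutual dependency between $\rho$ and $\kappa'$ inside condition P*: revealing a polymorphic name can expose new location aliasings that extend $\kappa'$ and hence demand further disclosures. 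The argument crucially uses the minimality hypothesis on $\gamma(\alpha)$ to rule out cascading blow-ups, and making precise that this minimality hypothesis is preserved across the iteration (so that the inductive step can be applied uniformly at each stage) is the most delicate bookkeeping in the proof.
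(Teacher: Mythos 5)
Your existence argument per name (skeleton-equivalence of $\min(\kappa'(\alpha))$ via Lemma~\ref{lm:min-rcast}, then a case analysis on the common skeleton using the hypothesis that $\gamma(\alpha)$ is $\TRel{\Phi}$-below everything in $\RCast{\phi}$ to pin down the shape of $\gamma(p)$) matches the paper, as does the shape of the uniqueness argument. But the heart of the lemma is the circularity between $\rho$ and $\kappa'$, and there your proof has a genuine gap. You set up an increasing iteration $\kappa^0=\kappa,\kappa^1,\dots$ and claim it stabilises within $|\dom{\gamma}|$ steps ``because at every non-trivial stage at least one new polymorphic name is mapped in $\rho$.'' That is not true: a stage is non-trivial as soon as some $X_p^i$ strictly enlarges, and the same $p$ can have its $X_p$ enlarged repeatedly without any new name entering $\dom{\rho}$ --- the potential cascade lives in the typing function and the cast relation, not in the set of disclosed names, so counting names does not bound the iteration. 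You then flag exactly this cascading problem in your last sentences but leave it as ``delicate bookkeeping,'' attributing its resolution to the minimality hypothesis on $\gamma(\alpha)$; that hypothesis is what makes each single disclosure step well-defined (it forces $\gamma(p)$ to be a location when the minimal types are $\refer$-types, a function when they are arrow types, etc.), but it is not what stops the cascade.

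The paper's resolution is different and sharper: it seeds the iteration with $\kappa'^0=\RCast{\phi}$, computes one round of disclosure $(\rho^0,\gamma^0_\rho,\phi^0_\rho)\in\APEnv{\gamma}{\kappa,\kappa'^0}$, and then proves that $\RCast{\phi\cup\phi^0_\rho}=\RCast{\phi}$, i.e.\ the fixed point is reached after a single step. The reason is structural: every new aliasing $\theta,\theta'\in\phi^0_\rho(l)$ created by disclosing a location-valued $\gamma(p)$ comes from types $\refer\theta,\refer\theta'$ that were already aliased in $\phi$, so the pair $(\theta,\theta')$ was already in $\RCast{\phi}$ by the closure rule that decomposes $(\refer\theta,\refer\theta')$ into $(\theta,\theta')$. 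This is the missing idea in your proposal; without it (or some replacement), your iteration is not shown to terminate and $\kappa'$ is not shown to equal $\RCast{\phi\cup\phi_\rho}$.
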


\begin{proof}
 Let us first define $(\rho^0,\gamma^0_\rho,\phi^0_\rho)$ and $\kappa'^0$ as:
 \begin{itemize}
  \item $\kappa'^0 = \RCast{\phi}$
  \item $(\rho^0,\gamma^0_\rho,\phi^0_\rho) \in \APEnv{\gamma}{\kappa,\kappa'^0}$
 \end{itemize}
We firt prove that $\APEnv{\gamma}{\kappa,\kappa'^0}$ is not empty.
Taking $p \in \dom{\gamma} \cap \ApolT{\alpha}$, such that $X_p = \min{\kappa'(\alpha)}\backslash\min{\kappa(\alpha)}$
is not empty,
from Lemma~\ref{lm:min-rcast},
the elements of $X_p$ are skeleton-equivalent. 
Taking $\theta_0 \in X_p$, we build inductively over $\theta_0$ a triple $(v,\gamma_v,\phi_v)$:
\begin{itemize}
 \item If $\theta_0$ is equal to $\Int,\Unit$ or a type variable, then $X_p = \{\theta_0\}$,
 in which case we simply take $(v,\gamma_v,\phi_v) \in \AVal{\gamma(p)}{\theta_0}$.
 \item If $\theta_0$ is equal to some $\refer\theta'_0$, then from the definition of skeleton equivalence,
 we get that $X_p = \{\refer\theta_1,\ldots,\refer\theta_n\}$,
 so that $\gamma(p)$ is a location from the well-typedness of $\gamma(p)$ and the fact that $\gamma(\alpha)$ is lower than all the types in $\RCast{\phi}$. 
 We then take $v= \gamma(p)$, $\gamma_v = \emptyseq$ and $\phi_v = \{(l,\theta_i) \sep i \in \{1,\ldots,n\}\theta \in \}$,
which is indeed valid because of the validity of $\phi$.
\item If $\theta_0$ is equal to some $\theta^1_0 \rightarrow \theta^2_0$, then from the definition of skeleton equivalence,
 we get that $X_p = \{\theta^1_0 \rightarrow \theta^2_0,\ldots,\theta^1_n \rightarrow \theta^2_n\}$,
 so that $\gamma(p)$ is a functional value from the well-typedness of $\gamma(p)$ and the fact that $\gamma(\alpha)$ is lower than all the types in $\RCast{\phi}$. 
 We then take $v= f$, $\gamma_v = [f \mapsto v]$ and $\phi_v = \emptyseq$.
\end{itemize}

For all $i \in \N$, we then define $\kappa'^i = \RCast{\phi\cup \bigcup_{j <i} \phi^j_\rho}$,
and we pick  $(\rho^i,\gamma^i_\rho,\phi^i_\rho) \in \APEnv{\gamma}{\kappa,\kappa'^i}$.
Using the same proof as before, we can show that $\APEnv{\gamma}{\kappa,\kappa'^i}$ is not empty.

Moreover, for all $i \in \N^*$, $\kappa'^i = \kappa'^0$. This is due to the fact that for all $l \in \dom{\phi^i}\backslash\dom{\phi^{i-1}}, \theta,\theta' \in \phi^i(l)$
there exists an $l' \in \dom{\phi^{i-1}}$ such that $\refer\theta,\refer\theta' \in \phi^i(l)$.

This means that we can simply choose $\kappa'$ as $\kappa'^0$, and $(\rho,\gamma_\rho,\phi_\rho)$ as $(\rho^0,\gamma^0_\rho,\phi^0_\rho)$.
\end{proof}

\begin{lemma}
\label{lm:avald-rcast}
  Taking $\kappa$ a cast relation, and $\theta \in \kappa(\theta')$, then $\AValD{u}{\theta}{\RCast{\phi}} = \AValD{u}{\theta'}{\RCast{\phi}}$.
\end{lemma}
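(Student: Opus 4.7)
The plan is to unfold the definition of $\AValD{\cdot}{\cdot}{\cdot}$ on both sides and reduce the claim to the equality of the underlying sets of minimal types. Recall that
\[
\AValD{u}{\theta}{\RCast{\phi}} = \{(v,\gamma,\phi_v) \sep \phi_v = \bigcup_{\theta'' \in X} \phi_{\theta''} \land \forall \theta'' \in X.\,(v,\gamma,\phi_{\theta''}) \in \AVal{u}{\theta''}\}
\]
with $X = \min(\RCast{\phi}(\theta))$, and similarly for $\theta'$ with $X' = \min(\RCast{\phi}(\theta'))$. Hence it suffices to show that $X = X'$, i.e.\ $\min(\RCast{\phi}(\theta)) = \min(\RCast{\phi}(\theta'))$.

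The key observation is that $\RCast{\phi}$ is an equivalence relation on types: reflexivity and transitivity are immediate from the closure rules defining $\TDclos{\kappa}$, while symmetry follows directly from Lemma~\ref{lm:sym-RCast}, since $\RCast{\phi}$ is obtained as the closure of the symmetric relation $\{(\theta_1,\theta_2) \sep \exists l.\,\theta_1,\theta_2 \in \phi(l)\}$. So I would first note that, by the hypothesis $\theta \in \RCast{\phi}(\theta')$, symmetry gives $\theta' \in \RCast{\phi}(\theta)$, and then transitivity yields $\RCast{\phi}(\theta) = \RCast{\phi}(\theta')$ as sets.

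From the equality of these two sets of types, taking minimal elements preserves the equality, so $X = X'$. Substituting into the definition of $\AValD{u}{-}{\RCast{\phi}}$ then gives the equality of the two sets of abstract-value triples. There is no real obstacle here: the statement is essentially a sanity check that the abstraction $\AValD{u}{\theta}{\kappa}$ depends only on the cast-class of $\theta$, and it follows from the equivalence-relation structure of $\RCast{\phi}$ established by the symmetry lemma (Lemma~\ref{lm:sym-RCast}) together with the built-in reflexivity and transitivity of $\TDclos{-}$.
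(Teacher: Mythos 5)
Your proof is correct and follows essentially the same route as the paper, which simply asserts that $\min(\kappa(\theta)) = \min(\kappa(\theta'))$ and concludes. You additionally justify that assertion via reflexivity, transitivity and symmetry (Lemma~\ref{lm:sym-RCast}) of $\RCast{\phi}$, which is a harmless elaboration of the same idea.
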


\begin{proof}
 This simply comes from the fact that $\min(\kappa(\theta)) = \min(\kappa(\theta'))$.
\end{proof}

\begin{lemma}
Let us write $C = \configuration{(u,\theta)::\EE}{\gamma}{\phi}{S}{\lambda}$ for a valid Player configuration.
Then the set $\{(v,\gamma_v,\phi_v) \sep (v,\gamma_v,\phi_v) \in \AValD{u}{\theta}{\RCast{\phi}}, \dom{\gamma} \cap \dom{\gamma_v} = \varnothing,
 \phi \cup \phi_v \text{ valid}, \land \forall p \in \support{}{v} \exists \alpha. p \in \ApolT{\alpha} \land 
 (\lambda(\alpha) = P \Leftrightarrow p \notin \support{}{\lambda}) 
 \}$
 is a singleton modulo permutation of names.
\end{lemma}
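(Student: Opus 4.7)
The plan is to proceed by induction on the structure of $\theta$, with the central tool being Lemma~\ref{lm:min-rcast}, which ensures that all types in $X=\min(\RCast{\phi}(\theta))$ are skeleton-equivalent. Since the shape of an abstract value is determined by the outermost skeleton of its type (a constant for $\Unit,\Int$; a location for $\refer$; a functional name for function types; a polymorphic name for type variables; a pair for products; a pack for existentials), skeleton-equivalence ensures that a single $(v,\gamma_v)$ works uniformly for every $\theta'\in X$, while $\phi_v$ collects the corresponding type assignments over all of $X$. This reduces the statement, at each $\theta$, to showing uniqueness within one of the $\AVal{u}{\theta'}$ and then checking joint compatibility.

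For the base cases: if $\theta\in\{\Int,\Unit\}$ then $v=u$ is forced with $\gamma_v=\emptyseq$ and $\phi_v=\varnothing$; if $\theta=\refer\theta'$ then the typing of $u$ in the evaluation stack of the valid configuration $C$ (Lemma~\ref{thm:validconf}) forces $u$ to be a location $l$, whence $v=l$ and $\phi_v=\{(l,\theta'')\mid\theta''\in X\}$, which preserves validity of $\phi\cup\phi_v$ since $\RCast{\phi\cup\phi_v}=\RCast{\phi}$; if $\theta$ is a function type (arrow or $\forall$), the disjointness condition $\dom{\gamma}\cap\dom{\gamma_v}=\varnothing$ together with the definition of $\AVal{\cdot}{\theta}$ forces $v$ to be a fresh functional name $f\in\AfunT{\theta}$, unique up to permutation; finally, if $\theta=\alpha$, we split on $\lambda(\alpha)$. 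When $\lambda(\alpha)=P$, the condition $p\notin\support{}{\lambda}$ forces a fresh $p\in\ApolT{\alpha}$, unique up to permutation; when $\lambda(\alpha)=O$, validity of $C$ (namely, that every non-location name in $\support{}{\EE,\codom{S},\codom{\gamma}}$ has $\lambda$-polarity $O$) forces $u$ to already be a polymorphic name in $\support{}{\lambda}$, and then the second clause of $\AVal{u}{\alpha}$ forces $v=u$, with $\gamma_v=\emptyseq$ and $\phi_v=\varnothing$.

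The inductive cases are essentially immediate. For $\theta=\theta_1\times\theta_2$ with $u=\pair{u_1}{u_2}$, applying the induction hypothesis to each $u_i$ at $\theta_i$ yields unique-up-to-permutation triples $(v_i,\gamma_{v_i},\phi_{v_i})$, which combine into $(\pair{v_1}{v_2},\gamma_{v_1}\cdot\gamma_{v_2},\phi_{v_1}\cup\phi_{v_2})$; the fresh names drawn for $v_1$ and $v_2$ are disjoint up to permutation. For $\theta=\exists\alpha.\theta'$ with $u=\pack{\pair{\theta_0}{u'}}$, one first picks a fresh type-variable name $\alpha'$ (unique up to permutation), then applies induction to $u'$ at $\theta'\subst{\alpha}{\alpha'}$, recording $[\alpha'\mapsto\theta_0]$ inside $\gamma_v$.

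The main obstacle is the joint-compatibility aspect: the same $(v,\gamma_v)$ must witness membership in $\AVal{u}{\theta'}$ for every $\theta'\in X$ simultaneously, and the resulting $\phi\cup\phi_v$ must still be valid even though $\phi_v$ may assign several types to the same location. Skeleton-equivalence from Lemma~\ref{lm:min-rcast} handles the former, forcing the shape of $v$ to be consistent across $X$; for validity of $\phi\cup\phi_v$, we observe that, by Lemma~\ref{lm:validcastrel}, all types in $X$ sit above a common most general instance with respect to $\TRel{\Phi_\lambda}$, so adjoining them at a single location does not break validity. These two ingredients together, applied at each step of the induction, yield singleness of the described set up to permutation of the freshly introduced names.
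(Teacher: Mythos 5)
Your overall strategy (reduce to a single representative via skeleton-equivalence from Lemma~\ref{lm:min-rcast}, then force the shape of $v$ case by case) is the right one, but there is a genuine gap in how you set up the induction: you induct on the syntactic structure of $\theta$, whereas the set $\AValD{u}{\theta}{\RCast{\phi}}$ is computed from $X=\min(\RCast{\phi}(\theta))$, and the skeleton of the types in $X$ need not agree with the skeleton of $\theta$ when $\theta$ is a type variable. Concretely, if $\theta=\alpha$ and $\alpha$ has been disclosed through casts (say $\min(\RCast{\phi}(\alpha))=\{\Int\}$, or a set of $\refer$ types), then $\AValD{u}{\alpha}{\RCast{\phi}}$ is built from $\AVal{u}{\Int}$ (resp.\ $\AVal{u}{\refer\theta_i}$), so the unique $v$ is the concrete integer (resp.\ location) $u$ itself, \emph{not} a polymorphic name. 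Your case ``$\theta=\alpha$, split on $\lambda(\alpha)$'' silently assumes $\min(\RCast{\phi}(\alpha))=\{\alpha\}$. The validity condition on configurations only guarantees this for type variables $\alpha$ such that some $p\in\ApolT{\alpha}$ occurs in $\support{}{\EE,S,\codom{\gamma}}$; it says nothing about the type $\theta$ at the head of the evaluation stack, which can perfectly well be a disclosed variable. The paper avoids this by first picking some $\theta_0\in\min(\RCast{\phi}(\theta))$, invoking Lemma~\ref{lm:avald-rcast} to replace $\AValD{u}{\theta}{\RCast{\phi}}$ by $\AValD{u}{\theta_0}{\RCast{\phi}}$, and then doing the case analysis on $\theta_0$ rather than on $\theta$; the type-variable case then legitimately carries the side condition $\min(\RCast{\phi}(\theta))=\{\alpha\}$, which is what justifies concluding (in the $\lambda(\alpha)=O$ subcase) that the well-typed irreducible value $u$ must itself be a polymorphic name.

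The remaining ingredients of your argument are essentially those of the paper: skeleton-equivalence to get a single $(v,\gamma_v)$ across all of $X$ with $\phi_v$ collecting the type assignments, Lemma~\ref{lm:irred-val}-style reasoning to pin down the shape of $u$, freshness plus the disjointness of $\dom{\gamma_v}$ from $\dom{\gamma}$ to get uniqueness up to permutation for function and P-polymorphic names, and validity of $\phi\cup\phi_v$ from validity of $\phi$. So the fix is local: restate the induction over a minimal type $\theta_0$ and route the type-variable analysis through the hypothesis $\min(\RCast{\phi}(\theta))=\{\alpha\}$.
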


\begin{proof}
From the validity hypothesis, we get that $\typingTermTD{\phi}{\Delta}{\cdot}{u}{\theta\{\gamma_{|\Apol}\}}$
and for all $p \in \support{}{v} \cap \ApolT{\alpha}$ $\min{(\RCast{\phi}(\alpha))} = \{\alpha\}$.
Using Lemma~\ref{lm:min-rcast}, we get that $\min(\RCast{\phi}(\theta))$ is formed by skeleton-equivalent types.
Thus, let us take $\theta_0 \in \min(\RCast{\phi}(\theta))$, using Lemma~\ref{lm:avald-rcast},
we can prove the wanted property directly for $\theta_0$ (rather than $\theta$), by induction on $\theta_0$:
 \begin{itemize}
  \item If $\theta_0$ is a type variable $\alpha$, then $\min(\RCast{\phi}(\theta)) = \{\theta_0\}$, and 
  \begin{itemize}
   \item either $\alpha \in \dom{\gamma}$ (i.e. $\lambda(\alpha) = P$), in which case we can only take $v$ to be a polymorphic name $p \notin \dom{\lambda}$,
   with $\gamma_v = [p \mapsto v]$ and $\phi_v = \emptyseq$.
   \item or $\alpha \notin \dom{\gamma}$ (i.e. $\lambda(\alpha) = O$), in which can only take $v = u$, wich must be a polymorphic name
   since $\typingTermTD{\phi}{\Delta}{\cdot}{u}{\alpha}$ and $\min(\RCast{\phi}(\alpha)) = \{\alpha\}$ (because $\min(\RCast{\phi}(\theta)) = \alpha$).
   We then take $\gamma_v = \emptyseq$ and $\phi_v = \emptyseq$.
 \end{itemize}
  \item If $\theta_0$ is equal to $\Int$, then $\min(\RCast{\phi}(\theta)) = \{\theta_0\}$, and from Lemma~\ref{lm:irred-val} $u$ is an integer. 
     We then take $v=u$, $\gamma_v = \emptyseq$ and $\phi_v = \emptyseq$.
  \item If $\theta_0$ is equal to a functional type, then 
  we simply take $v$ to be a functional name $f \notin \dom{\lambda}$,
   with $\gamma_v = [f \mapsto v]$ and $\phi_v = \emptyseq$.
  \item If $\theta_0$ is equal to $\refer\theta'_0$ then from Lemma~\ref{lm:irred-val} $u$ is a location.
  From the definition of skeleton equivalence, we get that $\min(\RCast{\phi}(\theta)) = \{\refer \theta_1,\ldots,\refer\theta_n\}$
  We then take $v=u$, $\gamma_v = \emptyseq$ and $\phi_v = \{(u,\theta_i) \sep i \in \{1,\ldots,n\}\}$.
  \item If $\theta_0$ is equal to $\theta^1_0\times\theta^2_0$ then from Lemma~\ref{lm:irred-val} $u$ is a pair $\pair{u_1}{u_2}$.
  From the definition of skeleton equivalence, we get that $\min(\RCast{\phi}(\theta)) = \{\theta^1_0\times\theta^2_0,\ldots,\theta^1_n\times\theta^2_n\}$.
  We conclude easily using the induction hypothesis.
 \end{itemize}

\end{proof}
\section{Properties of equivalence of traces}


\begin{lemma}
 Suppose that
 $(m_1,S_1,\rho_1) \cdot t_1 \equivT (m_1,S_1,\rho_1) \cdot t_1$
 with $m_1,m_2$ two Opponent moves, then
 $(m_1,S_1,\rho_1)$ is equal to $(m_2,S_2,\rho_2)$. 
\end{lemma}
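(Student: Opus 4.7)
The plan is to unfold the definitions of $\equivT$, the overline operator $\xbar{\cdot}$, and P-refreshing, and to observe that both of these operations preserve the leading O-move of a trace verbatim. Then the equality of the $\sf Freshen$-sets forces the leading O-moves of the two sides to agree.

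First I would show that the leftmost element of $\xbar{(m,S,\rho) \cdot t}^\varepsilon$ is exactly $(m,S,\rho)$ whenever $m$ is an O-move. By a straightforward induction on the length of $t$, the recursive clauses of $\xbar{\cdot}$ strip off the rightmost element at each step and only ever substitute into P-moves; when the recursion finally reaches the initial segment $(m,S,\rho)$, the O-move branch applies and yields $(m,S,\rho)$ unmodified (with some accumulated $\rho$'s discarded because the base $\xbar{\varepsilon}^{\cdots}=\varepsilon$). Length is preserved, so this is genuinely the leftmost entry.

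Next I would show that P-refreshing never alters the leading move of a trace when that move is an O-move. By definition, if $t \pref t'$ then one writes $t = t_1 \cdot (m',S',\rho')\cdot t_2$ with $m'$ a P-move, and all modifications live inside $(m',S',\rho') \cdot t_2$ (the occurrence of $p$ is replaced in $(m',S',\codom{\rho'})$, and the $[p\mapsto v]$ extensions take place in $\rho$-maps of the subtrace $t_2'$). The prefix $t_1$ is untouched. In our situation the leading full move $(m_i,S_i,\rho_i)$ sits in $t_1$ (it is the very first move and is an O-move, hence never the $m'$ being refreshed), so it is preserved. Since $\Freshen{}{}{\cdot}$ is the $\pref^*$-closure, it preserves the leading O-move as well.

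Combining these two observations: every trace in $\Freshen{}{}{\xbar{(m_1,S_1,\rho_1)\cdot t_1}^\varepsilon}$ has $(m_1,S_1,\rho_1)$ as its leading full move, and symmetrically every trace in $\Freshen{}{}{\xbar{(m_2,S_2,\rho_2)\cdot t_2}^\varepsilon}$ has $(m_2,S_2,\rho_2)$ as its leading full move. The hypothesis $(m_1,S_1,\rho_1)\cdot t_1 \equivT (m_2,S_2,\rho_2)\cdot t_2$ says exactly that these two non-empty sets coincide, so choosing any common trace in the intersection forces its first entry to equal both $(m_1,S_1,\rho_1)$ and $(m_2,S_2,\rho_2)$, whence $(m_1,S_1,\rho_1)=(m_2,S_2,\rho_2)$.

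There is no real mathematical obstacle; the proof is pure bookkeeping. The only subtlety worth verifying carefully is that P-refreshing, through the $[p\mapsto v]\cdot[p'\mapsto v]$ extensions in subsequent $\rho$-maps, does not retroactively touch the opening move — but this is immediate since all such modifications sit inside the suffix $t_2'$ to the right of the refreshed P-move, whereas $(m_1,S_1,\rho_1)$ lies strictly to its left.
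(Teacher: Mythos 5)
Your proof is correct. The paper actually states this lemma without providing any proof, so there is nothing to compare against; your argument supplies the missing bookkeeping in the natural way. The two observations you isolate are exactly what is needed: the recursion defining $\xbar{\,t\,}^{\rho_1\cdots\rho_n}$ peels moves from the right, only rewrites P-moves, and emits the leading O-move verbatim once the accumulator reaches it; and a P-refreshing step $t_1\cdot(m,S,\rho)\cdot t_2\pref t_1\cdot t_2'$ confines all modifications to the suffix starting at the refreshed P-move, so a leading O-move always sits in the untouched prefix $t_1$ (which is nonempty here, since the refreshed move is a P-move and must be preceded by the P-move introducing $p$). Together with termination of $\pref$ (which guarantees $\Freshen{}{}{\cdot}$ is nonempty), equality of the two $\Freshen{}{}{\cdot}$ sets forces the leading full moves to coincide, as you conclude.
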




\begin{lemma}
 Let $C = \configuration{\EE}{\gamma}{\phi}{S}{\lambda}$ an Opponent configuration 
 such that $C \xredint{\faction{m_i}{S_i}{\rho_i}} C_i$,
 for $(\faction{m_1}{S_1}{\rho_1}),(\faction{m_2}{S_2}{\rho_2})$ two Opponent moves
 which differs only w.r.t two polymorphic names $p_1,p_2 \in \dom{\gamma}$.
 Suppose that $\gamma_1(p_1)$ differs from $\gamma_2(p_2)$ only w.r.t
 two polymorphic names $p'_1,p'_2$ such that both $\lambda(p'_i) = O$.
 Then for all trace $t_1 \in \Tr{(C_1)}$, there exists a trace $t_2 \in \Tr{(C_2)}$
 which differs from $t_1$ only by some occurrences of $p'_2$ which are
 equal to $p'_1$ in $t_1$. 
\end{lemma}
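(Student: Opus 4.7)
The plan is to establish a simulation relation between $C_1$ and $C_2$ and show that it is preserved under the interaction reduction. Define $C \equiv C'$ to hold when $C'$ is obtained from $C$ by substituting some occurrences of $p'_1$ by $p'_2$ throughout the syntactic and semantic components (term/context stack, environment, store and pending disclosures), with the ownership function $\lambda$ and the typing function $\phi$ left unchanged since $p'_1, p'_2$ are both O-owned of the same variable type. By the hypotheses of the lemma, $C_1 \equiv C_2$: the two configurations differ precisely in the positions populated by $\widetilde{\gamma(p_1)}$ vs $\widetilde{\gamma(p_2)}$, i.e.\ in those occurrences where $p'_1$ is replaced by $p'_2$.

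I would then verify that $\equiv$ is preserved under each kind of transition in the interaction reduction. For internal reductions, by Lemma~\ref{lm:irred-val} a polymorphic name of variable type cannot appear in a guard position of $\mathsf{if}$, as the function of an application, as the argument of $!$, etc., so no primitive of \SystemReF\ can inspect $p'_1$ or $p'_2$; hence every internal step from $C$ has an identical internal step from $C'$ yielding related configurations. For Player moves, the operators $\sf AVal$, $\sf AStore$ and the P-conditions P1--P4 treat a polymorphic name atomically as an inhabitant of $\sem{\alpha}$, so a move $\faction{m}{S}{\rho}$ from $C$ induces a move from $C'$ differing only by renaming the corresponding occurrences. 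Condition P* transfers because $\phi$ and $\lambda$ are invariant under the renaming, hence $\RCast{\phi}$ and the disclosure sets $X_p$ coincide in both configurations.

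For Opponent moves, the required trace $t_2$ is constructed by mimicking at each step the choice made in $t_1$ up to the running renaming: wherever $t_1$ mentions $p'_1$ in an occurrence tracked by the relation, $t_2$ mentions $p'_2$. Conditions O1--O4 transfer because they depend only on the type/polarity skeleton, which the renaming preserves. For the disclosure environment, any $\rho$ legal for $t_1$ with respect to $(\RCast{\phi},\RCast{\phi\cup\phi'})$ is still legal for $t_2$ with the same $\phi$-data, and we simply copy the value in $\rho$ with $p'_1$ replaced by $p'_2$ where it appears.

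The main obstacle is to handle the eventual possibility that the cast relation grows far enough to force a disclosure of $p'_1$ itself (and hence of $p'_2$): at that point $t_1$ must already contain such a disclosure, and $t_2$ may be taken to disclose the same value for $p'_2$ (and, independently, for $p'_1$ if $p'_1$ occurs also as an unrelated free name in $t_1$), thereby maintaining consistency of the renaming going forward. A straightforward induction on the length of $t_1$, using the preservation properties above at each step, then yields the required trace $t_2 \in \Tr(C_2)$.
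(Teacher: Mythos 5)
Your proposal is correct and takes essentially the same route as the paper: the paper's entire proof of this lemma is the single observation that no operation (not even equality testing) can be performed on polymorphic names, which is exactly the kernel of your simulation argument. Your write-up merely makes explicit the renaming relation, its preservation under {\sc (Int)}/P-/O-steps, and the treatment of later disclosures of $p'_1,p'_2$ --- details the authors leave implicit.
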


\begin{proof}
 This is due to the fact that no operations (not even equality testing) can be performed
 on polymorphic names.
\end{proof}


Using this two lemma we can then prove Lemma~\ref{thm:traceeq-conf}.

\section{Composite configurations}\label{c:app-comp}

 \begin{figure*}
\begin{tabular}{@{\ }l@{\;\;\ }l}
\textsc{(PInt)} &
$\configurationF{(M,\theta)::\EE_P}{\EE_O}{\gamma_P,\gamma_O}{\phi}{S_P,S_O}
    \xredint{\phantom{\faction{\ansP{v}}{S'}{\rho}}} 
    \configurationF{(M',\theta)::\EE_P}{\EE_O}{\gamma_P,\gamma_O}{\phi}{S'_P,S_O}$ \\
& given $(M,S_P) \red (M',S'_P)$  with $(S'_P \backslash S_P) \cap S_O = \varnothing$\\

\textsc{(PA)}&
$\configurationF{(u,\theta)::\EE_P}{(E,\theta \rightsquigarrow \theta')::\EE_O}{\gamma_P,\gamma_O}{\phi}{S_P,S_O}
    \xredint{\faction{\ansP{v}}{S'}{\rho}}
  \configurationF{\EE_P}{(\widetilde{E}[\widetilde{v}],\theta')::\widetilde{\EE_O}}{\gamma'_P,\gamma_O}{\phi\cup\phi'}{S_P,\widetilde{S_O}[\widetilde{S'}]}$ \\
& given $(v,\gamma_{v,P},\rho_v) \in \AVal{u}{\theta}$ \\

\textsc{(PQ)}&
$\configurationF{(E[f\, u],\theta)::\EE_P}{\EE_O}{\gamma_P,\gamma_O}{\phi}{S_P,S_O}
    \xredint{\faction{\questP{f}{v}}{S'}{\rho}}
    \configurationF{(E,\theta' \rightsquigarrow \theta)::\EE_P}{(\widetilde{u'} \ \widetilde{v},\theta')::\widetilde{\EE_O}}{\gamma'_P,\widetilde{\gamma_O}}
     {\phi\cup\phi'}{S_P,\widetilde{S_O}[\widetilde{S'}]}$\\
& given $f \in \dom{\gamma_P} \cap \AfunT{\theta_f}$, $(v,\gamma_v,\phi_v) \in \AVal{u}{\argT{\theta_f}}$, $\theta' = \retT{\theta_f}{v}$ and $\gamma_P(f) = u'$ \\

\textsc{(OInt)}& 
$\configurationF{\EE_P}{(M,\theta)::\EE_O}{\gamma_P,\gamma_O}{\phi}{S_P,S_O}
    \xredint{\phantom{\faction{\ansP{v}}{S'}{\rho}}} 
    \configurationF{\EE_P}{(M',\theta)::\EE_O}{\gamma'_P,\widetilde{\gamma_O}}{\phi}{S_P,\widetilde{S_O}[\widetilde{S'}]}$\\
& given $(M,S_O) \red (M',S'_O)$ with $(S'_O \backslash S_O) \cap S_P = \varnothing$\\

\textsc{(OA)}&
$\configurationF{(E,\theta' \rightsquigarrow \theta):: \EE_P}{(u,\theta')::\EE_O}{\gamma_P,\gamma_O}{\phi}{S_P,S_O}
    \xredint{\faction{\ansO{v}}{S'}{\rho}}
    \configurationF{(\widetilde{E}[\widetilde{v}],\theta)::\widetilde{\EE_P}}{\EE_O}{\widetilde{\gamma_P},\gamma'_O}{\phi\cup\phi'}{\widetilde{S_P}[\widetilde{S'}],S_O}$ \\
& given $(v,\gamma_{v},\phi_{v}) \in \AVal{u}{\theta}$ \\

\textsc{(OQ)}&
$\configurationF{\EE_P}{(E[f u],\theta)::\EE_O}{\gamma_P,\gamma_O}{\phi}{S_P,S_O}
     \xredint{\faction{\questO{f}{v}}{S'}{\rho}}
     \configurationF{(\widetilde{u'}\, \widetilde{v},\eta)::\widetilde{\EE_P}}{(E,\theta' \rightsquigarrow \theta)::\EE_O}
     {\widetilde{\gamma_P},\gamma'_O}{\phi\cup\phi'}{\widetilde{S_P}[\widetilde{S'}],S_O}$\\
& given $f \in \dom{\gamma_O} \cap \AfunT{\theta_f}$, $(v,\gamma_{v},\phi_v) \in \AVal{u}{\argT{\theta_f}}{}$, $\theta' = \retT{\theta_f}{v}$ and $\gamma_O(f) = u'$
\\[2mm]
%
\hline\\[-3mm]
X1 & 
$\phi' = \phi_v \cup \phi_S \cup \phi_{\rho}$ and $\gamma'_X = \gamma_X \cdot \gamma_v \cdot \gamma_S \cdot \gamma_{\rho}$
\\
X2 & for all $f \in \support{\rm F}{S',v,\rho}, f \notin \dom{\gamma_P} \cup \dom{\gamma_O}$
 and for all $\alpha \in \support{\rm T}{S',v,\rho}, \alpha \notin \dom{\gamma_P} \cup \dom{\gamma_O}$  
\\
X3 & $\dom{\phi\cup\phi'} = S_X^*(\support{\rm L}{v,\rho} \cup \dom{\phi})$ 
   and $(S',\gamma_S,\phi_S) \in \AStore{S_{X|\dom{\phi'}}}{\phi'}$;
\\
X4 & for all $p \in \support{\Apol}{S',v,\rho}$ with $p \in \ApolT{\alpha}$:
if $\alpha \in \dom{\gamma_X}$ then $p \notin \dom{\gamma_P} \cup \dom{\gamma_O}$;
if $\alpha \in \dom{\gamma_{X^\bot}}$ then  $p \in \dom{\gamma_{X^\bot}}$
\\
X* &  $(\rho,\gamma_\rho,\phi_\rho) \in \APEnv{\gamma'}{\kappa,\kappa'}$ where
 $\kappa = \RCast{\phi}$, $\kappa = \RCast{\phi \cup \phi'}$ with $\phi \cup \phi'$ valid 
\end{tabular}
\label{fig:cir}
\caption{Composite Interaction Reduction. Rules {\sc(XQ),(XA)} satisfy conditions X1-X4 and X*, for ${\rm X} \in \{{\rm O},{\rm P}\}$, and $\widetilde{Z}=Z\{\rho^*\}\{\gamma_X\}$.}
\end{figure*}

\begin{definition}
A pair of stacks $(\EE_P,\EE_O)$  are \emph{type-compatible} if:
\begin{itemize}
\item  either both are empty; 
\item or $\EE_P$ is active and $\EE_O$ is passive, and
 $\EE_P = (M,\theta'_{n+1})::(E_{P,n},\theta_n \rightsquigarrow\theta'_n)::\ldots::(E_{P,1},\theta_1 \rightsquigarrow\theta'_1)$ and
 $\EE_O = (E_O,\theta'_{n+1} \rightsquigarrow\theta_n)::
   (E_{O,n},\theta'_n \rightsquigarrow \theta_{n-1})::\ldots::
   (E_{O,1},\theta'_1 \rightsquigarrow \theta_0)$;
\item
 or $\EE_P$ is passive and $\EE_O$ is active, and
 $\EE_P = (E_{P,n},\theta_{n+1} \rightsquigarrow \theta'_n)::\ldots::(E_{P,1},\theta_2 \rightsquigarrow \theta'_1)$ and
 $\EE_O = (M,\theta_{n+1})::(E_{O,n},\theta'_n \rightsquigarrow \theta_n)::\ldots::(E_{O,1},\theta'_1 \rightsquigarrow \theta_1)$, 
 (so if $n=0$, then $\EE_P = \emptyStack$ and $\EE_O = (M,\theta_1)$).
\end{itemize}
\end{definition}

\begin{definition}
A composite configuration $C = \configurationF{\EE_P}{\EE_O}{\gamma_P,\gamma_O}{\phi}{S_P,S_O}$ is said to be \emph{valid} when:
 \begin{itemize}
  \item $(\gamma_P,\gamma_O)$ are compatible;
  \item $\dom{S_P} \cap \dom{S_O} = \dom{\phi}$ and for all $l \in \dom{\phi}$, either $S_P(l) = S_O(l)\{\gamma_P\}$ or $S_O(l) = S_P(l)\{\gamma_O\}$;
  \item $(\EE_P,\EE_O)$  are \emph{type-compatible} and both are well-typed for $\phi$ and $\gamma_{P|\Atvar}$ (resp. $\gamma_{O|\Atvar}$).  
 \end{itemize}
\end{definition}

Similarly to Lemma~\ref{thm:validconf}, we can show the following.

\begin{lemma}
\label{lm:validCompConf-red}
 Taking $C$ a valid composite configuration such that $C \xRedint{\faction{m}{S}{\rho}} C'$, then $C'$ is a valid composite configuration.
\end{lemma}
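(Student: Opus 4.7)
The plan is to reduce validity preservation for composite configurations to the single-configuration case (Lemma~\ref{thm:validconf}), exploiting the fact that each composite reduction step can be viewed as two synchronised individual steps: one moving a Player configuration forward by a P-action and a corresponding O-action moving the dual Opponent configuration forward (or vice versa). We proceed by case analysis on the six rules of Figure~\ref{fig:cir}, i.e.\ \textsc{(PInt), (PQ), (PA), (OInt), (OQ), (OA)}.

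For the internal cases \textsc{(PInt)} and \textsc{(OInt)}, only one of $S_P, S_O$ is modified and the remaining data on the other side is untouched. Subject reduction (applied to the active stack being reduced, which is well-typed by assumption) gives us $(M',S'_X)$ still well-typed for some $\phi' \supseteq \phi$. The side condition $(S'_X\setminus S_X)\cap S_{X^\bot}=\varnothing$ in the rule guarantees that no location freshly allocated inside $S_X$ collides with $\dom{S_{X^\bot}}$, so the disjointness $\dom{S'_P}\cap\dom{S'_O}=\dom{\phi}$ persists; the agreement $S_P(l)=S_O(l)\{\gamma_P\}$ or $S_O(l)=S_P(l)\{\gamma_O\}$ on $l\in\dom{\phi}$ is preserved because reduction does not affect shared locations (they belong to the partner's side). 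Type-compatibility of stacks is untouched and well-typedness of the reduced stack follows from subject reduction.

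For the four interactive cases, the argument is symmetric under swapping $P\leftrightarrow O$, so it suffices to treat \textsc{(PQ)} and \textsc{(PA)}; \textsc{(OQ)} and \textsc{(OA)} are fully dual. Take for concreteness \textsc{(PQ)}. The P-side moves as in the ordinary interaction reduction: the abstract value $(v,\gamma_v,\phi_v)\in\AVal{u}{\argT{\theta_f}}$ and the abstract store $(S',\gamma_S,\phi_S)\in\AStore{S_{P|\dots}}{\phi'}$ are selected according to conditions X1--X4, X*. By the \emph{individual} preservation lemma (Lemma~\ref{thm:validconf}) applied to the Player component $\configuration{(E[fu],\theta){::}\EE_P}{\gamma_P}{\phi}{S_P}{\lambda}$, the resulting P-configuration is valid. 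The O-side of the composite reduction instantiates $\widetilde{u'}\widetilde{v}=\gamma_O(f)\{\rho^*\}\{\gamma_O\}\,v\{\rho^*\}\{\gamma_O\}$ on top of the opponent stack; Lemma~\ref{lm:typing-subst-term} (with $\xi=\dom{\rho}$ and the new valid $\phi\cup\phi'$) ensures that the substitution by $\rho$ preserves typing, and compatibility of $(\gamma_P,\gamma_O)$ (hence welldefinedness of $\gamma^*$) ensures the same for $\gamma_O$-substitution; the freshly abstracted names enter $\gamma_P$ while the names substituted away (functional/polymorphic names already in $\dom{\gamma_O}$) are removed from the interface on the P-side, so the compatibility invariants for $(\gamma'_P,\gamma_O)$ of the definition before the lemma continue to hold. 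Type-compatibility of the new stacks is a direct calculation: the P-stack loses its top frame $(M,\theta)$ and gains $(E,\theta'\rightsquigarrow\theta)$, while the O-stack gains $(\widetilde{u'}\widetilde{v},\theta')$, matching the type-compatibility schema with the indices shifted by one.

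The main obstacle is bookkeeping around the store invariant $\dom{S_P}\cap\dom{S_O}=\dom{\phi}$ and the agreement condition on $\dom{\phi}$ across P- and O-steps. For \textsc{(PA)} and \textsc{(PQ)} the new public locations in $\support{\rm L}{v,\rho}$ become shared, so $\dom{\phi\cup\phi'}$ grows; condition X3 requires $\dom{\phi\cup\phi'}=S_P^{*}(\support{\rm L}{v,\rho}\cup\dom{\phi})$, which together with $(S',\gamma_S,\phi_S)\in\AStore{S_{P|\dots}}{\phi'}$ yields that exactly these new locations are installed in $\widetilde{S_O}[\widetilde{S'}]$; then one checks that on the new shared part the agreement $S_P(l)=S_O(l)\{\gamma_P\}$ holds because the O-side inherits $l\mapsto v\{\rho^*\}\{\gamma_O\}$ while the P-side already stored the concrete value, and the translation $\sf AStore/\sf AVal$ is defined precisely so these match modulo $\gamma$-expansion. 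The symmetric \textsc{(OA)}/\textsc{(OQ)} cases use $\sem{\cdot}$ rather than $\sf AVal$, but the same invariant-preservation argument applies since by O*/X* the new $\phi\cup\phi'$ remains valid. Putting the three invariants together in each of the six cases closes the induction step.
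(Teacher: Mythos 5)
Your proof is correct and follows essentially the route the paper intends: the paper dispatches this lemma with the single remark that it is shown ``similarly to Lemma~\ref{thm:validconf}'', whose proof relies on exactly the two tools you deploy --- the safety/subject-reduction property (Lemma~\ref{thm:red-typedterm}) for steps on the active side, and the substitution-typing lemma (Lemma~\ref{lm:typing-subst-term}) for the values fed to the receiving side. Your additional bookkeeping (store disjointness and agreement on $\dom{\phi}$ via condition X3, compatibility of $(\gamma_P,\gamma_O)$, and type-compatibility of the shifted stacks) merely makes explicit the composite-specific invariants the paper leaves implicit.
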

 

It is then possible to decompose a composite configuration into two standard configuration.
\begin{lemma}
\label{lm:decomp-compConf}
 Taking $C$ a valid composite configuration, there exist two compatible configurations $C_P,C_O$ such that
 $C = \mergeConf{C_P}{C_O}$.
\end{lemma}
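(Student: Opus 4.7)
The plan is to reconstruct $C_P$ and $C_O$ directly from the components of $C$: we take $\EE_P,\gamma_P,S_P$ for $C_P$ and $\EE_O,\gamma_O,S_O$ for $C_O$, setting $\phi_P = \phi_O = \phi$. The only component not explicitly present in the composite configuration is the ownership function $\lambda$, so the bulk of the work is constructing an appropriate $\lambda$ and checking the validity and compatibility conditions.

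To build $\lambda$, first I would fix polarities name-by-name: for $a \in (\Apol\cup\Afun\cup\Atvar) \cap \dom{\gamma_P}$, set $\lambda(a) = P$; for $a \in (\Apol\cup\Afun\cup\Atvar)\cap\dom{\gamma_O}$, set $\lambda(a)=O$; for $l \in \dom{\phi}$, assign any polarity (the choice is immaterial because it is discarded by the merge). Compatibility of $(\gamma_P,\gamma_O)$ in $C$ ensures disjointness, so $\lambda$ is functional. For the order, since $\phi$ is valid in $C$ there exists an ordered set $\Phi$ of type variables that witnesses its validity; I would extend $\Phi$ to a total ordering on $\dom{\lambda}$, placing each functional or polymorphic name after the free type variables occurring in its type and each location after the type variables occurring in the types $\phi$ assigns to it. Such an extension exists because the typing information in $C$ is already consistent with some such ordering (this is precisely the content of the validity of $\phi$ together with type-compatibility of the stacks).

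Setting $C_P = \configuration{\EE_P}{\gamma_P}{\phi}{S_P}{\lambda}$ and $C_O = \configuration{\EE_O}{\gamma_O}{\phi}{S_O}{\lambda^\bot}$, I would then verify each clause in turn. Validity of $C_P$ follows from validity of $C$: the domain condition on $\gamma_P$ and the condition $\dom{\phi} = \dom{\lambda}\cap\Loc$ hold by construction; $\dom{\phi}\subseteq\dom{S_P}$ follows from $\dom{S_P}\cap\dom{S_O}=\dom{\phi}$ in $C$; the support condition on $\EE_P,\codom{S_P},\codom{\gamma_P}$ is a direct consequence of the well-typedness of $\EE_P$ for $\phi$ and $\gamma_{P|\Atvar}$ and the dual placement of P/O names in $\gamma_P,\gamma_O$; and the typing function validity of $\phi \cup \phi'$ witnessing $\typingHeapTD{S_P}{\phi\cup\phi'}$ is inherited from $C$. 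Symmetrically for $C_O$. Compatibility is then immediate from the construction: $\phi_P=\phi_O=\phi$, $\lambda_P=\lambda_O^\bot$, the environments are compatible by hypothesis, and $\dom{S_P}\cap\dom{S_O} = \dom{\phi} = \dom{\lambda}\cap\Loc$. Unfolding definitions yields $\mergeConf{C_P}{C_O} = C$.

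The only real subtlety is making sure the order chosen in $\lambda$ makes $\phi$ valid \emph{both} as a component of $C_P$ and as a component of $C_O$. This is not a serious obstacle: the relation $\TRel{\Phi_\lambda}$ depends on $\lambda$ only through its underlying ordered set of type variables, and dualisation $\lambda \mapsto \lambda^\bot$ leaves this ordered set unchanged (it only swaps polarities). A single $\Phi$ extracted from the validity of $\phi$ in $C$ therefore witnesses validity of $\phi$ in both $C_P$ and $C_O$ simultaneously.
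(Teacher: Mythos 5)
The paper states Lemma~\ref{lm:decomp-compConf} without proof, so there is no official argument to compare against. Your reconstruction --- project the six components, set $\phi_P=\phi_O=\phi$, and rebuild $\lambda$ from $\dom{\gamma_P}$ and $\dom{\gamma_O}$ with an ordering extracted from the validity of $\phi$ --- is the natural (and essentially forced) route, and your observation that $\Phi_\lambda$ is invariant under dualisation, so a single ordering witnesses validity of $\phi$ in both components, correctly handles the one genuinely order-sensitive clause.

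There is, however, a gap in the verification that $C_P$ and $C_O$ are \emph{valid}, which you paper over with ``inherited from $C$''. The paper's definition of a valid composite configuration requires only compatibility of $(\gamma_P,\gamma_O)$, the store-domain/agreement condition, and type-compatibility plus well-typedness of the stacks. It does \emph{not} include (i) the heap-typing clause $\typingHeapTD{S_X}{\phi\cup\phi'}$; (ii) the clause that every $p\in\support{}{\EE_X,S_X,\codom{\gamma_X}}\cap\ApolT{\alpha}$ satisfies $\min{(\RCast{\phi}(\alpha))}=\{\alpha\}$, which your proof never mentions at all; nor (iii) the requirement that every non-location name in $\support{}{\EE_X,\codom{S_X}}$ belongs to the other party --- environment compatibility gives this for $\codom{\gamma_X}$, but says nothing about names occurring in the stack or the store, and well-typedness does not control \emph{which} names occur, only their types. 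None of (i)--(iii) follow from the stated hypothesis, so the decomposition as written does not yield valid (hence not compatible) configurations. The repair is either to strengthen the definition of valid composite configuration so that it carries these clauses, or to prove the lemma only for composite configurations reachable from a merge of compatible configurations, which is how it is actually invoked in Theorem~\ref{thm:semEq-iff-mergeIRed}, where Lemma~\ref{lm:validCompConf-red} supplies the needed invariant.
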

It is straightforward to prove a correspondence between the composite interaction reduction
of the merge of two compatible configuration $C_P$ and $C_O$, and the (individual) interaction reduction of
$C_P$ and $C_O$, as soon as the resulting configurations are compatible.
\begin{lemma}
\label{lm:mergIRed-iff-IRed}
 Taking $C_P,C_O$ and $C'_P,C'_O$ two pairs of compatible configuration,
 we have  $\mergeConf{C_P}{C_P} \xRedint{\faction{m}{S}{\rho}} \mergeConf{C'_P}{C'_O}$
 iff 
  \begin{itemize}
  \item $C_P \xRedint{\faction{m}{S}{\rho}} C'_P$,
  \item $C_O \xRedint{\faction{m^\bot}{S}{\rho}} C'_O$.
 \end{itemize}
\end{lemma}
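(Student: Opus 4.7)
The plan is to prove the two directions of the biconditional separately, in each case by unfolding $\xRedint{\faction{m}{S}{\rho}}$ as a sequence of internal reductions followed by a single labelled reduction, and then case-splitting on which of the four labelled rules (PA), (PQ), (OA), (OQ) produces the move $m$.

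For the forward direction, assume $\mergeConf{C_P}{C_O} \xredint{}^{*} D \xredint{\faction{m}{S}{\rho}} \mergeConf{C'_P}{C'_O}$. Each internal step instantiates (PInt) or (OInt); by inspection, (PInt) updates only the P-components $(\EE_P,S_P)$ via an operational step $(M,S_P)\red(M',S'_P)$, and (OInt) dually for the O-components. Applying Lemma~\ref{lm:decomp-compConf} to each intermediate configuration, every (PInt)-step lifts to a corresponding (Int)-step of the P-component alone (with the O-component unchanged), and symmetrically for (OInt). The side-condition $(S'_P\setminus S_P)\cap S_O=\varnothing$ present in (PInt) matches the freshness invariants required by the individual (Int) rule, since new allocations must lie outside $\dom{\lambda}\cap\Loc$. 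For the labelled step, suppose $m=\ansP{v}$ (the cases $\questP{f}{v}$, $\ansO{v}$, $\questO{f}{v}$ are analogous up to dualisation). The composite (PA) rule simultaneously pops $(u,\theta)$ from $\EE_P$ under conditions X1--X4 and X*, and fills the matching frame $(E,\theta\rightsquigarrow\theta')$ of $\EE_O$ with $\widetilde{E}[\widetilde{v}]$; the former is exactly the individual (PA) rule applied to $C_P$ (with P1--P4, P* matching X1--X4, X*), while the latter is the individual (OA) rule applied to $C_O$ for the dual move $\ansO{v}$ (with O1--O4, O* matching via compatibility). In particular, each $\phi'$-enlargement played on the P-side projects into the $\SemStore{\phi'}$-image required by the O-side, since the $\AStore{}{}$-image is contained in the $\SemStore{}$-image by forgetting environment components.

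For the backward direction, we combine the individual reductions. Compatibility of $C_P,C_O$ gives validity of $\mergeConf{C_P}{C_O}$, and Lemma~\ref{lm:validCompConf-red} preserves validity along the composite reduction. Internal P-steps in the sequence $C_P\xredint{}^{*}C_P''$ lift to (PInt) steps in the composite (the disjointness side-condition on stores is automatic, since freshly-allocated P-locations lie outside $\dom{S_O}$ by validity), and similarly for O-internal steps; these can be interleaved in any order because they act on disjoint components. For the labelled step, case-split again on $m$. The composite rule's conditions X1--X4 are the conjunction of the two individual rules' conditions: X1 and X2 are symmetric and appear verbatim on each side; X3 combines the P-side $\AStore{}{}$-abstraction with the O-side expansion $\widetilde{S_O}[\widetilde{S'}]$ of the private O-store; X4 is the join of the two polymorphic-ownership clauses. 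One then reconstructs $\mergeConf{C'_P}{C'_O}$ from the resulting P- and O-configurations, invoking Lemma~\ref{lm:decomp-compConf} implicitly to certify it is indeed a composite merge of compatible configurations.

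The main obstacle will be step X*, that is, reconciling the disclosure component $\rho$ across the two sides. On the P-side, condition P* requires $(\rho,\gamma_\rho,\phi_\rho)\in\APEnv{\restrictTr{(\gamma\cdot\gamma_v\cdot\gamma_S)}{\Apol}}{\kappa,\kappa'}$, which selects a concrete value for each freshly-disclosed polymorphic name via $\AVal{\gamma(p)}{\theta}$; on the O-side, condition O* only requires $(\rho,\phi_\rho)\in\SemPEnv{\xi}{\kappa,\kappa'}$, which non-deterministically picks any semantic inhabitant of each revealed type. The two directions glue together because (i) every triple produced by $\APEnv{\gamma}{\kappa,\kappa'}$ projects, upon forgetting the $\gamma_\rho$ component, to a pair in $\SemPEnv{\xi}{\kappa,\kappa'}$ (since $\AVal{u}{\theta}$ ranges within $\sem{\theta}_{\kappa}$); and conversely (ii) the concrete $\rho$ chosen by P is one of the admissible choices for O's non-determinism, ensured by compatibility of $(\gamma_P,\gamma_O)$ and the shared $\phi$-component which makes the cast relations $\kappa,\kappa'$ agree on both sides. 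Once this compatibility of disclosures is verified, the remaining bookkeeping on $\lambda,\lambda^{\bot}$ and on the environment extensions $\gamma'_P,\gamma'_O$ is routine, and the equivalence follows.
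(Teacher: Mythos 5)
Your proof is correct and is essentially the paper's own argument: the paper gives no written proof of this lemma (it is declared ``straightforward to prove \dots as soon as the resulting configurations are compatible'' immediately before its statement), and the intended reasoning is precisely your rule-by-rule correspondence --- (PInt)/(OInt) against (Int), each composite labelled rule against the paired individual P- and O-rules for $m$ and $m^\bot$, with the composite conditions X1--X4/X* supplying the active side's P1--P4/P* directly and, after forgetting environment components (so that the image of $\AVal{u}{\theta}$ lands in $\sem{\theta}$, that of $\AStore{S}{\phi}$ in $\SemStore{\phi'}$, and that of $\mathsf{AEnv}$ in its semantic counterpart $\mathsf{E}$, modulo the translation between $\support{}{\lambda}$ and $\dom{\gamma_P}\cup\dom{\gamma_O}$ given by compatibility), the passive side's O1--O4/O*. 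The one imprecision worth noting is your claim that the (PInt) disjointness side-condition is ``automatic by validity'': validity of the current merge does not by itself prevent a fresh P-allocation from colliding with a \emph{private} O-location, and the condition actually follows from the lemma's hypothesis that the final pair $(C'_P,C'_O)$ is compatible (such a collision would violate either the final condition $\dom{S'_P}\cap\dom{S'_O}=\dom{\lambda'_P}\cap\Loc$ if the location stays private, or O3 on the passive side if it is disclosed) --- which is exactly why the paper restricts the statement to reductions between compatible pairs and defers the existence of such pairs to the surrounding lemmas.
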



\begin{lemma}
\label{lm:semEq-to-IRed}
Let $C_P,C_O$ two compatible configurations, such that
$C_P | C_O \downarrow_t$ with $t =  (\faction{m}{S}{\rho}) \cdot t'$.
Then there exist two compatible configurations $C'_P,C'_O$ such that:
\begin{itemize}
 \item $C_P \xRedint{\faction{m}{S}{\rho}} C'_P$,
 \item $C_O \xRedint{\faction{m^\bot}{S}{\rho}} C'_O$,  
 \item $C'_P|C'_O \downarrow_{t'}$.
\end{itemize}
\end{lemma}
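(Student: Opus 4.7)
The plan is to unfold the definition of $C_P|C_O\downarrow_t$ on each side to extract the first-step reductions, verify that the two resulting configurations are compatible, and then observe that the tail trace $t'$ is witnessed by them.

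By definition of $\downarrow_t$, we have $t\in\Tr(C_P)$ and $t^\bot\cdot(\faction{\ansO{\unit}}{S}{\emptyPMap})\in\Tr(C_O)$. Writing $t=(\faction{m}{S_m}{\rho})\cdot t'$, the dualisation $t^\bot$ only flips the polarity of the move, so its head is literally $(\faction{m^\bot}{S_m}{\rho})$; the store $S_m$ and the value-disclosure $\rho$ of the first full move are shared between the two trace derivations. Unfolding $\Tr$, there exist configurations $C'_P$ and $C'_O$ with $C_P\xRedint{\faction{m}{S_m}{\rho}} C'_P$, $\,t'\in\Tr(C'_P)$, and $C_O\xRedint{\faction{m^\bot}{S_m}{\rho}} C'_O$, $\,t'^\bot\cdot(\faction{\ansO{\unit}}{S}{\emptyPMap})\in\Tr(C'_O)$. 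This gives me the two reduction steps required by the statement; what remains is compatibility and $C'_P|C'_O\downarrow_{t'}$.

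For compatibility, I would check the four clauses of the definition one by one, exploiting the fact that the P-rules (P1--P4, P*) and the O-rules (O1--O4, O*) perform dual bookkeeping on the same full move $(\faction{m}{S_m}{\rho})$. The common typing-function extension $\phi'=\phi_v\cup\phi_S\cup\phi_\rho$ is computed purely from the move, so $\phi_P$ and $\phi_O$ remain equal after the step. The fresh names appearing in $(v,S_m,\rho)$ are recorded with polarity $P$ on the $C_P$ side (P1) and with polarity $O$ on the $C_O$ side (O1); hence $\lambda'_P=(\lambda'_O)^\bot$. The environments are extended disjointly: on the P side, $\gamma'_P=\gamma_P\cdot\gamma_v\cdot\gamma_S\cdot\gamma_\rho$ assigns values to the new P-names, while on the O side the corresponding positions are instead saturated via the $\widetilde{\_}$-substitution $\rho^\ast,\gamma_P$ into the term, leaving $\gamma'_O$ untouched on those names. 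Together with the prior compatibility of $(\gamma_P,\gamma_O)$ and the store conditions P3/O3, this yields compatibility of $(\gamma'_P,\gamma'_O)$ and of the store domains. Validity of the merge $\mergeConf{C'_P}{C'_O}$ is available from Lemma~\ref{lm:validCompConf-red} once we notice, via Lemma~\ref{lm:mergIRed-iff-IRed} in the reverse direction, that $\mergeConf{C_P}{C_O}\xRedint{\faction{m}{S_m}{\rho}}\mergeConf{C'_P}{C'_O}$ is a single composite step.

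Concluding, $C'_P|C'_O\downarrow_{t'}$ follows immediately from the first paragraph, provided $t'$ is still complete; this is clear because we removed one head move from $t$ on the $C_P$-side and its dual from $t^\bot$ on the $C_O$-side, while keeping the closing $\faction{\ansO{\unit}}{S}{\emptyPMap}$ on the $C_O$-side intact, so the question/answer balance of the remainder is preserved. The main obstacle will be the fresh-name handling in compatibility: the two sides choose fresh $\Afun$, $\Apol$ and $\Atvar$ names for the disclosure $\rho$ and for the abstract value $v$ independently, so the equalities $\phi'_P=\phi'_O$ and $\lambda'_P=(\lambda'_O)^\bot$ may hold only up to a permutation of fresh names. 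Here I would invoke the uniqueness-up-to-fresh-name clause supporting Lemma~\ref{lem:unique} for $\rho$ and the analogous property for $(v,\gamma_v,\phi_v)$, together with nominal equivariance of the interaction reduction, to realign the two sides before declaring compatibility.
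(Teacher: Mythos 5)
There is a genuine gap, and it sits exactly where the paper says the difficulty lies (``the difficulty here is to build $C'_P,C'_O$ which are compatible''). Your route to compatibility is circular: you obtain validity of the merge $\mergeConf{C'_P}{C'_O}$ from Lemma~\ref{lm:validCompConf-red} after using Lemma~\ref{lm:mergIRed-iff-IRed} ``in the reverse direction'' to get the composite step $\mergeConf{C_P}{C_O}\xRedint{\faction{m}{S}{\rho}}\mergeConf{C'_P}{C'_O}$. But Lemma~\ref{lm:mergIRed-iff-IRed} is stated for \emph{two pairs of compatible configurations}: its hypothesis already includes compatibility of $(C'_P,C'_O)$, which is precisely what you are trying to establish (validity of the merge is one of the clauses of compatibility). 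The present lemma exists exactly so that Theorem~\ref{thm:semEq-iff-mergeIRed} can subsequently invoke Lemma~\ref{lm:mergIRed-iff-IRed}; you cannot use that lemma to discharge the compatibility obligation here. The paper instead checks the compatibility clauses directly on the configurations produced by the two labelled steps.

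Your final paragraph also misdiagnoses the freshness problem. The names occurring in $v$, $S$ and $\rho$ are part of the \emph{shared label} $(\faction{m}{S}{\rho})$: both sides perform a transition carrying literally the same full move (only the polarity of $m$ is flipped), so the extensions of $\phi$ and $\lambda$ recorded from the label coincide on the nose --- no permutation realignment, and no appeal to the uniqueness clause of Lemma~\ref{lem:unique}, is needed for them. The names that genuinely threaten compatibility are the ones \emph{absent} from the label: when the active side (say P) performs internal $\nuref$ steps before the move, it allocates private locations that appear in $\dom{S'_P}$ but not in $\lambda'_P$, and nothing in P3/O3 prevents these from accidentally colliding with O's private store $\dom{S_O}$, which would break the clause $\dom{S'_P}\cap\dom{S'_O}=\dom{\lambda'_P}\cap\Loc$. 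This is the one substantive step in the paper's proof: by equivariance one \emph{chooses} $S'_P$ with $(\dom{S'_P}\setminus\dom{S_P})\cap\dom{S_O}=\varnothing$, observes $S'_O=S_O[S]$ so $\dom{S'_O}\setminus\dom{S_O}\subseteq\dom{S}$, and then computes $\dom{S'_P}\cap\dom{S'_O}=(\dom{S_P}\cap\dom{S_O})\cup\dom{S}=\dom{\lambda'_P}\cap\Loc$. Your proposal, which appeals to P3/O3 for ``compatibility of the store domains'', silently skips this choice and the ensuing domain computation; with it added (and the circular lemma invocation replaced by a direct check of the remaining compatibility clauses), the argument matches the paper's.
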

\begin{proof}
The difficulty here is to build $C'_P,C'_O$ which are compatible.
 Suppose that $C_P$ is a Player configuration, so that $C_O$ is an Opponent one and $m$ is a Player move.
 Let us write $C_P = \configuration{\EE_P}{\gamma_P}{\phi_P}{S_P}{\lambda_P}$ and 
 $C_O = \configuration{\EE_O}{\gamma_O}{\phi_O}{S_O}{\lambda_O}$.
 Then there exists a configuration $C'_P$ such that 
 $C_P \xRedint{\faction{m}{S}{\rho}} C'_P$ and $t' \in \Tr{(C'_P)}$.
 Let us write $C'_P = \configuration{\EE'_P}{\gamma'_P}{\phi'}{S'_P}{\lambda'_P}$.
 We can choose $S'_P$ such that $(\dom{S'_P} \backslash \dom{S_P}) \cap \dom{S_O} = \varnothing$.

%
 
 There exists a configuration $C'_O$ such that 
 $C_O \xRedint{\faction{m^\bot}{S}{\rho}} C'_O$ and $t'^\bot \in \Tr{(C'_O)}$.
 Let us write $C'_O = \configuration{\EE'_O}{\gamma'_O}{\phi'}{S'_O}{\lambda'_O}$.
 
 Moreover, we have $S'_O = S_O[S]$, so that $(\dom{S'_O} \backslash \dom{S_O}) \subseteq S$.
 Thus, $\dom{S'_P} \cap \dom{S'_O} = (\dom{S_P} \cap \dom{S_O}) \cup \dom{S}$. 
 From the compatibility of $C_P,C_O$, we have that $\dom{S_P} \cap \dom{S_O} = \dom{\lambda_P} \cap \Loc$.
 and $(\dom{\lambda'_P} \backslash \dom{\lambda_P}) \cap \Loc = S$, thus 
 $\dom{S'_P} \cap \dom{S'_O} = \dom{\lambda'_P} \cap \Loc$.
 
 All the other conditions for $C'_O$ and $C'_P$ to be compatible are straightforward to check.
\end{proof}

\begin{proof}[Proof of Theorem~\ref{thm:semEq-iff-mergeIRed}]
 \textbf{left to right}
 We reason by induction on the length of $t$.
 
 If $t$ is empty, then the evaluation stack of $C_P$ is necessarily empty,
 while the one of $C_O$ has a single element whose term is equal to some term $M$.
 Then, from $C_O \xredint{\faction{\ansP{\unit}}{S}{\emptyPMap}} C'_O$,
 we get that $(M,S_2) \red^* (\unit,S'_2)$, where $S_2$ is the store of $C_O$,
 so that $\mergeConf{C_P}{C_O} \downarrow_t$.
 
 Suppose that $t = (\faction{m}{S'}{\rho}) \cdot t'$, from Lemma~\ref{lm:semEq-to-IRed}
 we get the existence of a pair of compatible configurations $C'_P,C'_O$ such that 
 \begin{itemize}
  \item $C_P \xRedint{\faction{m}{S'}{\rho}} C'_P$,
  \item $C_O \xRedint{\faction{m^\bot}{S'}{\rho}} C'_O$,
  \item $C'_P|C'_O \downarrow_{t'}$
 \end{itemize}
 Then from Lemma~\ref{lm:mergIRed-iff-IRed} we get that 
 $\mergeConf{C_P}{C_O} \xRedint{\faction{m}{S'}{\rho}} \mergeConf{C'_P}{C'_O}$.

 Applying the induction hypothesis on $C'_P|C'_O \downarrow_{t'}$,
 we get that $\mergeConf{C'_P}{C'_O} \Downarrow_{t'}$,
 so that $\mergeConf{C_P}{C_O} \Downarrow_t$.
 
 \textbf{From right to left}
 We reason by induction on the length of $t$.
 If $t$ is empty, then the evaluation stack of $C_P$ is necessarily empty,
 while the one of $C_O$ is equal to a single term $M$, 
 such that $C_O$ internally reduces to a configuration where the evaluation stack
 is formed by a unique term equal to $\unit$.
 This means that there exists $C'_O$ and $S,\rho$ such that $C_O \xRedint{\faction{\ansP{\unit}}{S}{\rho}} C'_O$,
 so $C_P|C_O \downarrow_t \unit$.
  
 Suppose now that $t = (\faction{m}{S'}{\rho}) \cdot t'$ such that $\mergeConf{C_P}{C_O} \xredint{a} C'$,
 from Lemma~\ref{lm:validCompConf-red} we get that $C'$ is compatible, so that from Lemma~\ref{lm:decomp-compConf} 
 there exists two compatible configuration $C'_P,C'_O$ such that 
 $C' = \mergeConf{C_P}{C_O}$.
 Then from  Lemma~\ref{lm:mergIRed-iff-IRed} we get that
 \begin{itemize}
  \item $C_P \xRedint{\faction{m}{S'}{\rho}} C'_P$,
  \item $C_O \xRedint{\faction{m^\bot}{S'}{\rho}} C'_O$.
 \end{itemize}
 Then applying the induction hypothesis on $\mergeConf{C'_P}{C'_O}$,
 we get that $C'_P|C'_O \downarrow_{t'}$, so that
 $C_P|C_O \downarrow_t$.
\end{proof}


\section{Soundness Proof}
\label{app:sound}

\begin{lemma}
 Let $(v,\gamma,\phi) \in \AVal{u}{\theta}$, then $u = v\{\gamma\}$.
\end{lemma}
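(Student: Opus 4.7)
The plan is to proceed by induction on the type $\theta$, or equivalently on the structure of the membership derivation $(v,\gamma,\phi) \in \AVal{u}{\theta}$, and in each case read off the conclusion directly from the defining clause of $\sf AVal$. The induction hypothesis will then handle the two inductive clauses (products and existentials).

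First I would dispatch the base cases. For $\theta = \Unit$ or $\Int$, the definition forces $v = u$ and $\gamma = \emptyseq$, so $v\{\gamma\} = u$ trivially. For $\theta = \refer\theta'$ the triple has the form $(l,\emptyseq,\{(l,\refer\theta')\})$ with $u = l$, so again $v\{\gamma\} = l = u$. For a functional type $\theta$ we have $(v,\gamma,\phi) = (f,[f\mapsto u],\varnothing)$ with $f \in \AfunT{\theta}$, hence $v\{\gamma\} = f\{[f\mapsto u]\} = u$. For a type variable $\alpha$ there are two sub-cases: if $(v,\gamma,\phi) = (p,[p\mapsto u],\varnothing)$ with $p \in \ApolT{\alpha}$, then $v\{\gamma\} = u$; if instead $(v,\gamma,\phi) = (u,\emptyseq,\varnothing)$ with $u \in \ApolT{\alpha}$, then immediately $v\{\gamma\} = u$.

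Next I would handle the inductive cases. For $\theta = \theta_1 \times \theta_2$, the triple factors as $(\pair{v_1}{v_2},\gamma_1\cdot\gamma_2,\phi_1\cup\phi_2)$ with $(v_i,\gamma_i,\phi_i) \in \AVal{u_i}{\theta_i}$ and $u = \pair{u_1}{u_2}$. By the induction hypothesis $v_i\{\gamma_i\} = u_i$ for $i=1,2$; since the concatenation $\gamma_1\cdot\gamma_2$ is only defined when $\dom{\gamma_1}\cap\dom{\gamma_2}=\varnothing$, and the names in $v_i$ lie in $\dom{\gamma_i}$ by construction (they are the freshly introduced functional/polymorphic names), we get $\pair{v_1}{v_2}\{\gamma_1\cdot\gamma_2\} = \pair{v_1\{\gamma_1\}}{v_2\{\gamma_2\}} = \pair{u_1}{u_2} = u$. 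For $\theta = \exists\alpha.\theta'$, the triple is $(\pair{\alpha'}{v'},\gamma'\cdot[\alpha'\mapsto\theta''],\phi)$ with $(v',\gamma',\phi) \in \AVal{u'}{\theta'\subst{\alpha}{\alpha'}}$ and $u = \pair{\theta''}{u'}$. The induction hypothesis yields $v'\{\gamma'\} = u'$, and the added binding $[\alpha'\mapsto\theta'']$ rewrites the first component, so $\pair{\alpha'}{v'}\{\gamma'\cdot[\alpha'\mapsto\theta'']\} = \pair{\theta''}{u'} = u$.

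There is no real obstacle here: the statement is essentially a sanity-check that the environment component $\gamma$ produced by $\sf AVal$ faithfully records the concrete values hidden behind the abstractions introduced in $v$. The only minor point requiring care is the well-formedness of the concatenated environments in the product and existential cases, which follows from the freshness conventions built into the definition of $\sf AVal$ (each recursive invocation introduces a distinct set of fresh names), so that applying $\gamma$ as a simultaneous substitution commutes with pairing.
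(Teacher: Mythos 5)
Your proof is correct: the paper states this lemma without proof (it is a routine sanity check on the definition of $\mathsf{AVal}$), and your induction on $\theta$, reading each case off the defining clauses and using freshness/domain-disjointness to justify that the concatenated substitution distributes over pairing, is exactly the argument the authors leave implicit. Nothing further is needed.
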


\begin{lemma}
 Let $(S',\gamma,\phi') \in \AStore{S}{\phi}$, then $S = S'\{\gamma\}$.
\end{lemma}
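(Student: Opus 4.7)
The plan is to reduce the claim to the preceding lemma on $\AVal$ by unfolding the pointwise-concatenation structure of $\AStore{S}{\phi}$. First, I would invoke the definition of $\AStore{S}{\phi}$ via the operator $\bigodot$: any triple $(S',\gamma,\phi')$ in $\AStore{S}{\phi}$ arises by choosing, for each $l \in \dom{S}$, a triple $([l \mapsto v_l],\gamma_l,\phi'_l)$ from the set indexed by $l$, and then concatenating componentwise, so that $S' = \cdot_{l\in\dom S}\,[l\mapsto v_l]$, $\gamma = \cdot_{l\in\dom S}\,\gamma_l$, and $\phi' = \bigcup_{l\in\dom S}\phi'_l$. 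In particular $\dom{S'}=\dom{S}$, and by the side-conditions on $\odot$ the environments $\gamma_l$ have pairwise disjoint domains.

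Second, I would apply the preceding lemma locationwise. In the single-valued case stated in Section~\ref{sec:language}, $([l \mapsto v_l],\gamma_l,\phi'_l) \in \AVal{S(l)}{\phi(l)}$, so the preceding lemma yields $S(l) = v_l\{\gamma_l\}$. In the general-$\phi$ setting of Section~\ref{sec:trace-sem}, the selected triple actually witnesses $(v_l,\gamma_l,\phi_\theta) \in \AVal{S(l)}{\theta}$ for every $\theta \in X_l$, and it suffices to apply the preceding lemma for any single $\theta \in X_l$ (the same $v_l,\gamma_l$ work for all) to conclude again that $S(l) = v_l\{\gamma_l\}$.

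Third, I would conclude by pointwise computation:
\[
(S'\{\gamma\})(l) \;=\; v_l\{\gamma\} \;=\; v_l\{\gamma_l\} \;=\; S(l),
\]
for every $l\in\dom S = \dom{S'}$, from which $S = S'\{\gamma\}$ follows. The middle equality $v_l\{\gamma\} = v_l\{\gamma_l\}$ rests on the small observation that the non-location names appearing in $v_l$ are precisely those introduced in $\gamma_l$, combined with the disjointness of the various $\dom{\gamma_{l'}}$ for $l'\neq l$; this is the main (and essentially only) obstacle, and it is dispatched by a routine structural induction on the derivation of $(v_l,\gamma_l,\phi'_l) \in \AVal{S(l)}{\theta}$, inspecting the clauses defining $\AVal$ one by one.
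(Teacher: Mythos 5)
The paper states this lemma in Appendix~\ref{app:sound} without an accompanying proof, so there is nothing to compare against; your argument is the evident intended one --- unfold the $\bigodot$ decomposition of $\AStore{S}{\phi}$ and lift the preceding $\AVal$ lemma pointwise --- and it is correct. One small imprecision in your justification of $v_l\{\gamma\}=v_l\{\gamma_l\}$: $v_l$ may also retain \emph{old} polymorphic names (via the second clause of $\AVal{u}{\alpha}$) that are not in $\dom{\gamma_l}$, so the property you actually need is not that the non-location names of $v_l$ are exactly $\dom{\gamma_l}$, but only that $\nu(v_l)\cap\dom{\gamma}\subseteq\dom{\gamma_l}$, which holds because the names introduced in the other $\gamma_{l'}$ are chosen fresh.
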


\begin{lemma}
 Let $(\rho,\gamma',\phi) \in \APEnv{\gamma}{\kappa}$, then
 for all term $M$, $M\{\rho\}\{\gamma'^{*}\} = M\{\gamma^{*}\}$.
\end{lemma}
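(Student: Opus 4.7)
The plan is to proceed by structural induction on $M$. For each term constructor that is not an atomic name — pairs, applications, lambda and type abstractions, unpack, dereferencing, and so on — the three substitutions $\{\rho\}$, $\{\gamma'^{*}\}$ and $\{\gamma^{*}\}$ each distribute over the constructor in the obvious way, so the inductive hypothesis applies directly to the subterms. The base cases for integers, unit, and locations are immediate, since none of the substitutions touch these constants.

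The only nontrivial base case is when $M$ is an atomic name $a$. If $a \notin \dom{\rho}$ (in particular for all functional names, type variables, and for polymorphic names $p$ with $X_p = \varnothing$ in the definition of $\APEnv{\gamma}{\kappa,\kappa'}$), the claim reduces to $a\{\gamma'^{*}\} = a\{\gamma^{*}\}$, which follows because, by construction of $\APEnv$, $\gamma'$ extends $\gamma$ only by fresh bindings, so $\gamma$ and $\gamma'$ agree on $a$ and the fixed-point closure of each gives the same value. If $M$ is a polymorphic name $p \in \dom{\rho}$, then $M\{\rho\} = v$ where $\rho(p) = v$ and, for each $\theta \in X_p$, $(v,\gamma_v,\phi_\theta) \in \AVal{\gamma(p)}{\theta}$ with $\gamma_v$ a slice of $\gamma'$. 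The first auxiliary lemma gives the one-step equality $v\{\gamma_v\} = \gamma(p)$, and the task becomes to lift this to the closures, namely $v\{\gamma'^{*}\} = \gamma^{*}(p)$.

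The hard part will be the interaction between the fixed-point iteration $(-)^{*}$ and the substitution structure of $\rho$. Specifically, I would verify that the freshness conditions baked into $\APEnv$ ensure $\dom{\gamma_v} \cap \dom{\gamma} = \varnothing$ and that any free names appearing in $\codom{\gamma_v}$ lie within $\dom{\gamma} \cup \dom{\gamma_v}$, so that iteratively closing $\gamma'$ and closing $\gamma$ agree on the image of $v$. Then a short side induction on the number of closure steps needed to reach the fixed point, using the first auxiliary lemma at the base and the inductive hypothesis on subterms of $v$ at the step, would finish the argument. A minor bookkeeping obstacle is name capture during the substitution $\{\rho\}$; adopting the Barendregt convention on $M$ so that its bound variables are disjoint from the supports of $\rho,\gamma,\gamma'$ sidesteps this cleanly.
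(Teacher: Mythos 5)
The paper does not actually prove this lemma: it is stated bare in Appendix~\ref{app:sound}, together with the two round-trip lemmas for $\AVal{u}{\theta}$ and $\AStore{S}{\phi}$, and is then used silently in the proof of Lemma~\ref{thm:redint-iff-red}. So there is no official argument to compare yours against; judged on its own, your proposal is correct and is the proof one would expect. The whole content is in the name cases, and your reduction of the case $p\in\dom{\rho}$ to the round-trip equality $v\{\gamma_v\}=\gamma(p)$ from the first auxiliary lemma, followed by a side induction on the closure steps of $(-)^{*}$, is exactly the right decomposition; the disjointness and freshness facts you need ($\dom{\gamma_v}$ fresh, the several $\gamma_v$'s pairwise disjoint, no fresh name reachable from $\codom{\gamma}$) are indeed guaranteed by the $\odot$ concatenation in the definition of $\APEnv{\gamma}{\kappa,\kappa'}$ and by conditions P1--P2 of the interaction reduction. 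Two remarks. First, as literally stated the lemma is false unless $\gamma'$ is read as $\gamma$ extended by the disclosure environment rather than the disclosure environment alone: a name $a\in\dom{\gamma}\setminus\dom{\rho}$ would otherwise be left untouched on the left-hand side. Your proof silently adopts the extended reading, which is the intended one given how the lemma is applied. Second, the capture worry is lighter than you suggest: $\rho$ substitutes polymorphic names, which are atoms and never bound, so only the type-variable bindings in $\gamma$ could capture, and those are handled by the usual convention as you note.
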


\begin{proof}[Proof of Lemma~\ref{thm:redint-iff-red}]
\textbf{From right to left.} 
We reason by induction on the length of the trace reducing $C$ to the composite configuration with two empty evaluation stacks.
 Let us suppose that $\EE_P = (M,\theta)::\EE'_P$
and $\EE_O = (E,\theta \rightsquigarrow \theta_1)::\EE'_O$.
If $\configurationF{\EE_P}{\EE_O}{\gamma_P,\gamma_O}{\phi_P}{S}$ internally reduces to
 $\configurationF{(M',\theta)::\EE'_P}{\EE_O}{\gamma_P,\gamma_O}{\phi_P}{S'}$
 then $(M,S) \red (M',S')$ with $(M',S')$ irreducible, 
so that $(E[M]\{\gamma^*\},S\{\gamma^*\}) \red (E[M']\{\gamma^{*}\},S'\{\gamma^{*}\})$.
Using Theorem~\ref{thm:red-typedterm}, we get that:
 \begin{itemize}
  \item Either $M' = E'[f \ u]$ with $f \in \dom{\gamma_P} \cap \AfunT{\theta'}$ where $\theta'$ is a functional type. 
  Then taking $(v,\gamma_v,\phi_v) \in \AVal{u}{\argT{\theta'}}{}$
  we have $C \xredint{\faction{\questP{f}{v}}{S}{\rho}} C'$ where
  $C' = \configurationF{(E',\eta \rightsquigarrow \theta)::\EE'_P}
    {(u'\ \widetilde{v},\eta)::\widetilde{\EE_O}}{\gamma'_P,\widetilde{\gamma_O}}{\phi}{S_P,\widetilde{S_O}[S']}$
  with $\gamma_P(f) = u'$.
 Then,  \[\begin{array}{lll}
          (\mergeStack{((E'[f\ u],\theta)::\EE_P)}{\EE_O})\{\gamma^*\} & = &  (\mergeStack{\EE_P\{\gamma^*\}}{\EE_O\{\gamma^*\}})[E'[f\ u]\{\gamma^*\}]\\
          & = &  (\mergeStack{\EE_P\{\gamma^*\}}{\widetilde{\EE_O}\{\gamma^*\}})[E'[u' \ \widetilde{v}]\{\gamma^*\}]\\
          & = &  (\mergeStack{(E',\eta \rightsquigarrow \theta)::\EE_P}{(u' \ \widetilde{v},\eta)::\widetilde{\EE_O}})\{\gamma^*\}
         \end{array}\]
and we conclude applying the induction hypothesis to $C'$.
 
 \item Otherwise $M'$ is equal to a value $u$.
 Then, taking $(v,\gamma_v,\phi_v) \in \AVal{u}{\theta}$
  we have $C \xredint{\faction{\ansP{f}{v}}{S}{\rho}} C'$
  with $C' = \configurationF{\EE'_P}{(E[\widetilde{v}],\theta_1)::\widetilde{\EE'_O}}{\gamma'_P,\widetilde{\gamma_O}}{\phi}{S_P,\widetilde{S_O}[S']}$.
 Then,  \[\begin{array}{lll}
          (\mergeStack{(u,\theta)::\EE'_P}{(E,\theta \rightsquigarrow \theta_1)::\EE'_O})\{\gamma^*\} 
          & = &  (\mergeStack{\EE'_P\{\gamma^*\}}{\EE'_O\{\gamma^*\}})[E[\widetilde{v}]\{\gamma^*\}]\\
          & = &  (\mergeStack{\EE'_P}{(E[\widetilde{v}],\theta_1)::\widetilde{\EE'_O}})\{\gamma^*\}
         \end{array}\]  
and we conclude applying the induction hypothesis to $C'$. 
 \end{itemize}
 The same reasoning applies when $\EE_O$ is the active evaluation stack and $\EE_P$ is the passive one.

 \textbf{From left to right.}  
We reason by induction on the length of the reduction $(\mergeStack{\EE_P}{\EE_O}\{\gamma^{*}\},S\{\gamma^{*}\}) \red^* (u,S')$.
If $\mergeStack{\EE_P}{\EE_O}\{\gamma^{*}\}$ is equal to a value $u$,
then $\EE_P = \emptyStack$ and $\EE_O$ is equal to a value $u'$ such that $u'\{\gamma^{*}\} = u$.
Thus, $C \xredint{\ansO{v},S,\rho} \configurationF{\emptyStack,\emptyStack}{\gamma_P,\gamma'_O}{\phi'}{S'_P,S'_O}{}$.

Let us now suppose that $\EE_P$ is an active stack $(M,\theta)::\EE'_P$.
If $C$ internally reduce to $\configurationF{(M',\theta)::\EE'_P}{\EE_O}{\gamma_P,\gamma_O}{\phi}{S'_P,S_O}$,
then $(\mergeStack{\EE_P}{\EE_O}\{\gamma^{*}\},S\{\gamma^{*}\}) \red^* (\mergeStack{(M,\theta)::\EE'_P}{\EE_O}\{\gamma^{*}\},S'_P[S_O\{\gamma^{*}\}])$
 and we can apply the induction hypothesis.
Otherwise, $(M,S_P)$ is irreducible, so that:
\begin{itemize}
 \item Either $M$ is equal to a callback $E'[f \ u]$ with $f \in \dom{\gamma_P} \cap \AfunT{\theta'}$ where $\theta'$ is a functional type.
 we have $C \xredint{\faction{\questP{f}{v}}{S}{\rho}} C'$ where
  $C' =  \configurationF{(E',\eta \rightsquigarrow \theta)::\EE'_P}{(u'\ \widetilde{v},\eta)::\widetilde{\EE_O}}
    {\gamma'_P,\widetilde{\gamma_O}}{\phi}{S_P,\widetilde{S_O}[S']}$
  with $\gamma_P(f) = u'$.
\end{itemize}
\end{proof}

\section{Definability}\label{app:defn}

In this section we prove that our model is fully abstract.
To do so, we rely crucially on the existence of \emph{cast terms} 
ion result for a fragment of \SystemReF\ defined by means of restricting the types allowed in the interface of a term.

\renewcommand{\TConstr} [1] {\mathsf{TC}(#1)}
\newcommand\bff{}

In this section we show that any finite trace $t$ produced by a P-configuration $C_P$ has a matching O-configuration that can actually produce $t$ when combined with $C_P$.

Given a pair $(\rho,S)$,
 we define $\TConstr{\rho,S}$ to be the set of all typing constraints 
on type variables imposed by this pair:
\[
\TConstr{\rho,S} =
\{(\alpha,\theta) \sep  \exists (p,v) \in \rho.\, p \in \ApolT{\alpha}\land \exists\phi, \phi_v \subseteq \phi.\, S:\phi\land (v,\phi_v) \in \sem{\theta}\}
\]
 We extend this definition to a trace $t$, writing $\TConstr{t}$ for the union of all  $\TConstr{\rho,S}$, for all the full moves $(\_,S,\rho)$ of $t$. Thus, for any type variable $\alpha$, $\TConstr{t}(a)$ is the set of types that $\alpha$ has been revealed to be partially instantiated to.
We are going to use this predicate to infer the most general types for Opponent type variables that are played in a trace, using the notion of 
\emph{most general instance} introduced in Appendix~\ref{sec:propcast}, 
where the ordering $\leq$ is taken with respect a given ordering of type variables (in our case, the order of introduction in the trace $t$ that we examine).

\medskip\noindent
\textbf{Theorem~\ref{thm:defin}}~
Let $C_P$ be a good configuration and $t$ a complete trace in $\Tr{(C_P)}$ with final store $S$.
There exists a valid configuration $C_O$ that is compatible with $C_P$
such that $\Tr{(C_O)}=\{\pi\star t' \sep (\forall a\in\nu(t)\backslash\nu(C_O).\,\pi(a)=a)\land (\exists t'' \equivT t\cdot(\faction{\langle\bar{()}\rangle}{S}{\emptyset}).\ t' \prefix t''^{\bot})\}$.

\begin{proof}
Suppose $C_P=\inbrax{\EE_P,\gamma_P,\phi_P,S_P,\lambda}$, 
let $A_0$ be the set of all the names that appear in $t$ and $C_P$, and let 
us write $\|t\|$ for the number of P-moves in $t$, i.e.\ $\|t\|=(|t|+1)/2$.
To determine the types behind the O-type variables in $A_0$, we define a mapping $\delta_0$ using the type constraints predicate defined above. For each such $\alpha$:
\begin{itemize}
\item if $(\alpha,\_) \notin \TConstr{t}$ then we let  $\delta_0(\alpha)=\Int$,
\item otherwise, we take $\delta_0(\alpha)=\MGI{\{\theta \sep (\alpha,\theta) \in \TConstr{t}\}}$.
\end{itemize}
We then take $\delta=\delta_0^*$ to be the recursive closure of $\delta_0$ (as the codomain of $\delta_0$ may itself contain O-type variables).
We number P-moves in $t$ in decreasing order, that is, the head move of $t$ has index $\|t\|$, and
let $\Theta_{\|t\|}$ recursively \ntt{include all function, reference and variable types that appear in $\phi_P\{\delta\}$ and $\lambda\{\delta\}$.}
%
%
At the $i$-th P-move of $t$, this set is updated to $\Theta_i=\{\theta_1^i,\theta_2^i,\cdots,\theta_{ts(i)}^i\}$ by including all the types disclosed in intermediate moves.
The set $\hat\Theta_{\|t\|}\subseteq\Theta_{\|t\|}$ on the other hand is the restriction to \emph{visible types}, in the game-theoretic sense 
(that is, the types that are in scope with respect to the function that P calls/returns at the head move in $t$).
%
%
We shall use a counter $\ctr$ to determine the position we are in $t$ and
construct $C_O=\inbrax{\EE_O,\gamma_O,\phi_O,S_O,\lambda^\bot}$ by induction on the length of $t$, with the additional assumptions that:
\begin{compactitem}[$-$]
\item $\EE_O=(E_n,\eta_n\myto\eta_n',\Phi_n)::\cdots::(E_1,\eta_1\!\rightsquigarrow\eta_1',\Phi_1)$,
$n$ is determined from $t$ and
$E_i\equiv (\lambda z.\,{!r_i}(!\ctr)z)\bullet$\,,
for each $i$; 
\item $\gamma_O$ obeys $\delta$ \ntt{(i.e.\ $\restrictTr{{\gamma_O}}{{\Atvar}}\subseteq\delta$)} and, moreover, assigns values to each function or pointer name belonging to O by following this discipline:
\begin{compactitem}
\item for each $f$ of arrow type, $\gamma_O(f)=\lambda z.\,{!q_f}(!\ctr)z$
\item for each $g$ of universal type, $\gamma_O(g)=\Lambda\alpha.\,!q_g'(!\ctr)\alpha$
\item for each pointer name $p$ of type $\beta$,
\begin{compactitem}
\item if $\gamma_O(\beta)$ an arrow type, $\gamma_O(p)=\lambda z.\,{!q_p}(!\ctr)z$
\item if $\gamma_O(\beta)$ a universal type, $\gamma_O(p)=\Lambda\alpha.\,!q_p'(!\ctr)\alpha$
\item if $\gamma_O(\beta)$ an existential type, $\gamma_O(p)=\inbrax{\alpha',v}$ and $v$ recursively follows the same discipline
\item if $\gamma_O(\beta)$ a product type, $\gamma_O(p)=\inbrax{v_1,v_2}$ and $v_1,v_2$ recursively follow the same discipline
\item if $\gamma_O(\beta)=\Int/\refer\theta$ and the value of $p$ gets disclosed in $t$, $\gamma_O(p)$ is the revealed value; otherwise, $\gamma_O(p)$ is a unique integer/location representing $p$
\item if $\gamma_O(\beta)=\alpha'$, $\gamma_O(p)$ is some polymorphic name respecting the type disclosures in $t$;
\end{compactitem}
\end{compactitem}
\item $\dom{S_O}= Q_F \uplus Q_F' \uplus Q_P \uplus Q_P' \uplus\{r_1,\cdots,r_n,l_1,\cdots,l_k\} 
  \uplus\{\ctr,\cde\} \uplus R\uplus \{\ell_1,\cdots,\ell_{ts(\|t\|)}\}
  \uplus\{\getval_i\mid i\in [1,\|t\|]\}$, \ 
  where $Q_F$ contains a unique location $q_f$ for each function name $f$ in $\dom{\gamma_O}$, 
  $Q_F'$ contains the $q_g'$'s, $Q_P$ the $q_p$'s, and $Q_P'$ is for the $q_p'$'s. The role of $\cde:\Int$ is special.
\end{compactitem}
We set $S_O(\ctr)=\|t\|$.
The $l_i$'s are the locations in $\lambda$, 
and such that $\dom{S_O}\cap A_0\subseteq\{l_1,\cdots,l_k\}$,
$R$ is a set of $r_i$'s that are no longer in use,
while 
$\ctr$ is an integer counter that counts the remaining P-moves in $t$.
%
%
$L=\{\ell_1,\cdots,\ell_{ts(\|t\|)}\}$ is a set of private auxiliary locations which we shall 
use in order to cast between known types and types obtained by opening existential packages.

We next explain the role of the $\getval$ functions that allow us to keep in store all the names that appear in the trace. 
For each $i$, $\getval_i$ is a location of type:
\[
\exists\vec{\alpha}.\left((\Int\to\theta_1^i)\times\cdots\times(\Int\to\theta_{ts(i)}^i)\right)
\times\left((\Unit\to{\refer\theta^i_1})\times\cdots\times(\Unit\to{\refer\theta^i_{{ts}(i)}})\right)
\]
where $\vec{\alpha}$ is the sequence of all free type variables in $\Theta_i$.
Thus, the value of $\getval_i$ is an existential package whose first component contains enumerations of all values of type $\theta_j^i$, for each $i,j$, 
whose names are public (either in $\lambda$ or in some move). 
These represent the available values at each point in the trace and
what we are actually storing are those types whose values are names.
We represent an enumeration of values of type $\theta$ as a function from natural numbers to  $\theta$. 
The second component inside the package stored in $\getval_i$ contains a single reference for each type and we shall {assign} 
to it a special role, namely of holding a private reference from the set $L$.

We now proceed with the proof.
First we treat the inductive case, so let
$t=(\faction{{m}_1}{S_1}{\rho_1})\cdot(\faction{{m}_2}{S_2}{\rho_2})\cdot t'$
and let $C_P'$ be the configuration reached by the interaction reduction after the first two moves, so that $C_P'$ produces $t'$. By applying the induction hypothesis to $C_P'$ 
we obtain an O-configuration $C_O'=\inbrax{\EE',\gamma',\phi',S',\lambda'}$ which produces $t'^\bot\cdot(\faction{\ansP{()}}{S}{\emptyset})$ 
up to the required closures (and satisfies our additional assumptions on its shape). 
We next do a case analysis on ${m}_1$.
For economy, let us write $\Theta_{\|t\|}$ simply as $\Theta=\{\theta_1,\cdots,\theta_{ts}\}$.
\smallskip

\noindent$\blacksquare$~Case of ${m}_1=\questP{f}{v}$. 
Let us suppose the move introduces fresh type variables $\beta_1,\cdots,\beta_{\iota}$, via respective values of existential type.
For all $i<\|t\|$ and all $f',g'$, we set $S_O(q_{f'})(i)=S'(q_{f'})(i)$ and $S_O(q_{g'}')(i)=S'(q_{g'}')(i)$. Moreover,
we set $S_O(q_{g'}')(\|t\|) = \Omega$ for all $g'$, 
$S_O(r_{n})(\|t\|)$, and 
$S_O(q_{f'})(\|t\|) = \Omega$ for all $f'\not=f$, and:
\begin{align*}
S_O(q_f)(\|t\|) = \
&\unpack{!\getval_{\|t\|}}{\inbrax{\vec{\alpha}',\inbrax{z',h}}}{}\\
&\quad\letin{z=\castPk\inbrax{z',h}}{}
\\
&\qquad\lambda x_0.\, \unpack{N_1}{\inbrax{\beta_1,x_1}}{\cdots\;\unpack{N_{\iota}}{\inbrax{\beta_{\iota},x_{\iota}}}{}}\\
&\quad\qquad\qquad\,\letin{\getv=\refer\inbrax{z,\lambda\_.\Omega,\cdots\,,\lambda\_.\Omega}}{}\\ 
&\qquad\qquad\qquad\,
\ctr\,{-}{-};
\fshvals;\, 
\chkvals;\, 
\newvals;\,\setvals;\,\play\tag{$*$}
\end{align*}
Note that, since the type of
$!\getval_{\|t\|}$ is fully existentially quantified, when we (statically) unpack $!\getval_{\|t\|}$ and get $\vec\alpha',z'$, 
the $\vec\alpha'$ 
are distinct from the type variables $\vec\alpha$ in $\Theta$ and, consequently, each component $z_i':\Int\to\theta_i'$ of $z'$ is not of the expected type $\Int\to\theta_i$. 
However, $\vec\alpha'$ is in fact representing $\vec\alpha$, and each $\theta_i'$ represents $\theta_i$, 
and when the unpack will actually happen, $\vec\alpha'$ is going to be elementwise substituted for $\vec\alpha$ and this mismatch will be resolved.
For visible types, though, we need this mismatch to also be resolved statically, as we would like to be able to relate the values in $z'$ with $x_0$, any open variables, 
or the expected return type of $!q_f$. 
Hence, we introduce a $\castPk$ function which dully casts values of type $\theta_i'$ to $\theta_i$ in $z'$, using the locations in $L$ and their representations in $h$:
\begin{align*}
\castPk &\equiv \lambda\inbrax{z_1',\cdots,z_{ts}',h_1,\cdots,h_{ts}}.\,
\letin{\,\overrightarrow{z_i={\sf Z}_i}\,}{\,\inbrax{z_1,\cdots,z_{ts}}}\\
{\sf Z}_i &\equiv\begin{cases}
z_i' & \text{ if }\theta_i\notin\hat\Theta_{\|t\|}\\
\lambda x.\,h_i:=\lambda\_.z_i'x;\,{!\ell_{i}\unit} & \text{ if }\theta_i\in\hat\Theta_{\|t\|}
\end{cases}
\end{align*}
where, for each $i$, $\ell_{i}$ is the location of type $\Unit\to\theta_i$ that we keep in $L$.

Each term $N_i$ is selected in such a way so that, using $\getv$ and $x_0,x_1,\cdots,x_{i-1}$, 
it captures the precise position within $({\bff m}_1,S_1,\rho_1)$ which introduces the type variable $\beta_i$. 
As a result, $\getv$ and $x_0,\cdots,x_\iota$ contain all the values that have been played so far, including the ones just played in the last move. 
Note that $\getv$ extends $z$ with default value functions ($\lambda\_.\Omega$) for the types newly introduced (because of fresh $\{\beta_1,\cdots,\beta_\iota\}$).
\ntt{
Moreover, the inspection of the values in $\rho_1$ requires to utilise an additional class of cast functions, which implement the casts stipulated by the cast relations of Section~\ref{sec:trace-sem} and are defined recursively by use of the $\bf cast$ terms (Lemma~\ref{thm:castf}). These casts are available due to the inhabitation properties of \lang*, and will also be implicitly used in the sequel for accessing $\rho_1$. }

We next analyse the macros of line ($*$) above. 
\begin{asparaitem}[$\Box$]
\item $\fshvals$ detects the positions inside $({\bff m}_1,S_1)$ that introduce fresh names
and updates $\getv$ by adding them as new values in their corresponding types.
Note here that this procedure may require to use functions like $\castPk$ above. 
For instance, if some $\refer\beta$ is a visible type with $\refer(\beta\times\alpha)$ not visible (due to $\alpha$ not being visible), 
then values of type $\refer(\beta\times\alpha)$ would be essentially accessed via $z'$, in which they would have type $\refer(\beta'\times\alpha)$. 
If a location $l:\refer(\beta\times\alpha)$ is played in $({\bff m_1},S_1)$ then, 
in order to retrieve the new polymorphic name of type $\beta$  introduced in it, 
call it $p$, we would need to reach $l$ via $\getv$ (i.e.\ via $z'$), dereference it and project to obtain $p:\beta'$, 
and then cast the latter to $\beta$ using a $\castPk$ function.
These updates result into an updated store $S_O'$. 

Similar uses of $\castPk$ are implicitly involved in the functions presented below without special mention.
\cutout{\item $\updvals$ updates $\getv$ by
any new values that materialise because of type disclosures occurring due to $({\bff m_1},S_1)$.
\nt{say something about casting?}}
\item 
$\chkvals$ checks that $x_0$, the public part of $S_O'$ and the disclosed values are the ones expected, that is, $v$, $S_1$ and $\rho_1$ respectively. 
\ntt{For these comparisons to be implemented, the main task is to check all values of  variable types in e.g.\ $x_0$: the rest are either integers/references (can always be checked), or units/functions (no need to check them). 
Variable types belonging to P cannot be checked (and $P$ will always play a fresh polymorphic name for them), so we can skip them.
On the other hand, values of variable types $\alpha$ belonging to O will appear in $x_0,S_O'$ with their instantiated types $\delta(\alpha)$. In this case, while we are still not able to distinguish between P-polymorphic names (and this is taken care of by the $\cal F$ closure applied in the definition of trace equivalence), we are now in position to distinguish between function names: these are functions provided by O as polymorphic values so O can pre-instrument them in a way that, calling them with a default argument they each produce a unique observable effect. This effect will be an assignment to the variable $\cde$. In order to call these special functions of type, say, $\theta_1\to\theta_2$ we need to have $\theta_1$ inhabited with (at least) a default value. More than that, in order for these functions to return, we need to be able to inhabit $\theta_2$ as well. Such inhabitation conditions are guaranteed by our working in the restricted type fragment of \SystemReF*.}

%
\item
$\newvals$ creates all the fresh locations of $({\bff m}_2,S_2)$ 
and stores them in the corresponding index of $\getv$. Moreover, for each arrow functional name $f'$ 
of $({\bff m}_2,S_2)$,
$\newvals$ includes the code:
\begin{align*}
&\letin{q_{f'}=\refer(\lambda c.\,\mathsf{case}\ c\ \mathsf{of}\
[
\;i\mapsto S'(q_{f'})(i) \ \text{ if }0<i<\|t\|
\;\mid\,\; \_\mapsto\Omega\;])\ }{}\\
&\quad\letin{f'=\lambda z.\,{!q_{f'}}(!\ctr)z\, }{\,
\getv[\text{add } f']}
\end{align*}
where by $\getv[\text{add } f']$ we update $\getv$ to include $f'$ in its corresponding index. 
For each polymorphic O-name $p$ of $({\bff m}_2,S_2)$, with type $\alpha$ and $\gamma_O(\alpha)=\theta_{i_0}=\theta_{i_1}\to\theta_{i_2}$, 
$\newvals$ includes the code:
\begin{align*}
&\letin{q_{p}=\refer(\lambda c.\,\mathsf{case}\ c\ \mathsf{of}\ [\\
&\qquad\qquad\qquad\qquad\;\;\;\,
i\mapsto \lambda\_.\,\cde:=\hat p;\pi_{i_2}\!({!\getv})1\ \text{ if }0<i<\|t\|\land\text{$p$ abstract at $i$}\\
&\qquad\qquad\qquad\qquad\ \mid
i\mapsto \pi_{i_0}\!({!\getv})\hat f'\ \text{ if }0<i<\|t\|\land\text{$p$ known as $f'$ at step $i$}\\
&\qquad\qquad\qquad\qquad\
\mid \_\mapsto\Omega \; ])\ 
}{}\\
&\quad\letin{p=\lambda z.\,{!q_{p}}(!\ctr)z\, }{\,
\getv[\text{add } p]}
\end{align*}
where $\hat p$ is an integer representing $p$,
and $\hat f'$ is the index of $f'$ in the $i_0$-component of $\getval_i$.
We say $p$ is \emph{known as} $f'$ at step $i$ if there has been a disclosure of $p\mapsto f'$ in some $\rho$ component in $t$ before the $i$th P-move (otherwise, $p$ is \emph{abstract}).
The code in $q_p$  allows us to recognise whether a function $v$ played by P is actually $p$: we apply $v$ to $\pi_{i_1}\!(!\getv)1$ and then check if the value of $\cde$ has been set to $\hat p$. 
We work similarly for each universal function name and each polymorphic name mapped to a universal value.

Finally, $\newvals$ 
creates the new members of $L$ and
updates $\getval_{\|t\|-1}$ with its implementation:
\begin{align*}
&\letin{\ell_{ts+1}=\refer(\lambda\_.\Omega)}{\,\cdots\,\letin{\ell_{ts'}=\refer(\lambda\_.\Omega)}{}}\;
\getval_{\|t\|-1}:=\pack{\inbrax{\vec{\alpha}\vec\beta,\inbrax{\getv,\inbrax{{\sf L_1},\cdots,{\sf L}_{ts'}}}}}
\end{align*}
where each ${\sf L}_i$ is either $\ell_i$, if the latter is visible, or the corresponding $h_i$ otherwise.
\item $\setvals$
updates the store in such a way that all the values of $S_2$ are set. 
\item $\play$ is defined further below by case analysis on ${m}_2$.
\end{asparaitem}
\cutout{
We assume standard functions for coding/decoding pairs of naturals to naturals: $\pairer:\Int\times\Int\to\Int$ and $\unpair:\Int\to\Int\times\Int$; 
and write $\assert\,M$ for: $\ifte{M}{\Omega}{()}$. We define:
\begin{itemize}
\item ${\sf F}_i \equiv\lambda x^{\Int}.{\sf F}_i'$, with ${\sf F}_i'$ given by case analysis on $\theta_i$:
\begin{itemize}
\item $\letin{\,\inbrax{x_1,x_2}=\unpair\, x\,}{\inbrax{(!\getval_{i_1})x_1,(!\getval_{i_2})x_2}}$\,, if $\theta_i=\theta_{i_1}\times\theta_{i_2}$;
\item $(-1)^x\lfloor x/2\rfloor$, if $\theta_i=\Int$ (note that $!\getval_i$ is an enumeration so $x>0$);
\item $()$, if $\theta_i=\Unit$;
\item $\Omega$, otherwise.
\end{itemize}
\item 
${\sf G}_{i} \equiv\lambda x^{\Int\times\Int}.\,
\ifte{\pi_2x\geq0}{(({!\getval_{i'}}\,\pi_1x):={!\getval_i}(-\pi_2 x))}{}$\\
$\text{}\qquad\qquad\qquad(\ifte{\pi_1x}{(\letin{\,y={!\getval_{i}}\,0\,}{{\sf G}_i'})}{(\letin{\,y={!(!\getval_{i'}\,\pi_1 x)}\,}{{\sf G}_i''})})$  
\\
with 
${\sf G}_i'$ given by case analysis on $\theta_{i}$:
\begin{itemize}
\item $\letin{\,\inbrax{x_1,x_2}=\unpair\,\pi_2x\,}{}$\\
$\getval_{i_1}[0\mapsto \pi_1 y];\getval_{i_2}[0\mapsto \pi_2 y];\,
!\chkval_{i_1}\inbrax{0, x_1};\, !\chkval_{i_2} \inbrax{0,x_2}$\,, if $\theta_{i}=\theta_{i_1}\times\theta_{i_2}$;
\item $\assert(y= ({!\getval_i}\, \pi_2x))$, if $\theta_{i}=\Int$;
\item $\ifte{\pi_2x}{()}{(\,
\letin{\,
z={!\getval_{i_1}1},\;\,c={!\ctr}\;}{\,\ctr:=0;}$
\\
$\text{}\qquad\qquad\quad\;
\trywth{yz;()}{\trywth{({!\getval_i}\ \pi_2x\ z);()}{\ctr:=c}}}\,)$ \\
if $\theta_{i}=\theta_{i_1}\to\theta_{i_2}$;
\item $\ifte{\pi_2x}{()}{(\,
\letin{\,
c={!\ctr}\;}{\,\ctr:=0;}$
\\
$\text{}\qquad\qquad\quad\;
\trywth{y\Int;()}{\trywth{({!\getval_i}\ \pi_2x\ \Int);()}{\ctr:=c}}}\,)$
\\
if $\theta_{i}=\forall\alpha.\theta$;
\item $()$, otherwise; \nt{At this point I believe that comparison for existential types has already been accomplished by magic: all existential packets have been opened and stored, and subsequently checked -- do we need this in $\chkvals$?}
\end{itemize}
and ${\sf G}_i''$ given by a similar case analysis on $\theta_{i}$ (assuming  $\theta_{i'}=\refer\theta_{i}$).
\item ${\sf H}_{i,p} \equiv\lambda x^{\Int}.\,\ifte{(x+1)}{{\sf H}^{\sf g}_i}{(\,\ifte{x}{}{}}$\\
$(\letin{\,y=\pi_p x,\,f={!\getval_{i'}}\,}{{\sf H}_{i'}'})
(\letin{\,y=\pi_p(!((!\getval_i)x)),\,f={!\getval_{i'}}\,}{{\sf H}_{i''}''})\,)$
with 
$p[\theta_i]=\theta_{i'}$ and 
${\sf H}_{i'}'$ given by case analysis on $\theta_{i'}$:
\begin{itemize}
\item $\assert\,(y\neq f1\land\cdots\land y\neq f(!\names_{i'}));\,\names_{i'}{++};\letin{(c=\names_{i'}\!)}{}$\\
$\getval_{i'}:=\lambda z^{\Int}.\ifte{(z=c)}{(fz)}{y}\,
$\,, 
if $\theta_{i'}=\refer\theta$;
\item $\names_{i'}{++};\letin{(c\,{=}\,\names_{i'}\!)}{\getval_{i'}:=\lambda z^{\Int}.\ifte{(z=c)}{(fz)}{y}}
$, 
otherwise.
\end{itemize}
$H_{i''}''$ is defined in an analogous manner, taking
$\theta_i=\refer\theta'$, $p[\theta']=\theta_{i''}$ and doing a similar case analysis on $\theta_{i''}$. On the other hand, ${\sf H}^{\sf g}_i$ is given by:
\begin{itemize}
\item $\names_i{++};\,\letin{\,h={!\getval_i},\,\ell=\refer({!\getval_{i'}}1),\,c={!\names_{i}}\,}{}$\\
$\getval_i:=\lambda z^\Int.\ifte{(z=c)}{(hz)}{\ell}$\,,
if $p=\bullet$ and
 $\theta_i=\refer\theta_{i'}$;
\item $\names_i{++};\,\letin{\,h={!\getval_i},\,q=\refer({!\getval_{i_0}}0),\,c={!\names_{i}}\,}{}$\\
$\text{}\qquad\;\;\;\qquad\letin{f=\lambda z.(\getval_{i_1}[0\mapsto z];{!q}(!\ctr);{!\getval_{i_2}}0)}{}$\\
$\text{}\qquad\;\;\;\qquad\getval_i:=\lambda z^\Int.\ifte{(z=c)}{(hz)}{f}$\\
if $p=\bullet$ and
 $\theta_i=\theta_{i_1}\to\theta_{i_2}$, with $\theta_{i_0}=\Int\to\Unit$;
\item $\names_i{++};\,\letin{\,h={!\getval_i},\,q=\refer({!\getval_{i_0}}0),\,c={!\names_{i}}\,}{}$\\
$\text{}\qquad\;\;\;\qquad\letin{g=\Lambda\alpha.({!q}(!\ctr)\alpha;{!\getval_{i'}}0)}{}$\\
$\text{}\qquad\;\;\;\qquad\getval_i:=\lambda z^\Int.\ifte{(z=c)}{(hz)}{g}$\\
if $p=\bullet$ and
 $\theta_i=\forall\alpha.\theta_{i'}$, with $\theta_{i_0}=\Int\to\Unit$;
\item $()$\,, otherwise.
\end{itemize}
\item
${\sf K}^{i_1,i_2}_{i_1',i_2',I}\equiv\lambda x^{\Int^{|I|+2}}.\,\letin{\,
x_1=\pi_1x,\,x_2=\pi_2x,\,\vec{y}=\pi_3x\,}{}$\\
$\text{ }\qquad\qquad\letin{\ell_1={\getval_{i_1}}x_1,\,
\ell_2={!\getval_{i_2}}x_2\,}{}$\\
$\text{ }\qquad\qquad\assert(\ell_1=\ell_2);\,\mathbf{cast}(\ell_1,\ell_2,\vec y)$
\\\nt{this has not been used yet -- it should be part of $\updvals$}
\end{itemize}
The definition of ${\sf G}_i'$ allows for function comparisons (which can be used when receiving polymorphic values of functional type). 
This is done by employing $!q_i$ and $!q_i'$ with non-positive values, and is imposed by setting, for all $m'$:
\begin{itemize}
\item $S_O(q_{m'})(0)=S_O(q_{m'}')(0)=\ \ctr:=-m';\,\raiz{}$
\item $S_O(q_{m'})(j)=S_O(q_{m'}')(j)=\ \ifte{(j+m')}{(\raiz{})}{\Omega}$
\end{itemize}
for all $j<0$.
}

\noindent$\blacksquare$~Case of ${\bff m}_1=\questP{g}{\alpha}$. 
%
For all $f,g'$ and $i<\|t\|$, we set $S_O(q_{f})(i)=S'(q_{f})(i)$ and $S_O(q_{g'}')(i)=S'(q_{g'}')(i)$. Moreover,
$S_O(q_{f})(\|t\|) = \Omega$ for all $f$, 
and also $S_O(r_{n})(\|t\|)=\Omega$, and 
$S_O(q_{g'}')(\|t\|) = \Omega$ for all $g'\not=g$, and:
\begin{align*}
S_O(q'_g)(\|t\|) = \;\,
&\unpack{!\getval_{\|t\|}}{\inbrax{\vec{\alpha}',\inbrax{z',h}}}{\\
&\quad\letin{z=\castPk\inbrax{z',h}}{}}
\\
&\qquad\Lambda\beta_1.\
\unpack{N_2}{\inbrax{\beta_2,x_2}}{\cdots\;\unpack{N_{\iota}}{\inbrax{\beta_{\iota},x_{\iota}}}{}}\\%
&\qquad\qquad\quad\letin{\getv=\refer\inbrax{z,\lambda\_.\Omega,\cdots\,,\lambda\_.\Omega}}{}\\
&\qquad\qquad\qquad\ctr\,{-}{-};\,\fshvals;\, \updvals;\, \chkvals;\, 
\newvals;\,\setvals;\,\play
\end{align*}
where the given operators are defined as in the previous case.

\noindent$\blacksquare$~Case of ${\bff m}_1=\ansP{v}$. 
In this case we stipulate that,
for all $f,g$ and $i<\|t\|$, $S_O(q_{f})(i)=S'(q_{f})(i)$ and $S_O(q_{g}')(i)=S'(q_{g}')(i)$, 
and $S_O(q_{f})(\|t\|)=S_O(q_{g}')(\|t\|)=\Omega$. The following continuation is enacted on $r_n$ and $\|t\|$:
\begin{align*}
S_O(r_n)(\|t\|) = \;\,
&\unpack{!\getval_{\|t\|}}{\inbrax{\vec{\alpha}',\inbrax{z',h}}}{\\
&\quad\letin{z=\castPk\inbrax{z',h}}{}}
\\
&\qquad\lambda x_0.\,\unpack{N_1}{\inbrax{\beta_1,x_1}}{\cdots\;\unpack{N_{\iota}}{\inbrax{\beta_{\iota},x_{\iota}}}{}}\\
&\qquad\quad\letin{\getv=\refer\inbrax{z,\lambda\_.\Omega,\cdots\,,\lambda\_.\Omega}}{}\\
&\qquad\qquad\ctr\,{-}{-};\,\fshvals;\, \updvals;\, \chkvals;\, 
\newvals;\,\setvals;\,\play
\end{align*}
where the given operators are defined as in the first case above. 

\medskip
\noindent
The code for $\play$ depends on the move ${\bff m}_2$, on which we also do a case analysis.

\noindent$\blacksquare$~Case of ${\bff m}_2=\questO{f}{v}$. Suppose $f=\pi_k(S_O'(!\getv))(j)$ and let the last stack component of ${\cal E}'$ be $(E_{n'},\theta_{m'}\myto\theta_{m''},\Phi_{n'})$. Set:
\begin{align*}
\play\equiv\ &
\letin{f=\pi_k(!\getv)j}{}\\
&\quad\letin{r=\refer (\lambda c.\,\mathsf{case}\ c\ \mathsf{of}\
[\,
 i\mapsto S'(r_{n'})(i) \ \text{ if }0<i<\|t\|\;\,\mid\;\, \_\mapsto\Omega\,])}{}\\
&\qquad{(\lambda z.\,{!r}(!\ctr)z)}
(fM)
\end{align*}
where $M\equiv v\{\gamma'\}$.

\noindent$\blacksquare$~Case of ${\bff m}_2=\questO{g}{\alpha}$. Suppose $g=\pi_k(S_O'(\getv))(j)$ and $\gamma'(\alpha)=\theta$. Set:
\begin{align*}
\play\equiv\ &
\letin{g=\pi_k(!\getv)j}{}\\
&\quad\letin{r=\refer (\lambda z.\,\mathsf{case}\ z\ \mathsf{of}\
[\,
 i\mapsto S'(r_{n'})(i) \ \text{ if }0<i<\|t\|\;\,\mid\;\, \_\mapsto\Omega\,])}{}\\
&\qquad{(\lambda z.\,{!r}(!\ctr)z)}
(g\,\theta)
\end{align*}
\noindent$\blacksquare$~Case of ${\bff m}_2=\ansO{v}$. Set
$\play\equiv v\{\gamma'\}$. 

\medskip\noindent
Finally, for the base case, if $t=(\faction{\ansP{v}}{S}{\rho})$ then we work as in the answer case above.
\end{proof}

\begin{proof}[Proof of Lemma~\ref{thm:castf}]
 By induction on the proof that $\theta' \in \kappa(\theta)$.
 \begin{itemize}
  \item If there exists an $l \in \dom{\phi}$ such that $\theta,\theta' \in \phi(l)$, then we define
  $\cast{}{\theta}{\theta'}$ as the function $\mathtt{\lambda z:\theta. let \ a = !x in x:=z;let \ b = !y \ in \ x:=a; b}$. 
  \item If there exists a type $\theta''$ such that $\theta'' \in \kappa(\theta)$ and $\theta' \in \kappa(\theta'')$,
  then we define $\cast{}{\theta}{\theta'}$ as $\lambda z:\theta.\cast{}{\theta''}{\theta'}(\cast{}{\theta}{\theta''}z)$.
  \item If $\refer\theta' \in \kappa(\refer\theta)$, then we define $\cast{}{\theta}{\theta'}$
  as $\mathtt{\lambda z:\theta. let \ a = \refer z \ in \ !(\cast{}{\theta}{\theta'} z)}$.
  \item If there exist four types $\theta_1,\theta'_1,\theta_2,\theta'_2$ such that $\theta = \theta_1 \times \theta_2$ and $\theta' = \theta'_1 \times \theta'_2$
  with $\theta'_1 \in \kappa(\theta_1)$ and $\theta'_2 \in \kappa(\theta_2)$, 
  then we define $\cast{}{\theta}{\theta'}$ as $\lambda z:\theta.\pair{\cast{}{\theta_1}{\theta'_1} \proj{1} z}{\cast{}{\theta_2}{\theta'_2} \proj{2} z}$.
  \item If there exist two types $\theta_2,\theta'_2$ s.t. $(\theta'\times\theta'_2) \in \kappa(\theta\times\theta_2)$,
  then we define $\cast{}{\theta}{\theta'}$ as $\lambda z:\theta.\proj{1}\cast{}{\theta\times\theta_2}{\theta'\times\theta'_2}$.
  \item If there exist four types $\theta_1,\theta'_1,\theta_2,\theta'_2$ such that $\theta = \theta_1 \rightarrow \theta_2$ and $\theta' = \theta'_1 \rightarrow \theta'_2$
  with $\theta_1 \in \kappa(\theta'_1)$ and $\theta'_2 \in \kappa(\theta_2)$ 
  then we define $\cast{}{\theta}{\theta'}$ as $\lambda z:\theta.\lambda z_1:\theta'_1. \cast{}{\theta_2}{\theta'_2} (z (\cast{}{\theta'_1}{\theta_1} z_1))$.
  \item If there exist two types $\theta_2,\theta'_2$ such that $(\theta\rightarrow\theta'_2) \in \kappa(\theta'\rightarrow\theta_2)$,
  then we define $\cast{}{\theta}{\theta'}$ as \[\mathtt{\lambda z:\theta.let \ y = \refer (\lambda \_.\Omega_{\theta'}) \ in \
  let \ \_ \ =  (\cast{}{(\theta\rightarrow\theta_2)}{(\theta'\rightarrow\theta'_2)} \lambda x:\theta'.y:=x;a) z \ in \ (!y)()}\]
  where $\mathtt{a}$ is a value of type $\theta_2$, which exists because of the hypothesis.
  \item If there exist two types $\theta_1,\theta'_1$ such that $(\theta'_1\rightarrow\theta') \in \kappa(\theta_1\rightarrow\theta)$,
  then we define $\cast{}{\theta}{\theta'}$ as $\mathtt{\lambda z:\theta.(\cast{}{(\theta_1\rightarrow\theta)}{(\theta'_1\rightarrow\theta')} \lambda x:\theta_1.z) a}$
  where $\mathtt{a}$ is a value of type $\theta'_1$, which exists because of the hypothesis.
\cutout{  \item If there exists two types $\theta_1,\theta'_1$ such that $\theta = \theta_1\subst{\alpha}{\alpha'}$, $\theta' = \theta'_1\subst{\alpha}{\alpha'}$
  with $\exists\alpha.\theta'_1 \in \kappa(\exists\alpha.\theta_1)$,
  then we define $\cast{}{\theta}{\theta'}$ as $\lambda z:\theta. \unpack{(\cast{}{\exists\alpha.\theta_1}{\exists\alpha.\theta'_1}\pack{\pair{\alpha}{z}})}{\pair{\alpha'}{x}}{x}$.
  \gj{Problems here !!}
}  
  \item If there exist two types $\theta_1,\theta'_1$ such that $\theta = \exists\alpha.\theta_1$, $\theta' = \exists\alpha.\theta'_1$
  with $\theta'_1 \in \kappa(\theta_1)$ and $\kappa(\alpha) = \{\alpha\}$,
  then we define $\cast{}{\theta}{\theta'}$ as $\lambda z:\theta. \unpack{z}{\pair{\alpha}{x}}{\pack{\pair{\alpha}{\cast{}{\theta_1}{\theta'_1}x}}}$.

    \item If there exist two types $\theta_1,\theta'_1$ such that $\theta = \forall\alpha.\theta_1$, $\theta' = \forall\alpha.\theta'_1$
  with $\theta'_1 \in \kappa(\theta_1)$ and $\kappa(\alpha) = \{\alpha\}$,
  then we define $\cast{}{\theta}{\theta'}$ as $\lambda z:\theta. \Lambda \alpha.\cast{}{\theta_1}{\theta'_1} (z \alpha)$.
  
 \end{itemize}

\end{proof}


\end{document}